\documentclass[pra,twocolumn,superscriptaddress]{revtex4-2}
 
 \usepackage{latexsym}
 
\usepackage{cmap}   
\usepackage[utf8]{inputenc}
\usepackage{nicefrac}
\usepackage[english]{babel}

\usepackage{amsmath,amssymb}    
\usepackage{xcolor}
\usepackage{graphicx} 
\usepackage{multirow}
\usepackage{dcolumn} 
\usepackage{bm} 
\usepackage{braket} 
\usepackage{amsthm}
\usepackage[normalem]{ulem}
\usepackage{hyperref}
\usepackage{booktabs}
\usepackage{adjustbox}
 
\usepackage[ruled]{algorithm2e}

\newtheorem{theorem}{Theorem}
\newtheorem{corollary}{Corollary}
\newtheorem{lemma}{Lemma}
\newtheorem{proposition}{Proposition}
\newtheorem{definition}{Definition}

\newtheorem*{theorem-non}{Theorem}
\newtheorem*{lemma-non}{Lemma}
\newtheorem*{corollary-non}{Corollary}
\newtheorem*{proposition-non}{Proposition}

\usepackage{appendix}

\newcommand{\bx}{\bm{x}}
\newcommand{\bz}{\bm{z}}
\newcommand{\by}{\bm{y}}
\newcommand{\bxi}{\bm{x}^{(i)}}
\newcommand{\bzi}{\bm{z}^{(i)}}
\newcommand{\boi}{\bm{o}^{(i)}}
\newcommand{\bmw}{\bm{\mathsf{w}}}

\newcommand{\yi}{y^{(i)}}
\newcommand{\hatboi}{\hat{\bm{o}}^{(i)}_{\ba^*}}
\newcommand{\hatboit}{\hat{\bm{o}}^{(i)}_{\ba^*,t}}
\newcommand{\hatbo}{\hat{\bm{o}}}

\newcommand{\hatsigma}{\hat{\sigma}_n}
\newcommand{\hatsigmaTriGeoClc}{\hat{\sigma}_n^{(\mathsf{c})}}
\newcommand{\hatkappa}{\kappa_{\Lambda}}

\newcommand{\tilderho}{\tilde{\rho}}
 \newcommand{\Prob}{\mathbb{P}}

\newcommand{\btheta}{\bm{\theta}}
\newcommand{\bomega}{\bm{\omega}}
\newcommand{\ba}{\bm{a}}
\newcommand{\barba}{\bar{\bm{a}}}

\newcommand{\llangle}{\langle\langle}
\newcommand{\rrangle}{\rangle\rangle}

\newcommand{\hath}{\hat{h}}

\DeclareMathOperator{\RZ}{RZ}
\DeclareMathOperator{\RY}{RY}
\DeclareMathOperator{\RX}{RX}
\DeclareMathOperator{\RP}{RP}
 \DeclareMathOperator{\H5}{\mathsf{H}_{\mathsf{H}_5}}

\DeclareMathOperator{\CNOT}{CNOT}
 
 \DeclareMathOperator{\bUnitary}{\mathfrak{U}}
 \DeclareMathOperator{\bRZ}{\mathfrak{R}_Z}
 \DeclareMathOperator{\bRP}{\mathfrak{R}_P}
  \DeclareMathOperator{\Z}{Z}
  \DeclareMathOperator{\ROPT}{\mathsf{R}}

\DeclareMathOperator{\Tr}{Tr}

\DeclareMathOperator{\DKL}{D_{\text{KL}}}

\newcommand{\PP}{\mathbb{P}}
\newcommand{\hatrho}{\hat{\rho}}
\newcommand{\BQP}{\mathsf{BQP}}
\newcommand{\PPoly}{\mathsf{P/poly}}
\DeclareMathOperator{\CI}{\text{CI}}

\usepackage{chngcntr}
\usepackage{apptools}
\AtAppendix{\counterwithin{lemma}{section}}
\AtAppendix{\counterwithin{theorem}{section}}
\AtAppendix{\counterwithin{proposition}{section}}
\AtAppendix{\counterwithin{corollary}{section}}

\usepackage{pifont}
\newcommand{\xmark}{\ding{55}}
\newcommand{\cmark}{\ding{52}}
\DeclareMathSymbol{\qm}{\mathalpha}{operators}{"3F}
\DeclareMathAlphabet{\mathbbold}{U}{bbold}{m}{n}

\usepackage{ragged2e} 
\usepackage[justification=justified,singlelinecheck=false]{caption}
\DeclareCaptionJustification{justified}{\justifying}

\begin{document}

\title{Efficient Learning for Linear Properties of Bounded-Gate Quantum Circuits} 
 
\author{Yuxuan Du}
\email{duyuxuan123@gmail.com}
\affiliation{College of Computing and Data Science, Nanyang Technological University, Singapore 639798, Singapore}

\author{Min-Hsiu Hsieh}
\email{minhsiuh@gmail.com}
\affiliation{Hon Hai (Foxconn) Research Institute, Taipei, Taiwan}

\author{Dacheng Tao}
\email{dacheng.tao@gmail.com}
\affiliation{College of Computing and Data Science, Nanyang Technological University, Singapore 639798, Singapore}

\maketitle

\noindent\textbf{ABSTRACT\\
The vast and complicated large-qubit state space forbids us to comprehensively capture the dynamics of modern quantum computers via classical simulations or quantum tomography. Recent progress in quantum learning theory prompts a crucial question: can  linear properties of a large-qubit circuit with $d$ tunable RZ gates and $G-d$ Clifford gates be efficiently learned from measurement data generated by varying classical inputs? In this work, we prove that the sample complexity scaling linearly in $d$ is required to achieve a small prediction error, while the corresponding computational complexity may scale exponentially in $d$. To address this challenge, we propose a kernel-based method leveraging classical shadows and truncated trigonometric expansions, enabling a controllable trade-off between prediction accuracy and computational overhead. Our results advance two crucial realms in quantum computation: the exploration of quantum algorithms with practical utilities and learning-based quantum system certification. We conduct numerical simulations to validate our proposals across diverse scenarios, encompassing quantum information processing protocols, Hamiltonian simulation, and variational quantum algorithms up to 60 qubits.}

\medskip
\noindent\textbf{INTRODUCTION} \\
\noindent Advancing efficient methodologies to characterize the behavior of quantum computers is an endless pursuit in quantum science, with pivotal outcomes contributing to designing improved quantum devices and identifying computational merits. In this context, quantum tomography and classical simulators have been two standard approaches. Quantum tomography, spanning state \cite{leonhardt1995quantum}, process  \cite{altepeter2003ancilla,mohseni2008quantum}, and shadow tomography \cite{aaronson2007learnability,aaronson2018shadow}, provides concrete ways to benchmark current quantum computers. Classical simulators, transitioning from state-vector simulation to tensor network methods  \cite{markov2008simulating,villalonga2019flexible,cirac2021matrix,pan2022simulation} and Clifford-based simulators \cite{bravyi2016improved,bravyi2019simulation,beguvsic2023fast}, not only facilitate the design of quantum algorithms without direct access to quantum resources \cite{chen2023quantum} but also push the boundaries to unlock practical utilities \cite{arute2019quantum}. Despite their advancements, (shadow) tomography-based methods are quantum resources intensive,  necessitating extensively interactive access to quantum computers, and classical simulators are confined to handling specific classes of quantum states. Accordingly, there is a pressing need for innovative approaches to effectively uncover the dynamics of modern quantum computers with hundreds or thousands of qubits \cite{kim2023evidence,bluvstein2023logical}.

Machine learning (ML) is fueling a new paradigm for comprehending and reinforcing quantum computing \cite{gebhart2023learning}. This hybrid approach, distinct from prior purely classical or quantum methods, synergies the power of classical learners and quantum computers. Concisely, the process starts by collecting samples from target quantum devices to train a classical learner, which is then used to predict unseen data from the same distribution, either with minimal quantum resources or purely classically. Empirical studies have showcased the superiority of ML compared to traditional methods in many substantial tasks, such as real-time feedback control of quantum systems  \cite{torlai2017neural,zeng2023approximate, dehghani2023neural,reuer2023realizing}, correlations and entanglement quantification \cite{canabarro2019machine,chen2022certifying,koutny2023deep}, and enhancement of variational quantum algorithms \cite{bennewitz2022neural,zhang2022variational,wei2023neural}.  

Despite their empirical successes, the theoretical foundation of these ML-based algorithms holds far-reaching consequences, where rigorous performance guarantees or scaling behaviors remain unknown. A critical, yet unexplored, question in the field is:\begin{center}
 Is there a provably efficient ML model that can predict dynamics of large-qubit bounded-gate quantum circuits?
\end{center} 
The focus on the bounded-gate quantum circuits stems from the facts that many interested states in nature often exhibit bounded gate complexity and this has broad applications in variational quantum algorithms \cite{cerezo2021variational} and system certification \cite{eisert2020quantum}. Given the offline nature of the inference stage, any advancements in this area could significantly reduce the quantum resource overhead, which is especially desirable given their scarcity.  

\begin{figure*}[t]
	\centering
	\includegraphics[width=0.99\textwidth]{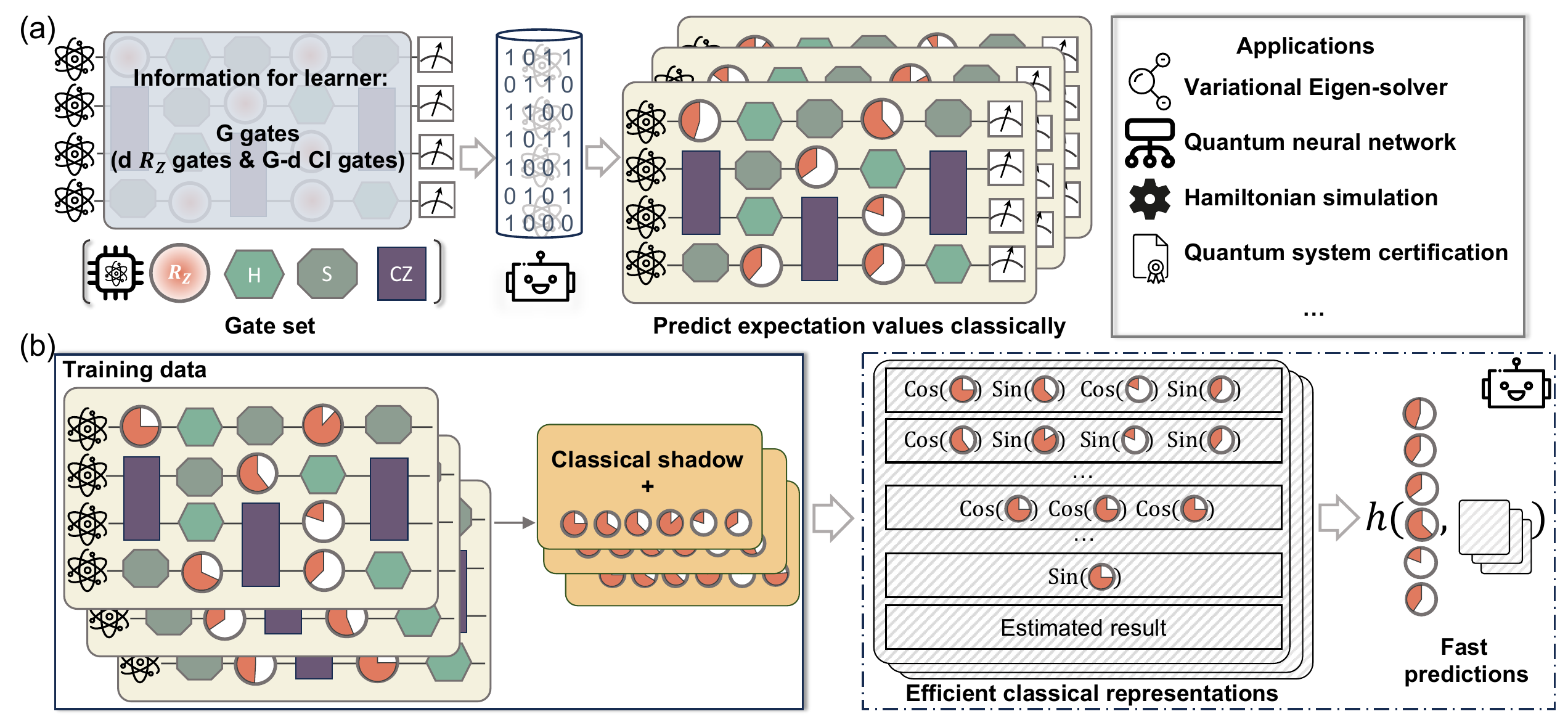}
	\caption{\small{\textbf{Learning protocols for quantum circuits with the bounded number of non-Clifford gates}. (a) \textsc{The visualization of learning bounded-gate quantum circuits with incoherent measurements.} Given a circuit composed of $G-d$ Clifford gates and $d$ $\RZ$ gates, a classical learner feeds $n$ classical inputs, i.e., $n$ tuples of the varied angles of $\RZ$ gates, to the quantum device and collects the relevant measured results as data labels. The collected $n$ labeled data are used to train a prediction model $h$ that can perform purely classical inference to predict the linear properties of the generated state over new input $\bx$, i.e., $\Tr(\rho(\bx)O)$ with $O$ being an observable sampled from a prior distribution, can be accurately estimated.  Following conventions, the interaction between the learner and the system is restrictive in which the learner can only access the quantum computer via incoherent measurements with finite shots.  (b) \textsc{The proposed kernel-based ML model consists of two steps.} First, the learner collects the training dataset, i.e., $n$ labeled data via classical shadow. Then, the learner applies shadow estimation and the trigonometric monomial expansion to the collected training dataset to obtain efficient classical representations, where any new input of the explored quantum circuits can be efficiently predicted offline. }}
	\label{fig:scheme}
\end{figure*}

Here we answer the above question affirmably by exploring a specific class of quantum circuits that has broad applications in quantum computing. Particularly, this class of $N$-qubit circuits consists of an arbitrary initial state, a bounded-gate circuit composed of $d$ tunable RZ gates and $G-d$ Clifford gates, where the dynamics refer to linear properties of the resulting state measured by the operator $O$. Our first main result uncovers that (i) with high probability, $\widetilde{\Omega}(\frac{(1-\epsilon)d}{\epsilon T})  \leq n \leq   	\widetilde{\mathcal{O}}(\frac{B^2d + B^2NG}{\epsilon^2})$ training examples are sufficient and necessary for a classical learner to achieve an $\epsilon$-prediction error on average, where $B$ refers to the bounded norm of $O$ and $T$ is the number of incoherent measurements; (ii) there exists a class of $G$-bounded-gate quantum circuits that no polynomial runtime algorithms can predict their outputs within an $\epsilon$-prediction error. These results deepen our understanding of the learnability of bounded-gate quantum circuits with classical learners, whereas Ref.~\cite{zhao2023learning} focuses on quantum learners. Furthermore,  the achieved findings invoke the necessity of developing a learning algorithm capable of addressing the exponential gap between sample efficiency and computational complexity. 

To address this challenge, we further utilize the concepts of classical shadow \cite{huang2020predicting} and trigonometric expansion to design a kernel-based ML model that effectively balances prediction accuracy with computational demands. We prove that when the classical inputs are sampled from the uniform distribution, with high probability, the runtime and memory cost of the proposed ML model is no larger than $\widetilde{\mathcal{O}}(TNB^2 c/\epsilon)$  for a large constant $c$ to achieve an $\epsilon$ prediction error in many practical scenarios. Our proposed algorithm paves a new way of predicting the dynamics of large-qubit quantum circuits in a provably efficient and resource-saving manner.

\medskip
\noindent\textbf{RESULTS}\\
\noindent \textbf{Problem setup.}--- We define the state space associated to an $N$-qubit bounded-gate quantum circuit as 
\begin{equation}\label{eqn:state-space}
   \mathcal{Q} = \left\{\rho(\bx) = U(\bx)\rho_0 U(\bx)^{\dagger} \Big|\bx \in [-\pi, \pi]^d \right\}, 
\end{equation}   
where $\rho_0$ denotes an $N$-qubit state and $U(\bx)$ refers to the bounded-gate quantum circuit depending on the classical input $\bx$ with $d$ dimensions. Due to the universality of Clifford (abbreviated as $\CI$) gates with $\RZ$ gates \cite{ross2014optimal}, an $N$-qubit $U(\bx)$ can always be expressed as 
\begin{equation}
	U(\bx) = \prod_{l=1}^{d}(\RZ(\bx_l)u_e),
\end{equation}
where $d$ is the number of $\RZ$ gates, and $u_e$ refers to the unitary composed of $\CI$ gates and the identity gate $\mathbb{I}_2$ with $\CI=\{H, S, \CNOT\}$. We denote the total number of gates and CI gates in $U(\bx)$ as $G$ and $G-d$, respectively.  When $\rho(\bx)$ is  measured by a known observable $O$, its incoherent dynamics are described the mean-value space, i.e., $f(\bx, O)= \Tr(\rho(\bx)O)$ with $\bx \in [-\pi, \pi]^d$. This formalism encompasses diverse tasks in quantum computing, e.g., variational quantum algorithms \cite{cerezo2021variational}, numerous applications of shadow estimation \cite{elben2022randomized}, and quantum system certification \cite{eisert2020quantum} (see Supplementary Material (SM)~A for the elaboration).

The purpose of most classical learners is to obtain a learning model $h(\bx,O)$ to predict $f(\bx, O)$. The general paradigm is shown in Fig.~\ref{fig:scheme}(a). To accommodate the constraints of modern quantum devices, the training data are exclusively gathered by incoherent measurements. As such, the classical learner feeds $\bx$ into the quantum circuit and receives $\tilde{f}_T(\bx, O)$, as the estimation of $f(\bx, O)$ incurred by $T$ finite shots. By repeating this procedure with varied inputs, the classical learner constructs a training dataset $\mathcal{T}=\{\bxi, \tilde{f}_T(\bx, O)\}_{i=1}^n$. Besides, the classical learner is kept unaware of the circuit layout details, except for the gate count $G$ and the dimension of classical inputs $d$. Last, the inferred model $h(\bx,O)$ should be offline, where the predictions for new inputs $\bx$ is performed solely on the classical side. Given the scarcity of quantum computers, the offline property of $h(\bx,O)$ significantly reduces the inference cost for studying incoherent dynamics compared to quantum learners, such as variational quantum algorithms \cite{jerbi2023power,gibbs2024dynamical}. 

According to the above paradigm, the concept class of the mean-value space of bounded-gate circuit is 
\begin{equation}\label{eqn:function-space}
   \mathcal{F} = \left\{f(\bx, O)= \Tr(\rho(\bx)O) \Big| U\in \text{Arc}(U, d, G) \right\},
\end{equation}
where $\bx \in [-\pi, \pi]^d$ and $\text{Arc}(U, d, G)$ refers to the set of all possible gate arrangements consisting of $d$ $\RZ$ gates and $G-d$ $\CI$ gates. The hardness of learning $\mathcal{F}$ by classical learners is explored by separately assessing the required sample complexity and runtime complexity of training a classical model $h(\bx,O)$ that attains a low average prediction error with $f(\bx,O)$. Mathematically,  the average performance of the learner is expected to satisfy \begin{equation}\label{eqn:risk_task}
 \mathbb{E}_{\bx\sim \mathbb{D}_\mathcal{X}}\left|h(\bx,O) - f(\bx, O) \right|^2 \leq \epsilon, 
\end{equation} 
where the classical input $\bx$ is sampled from the distribution $\mathbb{D}_\mathcal{X}$.  Here we suppose $O$ consists of $q$ local observables with a $B$-bounded norm, i.e., $O=\sum_{i=1}^q O_i$ and $\sum_l \|O_i\|_{\infty}\leq B$, and the maximum locality of $\{O_i\}$ is $K$.

\smallskip 
\noindent {Learnability of bounded-gate quantum circuits}.--- The following theorem demonstrates the learnability of $\mathcal{F}$ in Eq.~(\ref{eqn:function-space}), where the formal statement and the proof are presented in SM~B-E (see Methods for the proof sketch). 
\begin{theorem}[informal]\label{thm:learn-complexity}
 Following notations in Eq.~(\ref{eqn:function-space}), let $\mathcal{T}=\{\bxi, \tilde{f}_T(\bxi)\}_{i=1}^n$ be a dataset containing $n$ training examples with $\bxi\in [-\pi, \pi]^d$ and $\tilde{f}_T(\bxi)$ being the estimation of $f(\bxi, O)$ using $T$ incoherent measurements with $T\leq (N\log2)/\epsilon$. 
Then, the training data size 
\begin{equation}\label{eqn:sample-bound-thm1-main}
\widetilde{\Omega}\left(\frac{(1-\epsilon) d}{\epsilon T}\right)  \leq n \leq   	\widetilde{\mathcal{O}}\left(\frac{B^2d+B^2NG}{\epsilon}\right)  
\end{equation}
is sufficient and necessary to achieve Eq.~(\ref{eqn:risk_task}) with high probability. However, unless $\BQP\subseteq \mathsf{P/poly}$, there exists a class of $G$-bounded-gate quantum circuits that no algorithm can achieve Eq.~(\ref{eqn:risk_task}) in a polynomial time.	
\end{theorem}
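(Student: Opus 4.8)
The plan is to prove the hardness clause by contraposition: I would assume that a polynomial-time classical learner achieving Eq.~(\ref{eqn:risk_task}) exists for \emph{every} bounded-gate circuit class and then derive $\mathsf{BQP}\subset\mathsf{HeurP}^{\mathbb{D}'}/\mathsf{poly}$, contradicting the stated assumption. The starting point is to fix a $\mathsf{BQP}$-complete decision problem with deciding circuit family $\{C_n\}$ and to recall that, by universality of $\CI$ gates together with $\RZ$ gates \cite{ross2014optimal}, each $C_n$ can be cast in the exact normal form $\prod_l(\RZ(\bx_l)u_e)$ of Eq.~(2) with $d=\mathrm{poly}(n)$ tunable rotations and $G-d$ Clifford gates. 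The first concrete step is to build a single fixed architecture $U(\bx)$ in which the classical input $\bx$ encodes a problem instance---for example restricting $\bx$ to discrete angle choices that select which Clifford-conjugated Pauli operations prepare the instance---and to take $O=\ket{1}\bra{1}$ on one designated qubit, a locality-$1$, unit-norm observable compatible with the bound $B$. Then $f(\bx,O)=\Tr(\rho(\bx)O)$ equals the acceptance probability of $C_n$ on the encoded instance, and after standard success-probability amplification the $\mathsf{BQP}$ promise yields a constant gap in $f(\bx,O)$ between yes- and no-instances.

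Next I would convert the learner's average guarantee into a pointwise decision rule. Choosing the input distribution $\mathbb{D}_\mathcal{X}$ to be the pushforward of the hard instance distribution $\mathbb{D}'$ under the encoding, a learner with $\mathbb{E}_{\bx\sim\mathbb{D}_\mathcal{X}}|h(\bx,O)-f(\bx,O)|^2\leq\epsilon$ satisfies, by Markov's inequality, $|h(\bx,O)-f(\bx,O)|$ below half the (amplified) gap on all but an $O(\epsilon)$-fraction of inputs. Thresholding $h$ at the midpoint therefore decides the $\mathsf{BQP}$ problem correctly on a $1-O(\epsilon)$ fraction of instances drawn from $\mathbb{D}'$, which is exactly the heuristic success criterion defining $\mathsf{HeurP}^{\mathbb{D}'}$. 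The remaining ingredient is the non-uniform advice: although the training set is obtained through incoherent, finite-shot measurements of the device, the trained hypothesis $h$ is a single fixed classical object of polynomial size per input length, so hardcoding the best such object supplies the $/\mathsf{poly}$ advice and the inference itself runs in deterministic polynomial time. Assembling these pieces places the $\mathsf{BQP}$-complete problem in $\mathsf{HeurP}^{\mathbb{D}'}/\mathsf{poly}$, the desired contradiction; note that granting the learner \emph{ideal} labels only strengthens the conclusion, so finite-shot noise is immaterial to the lower bound.

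I expect the main obstacle to be the circuit-and-observable construction rather than the reduction bookkeeping. One must simultaneously keep the gate family strictly inside the $G$-bounded, $d$-$\RZ$ normal form of Eq.~(2), encode instances into $\bx$ through a map whose pushforward of $\mathbb{D}'$ is a legitimate samplable distribution over $[-\pi,\pi]^d$, and read off the acceptance probability with a local, norm-$B$-bounded $O$ while preserving a constant gap after amplification---all without inflating $G$ or $d$ beyond $\mathrm{poly}(n)$ or shrinking the gap below the error tolerance $\epsilon$. Balancing these constraints against one another is the delicate part; once a valid encoding is fixed, the averaging argument and the identification of the trained model as $/\mathsf{poly}$ advice are comparatively routine.
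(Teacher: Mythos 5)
Your proposal addresses only the computational-hardness clause of the theorem; it says nothing about the two sample-complexity bounds, which constitute the bulk of the statement and of the paper's proof. The lower bound $\widetilde{\Omega}((1-\epsilon)d/(\epsilon T))$ is obtained in the paper by constructing an explicit family of $2^d$ two-qubit circuits (indexed by $\ba\in\{0,1\}^d$, with optional SWAP gates deactivating individual $\RZ$ rotations) whose mean-value functions are pairwise equidistant with $\mathbb{E}_{\bx}|f_{\ba}-f_{\ba'}|^2=2\epsilon B^2$, reducing learning to a multiple-hypothesis-testing problem, and applying Fano's inequality together with a Gaussian approximation of the $T$-shot measurement statistics to bound the mutual information by $n\epsilon T/(1-\epsilon)$. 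The upper bound $\widetilde{\mathcal{O}}((B^2d+B^2NG)/\epsilon)$ is obtained by bounding the packing number of the function class via a count of the possible circuit layouts, $|\mathsf{Arc}(\RZ,\CI)|\leq\binom{G}{d}N^d 3^{G-d}\binom{N}{2}^{G-d}$, and invoking an ERM-over-packing-net result of Huang et al. Neither argument appears in your plan, so as written the proposal cannot establish Eq.~(\ref{eqn:sample-bound-thm1-main}).

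For the hardness clause itself, your reduction is essentially the one the paper uses: it embeds a $\mathsf{BQP}$-complete deciding circuit family into the $\RZ+\CI$ normal form, encodes instances into the classical input, reads off the acceptance probability with a single local observable, converts the average-case guarantee into a heuristic decision rule, and treats the trained hypothesis as $/\mathsf{poly}$ advice. The difference is one of packaging: the paper cites Lemma 1 of Molteni et al.\ for the core statement (that no algorithm learning from measure-out data can handle this concept class in polynomial time unless $\mathsf{BQP}\subset\mathsf{HeurP}^{\mathbb{D}}/\mathsf{poly}$) and only verifies that the hard concept class sits inside $\mathcal{F}$ of Eq.~(\ref{eqn:function-space}), using $N$ $\RZ$ gates plus $\mathcal{O}(N)$ Clifford gates to prepare the input state and $\mathrm{poly}(N)$ further gates to synthesize the $\mathsf{BQP}$ circuit; you propose to re-derive that lemma from scratch. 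Your sketch of the Markov/thresholding step and of the advice string is sound, but completing the theorem still requires supplying the two missing complexity bounds.
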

The exponential separation in terms of the sample and computational complexities underscores the non-trivial nature of learning the incoherent dynamics of bounded-gate quantum circuits. While the matched upper and lower bounds in Eq.~(\ref{eqn:sample-bound-thm1-main}) indicate that a linear number of training examples with $d$ is sufficient and necessary to guarantee a satisfactory prediction accuracy, the derived exponential runtime cost hints that identifying these training examples may be computationally hard. Note that the upper bound does not depend non-trivially on $T$, so we omit it. Besides, an interesting future research direction is to explore novel techniques to match the factors $N$, $B$, and $G$ in the lower and upper bounds, whereas such deviations do not affect our key results.

These results enrich quantum learning theory \cite{anshu2023survey,banchi2023statistical}, especially for the learnability of bounded-gate quantum circuits. Ref.~\cite{zhao2023learning} exhibits that for a quantum learner, reconstructing quantum states and unitaries with bounded-gate complexity is sample-efficient but computationally demanding. While Theorem~\ref{thm:learn-complexity} shows a similar trend, the varied learning settings (quantum learner vs. classical learner) and different tasks suggest that the implications and proof techniques in these two studies are inherently different. As detailed in SM~J, these two works are complementary, which collectively reveal the capabilities and limitations of learning bounded-gate quantum circuits. 
     
\smallskip
\noindent {A provably efficient protocol to learn $\mathcal{F}$.}--- The exponential separation of the sample and computational complexity pinpoints the importance of crafting provably efficient algorithms to learn $\mathcal{F}$ in Eq.~(\ref{eqn:function-space}). To address this issue, here we devise a kernel-based ML protocol adept at balancing the average prediction error $\epsilon$ and the computational complexity, making a transition from exponential to polynomial scaling with $d$ when $\mathbb{D}_{\mathcal{X}}$ is restricted to be the uniform distribution, i.e., $\bx\sim [-\pi, \pi]^d$.

\begin{figure*} 
\centering
\includegraphics[width=0.995\textwidth]{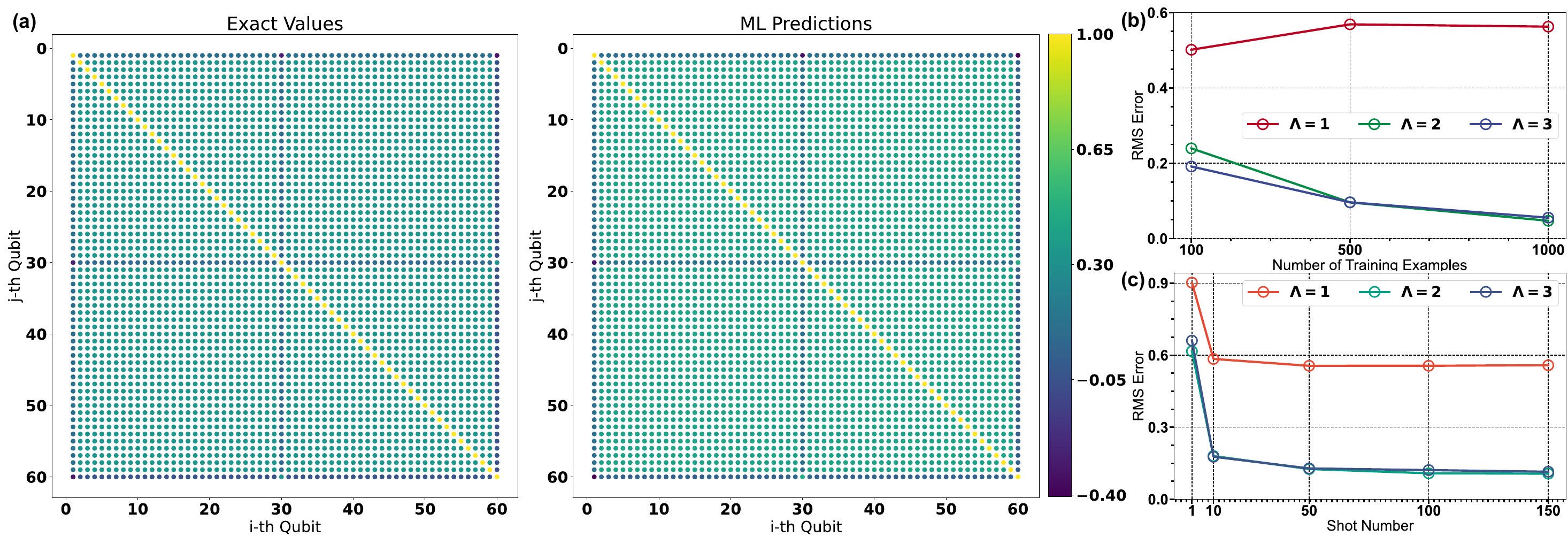}
\caption{\small{\textbf{Numerical results of predicting properties of rotational 60-qubit GHZ states}. (a) \textsc{Two-point correlation}.  Exact values and ML predictions of the expectation value of the correlation function $C_{ij} = (X_iX_j +Y_iY_j +Z_iZ_j)/3$ for all qubit pairs of 60-qubit GHZ states. The node’s color indicates the exact value and predicted value of the two subplots, respectively. (b-c) \textsc{Prediction error}. Subplot (b) depicts the root mean squared (RMS) prediction error of the trained ML model with varied truncation $\Lambda$ and number of training examples $n$. Subplot (c) shows the RMS prediction error of the trained ML model with varied truncation $\Lambda$ and the shot number $T$. The setting $\Lambda=3$ refers to the full expansion. }}
\label{fig:Rot-GHZ-main}
\end{figure*}

Our proposal, as depicted in Fig.~\ref{fig:scheme}(b), contains two steps: (i) Collect training data from the exploited quantum device; (ii) Construct the learning model and use it to predict new inputs. In Step~(i), the learner feeds different $\bxi \in [-\pi, \pi]^d$ to the circuit and collects classical information of $\rho(\bxi)$ under Pauli-based classical shadow with $T$ snapshots, denoted by $\tilderho_T(\bxi)$. In this way, the learner constructs the training dataset $\mathcal{T}_{\mathsf{s}}=\{\bxi \rightarrow \tilderho_T(\bxi)\}_{i=1}^n$ with $n$ training examples. Then, in Step~(ii), the learner utilizes $\mathcal{T}_{\mathsf{s}}$ to build a kernel-based ML model $h_{\mathsf{s}}$, i.e., given a new input $\bx$, its prediction yields  
\begin{equation}\label{eqn:generic-learner}
	h_{\mathsf{s}}(\bx, O) = \frac{1}{n}\sum_{i=1}^n\kappa_{\Lambda}\left(\bx, \bxi \right)g\left(\bxi,O \right),
\end{equation}  
where $g(\bxi,O)=\Tr(\tilderho_T(\bxi)O)$ refers to the shadow estimation of $\Tr(\rho(\bxi) O)$, $\kappa_{\Lambda}(\bx, \bxi)$ is the truncated trigonometric monomial kernel with 
\begin{equation}\label{eqn:trigono-kernel} 
	\kappa_{\Lambda}\left(\bx, \bxi \right) = \sum_{\bomega, \|\bomega\|_0 \leq \Lambda} 2^{\|\bomega\|_0}\Phi_{\bomega}(\bx)\Phi_{\bomega}\left(\bxi \right) \in \mathbb{R},
\end{equation}
and $\Phi_{\bomega}(\bx)$ with $\bomega\in \{0, 1, -1\}^d$ is the trigonometric monomial basis with   values 
\begin{equation} \label{eqn:trigono-basis} 
	\Phi_{\bomega}(\bx) = \prod_{i=1}^d \begin{cases}
		 1 ~ & \textnormal{if}~ \bomega_i = 0 \\
		 \cos(\bx_i) & \textnormal{if}~\bomega_i = 1 \\
		 \sin(\bx_i) &  \textnormal{if}~ \bomega_i = -1
	\end{cases}.
\end{equation}
A distinctive aspect of our proposal is its capability to predict the incoherent dynamics $\Tr(\rho(\bx)O)$ across various $O$ purely on the classical side. This is achieved by storing the shadow information $\{\tilderho_T(\bxi)\}$ into the classical memory, where the shadow estimation $\{g(\bxi, O) \}$ for different $\{O\}$ can be efficiently processed through classical post-processing (refer to Method for details).

With the full expansion as $\Lambda=d$, the cardinality of the frequency set $\{\bomega\}$ in Eq.~(\ref{eqn:trigono-kernel}) is $3^d$, impeding the computational efficiency of our proposal when the number of $\RZ$ gates becomes large. To remedy this, here we adopt a judicious frequency truncation to strike a balance between prediction error and computational complexity. Define the truncated frequency set as $\mathfrak{C}(\Lambda) =\{\bomega|\bomega \in \{0, \pm 1\}^d, ~s.t.~\|\bomega\|_0\leq \Lambda\}$. The subsequent theorem provides a provable guarantee for this tradeoff relation, with the formal description and proof deferred to SM~F.
\begin{theorem}[Informal]\label{thm:learning-non-trunc}
 Following notations in Eqs.~(\ref{eqn:function-space})-(\ref{eqn:generic-learner}), suppose $\mathbb{E}_{\bx\sim [-\pi, \pi]^d}\|\nabla_{\bx} \Tr(\rho(\bx)O)\|_2^2\leq C$. When the frequency is truncated to $  \Lambda=4C/\epsilon$ and the number of training examples is $n = |\mathfrak{C}(\Lambda)|  {2  B^2 9^K}{\epsilon^{-1}}  \log ({2 |\mathfrak{C}(\Lambda)|}/{\delta})$, even for $T=1$ snapshot per training example, with probability at least $1-\delta$, 
\begin{equation}
	\mathbb{E}_{\bx\sim [-\pi, \pi]^d} \left| h_{\mathsf{s}}(\bx, O) - f(\rho(\bx),O)  \right|^2 \leq \epsilon.
\end{equation}	
\end{theorem}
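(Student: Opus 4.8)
The plan is to split the squared error into an \emph{approximation (truncation)} part and a \emph{statistical (estimation)} part, exploiting that $f(\bx,O)=\Tr(\rho(\bx)O)$ is \emph{exactly} a trigonometric polynomial in the basis $\{\Phi_{\bomega}\}_{\bomega\in\{0,\pm1\}^d}$. Indeed, since each $\RZ(\bx_l)=e^{-i\bx_l Z/2}$ enters $\rho(\bx)=U(\bx)\rho_0U(\bx)^{\dagger}$ once through $U$ and once through $U^{\dagger}$, conjugation by $e^{-i\bx_l Z/2}$ produces only the frequencies $\{0,\pm1\}$ in coordinate $l$, so $f(\bx,O)=\sum_{\bomega}a_{\bomega}\Phi_{\bomega}(\bx)$ for some coefficients $a_{\bomega}$. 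Under $\bx\sim[-\pi,\pi]^d$ the basis is orthogonal with $\mathbb{E}_{\bx}[\Phi_{\bomega}\Phi_{\bomega'}]=2^{-\|\bomega\|_0}\delta_{\bomega,\bomega'}$, which is exactly why the weights $2^{\|\bomega\|_0}$ make $\kappa_{\Lambda}$ the reproducing kernel of the truncated span, $\mathbb{E}_{\bxi}[\kappa_{\Lambda}(\bx,\bxi)\Phi_{\bomega'}(\bxi)]=\Phi_{\bomega'}(\bx)$ for $\|\bomega'\|_0\le\Lambda$. Expanding the kernel gives $h_{\mathsf{s}}(\bx,O)=\sum_{\|\bomega\|_0\le\Lambda}\hat a_{\bomega}\Phi_{\bomega}(\bx)$ with $\hat a_{\bomega}=\frac{2^{\|\bomega\|_0}}{n}\sum_i\Phi_{\bomega}(\bxi)g(\bxi,O)$; since the shadow estimator is unbiased, $\mathbb{E}_{\text{shadow}}[g(\bxi,O)\mid\bxi]=f(\bxi,O)$, orthogonality yields $\mathbb{E}[\hat a_{\bomega}]=a_{\bomega}$. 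As the retained and discarded frequency blocks are orthogonal, Parseval gives
\begin{equation}
\mathbb{E}_{\bx}|h_{\mathsf{s}}-f|^2=\underbrace{\sum_{\|\bomega\|_0\le\Lambda}2^{-\|\bomega\|_0}(\hat a_{\bomega}-a_{\bomega})^2}_{\text{estimation}}+\underbrace{\sum_{\|\bomega\|_0>\Lambda}2^{-\|\bomega\|_0}a_{\bomega}^2}_{\text{truncation}}.
\end{equation}

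For the truncation term I would convert the gradient hypothesis into a coefficient-tail statement. Differentiating $\Phi_{\bomega}$ in coordinate $j$ sends it (up to sign) to $\Phi_{\sigma_j\bomega}$, where $\sigma_j$ flips $\bomega_j$ between $\pm1$ and annihilates the term when $\bomega_j=0$; since $\sigma_j$ preserves $\|\bomega\|_0$, orthogonality gives the Parseval identity $\mathbb{E}_{\bx}\|\nabla_{\bx}f\|_2^2=\sum_{\bomega}\|\bomega\|_0\,2^{-\|\bomega\|_0}a_{\bomega}^2\le C$. A one-line averaging bound then controls the high-frequency tail, $\sum_{\|\bomega\|_0>\Lambda}2^{-\|\bomega\|_0}a_{\bomega}^2\le\frac1{\Lambda}\sum_{\bomega}\|\bomega\|_0 2^{-\|\bomega\|_0}a_{\bomega}^2\le C/\Lambda$, which is $\le\epsilon/4$ by the choice $\Lambda=4C/\epsilon$.

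The statistical term is the crux. I would treat each $\hat a_{\bomega}$ as an empirical mean of $n$ i.i.d.\ variables $Y_i=2^{\|\bomega\|_0}\Phi_{\bomega}(\bxi)g(\bxi,O)$ and apply Bernstein's inequality together with a union bound over the $|\mathfrak{C}(\Lambda)|$ retained frequencies. Two shadow facts feed in: (i) a pointwise bound, namely that a single Pauli snapshot factorizes and each local factor $3\ket{s_j}\bra{s_j}-\mathbb{I}_2$ has trace norm $3$, so every $K$-local piece obeys $|\Tr(\tilderho_T(\bxi)O_i)|\le3^K\|O_i\|_{\infty}$ and hence $|g(\bxi,O)|\le3^K\sum_i\|O_i\|_{\infty}\le3^KB$, giving the range $|Y_i|\le2^{\|\bomega\|_0}3^KB$ and the second moment $\mathbb{E}_{\text{shadow}}[g^2]\le9^KB^2$; and (ii) the resulting variance bound $\mathrm{Var}(Y_i)\le\mathbb{E}[Y_i^2]\le2^{2\|\bomega\|_0}\,\mathbb{E}_{\bxi}[\Phi_{\bomega}^2]\,9^KB^2=2^{\|\bomega\|_0}9^KB^2$. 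Bernstein then yields, with probability $1-\delta/|\mathfrak{C}(\Lambda)|$, that $2^{-\|\bomega\|_0}(\hat a_{\bomega}-a_{\bomega})^2\lesssim 9^KB^2\log(2|\mathfrak{C}(\Lambda)|/\delta)/n$ (the variance term dominates because $n\gtrsim|\mathfrak{C}(\Lambda)|\ge2^{\Lambda}\ge2^{\|\bomega\|_0}$ renders the range term subdominant); summing the $|\mathfrak{C}(\Lambda)|$ identical contributions bounds the estimation block by $\lesssim|\mathfrak{C}(\Lambda)|9^KB^2\log(2|\mathfrak{C}(\Lambda)|/\delta)/n$, which the prescribed $n$ drives below $\epsilon/2$. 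A union bound makes all $|\mathfrak{C}(\Lambda)|$ coefficient bounds hold simultaneously with probability $1-\delta$, and adding the two blocks closes the argument already at $T=1$.

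The main obstacle is this statistical block, and specifically the necessity of a Bernstein-type argument rather than a crude range (Hoeffding) bound. The kernel weight $2^{\|\bomega\|_0}$ inflates the range of $Y_i$, so that $\mathbb{E}[Y_i^2]$-via-range would scale as $2^{2\|\bomega\|_0}9^KB^2$; summing such terms against the $2^{-\|\bomega\|_0}$ Parseval weight would leave an unacceptable $\sum_{\bomega}2^{\|\bomega\|_0}$ factor, exponential in $\Lambda$. It is only because the \emph{variance} carries a single $2^{\|\bomega\|_0}$ that the $2^{-\|\bomega\|_0}$ weight cancels it exactly, leaving the clean per-frequency cost $9^KB^2$ and the dimension factor $|\mathfrak{C}(\Lambda)|$ that appears in $n$. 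Pinning down the shadow second moment correctly (the $9^K=3^{2K}$ locality factor for the fixed $O=\sum_iO_i$, uniformly over $\bxi$) and checking that the Bernstein range term is genuinely negligible for the stated $n$ are the two places demanding care; the remainder is orthogonality bookkeeping.
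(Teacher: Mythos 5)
Your proposal is correct and follows the same architecture as the paper's proof in SM~F: split the squared error into a truncation block and an estimation block, convert the gradient hypothesis into the Fourier-weight identity $\mathbb{E}_{\bx}\|\nabla_{\bx} f\|_2^2=\sum_{\bomega}\|\bomega\|_0\,2^{-\|\bomega\|_0}a_{\bomega}^2$ to obtain the $C/\Lambda$ tail bound, and control the retained coefficients by concentration plus a union bound over $\mathfrak{C}(\Lambda)$ using the single-snapshot bound $|\Tr(\tilderho_1(\bx)O)|\leq 3^K B$. Three of your choices genuinely differ, and each is an improvement. First, you use the exact Pythagorean decomposition of the two orthogonal frequency blocks, whereas the paper uses a triangle inequality (tighter only by a constant). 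Second, your derivation of the gradient identity --- differentiation permutes the basis within a fixed $\|\bomega\|_0$ shell, so Parseval applies coordinatewise --- replaces the paper's lengthy combinatorial counting argument and is considerably cleaner. Third, and most substantively, you replace Hoeffding by Bernstein for the weighted coefficients $\hat a_{\bomega}$. This is not cosmetic: the paper applies Hoeffding to the \emph{unweighted} empirical means, and in passing from $\sum_{\bomega}2^{-\|\bomega\|_0}\tilde A_{\bomega}^2$ to its definition of $\tilde D_{\bomega}$ it silently drops a factor of $2^{\|\bomega\|_0}$ (since $\tilde A_{\bomega}$ equals $2^{\|\bomega\|_0}$ times the unweighted deviation, $2^{-\|\bomega\|_0}\tilde A_{\bomega}^2$ is $2^{+\|\bomega\|_0}$ times the unweighted squared deviation, not equal to it); carried through honestly, that route would incur an extra $\sum_{\bomega}2^{\|\bomega\|_0}$ factor, exponential in $\Lambda$. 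Your observation that the \emph{variance} of $2^{\|\bomega\|_0}\Phi_{\bomega}(\bxi)g(\bxi,O)$ carries only a single power of $2^{\|\bomega\|_0}$, exactly cancelled by the Parseval weight, is the right mechanism for recovering the stated $|\mathfrak{C}(\Lambda)|\,9^K B^2\log(2|\mathfrak{C}(\Lambda)|/\delta)/n$ bound, and your check that the Bernstein range term is subdominant for the prescribed $n$ (via $n\gtrsim|\mathfrak{C}(\Lambda)|\geq 2^{\|\bomega\|_0}$) is the one remaining point that needs, and receives, care.
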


\noindent The obtained results reveal that both sample and computational complexities of $h_{\mathsf{s}}$ are predominantly influenced by the cardinality of $\mathfrak{C}(\Lambda)$, or equivalently the quantity $C$ as $\Lambda=4C/\epsilon$. That is, a polynomial scaling of $|\mathfrak{C}(\Lambda)|$ with $N$ and $d$ can ensure both a polynomial runtime cost to obtain $\kappa_{\Lambda}(\bx, \bxi )g(\bxi,O)$ and a polynomial sample complexity $n$, leading to an overall polynomial computational complexity of our proposal (see Methods and SM~G for details). In contrast, for an unbounded $C$ such that $|\mathfrak{C}(\Lambda)|$ exponentially scales with the number of $\RZ$ gates $d$, the computational overhead of $h_{\mathsf{s}}$ becomes prohibitive for a large $d$, aligning with the findings from Theorem~\ref{thm:learn-complexity}.

\begin{figure*}
	\centering
\includegraphics[width=0.91\textwidth]{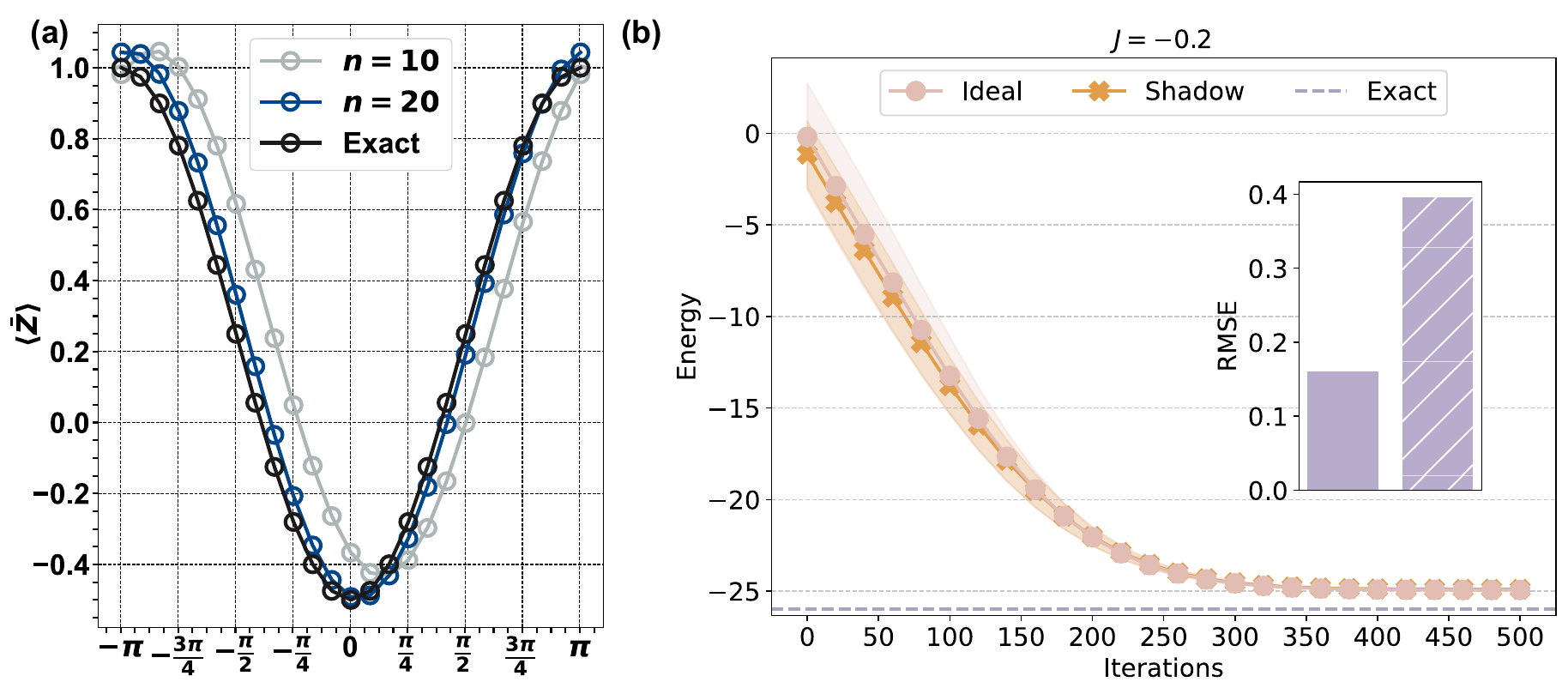}
	\caption{\small{\textbf{Numerical results of predicting properties of quantum states evolved by  60-qubit global Hamiltonians}. (a) \textsc{Prediction on the evolved state with $d=1$}. The notation $n=a$ refers to the number of training examples used to form the classical representation as $a$. The y-axis denotes the magnetization $\braket{\bar{Z}}=\frac{1}{60}\sum_i \braket{Z_i}$. (b) \textsc{Pretraining performance of the proposed ML model on a 50-qubit TFIM}. The inset presents the prediction error  of the proposed ML model for both the ideal (solid bar) and shadowed (Hatching patterns with `/') cases, i.e., $T\rightarrow \infty$ and $T=300$. The main plot depicts the training dynamics of the proposed ML model over $500$ iterations under ideal (circle markers) and shadowed (star markers) settings. The shaded areas represent the variance, while the dashed line indicates the exact ground-state energy.  }}
	\label{fig:main-Ham-sim}
\end{figure*}
 
We next underscore that in many practical scenarios, the quantity $C$  can be effectively bounded, allowing the proposed ML model to serve as a valuable complement to quantum tomography and classical simulations in comprehending quantum devices (See SM~K for heuristic evidence). One scenario involves characterizing near-Clifford quantum circuits consisting of many CI gates and few non-Clifford gates, which find applications in quantum error correction \cite{calderbank1997quantum,terhal2015quantum} and efficient implementation of approximate t-designs \cite{haferkamp2020quantum}. In this context, adopting the full expansion with $\Lambda=d$ is  computationally efficient, as  $|\mathfrak{C}(d)|\sim O(N)$. Meanwhile, when focused on a specific class of quantum circuits described as $\CI +\RZ$ with a fixed layout, our model surpasses classical simulation methods \cite{aaronson2004improved,lai2022learning} in runtime complexity by eliminating the dependence on the number of Clifford gates $G-d$. 

 Another scenario involves the advancement of variational quantum algorithms (VQAs) \cite{cerezo2021variational,tian2022recent,bharti2021noisy}, a leading candidate of leveraging near-term quantum computers for practical utility in machine learning \cite{li2022recent}, quantum chemistry, and combinatorial optimization \cite{farhi2016quantum}. Recent studies have shown that in numerous VQAs, the gradients norm of trainable parameters yields $C\leq 1/\text{poly}(N)$ \cite{pesah2021absence,zhang2021toward,wang2022symmetric,larocca2023theory} or $C\leq 1/\text{exp}(N)$, a.k.a, barren plateaus \cite{mcclean2018barren,cerezo2020cost,marrero2021entanglement,arrasmith2022equivalence}. These insights, coupled with the results in Theorem 2, suggest that our model can be used to pretrain VQA algorithms on the classical side to obtain effective initialization parameters before quantum computer execution, preserving access to quantum resources \cite{dborin2022matrix,rudolph2023synergistic}. Theoretically, our model broadens the class of VQAs amenable to classical simulation, pushing the boundaries of achieving practical utility with VQAs \cite{schreiber2023classical,cerezo2023does,gan2024concept}.

Last, the complexity bound in Theorem~\ref{thm:learning-non-trunc} hints that the locality of the observable $K$ is another factor dominating the performance of $h_{\mathsf{s}}$. This exponential dependence arises from the use of Pauli-based classical shadow, and two potential solutions can be employed to alleviate this influence. The first solution involves adopting advanced variants of classical shadow to enhance the sample complexity bounds \cite{huang2021efficient,nguyen2022optimizing,zhang2023minimal,vermersch2023enhanced}. The second solution entails utilizing classical simulators to directly compute the quantity $\{\Tr(\rho(\bxi)O)\}$ instead of shadow estimation $\{g(\bxi,O)\}$ in Eq.~(\ref{eqn:generic-learner}), with the sample complexity summarized in the following corollary.
 \begin{corollary}[Informal]\label{append:Coro:predict-error-clc-backend-full-expand}
 	Following notations in Theorem~\ref{thm:learning-non-trunc}, when $\{\Tr(\rho(\bxi)O)\}_i$  are computed by classical simulators and  $n \sim  \widetilde{\mathcal{O}}(3^d B^2 d /\epsilon)$, with high probability, the average prediction error is upper bounded by $\epsilon$. 
 \end{corollary}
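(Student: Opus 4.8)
The plan is to reduce the claim to the proof of Theorem~\ref{thm:learning-non-trunc}, observing that the only structural change is that the shadow estimate $g(\bxi,O)=\Tr(\tilderho_T(\bxi)O)$ is replaced by the exact value $f(\bxi,O)=\Tr(\rho(\bxi)O)$ returned by a classical simulator. First I would invoke the structural fact already used there, that $f(\cdot,O)$ is a trigonometric polynomial supported on frequencies $\bomega\in\{0,\pm1\}^d$, i.e. $f(\bx,O)=\sum_{\bomega\in\{0,\pm1\}^d}\alpha_{\bomega}\Phi_{\bomega}(\bx)$, and that under the uniform measure the monomials are orthogonal with $\mathbb{E}_{\bx}[\Phi_{\bomega}(\bx)^2]=2^{-\|\bomega\|_0}$. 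The weights $2^{\|\bomega\|_0}$ in $\kappa_{\Lambda}$ are precisely the reproducing normalization, so that $h_{\mathsf{s}}(\bx,O)=\sum_{\|\bomega\|_0\le\Lambda}2^{\|\bomega\|_0}\hat{\beta}_{\bomega}\Phi_{\bomega}(\bx)$ with $\hat{\beta}_{\bomega}=\frac1n\sum_i\Phi_{\bomega}(\bxi)f(\bxi,O)$ an unbiased Monte Carlo estimate of $\beta_{\bomega}:=\mathbb{E}_{\bxi}[\Phi_{\bomega}(\bxi)f(\bxi,O)]=\alpha_{\bomega}2^{-\|\bomega\|_0}$.

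Next I would split the risk into a truncation (bias) part and an estimation (variance) part. By Parseval the truncation part equals $\sum_{\|\bomega\|_0>\Lambda}\alpha_{\bomega}^2 2^{-\|\bomega\|_0}$; combining $\|\bomega\|_0>\Lambda$ with the identity $\mathbb{E}_{\bx}\|\nabla_{\bx}f\|_2^2=\sum_{\bomega}\|\bomega\|_0\,\alpha_{\bomega}^2 2^{-\|\bomega\|_0}\le C$ bounds it by $C/\Lambda$, which for the full expansion $\Lambda=d$ (the regime of this corollary, with $|\mathfrak{C}(d)|=3^d$) vanishes entirely. The estimation part, again by Parseval, is $\mathbb{E}_{\bx}|h_{\mathsf{s}}-f_{\Lambda}|^2=\sum_{\|\bomega\|_0\le\Lambda}2^{\|\bomega\|_0}(\hat{\beta}_{\bomega}-\beta_{\bomega})^2$, so it remains to control each coefficient deviation.

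The crucial step is the variance bound. Since $|f(\bxi,O)|\le B$ and $\mathbb{E}_{\bxi}[\Phi_{\bomega}(\bxi)^2]=2^{-\|\bomega\|_0}$, the per-sample variance obeys $\mathrm{Var}_{\bxi}[\Phi_{\bomega}(\bxi)f(\bxi,O)]\le\mathbb{E}_{\bxi}[\Phi_{\bomega}^2 f^2]\le B^2 2^{-\|\bomega\|_0}$. Applying Bernstein's inequality to $\hat{\beta}_{\bomega}$, whose summands are bounded by $B$ in magnitude, gives with probability $1-\delta/|\mathfrak{C}(\Lambda)|$ that $(\hat{\beta}_{\bomega}-\beta_{\bomega})^2\lesssim B^2 2^{-\|\bomega\|_0}\log(|\mathfrak{C}(\Lambda)|/\delta)/n$ once $n$ is large enough that the sub-Gaussian term dominates the range term. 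The factor $2^{-\|\bomega\|_0}$ exactly cancels the kernel weight $2^{\|\bomega\|_0}$, so every frequency contributes $O(B^2\log(|\mathfrak{C}(\Lambda)|/\delta)/n)$; a union bound over the $|\mathfrak{C}(\Lambda)|$ frequencies then yields $\mathbb{E}_{\bx}|h_{\mathsf{s}}-f_{\Lambda}|^2\lesssim |\mathfrak{C}(\Lambda)|B^2\log(|\mathfrak{C}(\Lambda)|/\delta)/n$. Setting this to $\epsilon$ with $|\mathfrak{C}(d)|=3^d$ and $\log(3^d/\delta)=\widetilde{\mathcal{O}}(d)$ gives $n=\widetilde{\mathcal{O}}(3^d B^2 d/\epsilon)$, as claimed.

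The main obstacle is precisely this variance cancellation: a naive Hoeffding bound using only the range $B$ would give $(\hat{\beta}_{\bomega}-\beta_{\bomega})^2\lesssim B^2\log/n$ uniformly, hence a weighted sum $\sum_{\|\bomega\|_0\le\Lambda}2^{\|\bomega\|_0}B^2\log/n$ scaling as $5^d$ rather than $3^d$ for $\Lambda=d$. Obtaining the advertised $3^d$ therefore requires the second-moment (Bernstein) argument together with the observation $\mathbb{E}_{\bxi}[\Phi_{\bomega}^2 f^2]\le B^2 2^{-\|\bomega\|_0}$; this is also exactly where replacing the shadow estimate $g$ by the exact $f$ pays off, since it removes the shadow variance that injects the extra $9^K$ locality factor present in Theorem~\ref{thm:learning-non-trunc}. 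A minor technical point I would verify is that $n=\widetilde{\mathcal{O}}(3^d B^2 d/\epsilon)$ is indeed large enough for the Bernstein range term $B\log(|\mathfrak{C}(\Lambda)|/\delta)/n$ to be lower order, which holds whenever $\epsilon\lesssim B$.
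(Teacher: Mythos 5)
Your proposal is correct, and it follows the same skeleton as the paper's proof: expand $f(\cdot,O)$ in the trigonometric monomial basis, use orthogonality ($\mathbb{E}_{\bx}[\Phi_{\bomega}\Phi_{\bomega'}]=2^{-\|\bomega\|_0}\delta_{\bomega,\bomega'}$) to reduce the risk to a sum of per-frequency coefficient deviations, concentrate each deviation, and union bound over the $|\mathfrak{C}(d)|=3^d$ frequencies. The one place you genuinely diverge is the per-frequency concentration step, and this divergence is substantive rather than cosmetic. The paper applies plain Hoeffding with range $B$ to $D_{\bomega}=\frac{1}{n}\sum_i\Phi_{\bomega}(\bxi)f(\bxi,O)-\mathbb{E}_{\bx}[\Phi_{\bomega}(\bx)f(\bx,O)]$ and then sums $3^d$ copies of $\tau^2$; but in passing from $2^{-\|\bomega\|_0}A_{\bomega}^2$ to $|D_{\bomega}|^2$ the paper silently drops a factor of $2^{\|\bomega\|_0}$ (since $A_{\bomega}=2^{\|\bomega\|_0}D_{\bomega}$, one has $2^{-\|\bomega\|_0}A_{\bomega}^2=2^{\|\bomega\|_0}D_{\bomega}^2$, not $D_{\bomega}^2$). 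Carried correctly, the Hoeffding route yields $\sum_{\bomega}2^{\|\bomega\|_0}\tau^2=5^d\tau^2$, exactly the obstruction you flag. Your Bernstein argument with the second-moment bound $\mathbb{E}_{\bxi}[\Phi_{\bomega}(\bxi)^2 f(\bxi,O)^2]\leq B^2 2^{-\|\bomega\|_0}$ is what makes the weight cancel and legitimately recovers the advertised $3^d$ scaling; your check that $n\sim 3^dB^2d/\epsilon$ keeps the Bernstein range term subdominant is also the right thing to verify. In short, your route is a strictly more careful version of the paper's, and it repairs a gap in the published derivation rather than introducing one. (The same dropped factor appears in the paper's proof of the estimation error for Theorem~\ref{thm:learning-non-trunc}, so your observation applies there as well.)
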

 \noindent Although using the classical simulators can improve the dependency of the locality of observable and remove the necessity of quantum resources, the price to pay is increasing the computational overhead and only restricting to a small constant $d$.  Refer to SM~H for the proofs, implementation details, and more discussions. 
 
\noindent\textbf{Remark}. For clarity, we focus on showing how the proposed ML model applied to the bounded-gate circuits with $\CI$ and $\RZ$ gates. In SM~I, we illustrate that our proposal and the relevant theoretical findings can be effectively extended to a broader context, i.e., the circuit composed of $\CI$ gates alongside parameterized multi-qubit gates generated by arbitrary Pauli strings. In SM~J, we show how our proposal differs from Pauli path simulators.

\medskip 
\noindent \textbf{Numerical results}.--- We conduct numerical simulations on $60$-qubit quantum circuits to assess the efficacy of the proposed ML model. The omitted details, as well as the demonstration of classically optimizing VQAs with smaller qubit sizes, are provided in  Method and SM~J.

We first use the proposed ML model to predict the properties of rotational GHZ states. Mathematically,  we define an $N$-qubit rotational GHZ states with $N=60$ as $\ket{\text{GHZ}(\bx)}=U(\bx)({\ket{0}^{\otimes N} + \ket{1}^{\otimes N}})/{\sqrt{2}}$, where $U(\bx) =  \RY_1(\bx_1) \otimes \RY_{30}(\bx_2) \otimes  \RY_{60}(\bx_3)$ and the subscript $j$ refers to apply $\RY$ gate to the $j$-th qubit. At the training stage, we constitute the dataset $\mathcal{T}_s$ containing  $n=30$ examples, where each example is obtained by uniformly sampling $\bx$ from $[-\pi, \pi]^3$ and applying classical shadow to $\ket{\text{GHZ}(\bx)}$ with the shot number $T=1000$. 

The first subtask is predicting a two-point correlation function, i.e., the expectation value of $C_{ij} = (X_iX_j +Y_iY_j +Z_iZ_j)/3$ for each pair of qubits ($i,j$), at new values of $\bx$. To do so, the proposed ML model leverages $\mathcal{T}_s$ to form the classical representations with $\Lambda=3$ and exploits these representations to proceed with prediction at $\bx$. Fig.~\ref{fig:Rot-GHZ-main}(a) depicts the predicted and actual values of the correlation function for a particular value of $\bx$, showing reasonable agreement. The second subtask is exploiting how the prediction error depends on the truncation value $\Lambda$, the number of training examples $n$, and the shot number $T$. Fig.~\ref{fig:Rot-GHZ-main}(b) showcases that when $T=1000$ and the training data are sufficient (i.e., $n\geq 500$), the root mean squared (RMS) prediction error on $10$ unseen test examples is almost the same for the full expansion (i.e., $\Lambda =3$) and the proper truncation (i.e., $\Lambda =2$). Besides, Fig.~\ref{fig:Rot-GHZ-main}(c) indicates that when $n=500$, the prediction error on the same test examples reaches a low value for both $\Lambda =3$ and $\Lambda =2$ once the shot number $T$ exceeds a threshold value ($T\geq 50$). These results are consistent with our theoretical analysis.

We next apply the proposed ML model to predict properties of the state evolved by a global Hamiltonian $\mathsf{H}=J_1\otimes_{i=1}^NZ_i + \sum_{i=1}^N X_i$, where $J_1$ is a predefined constant and the qubit count is set to $N=60$. The initial state is $\ket{0}^{\otimes N}$ and  $U(\bx)$ is the Trotterized time evolution of $\mathsf{H}$. By properly selecting the evolution time at each Trotter step and $J_1$, the Trotterized time-evolution circuit takes the form as $U(\bx)=\prod_{j=1}^d (e^{-\imath \bx_j \otimes_{i=1}^N Z_i }\otimes_{i=1}^N\RX(\pi/3))$. We set the total number of Trotter steps as $d=1$. At the training stage, we constitute the dataset $\mathcal{T}_{\mathsf{s}}$ following the same rule presented in the last simulation. The only difference is the dataset size and the shot number, which is $n = 20$ and $T=500$. The task is predicting the magnetization with $\braket{\bar{Z}}=\frac{1}{60}\sum_i \braket{Z_i}$. 

The comparison between the exact value and the prediction with $\Lambda=1$ (full expansion) is shown in Fig.~\ref{fig:main-Ham-sim}(a). We select $25$ test examples evenly distributed across the interval $[-\pi, \pi]$. By increasing the number of training examples $n$ from $10$ to $20$, the prediction values of the proposed ML model almost match the exact results.

We last evaluate the performance of the proposed ML model as a classical surrogate for variational quantum eigen-solver (VQE) in estimating the ground state energy of the Transverse Field Ising Model (TFIM). The explored $50$-qubit TFIM takes the form as $\mathsf{H}= \sum_{i=1}^{N-1} -0.2 Z_iZ_{i+1} -0.5 \sum_{i=1}^N X_i$ with $N=50$. The initialized state is $\ket{0}^{\otimes 50}$ and the employed ansatz is  $U(\bx)=\prod_{i=1}^{N-1}\text{RZZ}(\bxi;i)(\otimes_{i=1}^N \RX(\bx_{i+N-1})) \text{Ha}^{\otimes N}$, where $\text{Ha}$ refers to the Hadamard gate and $\text{RZZ}(\bxi;i)$ denotes applying RZZ gate to the $i$-th and $i+1$-th qubits. 

The pretraining process involves three steps. In the first two steps, we construct the training dataset $\mathcal{T}_{\mathsf{s}}$ by randomly sampling $\bx\in [-\pi, \pi]^{99}$ from a uniform distribution and inputting these values into the ansatz $U(\bx)$ to generate classical shadows. The number of trainable examples is set as $n=1500$ and the snapshot for each quantum state $\rho(\bx)$ has two settings, which are $T=300$ (shadow case) and $T\rightarrow \infty$ (ideal case). Once $\mathcal{T}_{\mathsf{s}}$ is collected, we move to the third step, where the proposed ML model $h_{\mathsf{s}}(\bx, \mathsf{H})$ built on $\mathcal{T}_{\mathsf{s}}$ to pretrain VQEs. The pretraining amounts to an optimization task with $\min_{\bx} h_{\mathsf{s}}(\bx, \mathsf{H})$. The Adam optimizer is used to minimize this surrogate loss, with a learning rate of $0.01$ and a total of $500$ iterations. Refer to SM~K for more details. 

 The achieved results are illustrated in Fig.~\ref{fig:main-Ham-sim}(b). The inset illustrates the prediction error between the exact result and the optimized $h_{\mathsf{s}}(\bx, \mathsf{H})$, evaluated using $500$ test examples sampled from the same data distribution as the training examples. The prediction error for both the shadow  and the ideal datasets remains remarkably low, demonstrating the success of the learning process.  The main plot shows the dynamics of the proposed ML model during the minimization of the surrogate loss. For both the noisy and ideal cases, the estimated energies after pretraining are close to the exact value.

\medskip
\noindent\textbf{DISCUSSION}\\
In this study, we prove that learning bounded-gate quantum circuits with incoherent measurements is sample efficient but computationally hard. Furthermore, we devise a provably efficient ML algorithm to predict the incoherent dynamics of bounded-gate quantum circuits, transitioning from exponential to polynomial scaling. The achieved results provide both theoretical insights and practical applications, demonstrating the efficacy of ML in comprehending and advancing quantum computing. 

Several crucial research avenues merit further exploration. First, our study addresses the scenario of noiseless quantum operations. An important and promising area for future investigation is the development of provably efficient learning algorithms capable of predicting the incoherent dynamics of noisy bounded-gate quantum circuits \cite{shao2023simulating,fontana2023classical,huang2023learning}. Secondly, it is vital to extend our theoretical results from average-case scenarios to worst-case scenarios, wherein classical input can be sampled from arbitrary distributions rather than solely from the uniform distribution \cite{arunachalam2017guest,huang2021information}. Such extensions would deepen our understanding of the capabilities and limitations of employing machine learning to comprehend quantum circuits. Moreover, there exists ample opportunity to enhance our proposed learning algorithm by exploring alternative kernel methods, such as the positive good kernels \cite{stein2011fourier} adopted in Ref.~\cite{che2023exponentially}. In addition, independent of this work, a crucial research topic is understanding the hardness of classically simulating the incoherent dynamics of bounded-gate quantum circuits with simple input quantum states in the measure of the averaged prediction error. Last, it would be intriguing to explore whether deep learning algorithms \cite{qian2023multimodal} can achieve provably improved prediction performance and efficiency for specific classes of quantum circuits.

\medskip
\noindent\textbf{METHODS}

\noindent{\textbf{Computational efficiency}}. The proposed ML model, as mentioned in the main text, is not only sample efficient but also computationally efficient when the quantity $C$ is well bounded. Here we briefly discuss why the proposed ML model is computationally efficient and defer the detailed analysis in SM~G. 

Recall that our proposal consists of two stages: training dataset collection and model inference. As a result, to demonstrate the efficiency of our proposal, it is equivalent to exhibiting the computational efficiency at each stage. 

 \noindent{Training dataset collection}. This stage amounts to loading the collected training dataset $\mathcal{T}_{\mathsf{s}}=\{\bxi, \tilde{\rho}_T(\bxi)\}_{i=1}^n$ to the classical memory. For the classical input controls $\{\bxi \}_{i=1}^n$, $\mathcal{O}(d n)$ computation cost is sufficient. For the classical shadow $\{\tilde{\rho}_T(\bxi) \}_{i=1}^n$, the required computational cost is $\mathcal{O}(nNT)$, where $T$ refers to the number of snapshots. According to Theorem~\ref{thm:learning-non-trunc}, the number of training examples $n$ polynomially scales with $N$ and $d$ for an appropriate $C$. Taken together, this stage is computationally efficient when $C$ is well-bounded.    

 \noindent{Inference  stage}. Following the definition of our model in Eq.~(\ref{eqn:generic-learner}), its inference involves summing over the assessment of each training example  $(\bxi, \tilderho_T(\bxi))$, and the evaluation of each training example is composed of two components, i.e., the shadow estimation $\Tr(O\tilde{\rho}_T(\bxi))$ and the kernel function calculation $\kappa_{\Lambda}(\bx, \bxi)$. 
 
 As for the Pauli-based shadow estimation, the computation of each $\Tr(O_i\tilde{\rho}_T(\bxi))$ can be completed in $\mathcal{O}(T)$ time after storing the classical shadow in the classical memory. In other words, the computation can be completed in $\mathcal{O}(Tq)$ runtime with $O=\sum_{i=1}^q O_i$. As for the  kernel function calculation, it can be completed in $
 	\mathcal{O}(|\mathfrak{C}(4C/\epsilon)|)$ runtime. 
 
 Taken together, the inference stage is also computationally efficient when $C$ is well bounded, where the required computational cost is   	$\mathcal{O}(n(Tq+|\mathfrak{C}(4C/\epsilon)|))$.

\smallskip
\noindent \textbf{Proof Sketch of Theorem~\ref{thm:learn-complexity}}. The proof of Theorem~\ref{thm:learn-complexity} is reached by separately analyzing the lower and upper bounds of the sample complexity, and the lower bound of the runtime complexity.

\noindent{{Sample complexity}}. We separately analyze the lower and upper bounds of the sample complexity. As for the lower bound, we design a specific learning task in which the concept class is $ \hat{\mathcal{F}}= \{f_{\bm{a}}(\bx) \big| \bm{a}\in \{0, 1\}^d, \bx\sim [-\pi,\pi]^d   \}$ with $f(\bx) = \sqrt{2\epsilon} B   \cos (\sum_{i=1}^d (1-\bm{a}_i)\bx_i)$. We demonstrate that such concepts can be effectively prepared by a two-qubit circuit with $d$ $\RZ$ gates and $G-d$ Clifford gates. We then harness Fano's lemma to establish the lower bound of the sample complexity when learning $\hat{\mathcal{F}}$, i.e., $\Pr[\bm{a}^* \neq \bar{\bm{a}}] \geq 1 -\frac{I(\bm{a}^*; \bar{\bm{a}}) + \log 2}{\log | \hat{\mathcal{F}}|}$ with $\bm{a}^*$ being the target concept and $\bar{\bm{a}}$ being the inferred concept. Specifically, we independently quantify the upper bound of the mutual information $I(\bm{a}^*; \bar{\bm{a}})$ and the cardinality of the concept class $|\hat{\mathcal{F}}|$. Supported by the results of information theory, we obtain $ I(\ba^*; \barba) \leq n\cdot \min\{\frac{\epsilon T}{1 - \epsilon}, N\log2\}$, independent with the gate count $d$ or $G$. Meanwhile, we have $|\hat{\mathcal{F}}|=2^d$. Taken together, we reach the lower bound. 

 As for the upper bound, we leverage hypothesis testing to reformulate its derivation as a task of quantifying the cardinality of the packing net for the concept class $\mathcal{F}$. Particularly, the cardinality is determined by all possible arrangements of $d$ $\RZ$ gates and $G-d$ Clifford gates across $N$-qubit wires,  which is inherently a combinatorial problem. Through analysis, we obtain that the upper bound of the packing number of $\mathcal{F}$ is $ \binom{G}{d}\cdot N^d \cdot 3^{G-d} \cdot \binom{N}{2}^{G-d}$. Then, according to the result of statistical learning theory, the required sample complexity logarithmically depends on this cardinality, leading to the linear scaling with $d$ and $G$.

\noindent{{Runtime complexity}}. To demonstrate the computational hardness, we leverage results from the $\BQP$ complexity class. Specifically, we construct a tailored concept class $\mathcal{F}$ that incorporates a concept $f^*$ realized by the $\BQP$ circuit. Consequently,  if an efficient classical learner were to exist for this task, it would imply that any $\BQP$ language could be decided in $\mathsf{P/poly}$—a scenario that is widely considered implausible.

The implementation of  $f^*$ follows the similar manner with Ref.~\cite{molteni2024exponential}. Let $\mathsf{L}$ be an arbitrary $\BQP$ language. Since $\mathsf{L} \in \BQP$, there exists a corresponding quantum circuit $U^*$ which decides input bitstrings $\bx\in \{0, 1\}^N$ correctly on average with respect to the data distribution. In particular, define $O'=\Z_1\otimes \mathbb{I}_{2^{N-1}}$ as the observable, which measures the Pauli-Z operator on the first qubit. The target concept $f^*$ is defined as $f^*(\bx,O') = \Tr(O'\rho^*(\bx))$ with $\rho=U^*\ket{\bx}\bra{\bx}U^{*\dagger}$. The result $f^*(\bx,O')$, whether positive or negative, determines the decision regarding whether the input bitstrings $\bx$ belongs to  $\mathsf{L}$ or not. As such, we connect the problem of learning mean-value space of bounded-gate circuits with complexity theory. 

The remaining step to complete the proof involves demonstrating that the quantum state $\rho^*(\bx)$ can be efficiently prepared using RZ and Clifford gates. Taken together, this completes the proof.

\smallskip
\noindent \textbf{Proof Sketch of Theorem~\ref{thm:learning-non-trunc}}. We first exhibit that the expected risk is upper bounded by the truncation error $\text{Err}_1$ and the estimation error $\text{Err}_2$ with $\mathbb{E}_{\bx}| h_{\mathsf{s}}(\bx, O) - f(\rho(\bx),O) |^2 \leq (\sqrt{\text{Err}_1} + \sqrt{\text{Err}_2})$, where $\text{Err}_1 = \mathbb{E}_{\bx} [ | \Tr \left(O \rho_{\Lambda}(\bx) \right) - \Tr(O\rho(\bx))  |^2 ]$    $\text{Err}_2 = \mathbb{E}_{\bx} [ | \Tr \left(O \rho_{\Lambda}(\bx) \right) - h_{\mathsf{s}}(\bx, O)  |^2 ]$. Here we expand the explored state under the trigonometric monomial basis in Eq.~(\ref{eqn:trigono-basis}) with $\rho_{\Lambda}=  \sum_{\bomega \in \mathfrak{C}(\Lambda)}\Phi_{\bomega}(\bx)$  and  
	$\rho= \rho_{d}$.  Intuitively, $\text{Err}_1$ measures the discrepancy between the target state and the truncated state, and $\text{Err}_2$ measures the difference between the prediction and the estimation of the truncated state.

 The upper bound of $\text{Err}_1$ is established by connecting with the gradient norm of the expectation value $\Tr(\rho(\bx)O)$. More concisely, supported by the results of Fourier analysis, we prove that $\text{Err}_1$ is upper bounded by $\mathbb{E}_{\bx\sim [-\pi, \pi]^d}\|\nabla_{\bx} \Tr(\rho(\bx)O)\|_2^2/{\Lambda}$. Then, under the assumption $\mathbb{E}_{\bx\sim [-\pi, \pi]^d}\|\nabla_{\bx} \Tr(\rho(\bx)O)\|_2^2\leq C$, we obtain $\text{Err}_1\leq C/\Lambda$. The upper bound of $\text{Err}_2$ is obtained by exploiting the results of Pauli-based shadow estimation. Particularly, we first prove that $h_{\mathsf{s}}(\bx, O)$ is an unbiased estimator of $\rho_{\Lambda}(\bx)$. Then we make use of the results of Pauli-based shadow to quantify the estimation error, i.e., $\text{Err}_2\leq \tilde{\mathcal{O}}(|\mathfrak{C}(\Lambda)|  \frac{1}{2n} B^2 9^K)$. Taken together, we reach the complexity bound in Theorem~\ref{thm:learning-non-trunc}.

\smallskip
\noindent{\textbf{Code implementation and simulation details}}. All numerical simulations in this work are conducted using Python, primarily leveraging the Pennylane and PastaQ libraries. For small qubit sizes with $N\leq 20$, we use Pennylane to generate training examples and make predictions for new inputs. For larger qubit sizes with $N> 20$, we utilize PastaQ to generate training data and Numpy to handle post-processing and new input predictions.

The code implementation of our proposal proceeds as follows. The collected shadow states are alternately stored in the HDF5 binary data format and the standard binary file format in NumPy. For instance, when $N=60$ and $T=5000$, the file size of each shadow state is 14.4 MB. Utilizing advanced storage strategies can further reduce memory costs. The representation used in the kernel calculation is obtained via an iterative approach. In particularly, we iteratively compute a set of representations $\{\Phi_{\bomega}\}$ satisfying $\|\bomega\|_0=\lambda$ and $\lambda\leq \Lambda$. Given a specified $\lambda$, we identify the indices with nonzero values via a combinatoric iterator and compute the corresponding representations according to Eq.~(\ref{eqn:trigono-kernel}). It is noteworthy that adopting distributed computing strategies can further improve the computational efficiency of our method.    
 
In the task of predicting the two-point correlation of rotational GHZ states (i.e., Fig.~\ref{fig:Rot-GHZ-main}(a)), the test value used in the prediction stage is $\bx = [1.234, -1.344, -1.716]$. All diagonal elements in the correlation matrix are set as $1$.

	\medskip
	
	\noindent \textbf{DATA AVAILABILITY}
	
	The data generated and analyzed during the current study are publicly available at the following repositories:\\ Github: \url{https://github.com/yuxuan-du/Efficient_Predicting_Bounded_Gate_QC}; \\Google Drive:  \url{https://drive.google.com/drive/folders/1lzDyZ7d3Wsm2ncFMGhkUjJg7HrI3ouXy}. 
	
These datasets support the findings of the study and include all data necessary to reproduce the results presented in the main text and supplementary materials.

	\medskip
	
	\noindent \textbf{CODE AVAILABILITY}
	
The code used in this study is available at the Github repository \url{https://github.com/yuxuan-du/Efficient_Predicting_Bounded_Gate_QC}.

 	\medskip 
 	
 	\noindent \textbf{ACKNOWLEDGEMENTS}\\
 	D.T. is supported by the National Research Foundation, Singapore, under its NRF Professorship Award No. NRF-P2024-001.  
 	
  	\medskip 	
	\noindent \textbf{AUTHOR CONTRIBUTIONS}\\
	The project was conceived by Y.D., and D.T. Theoretical results were proved by Y.D., and M.H. Numerical simulations and analysis were performed by Y.D. All authors contributed to the write-up.

	\medskip
	
	\noindent \textbf{COMPETING INTERESTING}\\
	The authors declare no competing interests.

\clearpage
\newpage

\onecolumngrid

\appendix

\tableofcontents
\renewcommand{\appendixname}{SM}
\renewcommand{\thetable}{\arabic{table}}
\renewcommand{\tablename}{Supplementary Table}
\renewcommand{\thefigure}{\arabic{figure}}
\renewcommand{\figurename}{Supplementary Fig.}
\setcounter{figure}{0}

\bigskip

\section{Preliminary and literature review}
This section contains the necessary backgrounds of the whole work. Specifically, in Supplementary Material (SM)~\ref{append:subsec:classical-shadows}, we present the basic information of classical shadow. Then, in SM~\ref{append:subsec:trigo-monomial-exp-QC}, we recap the trigonometric expansion of quantum circuits implemented by $\RZ$ and Clifford gates.  Afterward, in SM~\ref{append:subsec:A:generality},  we elaborate on the generality of the explored problem in our work. Last, we provide a literature review in SM~\ref{append:subsec:literature-review}. 

\subsection{Classical shadow and its application in estimating incoherent dynamics of quantum states}\label{append:subsec:classical-shadows}
 Classical shadow represents a computationally and memory-efficient approach for storing quantum states on classical computers, primarily used for estimating the expectation values of local observables \cite{huang2020predicting}. The fundamental principle of classical shadow lies in the `measure first and ask questions later' strategy. In this subsection, we outline the utilization of classical shadow to estimate linear functions under Pauli-based measurements. Interested readers can refer to tutorials and surveys \cite{huang2022learning,elben2022randomized} for more comprehensive details.

\medskip 
\noindent\textbf{Formalism of classical shadow.} The general scheme of classical shadow for an unknown $N$-qubit state $\rho$ is repeating the following procedure $T$ times. At each time, the state $\rho$ is first operated with a unitary $U$ randomly sampled from the predefined unitary ensemble $\mathcal{U}$ and then each qubit is measured under the Z basis to obtain an $N$-bit string denoted by $\bm{b} \in \{0, 1\}^N$. There exists a linear map $\mathcal{M}(\cdot)$ satisfying 
\begin{equation}
	\mathcal{M}(\rho) =\mathbb{E}_{U\sim \mathcal{U}}\mathbb{E}_{\bm{b}\sim \PP(\bm{b})} U^{\dagger}\ket{\bm{b}}\bra{\bm{b}} U    = \sum_{\bm{b}} \int \mathsf{d} U U^{\dagger}\ket{\bm{b}}\bra{\bm{b}} U\bra{\bm{b}}U\rho U^{\dagger}\ket{\bm{b}},
\end{equation} 
where $\PP(\bm{b})=\bra{\bm{b}}U\rho U^{\dagger}\ket{\bm{b}}$. Such a linear map implies that the unknown state $\rho$ can be formulated as 
\begin{equation}
	\rho =  \sum_{\bm{b}} \int \mathsf{d}U \mathcal{M}^{-1}\Big(U^{\dagger}\ket{\bm{b}}\bra{\bm{b}} U\Big) \bra{\bm{b}}U\rho U^{\dagger}\ket{\bm{b}}.
\end{equation}

In other words, the state $\rho$ can be estimated by sampling the snapshot with $M$ times following the distribution $\PP(\bm{b})$. Define the $t$-th snapshot as $U_t^{\dagger}\ket{\bm{b}_t}\bra{\bm{b}_t}U_t$ with $t\in [T]$ and $U_t\sim \mathcal{U}$. The estimated state of $\rho$  is 
 \begin{equation}
	\hat{\rho}_T = \frac{1}{T}\sum_{t=1}^T \hatrho_t,~\text{with}~\hatrho_t=\mathcal{M}^{-1}(U_t^{\dagger}\ket{\bm{b}_t}\bra{\bm{b}_t} U_t).
\end{equation}

As pointed out in Ref.~\cite{huang2020predicting}, the random Pauli basis is not only experimentally friendly but also enables a succinct form of the classical shadow. When Pauli-based measurements are adopted, it is equivalent to setting the unitary ensemble $\mathcal{U}$ as single-qubit Clifford gates, i.e., $U_t=U_{1,t}\otimes  \cdots U_{j,t} \cdots \otimes U_{N,t}\sim \mathcal{U}=\CI(2)^{\otimes N}$ with uniform weights. In this case, the inverse snapshot takes the form as  
\begin{equation}\label{append:SM-A-CS-form}
	\hatrho_t  = \mathcal{M}^{-1}\left(\bigotimes_{j=1}^N U_{j,t}^{\dagger} |\bm{b}_{j,t}\rangle\langle \bm{b}_{j,t}|U_{j,t}\right) = \bigotimes_{j=1}^N \mathcal{D}_{1/3}^{-1} \left(U_{j,t}^{\dagger} |\bm{b}_{j,t}\rangle\langle \bm{b}_{j,t}|U_{j,t} \right) = \bigotimes_{j=1}^N  \left(3U_{j,t}^{\dagger} |\bm{b}_{j,t}\rangle\langle \bm{b}_{j,t}|U_{j,t} - \mathbb{I}_2 \right), 
\end{equation} 
 where $\mathcal{D}_{1/3}^{-1}(Y)= 3Y - \Tr(Y)\mathbb{I}$. 

\medskip  
\noindent\textbf{Estimate incoherent dynamics (linear properties of quantum states).} The tensor product form of the classical shadow in Eq.~(\ref{append:SM-A-CS-form}) allows an efficient procedure to predict linear properties of the state $\rho$. A typical instance is estimating the expectation value $\Tr(\rho O)$ with $O$ being a local observable. Mathematically, suppose the local observable to be a Pauli string, i.e., $O=P_1\otimes ...P_i...\otimes P_N$ with $P_i\in\{X,Y,Z,\mathbb{I}\}$ for $\forall i \in [N]$, the estimation of the classical shadow is
\begin{equation}
	\Tr\left(\hatrho_T (P_1\otimes ...P_i...\otimes P_N)  \right) = \frac{1}{T}\sum_{t=1}^T\prod_{j=1}^N \Tr\left(\left(3U_{j,t}^{\dagger} |\bm{b}_{j,t}\rangle\langle \bm{b}_{j,t}|U_{j,t} - \mathbb{I}_2 \right)P_{j}\right), 
\end{equation}  
 which is memory and computation efficient. Namely, one only needs $\mathcal{O}(NT)$ bits to store $\hat{\rho}_T$ and $\mathcal{O}(NT)$ computational time to load  $\hat{\rho}_T$ to the classical memory. Next, when the locality of the observable $O=\sum_iO_i$ is $K\sim \mathcal{O}(1)$, the shadow estimation of the expectation value, i.e., $\Tr(\hat{\rho}O_i)$, can be performed in $\mathcal{O}(T)$ time after $\hat{\rho}_T$ is loaded into the classical memory \cite{huang2020predicting}.

\subsection{Pauli transfer matrix and Trigonometric  expansion of $\RZ+\CI$ quantum circuits}\label{append:subsec:trigo-monomial-exp-QC}
\noindent\textbf{Pauli Transfer Matrix}. Here we use Pauli-Liouville representation to reformulate the quantum state and the observable. Specifically, an $N$-qubit state can be represented as a $4^N$-dimensional vector, whose $i$-th entry is 
\begin{equation}
	|\cdot \rrangle =\Tr(\cdot P_i)~\text{with}~ P_i\in \{\mathbb{I}, X, Y, Z\}^N. 
\end{equation}
For example, the state $\ket{0}^{\otimes N}$ satisfies $(\ket{0}\bra{0})^{\otimes N}=((\mathbb{I} + Z)/2)^{\otimes N}$, which indicates that its representation under the Pauli basis yields 
\begin{equation}
\ket{0}^{\otimes N} \equiv |\bm{0}\rrangle = ((\mathbb{I} + Z)/2)^{\otimes N} = \left([1, 0, 0, 1]^{\top}\right)^{\otimes N}.	
\end{equation} 
Similarly, the normalized Pauli operator $O$ under the Pauli basis yields 
\begin{equation}
	|O \rrangle = \left[\Tr\left(O P_1 \right), \cdots,  \Tr\left(O P_{4^N}\right)\right]^{\top}.
\end{equation}
The unitary operator can also be represented by Pauli basis, i.e., given a parameterized circuit $U(\btheta)$, its Pauli Transfer Matrix (PTM) $\bUnitary(\btheta)$ yields
\begin{equation}\label{append:eqn:PTM_general}
	[\bUnitary(\btheta)]_{ij} = \llangle P_i|\bUnitary(\btheta)| P_j \rrangle = \Tr\left(P_iU(\btheta)P_j U(\btheta)^{\dagger}\right). 
\end{equation}
For example, the PTM representation of $\RZ(\bx_j)$ gate is 
\begin{equation}\label{append:eqn:PTM_RZ}
\bRZ(\bx_j) = 
\begin{bmatrix}
   1 & 0 & 0 & 0 \\
   0 & \cos(\bx_j) & \sin(\bx_j) & 0 \\
    0 & -\sin(\bx_j) &  \cos(\bx_j) & 0 \\
     0 & 0 & 0 & 1 
   \end{bmatrix} = D_0 + \cos(\bx_j)D_1 + \sin(\bx_j) D_{-1},
\end{equation}
 where   $D_0=\begin{bmatrix}
   1 & 0 & 0 & 0 \\
   0 & 0 & 0 & 0 \\
    0 & 0 &  0 & 0 \\
     0 & 0 & 0 & 1 
   \end{bmatrix}$, $D_1 = \begin{bmatrix}
   0 & 0 & 0 & 0 \\
   0 & 1 & 0 & 0 \\
    0 & 0 &  1 & 0 \\
     0 & 0 & 0 & 0 
   \end{bmatrix}$, and $D_{-1}=\begin{bmatrix}
   0 & 0 & 0 & 0 \\
   0 & 0 & -1 & 0 \\
    0 & 1 &  0 & 0 \\
     0 & 0 & 0 & 0 
   \end{bmatrix}$.

\medskip   
\noindent\textbf{Trigonometric expansion of $\RZ+\CI$ quantum circuits}. A critical research line in quantum computing involves determining if variational quantum algorithms can provide meaningful advantages over state-of-the-art classical methods through dequantization via Fourier expansion \cite{schuld2022quantum,sweke2023potential}. Despite differing objectives, the techniques developed, particularly the Low-Weight Efficient Simulation Algorithm (LOWESA) \cite{fontana2023classical,rudolph2023classical}, serve as inspiration for our approach to predicting the incoherent dynamics of bounded-gate quantum circuits. In the following context, we delve into the mechanics of LOWESA and postpone the discussion on the connection and difference between our work and this research line to SM~\ref{append:subsec:literature-review}.

We now recap the mechanism of LOWESA. When an $N$-qubit quantum circuit $U(\bx)$ is composed of $d$ $\RZ$ gates and $G-d$ $\CI$ gates, the state representation under Pauli-basis expansion yields
\begin{equation}\label{append:tri-vqa-PTM}
	\rho(\bx) = U(\bx)(\ket{0}\bra{0})^{\otimes N} U(\bx)^{\dagger} =  \sum_{\bomega}\Phi_{\bomega}(\bx) \llangle 0| \bUnitary_{\bomega}^{\dagger} \equiv  \sum_{\bomega}\Phi_{\bomega}(\bx) \rho_{\bomega}.
\end{equation}
The notation $\Phi_{\bomega}(\bx)$ with $\bomega\in \{0, \pm 1\}^d$ refers to the \textit{trigonometric monomial basis} with  values 
\begin{equation}\label{append:eqn:basis-tri-comp}
	\Phi_{\bomega}(\bx) = \prod_{i=1}^d \begin{cases}
		 1 ~ & \textnormal{if}~ \bomega_i = 0 \\
		 \cos(\bx_i) & \textnormal{if}~\bomega_i = 1 \\
		 \sin(\bx_i) &  \textnormal{if}~ \bomega_i = -1
	\end{cases}.
\end{equation}
Moreover, $\bUnitary_{\bomega}$ in Eq.~(\ref{append:tri-vqa-PTM}) refers to the \textit{purely-Clifford circuit} for the path indexed by $\bomega$ in the sense that in the path $\bomega$,  each $\RZ$ gate at the $i$-th position with $i\in [d]$  is replaced by the operator $D_{\bomega_i}$ in Eq.~(\ref{append:eqn:PTM_RZ}).  The quantum mean value under an observable $O$ is
\begin{equation}\label{append:eqn:PTM-mean-value}
	f(\bx, O) \equiv \Tr(\rho(\bx)O)= \sum_{\bomega}\Phi_{\bomega}(\bx) \llangle 0| \bUnitary_{\bomega}^{\dagger} | O \rrangle.
\end{equation} 
It is noteworthy that LOWESA is a purely classical approach to estimating expectation values of simple variational quantum circuits. The term `simple' is reflected by the fact that even for low dimensional classical inputs (i.e., a small $d$), its computational complexity is prohibited by other two key factors: the initial state and the number of $\CI$ gates $G-d$. This is because when the initial state $(\ket{0}\bra{0})^{\otimes N}$ is substitute a complicated one (e.g., a state with exponentially many terms under Pauli basis), and $G-d$ becomes large, the computational overhead of calculating $\llangle 0| \bUnitary_{\bomega}^{\dagger} | O \rrangle$ is unaffordable by classical simulators.

\subsection{Generality of exploring the learnability of $\mathcal{F}$}\label{append:subsec:A:generality}

Here we discuss the generality of the explored mean-value space in the main text, i.e.,
\begin{equation}\label{append:eqn:A3:generality1}
   \mathcal{F} = \left\{f(\bx, O)= \Tr\left(U(\bx)\rho_0 U(\bx)^{\dagger}O\right) \Big|\bx \in [-\pi, \pi]^d \right\},
\end{equation}  
where $U(\bx) = \prod_{l=1}^{d}(\RZ(\bx_l)u_e)$ is composed of $d$ $\RZ(\bx_l)$ gates, $G-d$ $\CI$ gates denoted by $U_e$  with $\CI=\{H, S, \CNOT\}$, $O$ is constituted by multiple local observables with a bounded norm, i.e., $O=\sum_{i=1}^q O_i$ and $\sum_l \|O_i\|_{\infty}\leq B$, and the maximum locality of $\{O_i\}$ is $K$. This formalism encompasses many applications for quantum computing, including variational quantum algorithms, classical-shadow-based algorithms, and quantum system certification. A common feature of these applications is that the exploited quantum circuit can be hardware-efficient. As a result, for a specified quantum device, its executable quantum circuits can be described by $\RZ+\CI$ gates with a fixed layout but with different angles, i.e., $U(\bx)$ in Eq.~(\ref{append:eqn:A3:generality1}). Given an unknown state $\rho_0$ evolved by $U(\bx)$ and incoherently measured by an observable $O$, it forms the mean-value space $  \mathcal{F}$ explored in this work. In the following, we detail how  $\mathcal{F}$ relates to variational quantum algorithms, classical-shadow-based applications, and quantum system certification, respectively.

\smallskip
\noindent\textbf{Variational quantum algorithms}. We briefly review the mechanism of variational quantum algorithms (VQAs). Interested readers can refer to the surveys \cite{cerezo2021variational,bharti2021noisy} for detailed information. VQAs generally consist of an $N$-qubit circuit and a classical optimizer.  In the training stage, VQAs follow an iterative manner to proceed with optimization, where the optimizer continuously leverages the output of the quantum circuit to update trainable parameters of the adopted ansatz, i.e., $\bx$ of $U(\bx)$, to minimize a predefined objective function $\mathcal{L}(\cdot)$. Mathematically, at the $t$-th iteration, the updating rule for the trainable parameters $\bx$ is 
\begin{equation}\label{append:eqn:A3:generality2}
	\bx^{(t+1)}=\bx^{(t)} - \eta \frac{\partial \mathcal{L}\left(f(\bx^{(t)},O), c_1\right)}{\partial \bx},
\end{equation}
where $\eta$ is the learning rate, $c_1\in\mathbb{R}$ is the target label, and $f(\bx^{(t)},O)$ amounts to the output of the quantum circuit defined in Eq.~(\ref{append:eqn:A3:generality1}). The optimization is terminated when the training loss is lower than a threshold or the total number of iterations $T$ exceeds a predefined value~\cite{mcclean2018barren,cerezo2020cost,miao2024equivalence}. Two main protocols of VQAs are quantum neural networks (QNNs) and variational quantum Eigen-solver (VQEs). The former is utilized to solve machine learning tasks such as classification and regression, and the latter is exploited to estimate the ground state energy of a given Hamiltonian. Notably, when QNNs are applied, the classical input $\bx$ should be divided into two parts, where the first part is used to encode training examples (without updating) and the second part refers to the trainable parameters. Refer to SM~\ref{append:sec:num-res} for details.

\smallskip
\noindent\textbf{Classical-shadow-based applications}. A major application of classical shadow, as introduced in SM~\ref{append:subsec:classical-shadows}, is estimating linear properties of quantum states, i.e., $\Tr(\rho O)$. When the measured states $\rho$ are generated by large-qubit quantum devices, whose circuit can be described by $\RZ+\CI$ gates with a fixed layout but with different angles, the formed function space coincides with $\mathcal{F}$ in Eq.~(\ref{append:eqn:A3:generality1}). It is noteworthy  that classical-shadow-based applications are highly entangled with variational quantum algorithms \cite{boyd2022training,sack2022avoiding}, quantum system certification (explained below), and two novel quantum machine learning protocols---flipped model and shadow model \cite{jerbi2023shadows}. For the flipped model, the training examples $\bm{z}$ are encoded into the measurement observables rather than the quantum state, i.e., the prediction of the flipped model yields $\Tr(U(\bx)\rho_0 U(\bx)^{\dagger}O(\bm{z}))$, where $\bx$ refers to trainable parameters. As the mean-value space $\mathcal{F}$ in Eq.~(\ref{append:eqn:A3:generality1}) supports a class of measurement operators $O$ sampled from a distribution $\mathbb{D}_O$, the achieved results in this work can be harnessed to empower the flipped model. Similarly, the key concept of the shadow model is using classical shadow to obtain the classical description of the state $U(\bx)\rho_0 U(\bx)^{\dagger}$, followed by classically estimating $\Tr(U(\bx)\rho_0 U(\bx)^{\dagger}O(\bm{z}))$, which is also under the framework of $\mathcal{F}$ in Eq.~(\ref{append:eqn:A3:generality1}). Taken together, the developed algorithm in this work can greatly reduce the required quantum resources for developing the relevant algorithms. 

\smallskip
\noindent\textbf{Quantum system certification}. As quantum devices scale up to larger system sizes, the demand for application-specific certification tools becomes apparent. These tools must surpass standard approaches, such as fully simulating a device on a classical computer or performing full tomographic reconstruction, which incurs exponential computational overhead with the qubit count. In response, various certification protocols have been developed to extract different levels of information from the explored quantum chip by using minimal quantum resources. Among them, a representative class of certification protocols is estimating the linear property of quantum states generated by the employed quantum chip, such as direct fidelity estimation \cite{flammia2011direct}, entanglement witnessing \cite{horodecki2001separability}, and two-point correlator \cite{huang2022provably}. Notably, all of these tasks can be described by the mean-value space $\mathcal{F}$ in Eq.~(\ref{append:eqn:A3:generality1}).

\subsection{Literature review}\label{append:subsec:literature-review}

Relevant prior literature to our study can be categorized into three groups: solving quantum many-body problems using machine learning, learnability of quantum circuits, and the combination of quantum computing (e.g., variational quantum algorithms and quantum system certification) and classical machine learning. In the following, we elucidate how our study aligns with and distinguishes itself from these earlier works.

 \medskip
 \noindent\textbf{Machine learning for quantum many-body problems}. The seminal work in this context is presented in Huang et al.~\cite{huang2022provably}, demonstrating that machine learning algorithms, informed by data collected in physical experiments, can effectively tackle certain quantum many-body problems that are challenging for classical algorithms. In particular, the $l_2$ \textit{Dirichlet kernel} is proposed to predict the ground state properties of a class of Hamiltonians. 
 
Recent follow-up works have further augmented the capacity of machine learning in addressing quantum many-body problems. Specifically, Ref.~\cite{che2023exponentially} explored a specific class of scenarios in quantum many-body problems and devised \textit{positive good kernels}, achieving a polynomial sample complexity for predicting quantum many-body states and their properties. Furthermore, when concentrating on learning the average of observables with a locality assumption, Refs.~\cite{Lewis2024Improved,onorati2023efficient} achieved a quasi-polynomial sample complexity. Additionally, Ref.~\cite{onorati2023provably} extended the results of learning phases of quantum matter characterized by exponential decay of correlations, to the task of learning local expectation values for all states within a phase.

Our study diverges from this line of research in terms of its distinct objectives. While previous works concentrate on predicting the properties of quantum many-body states, our focus is on predicting the incoherent dynamics of bounded-gate quantum circuits. This disparity in objectives results in the utilization of \textit{different prior information}, with the former relying primarily on assumptions about the explored Hamiltonian, such as locality, while the latter leverages the properties of quantum gates, such as $\RZ + \CI$. As a result, the proposed trigonometric monomial kernel in our proposal is more succinct compared to the $l_2$ Dirichlet kernel in terms of the frequency set, i.e., $\bomega\in \{0, \pm 1\}^d$ in Eq.~(6) versus $\bm{k}\in \mathbb{Z}^d$ in Dirichlet kernel.  Furthermore, the learning efficiency of the algorithm proposed in Ref.~\cite{onorati2023provably} relies on \textit{the smoothness of the Hamiltonians} in the quantum many-body systems under exploration, an assumption that does not apply to the learnability of bounded-gate quantum circuits. A key technical contribution of this work is establishing a connection between the effectiveness of the proposed ML model and the gradients of bounded-gate quantum circuits with respect to the classical inputs.

 \medskip
 \noindent\textbf{Learning quantum circuits}. Several studies have been undertaken to explore the learnability of quantum circuits, a critical aspect of quantum learning theory \cite{anshu2023survey,banchi2023statistical}. Notably, in Ref.~\cite{zhao2023learning}, it was demonstrated that when learning a state generated by a quantum circuit with $G$ two-qubit gates to a small trace distance, a sample complexity scaling linearly in $G$ is both necessary and sufficient. However, the computational complexity for learning states and unitaries must scale exponentially in $G$. While our work shares similar complexity scaling, it diverges in two key aspects: (1) \textit{we address a different problem}, focusing on predicting linear properties of bounded-gate quantum circuits with a \textit{classical learner}, while Ref.~\cite{zhao2023learning} investigated the quantum learner such that quantum resources are allowed in both the training and inference time; and (2) we provide concrete algorithms to balance the trade-off between sample and computational complexities.
 
In Ref.~\cite{huang2024learning}, a polynomial-time classical algorithm was devised to learn the description of any unknown $N$-qubit shallow quantum circuit $U$. Furthermore, they developed another polynomial-time classical algorithm to learn the description of any unknown $N$-qubit state prepared by a shallow quantum circuit $U$ on a 2D lattice. However, unlike their study, our proposed algorithm does not necessitate the circuit to be shallow. Additionally, the key components of our algorithm include classical shadow, kernel method, and trigonometric expansion, whereas their algorithm relies on a quantum circuit representation based on local inversions and a technique to combine these inversions.

In Ref.~\cite{jerbi2023power}, the authors examined the power and limitations of learning unitary dynamics using quantum neural networks through both coherent and incoherent measurements. Similar to Ref.~\cite{zhao2023learning}, the learner in this study is still quantum, meaning that quantum resources are required during inference. In contrast, our work focuses on a different problem where the learner is classical, indicating that the techniques employed to achieve the final results are fundamentally different.

 \medskip
 \noindent\textbf{Combination between quantum computing and machine learning}. For clarity, here we separately discuss how previous studies harness  machine learning to enhance variational quantum algorithms and quantum system certification. 
 
\smallskip
\noindent\textit{Variational quantum algorithms}. The incorporation of machine learning to improve VQAs follows two main approaches. The first approach involves integrating deep neural networks and variational quantum circuits to form a hybrid learning model capable of addressing diverse computational tasks \cite{cervera2021meta,huang2021experimental,zhang2021neural,jain2022graph,lei2024neural}. These learning protocols often lack theoretical guarantees and fall outside the scope of our study. 

The second approach entails designing \textit{classical surrogates} capable of inferring the output of VQAs. The outcomes of this approach not only benefit from assessing whether VQAs can offer meaningful advantages over state-of-the-art classical methods but also contribute to conserving quantum resources for the development of novel VQAs. A notable paradigm in this context, complementary to tensor network methods and Clifford-based simulators, involves the dequantization of VQAs via Fourier expansion \cite{sweke2023potential}. In particular, Refs.~\cite{fontana2022efficient,fontana2023classical,rudolph2023classical,nemkov2023fourier} proposed Fourier-based algorithms to simulate variational quantum algorithms when applied to simulate the expectation value of an observable for an initial state evolved under a unitary quantum circuit, e.g., the tasks covered by variational quantum Eigen-solvers and quantum approximate optimization algorithms. Besides, Refs.~\cite{schreiber2023classical,landman2022classically,casas2023multi,gan2024concept} harness Fourier features to dequantize VQAs, i.e., quantum neural networks, when applied to solve classical machine learning tasks.  

Our approach distinguishes itself from previous Fourier-based methods by offering broader applicability beyond the classical simulation of VQAs. A specific illustration is that prior Fourier-based approaches often impose constraints on the initial state, such as requiring it to be a simple product state $(\ket{0}\bra{0})^{\otimes N}$ interpreted in SM~\ref{append:subsec:trigo-monomial-exp-QC}. In contrast, our proposal eliminates this requirement. This advancement is rooted in the hybrid nature of our approach, which utilizes shadow information from quantum computers to build the training dataset. Technically, our method integrates three distinct techniques—classical shadow, kernel method, and trigonometric expansion—paving the way for the development of novel strategies to enhance variational quantum algorithms with provable guarantees.

\smallskip
\noindent\textit{Quantum system certification}. We next explain how our work relates to learning-based quantum system certification. Prior literature related to this topic can be classified into two categories. The first category is using deep learning to improve quantum tomography  \cite{torlai2018neural}, which has two distinct research lines. The first line involves explicit state reconstruction, where the output of neural networks represents the density matrix of the target quantum state \cite{du2023shadownet,ahmed2021quantum,cha2021attention}. This line exploits the generalization ability of neural networks, where the optimized neural networks can predict the density matrix of unseen states when sampled from the same distribution of the training data. The second line focuses on implicit state reconstruction, wherein neural networks are optimized to emulate the behavior of a given quantum state  \cite{carleo2018constructing,carrasquilla2019reconstructing,smith2021efficient,schmale2022efficient}. Note that the second line merely exploits the fitting power of neural networks, where the optimized neural networks can only reconstruct a single state and do not possess the generalization ability. As such, this research line is beyond the scope of our work. The second category is using deep learning to predict the properties of quantum states, including fidelity estimation \cite{zhang2021direct}, energy estimation \cite{zhu2022flexible}, entropy estimation \cite{nir2020machine}, cross-platform verification \cite{xiao2022intelligent},  and similarity testing \cite{wu2023quantum}. 

Despite distinct applications, the learning paradigm of these two categories is very similar. In particular, a labeled dataset needs to be first collected to conduct supervised learning. The data features of training examples generally are random measurement results, and the label corresponds to the specific tasks, i.e., the density matrix for quantum state tomography and entanglement entropy for entropy estimation. After training, the optimized neural network can predict the unseen state by feeding into the random measurement results. As addressed in the main text, a critical caveat of learning-based quantum system certification is that it lacks a theoretical guarantee. Our work fills this knowledge gap and provides concrete evidence of using various machine learning techniques to comprehend quantum systems.    

\medskip
\noindent\underline{Remark}. We conclude this section by highlighting the complementary nature of our work to quantum tomography and classical simulators in comprehending the behavior of large-qubit quantum devices. In contrast to tomography-based approaches, the offline capability of our proposal could markedly reduce the quantum resource overhead. Moreover, compared to classical simulators, our proposal offers two notable advantages. Firstly, it outperforms classical simulation methods in terms of runtime complexity by removing the reliance on the number of Clifford gates. Secondly, our approach accommodates arbitrarily complex input states, whereas classical simulators often necessitate simple input states such as the product state. A summary of the key focuses between different works is shown in the table below.

\begin{table}[h!]
\caption{\justifying\small{\textbf{Summary of the key focuses between different works}.}}
\label{tab:sum}
\begin{tabular}{|c|c|c|c|}
\hline
             & Learner Type & Access Quantum Resources at Inference & Learning Model                                    \\ \hline
Huang et al. \cite{huang2022provably} & Classical    & No                                    & Dirichlet Kernel                                  \\ \hline
Zhao et al. \cite{zhao2023learning} & Quantum      & Yes                                   & -                                                 \\ \hline
Jerbi et al. \cite{jerbi2023power} & Quantum      & Yes                                   & Quantum Neural Networks                           \\ \hline
Schreiber et al. \cite{schreiber2023classical}   & Classical    & No                                    & Least square  regression for Fourier-based models \\ \hline
This work    & Classical    & No                                    & Trigonometric Monomial Kernel                     \\ \hline
\end{tabular}
\end{table}

\section{Proof of Theorem 1}
 \begin{theorem-non}[Formal statement of Theorem 1] 
 Following notations in Eq.~(3), denote a dataset $\{\bxi, \tilde{f}_T(\bxi, O)\}_{i=1}^n$ containing $n$ training examples with $\bxi\in [-\pi, \pi]^d$ and $\tilde{f}_T(\bxi, O)$ being the estimation of $f(\bxi, O)$ using $T$ incoherent measurements. Then, when  $T\leq \frac{(1-\epsilon^2)}{\epsilon}N\log2$ and $\epsilon\ll 1$,  the training data size 
\begin{equation}\label{eqn:sample-bound-thm1}
\frac{(1 - \epsilon)(C_1 d - \log 2) }{ \epsilon T }  \leq n \leq   	\widetilde{\mathcal{O}}\left(\frac{B^2d+B^2NG}{\epsilon}\right)  
\end{equation}
is sufficient and necessary to achieve $\mathbb{E}_{\bx\sim \mathbb{D}_\mathcal{X}}\left| h(\bx,O)-\Tr(\rho(\bx),O) \right| \leq \epsilon$ with high probability. However, unless $\BQP \subseteq \PPoly$, there exists a class of $G$-bounded-gate quantum circuits that no algorithm can achieve $\mathbb{E}_{\bx\sim[-\pi, \pi]^d}\left| h(\bx,O)-\Tr(\rho(\bx),O) \right| \leq \epsilon$ in a polynomial time.
\end{theorem-non}

The proof of Theorem 1 can be broken down into three parts: the lower bound of sample complexity, the upper bound of sample complexity, and the lower bound of runtime complexity. The corresponding bounds are established in the following three theorems, and the proofs are presented in SM~\ref{append:sec:low-bound-sample}, SM~\ref{Appendix:sample-uppper-bound}, and SM~\ref{append:sec:exp-sepa-bound}, respectively.

 \begin{theorem}[Lower bound of sample complexity]\label{append:thm:low-bound-sample} Consider that a learner collects a dataset $\mathcal{T}=\{\bxi, \tilderho_T(\bxi)\}_{i=1}^n$ containing $n$ training examples to predict unseen states $\rho(\bx)$ with $\bx\in [-\pi, \pi]^d$. Suppose that with high probability, the learned model $h(\cdot,\cdot)$ can achieve 
 \begin{equation}
 	\mathbb{E}_{\bx\sim \mathbb{D}_\mathcal{X}}\left| h(\bx,O)-\Tr(\rho(\bx),O) \right| \leq \epsilon.
 \end{equation} 
Then, when  $T\leq \frac{(1-\epsilon)}{\epsilon}N\log2$,  the training data size must obey
\begin{equation}
	n\geq \frac{(1 - \epsilon)(C_1 d - \log 2) }{ \epsilon T },
\end{equation}
where $C_1\in (0, 1)$ is a constant.
 \end{theorem}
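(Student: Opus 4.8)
The plan is to prove this information-theoretic lower bound by reducing learning to a multiple-hypothesis testing (channel-decoding) problem and controlling it with Fano's inequality. First I would construct an ensemble of candidate circuits $\{U_j\}_{j=1}^{M}$, each a legitimate $G$-gate, $d$-$\RZ$ circuit, that encodes a hidden index $J$ (uniform on $[M]$) through the Clifford structure and the choice of the $d$ rotation slots. The ensemble must form a packing: the induced functions $f_j(\bx,O)=\Tr(\rho_j(\bx)O)$ are pairwise separated in the $\mathbb{E}_{\bx\sim\mathbb{D}_\mathcal{X}}|\cdot|$ metric by strictly more than $2\epsilon$, while $\log M=\Theta(d)$ (a Gilbert--Varshamov-type code on the encoding slots gives $\log M\ge C_1 d$ with $C_1\in(0,1)$). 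Separation guarantees that any learner achieving $\mathbb{E}_{\bx}|h(\bx,O)-f_J(\bx,O)|\le\epsilon$ induces, via nearest-function decoding $\hat J=\arg\min_j \mathbb{E}_{\bx}|h-f_j|$, a decoder that recovers $J$ with error probability $P_e\lesssim\epsilon$.

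Next I would invoke Fano's inequality on the Markov chain $J\to Y\to\hat J$, where $Y$ denotes the entire classical record produced by the $n$ training examples, each measured with $T$ incoherent shots. This yields $I(J;Y)\ge(1-P_e)\log M-\log 2\ge(1-\epsilon)(C_1 d-\log 2)$ (up to the precise Fano bookkeeping with $\log M=C_1 d$ and $P_e\le\epsilon$), supplying the numerator of the claimed bound.

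The heart of the argument is the matching upper bound on $I(J;Y)$. Writing $Y$ as the concatenation of the $nT$ individual shot outcomes and using subadditivity of mutual information (each shot is a classical measurement applied to an $N$-qubit state, so a single-letter bound tensorizes), I would obtain $I(J;Y)\le nT\cdot\max_{\text{shot}}I(J;\text{outcome})$. Two bounds on the per-shot term are available: the crude max-entropy bound $I(J;\text{outcome})\le N\log 2$ (an outcome is an $N$-bit string), and a construction-dependent bound $I(J;\text{outcome})\le c\,\epsilon$, obtained by arranging the candidate states to be nearly indistinguishable at each probed input so that the outcome laws $P_j$ satisfy $\max_j\DKL(P_j\,\|\,\bar P)\le c\,\epsilon$ for the mixture $\bar P$. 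The condition $T\le\frac{1-\epsilon}{\epsilon}N\log 2$ is exactly what forces the per-example accumulation $T\cdot c\epsilon$ to stay below the ceiling $N\log 2$, so the $\epsilon$-bound is binding and $I(J;Y)\le\epsilon\,nT$. Combining with Fano gives $\epsilon\,nT\ge(1-\epsilon)(C_1 d-\log 2)$, i.e. the stated $n\ge\frac{(1-\epsilon)(C_1 d-\log 2)}{\epsilon T}$.

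The main obstacle is designing the ensemble so that the two requirements coexist: the functions $f_j$ must be separated by more than $2\epsilon$ so the $\epsilon$-learner can decode $J$, yet each individual measurement shot must leak only $O(\epsilon)$ nats about $J$. Reconciling this global separation in the predicted quantity with per-shot near-indistinguishability is delicate, and pinning the per-shot information at the linear scale $\Theta(\epsilon)$ --- rather than the generic Fisher-information scale $\Theta(\epsilon^2)$ that a smooth family would give, which is also why the first-power (rather than squared) risk metric is natural here --- appears to require a tailored boundary/asymmetric construction of the measurement statistics together with a direct $\DKL$ estimate instead of a quadratic expansion. I would also verify that the encoding respects the $G$-gate, $d$-$\RZ$ budget and that the separation survives the average over $\bx\sim\mathbb{D}_\mathcal{X}$.
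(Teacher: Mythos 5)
Your proposal is correct and follows essentially the same route as the paper: a $2^{\Theta(d)}$-size packing of bounded-gate circuits, reduction to multiple hypothesis testing with nearest-function decoding, Fano's inequality for the numerator, and a per-example mutual-information bound of order $\epsilon T/(1-\epsilon)$ capped by $nN\log 2$, with the condition on $T$ selecting which term of the minimum is binding. The "delicate" reconciliation you flag at the end is resolved in the paper not by an asymmetric construction but simply by scaling the initial state's off-diagonal coherence to $\sqrt{2\epsilon}$, which makes every pair of concepts exactly $2\epsilon B^2$ apart in squared average distance while the per-shot outcome variance stays $\Theta(B^2)$, so the standard quadratic (Gaussian) KL formula already yields the $\Theta(\epsilon)$ per-shot leakage.
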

 
 \begin{theorem}[Upper bound of sample complexity]\label{thm:upper-bound-sample-comp}
 Consider a learner quires an $N$-qubit quantum circuits and  collects a dataset $\mathcal{T}=\{\bxi, \tilderho_T(\bxi)\}_{i=1}^n$  containing $n$ training examples to predict unseen states $\rho(\bx)$ with $\bx\in [-\pi, \pi]^d$. Suppose  with high probability, there exists a learned model $h(\cdot,\cdot)$ that can achieve 
 \begin{equation}
 	\mathbb{E}_{\bx\sim \mathbb{D}_\mathcal{X}}\left| h(\bx,O)-\Tr(\rho(\bx),O) \right|^2 \leq \epsilon.
 \end{equation} 
using training data of size
\begin{equation}
	n \leq \widetilde{\mathcal{O}}\left(\frac{B^2d+B^2NG}{\epsilon}\right).
\end{equation}
\end{theorem}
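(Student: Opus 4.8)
The plan is to recast the problem as a noisy scalar regression and then bound the sample complexity by a uniform-convergence argument whose complexity term is governed by the \emph{description} of the circuit rather than by the exponentially large trigonometric support. First I would convert the classical shadows into labels: taking $y_i=\Tr(\tilderho_T(\bxi)O)$ gives an unbiased estimator of $f(\bxi,O)$ whose variance is bounded (by $\|O\|\le B$ up to the Pauli-shadow locality factor) and whose magnitude is $O(B)$, since $|\Tr(\rho(\bx)O)|\le B$ for any normalized state. By the trigonometric expansion (SM~\ref{append:subsec:trigo-monomial-exp-QC}), the target $f(\cdot,O)=\sum_{\bomega}c_{\bomega}\Phi_{\bomega}(\cdot)$ lies in the realizable class $\mathcal{G}$ of functions generated by $\RZ+\CI$ circuits with $G$ gates and $d$ tunable $\RZ$ gates; every member is bounded by $B$ and is degree-one in each pair $(\cos\bx_l,\sin\bx_l)$. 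The learner then performs (regularized) empirical risk minimization over $\mathcal{G}$, and a standard realizable generalization bound gives $n=\widetilde{\mathcal{O}}\!\big((\mathrm{comp}(\mathcal{G})+\log(1/\delta))\,B^2/\epsilon\big)$, so the task reduces to bounding $\mathrm{comp}(\mathcal{G})$. Because the labels are only noisy, the finite-shot error enters solely through the bounded label variance, which is why the final bound carries no nontrivial dependence on $T$.

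Next I would split $\mathrm{comp}(\mathcal{G})$ into a discrete layout contribution and a continuous angle contribution. For the layout, each of the $G$ gates is specified by a type in $\{H,S,\CNOT\}$ and a support among the $N$ qubits, costing $O(N)$ bits per gate; a covering/finite-class count over all admissible layouts then contributes $O(NG)$, producing the $B^2NG/\epsilon$ term. For the continuous part I would fix the layout and use that the input-dependence passes through exactly $d$ degree-one trigonometric factors, so a parameter-counting (pseudo-dimension) estimate tied to the $d$ tunable angles should contribute $\widetilde{O}(d)$ and yield the $B^2d/\epsilon$ term. Summing the two pieces and absorbing logarithmic factors gives the claimed $\widetilde{\mathcal{O}}\big((B^2d+B^2NG)/\epsilon\big)$.

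The hard part will be reconciling the $d$-linear sample complexity with the fact that the expansion of $f$ carries up to $3^d$ nonzero coefficients and that $\rho_0$ may be an arbitrary, highly entangled state with exponentially many parameters: a naive pseudo-dimension bound for a fixed layout is $\sim 3^d$, and the full-expansion kernel of Eq.~(\ref{eqn:trigono-kernel}) has diagonal $\kappa_d(\bx,\bx)=3^d$, so neither route is sample-efficient on its own. The conceptual resolution I would pursue is that the bound must \emph{not} proceed by estimating all $3^d$ coefficients or by covering $\rho_0$ directly; instead it should cover only the circuit description, a set of effective size $2^{O(d+NG)}$, while the range bound $\mathbb{E}_{\bx}[f(\bx,O)^2]\le B^2$ controls everything outside this description. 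This is exactly what keeps the statement consistent with the runtime lower bound of Theorem~\ref{thm:learn-complexity}: a short-description hypothesis attaining the target error provably \emph{exists} (hence sample efficiency), yet locating it inside the exponentially structured class is computationally hard. A secondary obstacle is verifying that the shadow labels remain unbiased with variance $O(B^2)$ uniformly over $O\in\mathrm{supp}(\mathbb{D}_O)$, so that a single dataset certifies the bound simultaneously for every admissible observable.
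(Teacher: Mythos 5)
Your proposal follows essentially the same route as the paper: treat the hypothesis class as the \emph{finite} set of circuit descriptions, run empirical risk minimization over (a packing net of) that class with the noisy shadow-based labels, and invoke a generalization bound of the form $n\sim B^2\log|\mathcal{H}|/\epsilon$ (the paper uses Proposition~1 of Huang et al.~\cite{huang2021information} for exactly this step, which also justifies your claim that the finite-shot noise only enters through the bounded label range and leaves no nontrivial $T$-dependence). Your closing observation --- that one must cover only the circuit description, of size $2^{O(d+NG)}$, rather than the $3^d$ trigonometric coefficients or the state $\rho_0$ --- is precisely the paper's resolution via $\mathcal{M}(\mathcal{F},\epsilon,|\cdot|^2)\leq|\mathsf{Arc}(\RZ,\CI)|$.

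The one misstep is your attribution of the $B^2d/\epsilon$ term to a ``continuous angle contribution'' bounded by a pseudo-dimension count over the $d$ tunable angles. In this learning problem the angles $\bx$ are the \emph{input} variable (sampled from $\mathbb{D}_{\mathcal{X}}$ and supplied as data), not parameters of the hypothesis; once the layout is fixed there are no continuous degrees of freedom left to count, so that step as described would be vacuous or would analyze the wrong object. In the paper the $d$-linear term arises instead from the discrete combinatorics of the layout itself: the factor $\binom{G}{d}\cdot N^d$ counting where the $d$ $\RZ$ gates sit contributes $d\log(eG/d)+d\log N=\widetilde{O}(d)$ to $\log|\mathsf{Arc}(\RZ,\CI)|$, while the Clifford placements contribute the $\widetilde{O}(NG)$ term. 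Since your ``discrete layout contribution,'' if carried out for all $G$ gates including the $\RZ$ positions, already yields both terms, this is a misattribution within a correct overall argument rather than a fatal gap --- but you should drop the pseudo-dimension step and fold the $d$-dependence into the layout count.
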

 
 \begin{theorem}[Lower bound of runtime complexity]\label{thm:lower-bound-runtime-comp}
There exists a class of bounded-gate quantum circuits composed of $d$ $\RZ$ gates and $G-d$ $\CI$ gates such that no polynomial-time algorithm exists to achieve      
\begin{equation}
 	\mathbb{E}_{\bx \sim  \mathbb{D}_\mathcal{X}}\left| h(\bx,O)-\Tr(\rho(\bx),O) \right|^2 \leq \epsilon,
 \end{equation} 
unless $\mathsf{BQP}\subseteq \PPoly$.
 \end{theorem}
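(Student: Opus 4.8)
The plan is to establish the runtime lower bound by a reduction: I would exhibit a family of $\CI+\RZ$ circuits whose mean-values $\Tr(\rho(\bx)O)$ encode the outputs of arbitrary $\mathsf{BQP}$ computations, so that an efficient classical learner achieving the average prediction error in Eq.~\eqref{eqn:risk_task} would yield an efficient classical heuristic for a $\mathsf{BQP}$-complete problem, contradicting the assumption $\mathsf{BQP}\not\subset\mathsf{HeurP}^{\mathbb{D}'}/\mathsf{poly}$. First I would recall that any polynomial-size quantum circuit can be compiled into the $\CI+\RZ$ gate set with only polynomially many gates (by the universality of Clifford$+$$T$, and $T=\RZ(\pi/4)$), so that $G=\mathrm{poly}(N)$ and the number $d$ of tunable $\RZ$ gates can be taken polynomial as well. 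The decision problem solved by the $\mathsf{BQP}$ circuit is read off as a single expectation value $\Tr(\rho\,O)$ with $O$ a local observable (e.g.\ $Z_1$ on the output qubit), bounded in norm, so it lies inside the mean-value space $\mathcal{F}$ of Eq.~\eqref{eqn:function-space}.

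The key steps, in order, are: (i) fix a $\mathsf{BQP}$-complete language $L$ and, for each input instance, use the $\CI+\RZ$ compilation to produce a circuit $U(\bx)$ whose classical input $\bx$ parameterizes the instance (the angles of the $\RZ$ gates, or a designated subset of them, encode the problem data); (ii) arrange that for the relevant inputs $\bx$ the value $f(\bx,O)=\Tr(\rho(\bx)O)$ is bounded away from $1/2$ in a promise-problem sense, so that a prediction within $\epsilon$ suffices to decide $L$ on those inputs; (iii) choose the sampling distribution $\mathbb{D}'=\mathbb{D}_{\mathcal{X}}$ so that the instances of interest carry non-negligible weight, whence an $\epsilon$-small \emph{average} error (Eq.~\eqref{eqn:risk_task}) forces correct decisions on a $1-o(1)$ fraction of inputs under $\mathbb{D}'$; (iv) observe that the learner, being classical and offline with $\mathrm{poly}$-size advice equal to its trained parameters (the $n$ stored shadows and kernel data), is exactly a $\mathsf{HeurP}^{\mathbb{D}'}/\mathsf{poly}$ algorithm for $L$; (v) conclude that a polynomial runtime learner would place $\mathsf{BQP}\subset\mathsf{HeurP}^{\mathbb{D}'}/\mathsf{poly}$, contradicting the hypothesis, and therefore the runtime must be $\Omega(2^d)$ for this family.

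The main obstacle I anticipate is step (iii): converting a \emph{worst-case} $\mathsf{BQP}$ hardness into the \emph{average-case} guarantee demanded by Eq.~\eqref{eqn:risk_task}. Because the error bound is averaged over $\bx\sim\mathbb{D}_{\mathcal{X}}$, a small average error only controls prediction on a large-measure subset, so I must either (a) design the encoding so that the $\mathsf{BQP}$-hard instances occupy a constant (or inverse-polynomial) fraction of the support of $\mathbb{D}'$, padding the remaining inputs with trivially predictable circuits, or (b) invoke a worst-case-to-average-case structure for the chosen $\mathsf{BQP}$-complete problem; this is precisely why the hypothesis is phrased over the heuristic class $\mathsf{HeurP}^{\mathbb{D}'}$ for a \emph{specific} distribution $\mathbb{D}'$ rather than for all distributions. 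A secondary technical point is to verify that the $\Omega(2^d)$ scaling, rather than merely super-polynomial hardness, is justified: this follows by taking $d$ itself to grow polynomially in the $\mathsf{BQP}$ instance size and noting that any sub-exponential-in-$d$ runtime would still be polynomial in $N$, hence the contradiction persists and pins the lower bound at $\Omega(2^d)$.
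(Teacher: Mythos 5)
Your proposal follows essentially the same route as the paper: reduce to deciding a $\mathsf{BQP}$-complete language by compiling the $\mathsf{BQP}$ circuit into the universal $\CI+\RZ$ gate set with polynomially many gates, encode the instance into the $\RZ$ angles, read off the decision via a bounded local observable such as $Z_1$, and observe that a polynomial-time offline classical learner with its trained data as advice would place $\mathsf{BQP}\subset\mathsf{HeurP}^{\mathbb{D}'}/\mathsf{poly}$. The only difference is that the worst-case-to-average-case issue you flag as the main obstacle in step (iii) is not re-derived in the paper but discharged by directly invoking Lemma 1 of Molteni et al.\ \cite{molteni2024exponential}, whose hardness assumption is already stated in the average-case heuristic form over a specific distribution $\mathbb{D}'$ — exactly the resolution you anticipate.
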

 
\begin{proof}[Proof of Theorem 1]
The proof of Theorem 1 can be immediately obtained by combining the results of Theorems~\ref{append:thm:low-bound-sample}, \ref{thm:upper-bound-sample-comp}, and \ref{thm:lower-bound-runtime-comp}.
\end{proof}

\section{Lower bound of learning bounded-gate quantum circuits with incoherent measurement  (Proof of Theorem~\ref{append:thm:low-bound-sample})}\label{append:sec:low-bound-sample}
 
 In this section, we comprehend the fundamental limitation of learning quantum circuits with a bounded number of non-Clifford gates when only the classical input control and incoherent measurements are allowed, and the circuit layout is unknown to the learner. As elaborated on in the main text, these restrictions echo most experiments for large-qubit quantum computing, i.e., the prediction solely relies on shadow information about the output state without harnessing any prior knowledge about the circuit layout. In addition, as the data distribution $\mathbb{D}_\mathcal{X}$ is arbitrary, here we consider a specific setting in which the classical inputs $\bx$ are sampled from the uniform distribution, i.e., $\bx\sim [-\pi, \pi]^d$. 
 
 \begin{theorem-non}[Restatement of Theorem~ \ref{append:thm:low-bound-sample}]
 Consider that a learner collects a dataset $\mathcal{T}=\{\bxi, \tilderho_T(\bxi)\}_{i=1}^n$ containing $n$ training examples to predict unseen states $\rho(\bx)$ with $\bx\in [-\pi, \pi]^d$. Suppose that with high probability, the learned model $h(\cdot,\cdot)$ can achieve 
 \begin{equation}
 	\mathbb{E}_{\bx\sim[-\pi, \pi]^d}\left| h(\bx,O)-\Tr(\rho(\bx),O) \right| \leq \epsilon.
 \end{equation} 
Then, when  $T\leq \frac{(1-\epsilon)}{\epsilon}N\log2$,  the training data size must obey
\begin{equation}
	n\geq \frac{(1 - \epsilon)(C_1 d - \log 2) }{ \epsilon T },
\end{equation}
where $C_1\in (0, 1)$ is a constant.  
 \end{theorem-non}
 
To reach Theorem~\ref{append:thm:low-bound-sample}, we constitute a class of quantum circuits and analyze the lower bound of the required sample complexity for a learner to achieve the average prediction error $\epsilon$. In particular, we consider a simple family of two-qubit quantum circuits illustrated in Fig.~\ref{fig:circuit-fam}(a), which is composed of $d$ $\RZ$ gates and at most $6d$ $\CNOT$ gates, as each SWAP gate can be decomposed into three $\CNOT$ gates. The explicit expression of this class of quantum circuits is  
\begin{equation}\label{append:eqn:def-low-bound-circuit-clc}
	\mathcal{F}= \left\{f_{\bm{a}}(\bx) = \Tr\left(W(\bx, \bm{a}) (\rho_0\otimes \ket{0}\bra{0}) W(\bx, \bm{a})^{\dagger} O\right) \big| \bm{a}\in \{0, 1\}^d, \bx\sim [-\pi,\pi]^d   \right\},
\end{equation}
where $\rho_0\otimes \ket{0}\bra{0}$ is the initial two-qubit state with
\begin{equation}
	\rho_0 = \begin{bmatrix}
	\left(\left(\sqrt{1-2\sqrt{\epsilon}} + 	\sqrt{1+2\sqrt{\epsilon}}\right)/2\right)^2 & \sqrt{2\epsilon}/2\\
	\sqrt{2\epsilon}/2 & \left(\left(\sqrt{1-2\sqrt{\epsilon}} - 	\sqrt{1+2\sqrt{\epsilon}}\right)/2\right)^2
\end{bmatrix},
\end{equation}
$W(\bx, \bm{a})$ refers to the parameterized unitary depending on the classical input $\bx$ and the label  $\bm{a}$ with  
\begin{equation}
W(\bx, \bm{a})  = G_d^{\bm{a}_d} (\RZ(\bx_d) \otimes \mathbb{I})G_d^{\bm{a}_d} \cdots G_j^{\bm{a}_j} (\RZ(\bx_j) \otimes \mathbb{I})G_j^{\bm{a}_j} \cdots G_1^{\bm{a}_1}(\RZ(\bx_1) \otimes \mathbb{I})G_1^{\bm{a}_1}
\end{equation}
and $G_0^{0}=\cdots=G_d^{0}=\mathbb{I}_4$ and $G_1^{1}=\cdots=G_d^{1}=\text{SWAP}$, and $O =  B (X \otimes \mathbb{I}_2)$ is the observable.

\begin{figure}[h!]
	\centering
\includegraphics[width=0.65\textwidth]{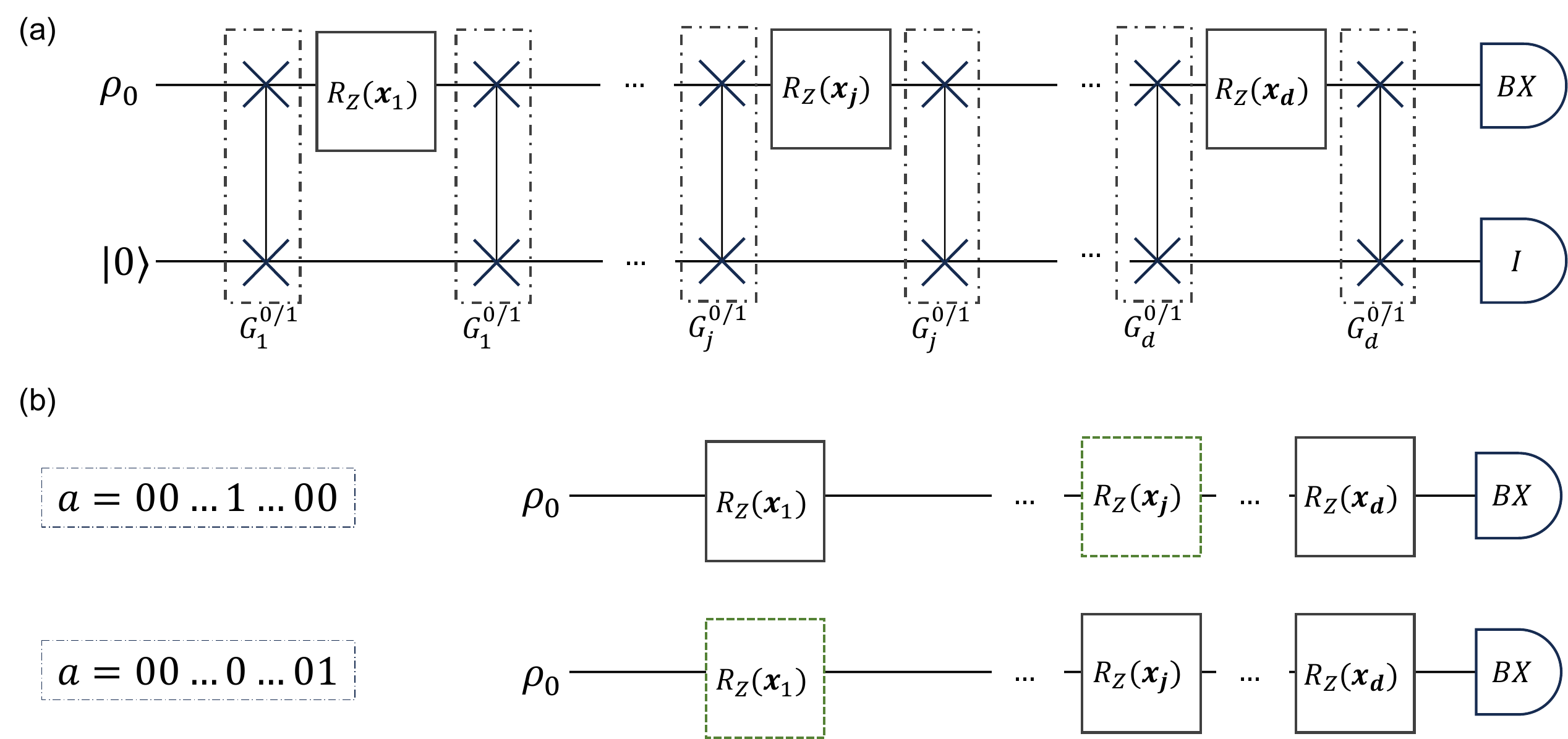}
\caption{\justifying\small{\textbf{Some intuitions about the proof of Theorem B.1}. (a)  \textsc{Illustration of the constructed family of quantum circuits}. The dashed box refers that $\forall i\in[d]$, the SWAP gate $G_i$ is optionally applied to the circuit if the bit string satisfies $\ba_i=1$. (b) \textsc{The visualization of the simplified circuit when $\ba=00...1...0$ (upper) and $\ba=00...0...1$ (lower)}. The dashed green box stands for replacing the $\RZ$ gate with the identity gate.}}
\label{fig:circuit-fam}
\end{figure} 

Note that any candidate in $\mathcal{F}$ can be simplified to a single-qubit circuit, as demonstrated in Fig.~\ref{fig:circuit-fam}(b). That is, the equivalent single-qubit circuit only consists of a sequence of $\RZ(\bx_j)$ gates whose index $\bm{a}_j$ is zero applied to the initial state $\rho_0$, followed by the measurement operator $BX$.  Thus, the simplified expression of this family of circuits is 
\begin{equation}\label{append:eqn:def-low-bound-circuit-clc-simplify}
	\mathcal{F}= \left\{f_{\bm{a}}(\bx) =   B \Tr\left( V^{\bm{a}_d}(\bx_d)  \cdots V^{\bm{a}_1}(\bx_1) \rho_0 V^{\bm{a}_1}(\bx_1)^{\dagger} \cdots V^{\bm{a}_d}(\bx_d)^{\dagger}  X \right) \big| \bm{a}\in \{0, 1\}^d, \bx\sim [-\pi,\pi]^d   \right\},
\end{equation}
where  $V^0(\bx_j)=\RZ(\bx_j)$ and $V^1(\bx_j)=\mathbb{I}_2$ for $\forall j \in [d]$.  It is evident that the cardinality of $\mathcal{F}$  is $|\mathcal{F}|=2^d$.

To proceed further analysis, we now quantify the average distance of different functions in $\mathcal{F}$. To do so, we reformulate $f_{\bm{a}}(\bx)\in \mathcal{F}$ based on the trigonometric expansion. According to PTM representations in Eq.~(\ref{append:eqn:PTM_general}),  the initial state $\rho_0$ and Pauli-$X$ can be written as
\begin{equation}
	|\rho_0\rrangle = \left[1,  \sqrt{2\epsilon}, 0, 0 \right]^{\top} \quad \text{and} \quad |X\rrangle = [0, 1, 0, 0]^{\top}, \quad\text{respectively}.  
\end{equation}
Moreover, following the PTM representations of $\RZ$ in Eq.~(\ref{append:eqn:PTM_RZ}), every function in $\mathcal{F}$ satisfies 
   \begin{subequations} 
   \begin{eqnarray} \label{eqn:explicit-circuit-low-bound}
   &&	f_{\bm{a}}(\bx) = B \left\langle\left\langle \epsilon \Big|\bRZ^{\dagger}\left(\sum_{j,\bm{a}_j=0} \bx_j\right)\Big|X\right\rangle\right\rangle  \\
   	= &&  \sqrt{2\epsilon} B \cos\left(\sum_{j,\bm{a}_j=0} \bx_j\right) \\
   	= && \sqrt{2\epsilon} B   \cos\left((1-\bm{a}_1)\bx_1 + (1-\bm{a}_2)\bx_2+...+(1- \bm{a}_d)\bx_d \right). 
   \end{eqnarray}
\end{subequations}  
The derived explicit form  allows us to quantify the average discrepancy of any two candidates in $\mathcal{F}$, i.e., given $\forall f_{\bm{a}}, f_{\bm{a}'}\in \mathcal{F}$ with  $\bm{a} \neq \bm{a}'$, we have
 \begin{subequations}
 	 \begin{eqnarray}
 	&& \mathbb{E}_{\bx \sim [-\pi, \pi]^d} |f_{\bm{a}}(\bx, O) - f_{\bm{a}'}(\bx, O)|^2  \\
 	= && 2\epsilon B^2  \mathbb{E}_{\bx \sim [-\pi, \pi]^d} \left|\cos\Big((1-\bm{a}_1)\bx_1 +  ...+(1- \bm{a}_d)\bx_d \Big)  - \cos\Big((1-\bm{a}'_1)\bx_1 +  ...+(1- \bm{a}'_d)\bx_d \Big)\right|^2 \label{append:eqn:thm-low-bound-discp-1} \\
 	= && 2\epsilon B^2\left(\frac{1}{2} + \frac{1}{2}\right) -   2\epsilon B^2\mathbb{E}_{\bx \sim [-\pi, \pi]^d}\cos\left((1-\bm{a}_1)\bx_1  +...+(1- \bm{a}_d)\bx_d \right)  \cos\left((1-\bm{a}'_1)\bx_1  +...+(1- \bm{a}'_d)\bx_d \right)\label{append:eqn:thm-low-bound-discp-1-1} \\
 	= && 2\epsilon B^2,  \label{append:eqn:thm-low-bound-discp-2}
 \end{eqnarray}
 \end{subequations}
where the first equality follows the explicit form of $f_{\bm{a}}$ and $f_{\bm{a}'}$ in Eq.~(\ref{eqn:explicit-circuit-low-bound}), the second equality employs the two facts: (i) $\forall \bm{a} \in \{0, 1\}^d$, there are $d-\|\bm{a}\|_0$ effective and independent variables $\{\bx_j\}$; (ii) these effective variables satisfy
\allowdisplaybreaks  
\begin{subequations}\label{append:expect-circuits}
	\begin{eqnarray}
	 && \mathbb{E}_{\bx \sim [-\pi, \pi]^d} \left|\cos\left((1-\bm{a}_1)\bx_1 +  ...+(1- \bm{a}_d)\bx_d \right)\right|^2 \\
	=	&& \mathbb{E}_{\bx \sim [-\pi, \pi]^{d- \|\bm{a}\|_0}}\cos^2\Big(\bx_1 +\bx_2 + ... + \bx_{d-\|\bm{a}_0\|}\Big) \\ 
	 =  && \frac{1}{(2\pi)^{d-\|\bm{a}_0\|}} \int \frac{1 + \cos\Big(2\bx_1 + 2\bx_2 + ... + 2\bx_{d-\|\bm{a}_0\|}\Big) }{2} \mathsf{d}\bx \\
	 = && \frac{1}{2} + \frac{1}{(2\pi)^{d-\|\bm{a}_0\|}} \int \frac{ \cos(2\bx_1) \cos\Big(2\bx_2 + ... + 2\bx_{d-\|\bm{a}_0\|}\Big) - \sin(2\bx_1) \sin\Big(2\bx_2 + ... + 2\bx_{d-\|\bm{a}_0\|}\Big) }{2} \mathsf{d}\bx \\
	 = && \frac{1}{2} + \frac{1}{(2\pi)^{d-\|\bm{a}_0\|}} \int_{-\pi}^{\pi}\cos(2\bx_1)\mathsf{d}\bx_1 \int \frac{\cos\Big(2\bx_2 + ... + 2\bx_{d-\|\bm{a}_0\|}\Big)}{2} \mathsf{d}\bx \nonumber \\
	  && \quad - \frac{1}{(2\pi)^{d-\|\bm{a}_0\|}} \int_{-\pi}^{\pi}\sin(2\bx_1)\mathsf{d}\bx_1 \int \frac{\sin\Big(2\bx_2 + ... + 2\bx_{d-\|\bm{a}_0\|}\Big)}{2} \mathsf{d}\bx \\
	  = && \frac{1}{2},
	\end{eqnarray}
\end{subequations}
and the last equality exploits the fact that the second term in Eq.~(\ref{append:eqn:thm-low-bound-discp-1-1}) is zero as explained below. Specifically, we partition $\bx_1,...,\bx_d$ into two groups, depending on the index list $\bm{a}$ and $\bm{a}'$. Denote $R = \sum_{i, \bm{a}_i=\bm{a}'_i =0} \bx_i$ as the summation of entries whose index is zero in both  $\bm{a}$ and $\bm{a}'$, and $\bar{R}=(1-\bm{a}_1)\bx_1 +  ...+(1- \bm{a}_d)\bx_d - R$  and  $\bar{R}' = (1-\bm{a}'_1)\bx_1 +  ...+(1- \bm{a}'_d)\bx_d -  R$ as the summation of the rest effective variables with the index list $\bm{a}$ and $\bm{a}'$, respectively. Without loss of generality, we suppose $\|\ba\|_0 > \|\ba'\|_0$, implying that $R$ contains at least one effective variable. Then, the second term in Eq.~(\ref{append:eqn:thm-low-bound-discp-1-1}) follows    
\allowdisplaybreaks
\begin{subequations}\label{append:eqn:expect-circuits-2}
	\begin{eqnarray}
		&&  \mathbb{E}_{\bx \sim [-\pi, \pi]^d}\cos\left((1-\bm{a}_1)\bx_1  +...+(1- \bm{a}_d)\bx_d \right)  \cos\left((1-\bm{a}'_1)\bx_1  +...+(1- \bm{a}'_d)\bx_d \right) \\
		= && \mathbb{E}_{\bx \sim [-\pi, \pi]^d} \cos(R + \bar{R})  \cos(R+ \bar{R}') \\
		= && \mathbb{E}_{\bx \sim [-\pi, \pi]^d}[\cos(\bar{R})\cos(R) - \sin(\bar{R})\sin(R)][\cos(R)\cos(\bar{R}') - \sin(R)\sin(\bar{R}')] \\
		= && \mathbb{E}_{\bx \sim [-\pi, \pi]^d}[\cos(\bar{R})\cos^2(R)\cos(\bar{R}') -  \cos(\bar{R})\cos(R)\sin(R)\sin(\bar{R}') \nonumber\\
		&&  -  \sin(\bar{R})\sin(R)\cos(R)\cos(\bar{R}') + \sin(\bar{R})\sin^2(R)\sin(\bar{R}')]\\
		= && 0,
	\end{eqnarray}
\end{subequations}
where the last equality exploits the fact that when $R$ contains at least one effective variable, the symmetric property of integral with respect to trigonometric functions gives $ \mathbb{E}_{\bx \sim [-\pi, \pi]^d}\cos(R) = \mathbb{E}_{\bx \sim [-\pi, \pi]^d}\sin(R)=0$.

\medskip
The result in Eq.~(\ref{append:eqn:thm-low-bound-discp-2}) indicates that taking the expectation over $\bx$, all concept functions in $ \mathcal{F}$ are equally distant by $2\epsilon B^2$. Learning this family of circuits can be reduced to the following \textit{multiple hypothesis testing problem}: 
\begin{enumerate}
	\item Alice randomly and uniformly chooses a target concept $f_{\ba^*} \in \mathcal{F}$, or equivalently $\ba^*\in \{0, 1\}^d$;
	\item The training dataset $\mathcal{T}=\{(\bxi, \hatboi) \}_{i=1}^n$ with the size $n$ is collected based on $f_{\ba^*}$, i.e., for the $i$-th training example, the classical input $\bxi$ randomly sampled from $[-\pi, \pi]^d$ is fed into the quantum system described by  $f_{\ba^*}$ and the statistical estimation of $f_{\ba^*}(\bxi, O)$ is obtained by $T$ shots with $\hatboi=\frac{1}{T}\sum_t \hatboit$ and $\mathbb{E}(\hatboi)=f_{\ba^*}(\bxi, O)$;
	\item The learner leverages the training dataset $\mathcal{T}$ to conduct the empirical risk minimization, i.e., 
	\begin{equation}
		h_{\mathcal{T}}= \arg\min_{h\in \mathcal{F}} \frac{1}{n}\sum_{i=1}^n \ell\left(h(\bxi) - \hatboi\right),
	\end{equation}
	where $\ell$ refers to the loss function measuring the difference between the prediction $h(\bxi)$ and the label $\hatboi$;
	\item  The hypothesis testing is conducted to infer the target concept using the learned $h_{\mathcal{T}}$, i.e., the inferred index is 
	\begin{equation}
	\barba = \arg \min_{\ba, f_{\ba}\in \mathcal{F}} \mathbb{E}_{\bx\sim [-\pi, \pi]^d} \left|h_{\mathcal{T}}(\bx)-f_{\ba}(\bx)\right|^2 
	\end{equation}
	and the associated error probability is $\Prob(\barba \neq \ba^*)$. 
\end{enumerate}
Since the average discrepancy between any two concepts in $\mathcal{F}$ is $2\epsilon B^2$ as indicated in Eq.~(\ref{append:eqn:thm-low-bound-discp-2}), the error probability $\Prob(\barba \neq \ba^*)$ becomes zero when $\mathbb{E}_{\bx \sim [-\pi, \pi]^d}|h_\mathcal{T}(\bx)-f_{\ba^*}(\bx, O)|^2 < \epsilon B^2$. In this scenario, $h_{\mathcal{T}}$  has a large average prediction error for the data sampled from other $f_{\ba'}\in \mathcal{F}\setminus \{f_{\ba^*}\}$ in which the average error is at least $\epsilon B^2$, i.e.,   	 
\begin{equation}
	\mathbb{E}_{\bx\sim [-\pi, \pi]^d} |h_{\mathcal{T}}(\bx) - f_{\ba'}(\bx, O)|^2 \geq \mathbb{E}_{\bx\sim [-\pi, \pi]^d} |f_{\ba'}(\bx, O) - f_{\ba^*}(\bx, O)|^2 - \mathbb{E}_{\bx\sim [-\pi, \pi]^d} |h_{\mathcal{T}}(\bx) - f_{\ba^*}(\bx, O)|^2 > 2\epsilon B^2 - \epsilon B^2 = \epsilon B^2.
\end{equation}

The multiple hypothesis testing problem reformulated above enables us to use Fano's inequality to derive the lower bound of sample complexity in learning $f_{\ba^*}\in \mathcal{F}$.  To be concrete, Fano's lemma \cite{cover1999elements} states that 
	\begin{equation}\label{append:eqn:fano-inequ}
		\Prob[\ba^* \neq \barba] \geq 1 -\frac{I(\ba^*; \barba) + \log 2}{\log | \mathcal{F}|},
	\end{equation}
where $I(\ba^*; \barba)$ refers to the mutual information between random variables $\ba^*$ and $\barba$, and $| \mathcal{F}|$ denotes the cardinality of $\mathcal{F}$. In other words, the derivation of the lower bound of the sample complexity $n$ amounts to quantifying the upper bound of $I(\ba^*; \barba)$ and the lower bound of $|\mathcal{F}|$, which motivates the following proof of Theorem~\ref{append:thm:low-bound-sample}.  

\smallskip
\noindent\underline{Remark}. For ease of analysis, in the proof of Theorem~\ref{append:thm:low-bound-sample}, we consider the shot number $T$ is \textit{sufficiently large} such that the measured results can be approximated by the normal distribution with the mean $\mu_{\ba}=f_{\ba}(\bx)$. This assumption is widely utilized in quantum computation and quantum information theory  \cite{li2014second,mcclean2016theory,torlai2020precise,smith2021qubit}.  Besides, for each $\ba$, the corresponding variance of measured results is assumed to be equal with the varied $\bx$, i.e., for all $\bx$, the variance is $\nu_{\ba}=\mathbb{E}_{\bx\sim [-\pi, \pi]^d}\nu_{\ba}(\bx)$. Note that the assumption can be omitted if the primary focus is solely on the number of $\RZ$ gates $d$, without considering the measurement cost $T$. In this case, the upper bound of the mutual information $I(\bm{a}^*; \bar{\bm{a}})$ can be effectively derived using Holevo’s theorem \cite{holevo1973bounds,bengtsson2017geometry}, where a similar lower bound of the sample complexity can still be obtained with a linear dependence on $d$.

\begin{proof}[Proof of Theorem~\ref{append:thm:low-bound-sample}] According to the above elaboration, we next separately quantify $|\mathcal{F}|$ and the mutual information $I(\ba^*; \barba)$, followed by Eq.~(\ref{append:eqn:fano-inequ}) to attain the lower bound of sample complexity in learning the incoherent dynamics of bounded-gate quantum circuits.

\medskip
\noindent\underline{\textit{Cardinality of $\mathcal{F}$}}. The cardinality of $\mathcal{F}$ can be obtained following its definition in Eq.~(\ref{append:eqn:def-low-bound-circuit-clc}), i.e.,
\begin{equation}
	 |\mathcal{F}| = 2^{d}. 
\end{equation}

\noindent\underline{\textit{Upper bound of mutual information $I(\ba^*; \barba)$}}. Recall that the process of learning $f_{\ba^*}$ implies the Markov chain: 
\begin{equation}
\ba^*\rightarrow \rho_{1:n} \rightarrow \hatbo_{1:n} \rightarrow \barba,
\end{equation}
where for ease of notation, $\rho_{1:n}$ refers to the abbreviation of $n$ resulting states  $\rho_{\ba^*}(\bx^{(1)}), ..., \rho_{\ba^*}(\bx^{(n)})$ before taking the measurements, and $\hatbo_{1:n}$ refers to the abbreviation of the measured statistic results  $\hat{\bm{o}}_{\ba^*}^{(1)},...,\hat{\bm{o}}_{\ba^*}^{(n)}$ of $n$ training examples.   Then, according to the data processing inequality, the mutual information $I(\ba^*; \barba)$ is upper bounded by the mutual information between the target index $\ba^*$ and the measured statistical results $\hatbo_{1:n}$, i.e.,  
\begin{equation}
		I(\ba^*; \barba)   \leq  I(\ba^*; \hatbo_{1:n}).
\end{equation}
Moreover, the mutual information on the right-hand side is upper-bounded by 
\begin{subequations}
\begin{eqnarray}
	 && I(\ba^*; \hatbo_{1:n})   \\
	 \leq && \sum_{i=1}^n I(\ba^* ; \hatbo_i) \\
	\leq &&  \sum_{i=1}^n I(\ba^*; \hatbo_i, \bxi) \\
	 = && \sum_{i=1}^n I(\ba^* ;\bxi) + I(\ba^* ; \hatbo_i | \bxi) \\
	= &&  \sum_{i=1}^n  \mathbb{E}_{\bxi \sim [-\pi, \pi]^d} I(\ba^* |\bxi; \hatbo_i | \bxi) \\
	 = && \sum_{i=1}^n \mathbb{E}_{\bxi \sim [-\pi, \pi]^d} I(\ba^* ; \hatbo_i | \bxi).
\end{eqnarray}	
\end{subequations}
The first inequality stems from the chain rule and the fact that conditioning reduces entropy, i.e., $I(X;Y_{1:n})=\sum_{i=1}^n H(Y_i|Y_{1:i-1}) - H(Y_i|X,Y_{1:i-1}) = \sum_{i=1}^n H(Y_i|Y_{1:i-1}) - H(Y_i|X ) \leq \sum_{i=1}^n H(Y_i) - H(Y_i|X ) = \sum_{i=1}^n I(X;Y_i)$, the second inequality exploits the relation $I(X;Y,Z)=H(X)-H(X|Y,Z)\geq H(X)-H(X|Y)=I(X;Y)$, the first equality employs the chain rule of mutual information with $I(X;Y,Z)=I(X;Y)+I(X;Z|Y)$, the second equality adopts the independence between $\ba^*$ and $\bxi$ with $I(\ba^*;\bxi)=0$ and the KL divergence reformulation of mutual information, and the last equality comes from the independence between $\ba^*$ and $\bxi$. 

The above relation hints that the prerequisite to derive the upper bound of  $I(\ba^*, \hatbo_{1:n})$ is upper bounding $I(\ba^* ; \hatbo_i | \bxi)$. As such, we apply KL divergence formulation to $I(\ba^* ; \hatbo_i | \bxi)$ and obtain 
\begin{subequations}
	\begin{eqnarray}
		I(\ba^* ; \hatbo_i | \bxi) = && \DKL(P_{\ba^*, \bm{o}| \bxi} \|P_{\ba^*}P_{\bm{o}| \bxi}) \\
		= && \sum_{\ba^*} \int p(\ba^*, \bm{o}| \bxi)\log\frac{p(\ba^*, \bm{o}| \bxi)}{p(\ba^*)p(\bm{o}| \bxi)}d \bm{o} \\
		= && \sum_{\ba^*} p(\ba^*) \int p( \bm{o}|\ba^*, \bxi)\log\frac{p(\bm{o}|\ba^*, \bxi)}{ p(\bm{o}| \bxi)}d \bm{o} \\
		= && \frac{1}{|\mathcal{F}|} \sum_{\ba^*}\DKL(P_{\bm{o}|\ba^*, \bxi} \| P_{\bm{o}|\bxi}) \\
		\leq &&  \frac{1}{|\mathcal{F}|^2} \sum_{\ba^*, \ba'}\DKL(P_{\bm{o}|\ba^*, \bxi}\|  P_{\bm{o}|\ba', \bxi}), 
	\end{eqnarray}
\end{subequations}
where the third equality uses $p(\ba^*, \bm{o}| \bxi)=p(\ba^*)p(\bm{o}|\ba^*, \bxi)$, the fourth equality comes from the fact that $\ba^*$ is uniformly sampled with $p(\ba^*)=1/|\mathcal{F}|$, and the inequality employs the property of KL divergence.

Combining the above results, the mutual information is upper bounded by
\begin{equation}\label{append:muInfo-upper}
	I(\ba^*; \barba) \leq \frac{1}{|\mathcal{F}|^2} \sum_{i=1}^n \mathbb{E}_{\bxi\sim[-\pi, \pi]^d} \sum_{\ba^*, \ba'}     \DKL\left(P_{\bm{o}|\ba^*, \bxi}\big\| P_{\bm{o}|\ba', \bxi}\right).
\end{equation}
According to our assumption, when the  shot number $T$ becomes large, the central limit theorem suggests that $P_{\bm{o}|\ba^*, \bxi}$ follows the Gaussian distribution with the mean $\mu_{\ba^*}(\bxi)$ and the  variance $\nu^2$. Moreover, assuming that the variance of the varied candidate is the same, we have
\begin{equation}\label{append:eqn:low-bound-kl-gauss} 
\mathbb{E}_{\bxi \sim [-\pi, \pi]}   	\DKL\left(P_{\bm{o}|\ba^*, \bxi}\big\| P_{\bm{o}|\ba', \bxi}\right) =  \mathbb{E}_{\bxi \sim [-\pi, \pi]^d}  \frac{(\mu_{\ba^*}(\bxi) - \mu_{\ba'}(\bxi))^2}{2\nu^2}. 
\end{equation}
For the nominator in Eq.~(\ref{append:eqn:low-bound-kl-gauss}), the result of Eq.~(\ref{append:eqn:thm-low-bound-discp-2}) gives  
\begin{equation}\label{append:muInfo-upper-mu-v2}
\mathbb{E}_{\bxi \sim [-\pi, \pi]^d}  (\mu_{\ba^*}(\bxi) - \mu_{\ba'}(\bxi))^2  = \mathbb{E}_{\bxi \sim [-\pi, \pi]^d}  (f_{\ba^*}(\bxi) - f_{\ba'}(\bxi))^2 = 2\epsilon B^2. 
\end{equation}
For the generic variance $\nu^2$, it is derived as follows. When the state $\rho_{\ba}(\bx)$ is measured by $O=BX$, the probability of measuring the state   associated with the eigenvalue  $+1$ and the state associated with the eigenvalue $-1$ is
\begin{equation}
	\Prob(+1) -\Prob(-1) =  \frac{f_{\ba}(\bxi)}{B} \Leftrightarrow \Prob(+1)  = \frac{1 + f_{\ba}(\bxi)/B}{2}, \quad \Prob(-)  = \frac{1 - f_{\ba}(\bxi)/B }{2}.
\end{equation}
 This property indicates that when the shot number is $T$, the explicit expression of the variance for the input $\bxi$ is
\begin{equation} 
\nu_{\ba}^2(\bxi) = \frac{1}{T}  \left[ \Big(\Prob(+)*B^2 + \Prob(-) * (-B)^2 \Big)- f_{\ba}(\bxi)^2\right] = \frac{B^2}{T} - \frac{f_{\ba}(\bxi)^2}{T}.
\end{equation}
Taking expectation over $\bx$, we have
\begin{equation}\label{append:eqn:low-bound-variance}
\nu =	\mathbb{E}_{\bxi\sim [-\pi, \pi]^d}(\nu_{\ba, \bxi}^2)= \frac{B^2}{T} - \frac{\mathbb{E}_{\bxi\sim [-\pi, \pi]^d}\left(f_{\ba}(\bxi)^2\right)}{T}  = \frac{B^2 - B^2\epsilon}{T},
\end{equation}
where the last equality uses the result of Eq.~(\ref{append:expect-circuits}).

In conjunction with Eqs.~(\ref{append:muInfo-upper}), (\ref{append:eqn:low-bound-kl-gauss}), (\ref{append:muInfo-upper-mu-v2}), and (\ref{append:eqn:low-bound-variance}), the mutual information is upper bounded by 
\begin{equation}
	I(\ba^*;\barba)\leq   \frac{n\epsilon T}{1 - \epsilon}. 
\end{equation}

Note that the mutual information cannot continuously enhance with respect to the increased shot number $T$. When $T\rightarrow \infty$, the mutual information is upper bounded by
\begin{equation}
	\lim_{T\rightarrow \infty} I(\ba^*;\hat{\bm{o}}_{1:n})\leq I(\ba^*;\rho_{1:n}).
\end{equation} 
Following the results \cite[Lemma 9]{wang2023transition}, the right-hand side term is upper bounded by the number of training examples and the qubit count, i.e.,
\begin{equation}
	I(\ba^*;\rho_{1:n}) \leq nN\log 2.
\end{equation} 
Taken together, we have
\begin{equation}
	I(\ba^*; \barba) \leq n\cdot \min\left\{\frac{\epsilon T}{1 - \epsilon}, N\log2\right\}.
\end{equation}
Reusing the Fano's inequality, we obtain
\begin{subequations}
\begin{eqnarray}
&& I(\ba^*; \barba)  \geq (1 - 	\Prob[\ba^* \neq \barba]) \log | \mathcal{F}| -  \log 2 \\
\Rightarrow && n\cdot \min\left\{\frac{\epsilon T}{1 - \epsilon}, N\log2 \right\} \geq (1 - 	\Prob[\ba^* \neq \barba]) d - \log 2 \\
\Rightarrow && n \geq \frac{(1 - 	\Prob[\ba^* \neq \barba]) d - \log 2 }{\min\left\{\frac{\epsilon T}{1 - \epsilon}, N\log2 \right\}}.
 \end{eqnarray} 
 \end{subequations}
 When  $T\leq \frac{(1-\epsilon)}{\epsilon}N\log2$, the lower bound of the sample complexity can be simplified to
 \begin{equation}
 	n \geq (1-\epsilon) \frac{(1 - 	\Prob[\ba^* \neq \barba]) d - \log 2 }{ \epsilon T }.
 \end{equation} 
 \end{proof}

\section{Upper bound of the sample complexity when incoherently learning the dynamic with bounded-gates (Proof of Theorem~\ref{thm:upper-bound-sample-comp})}\label{Appendix:sample-uppper-bound}

In this section, we analyze the upper bound of the sample complexity when a learner can predict the output of quantum circuits with bounded gates within a tolerable error. A restatement of the corresponding theorem is as follows.
 \begin{theorem-non}[Restatement of Theorem~\ref{thm:upper-bound-sample-comp}]
 Consider a learner quires an $N$-qubit quantum circuits and collects a dataset $\mathcal{T}=\{\bxi, \tilderho_T(\bxi)\}_{i=1}^n$  containing $n$ training examples to predict unseen states $\rho(\bx)$ with $\bx\in [-\pi, \pi]^d$. Suppose  with high probability, there exists a learned model $h(\cdot,\cdot)$ that can achieve 
 \begin{equation}
 	\mathbb{E}_{\bx\sim \mathbb{D}_{\mathcal{X}}}\left| h(\bx,O)-\Tr(\rho(\bx),O) \right|^2 \leq \epsilon.
 \end{equation} 
using training data of size
\begin{equation}
	n \leq  \widetilde{\mathcal{O}}\left(\frac{B^2d+B^2NG}{\epsilon}\right).
\end{equation}
 \end{theorem-non}
Our proof is rooted in utilizing the packing net and packing number, which are advanced tools broadly used in statistical learning theory~\cite{vapnik1999nature} and quantum learning theory~\cite{caro2021generalization,huang2021information,du2022efficient}, to quantify the complexity of the class of function represented by bounded-gate quantum circuits. For elucidating, in the following, we first introduce some basic concepts and results that will be employed in our proof, followed by presenting the proof of Theorem \ref{thm:upper-bound-sample-comp}. Besides, we emphasize that our current focus is solely on the sample complexity, which implies that the computational cost of preparing the training examples may exhibit exponential scaling with the qubit number as indicated in Theorem~\ref{thm:lower-bound-runtime-comp}.

\smallskip
We now elaborate on how to use packing nets to derive the upper bound of the sample complexity of a learner tasked with predicting the incoherent dynamics of bounded-gate quantum circuits. The formal definition of the packing net and packing number is given below.   
\begin{definition}[Packing net/number]\label{def:pac-num}
Let $(\mathcal{U}, \mathfrak{d})$ be a metric space. The subset $\mathcal{V} \subset \mathcal{U}$ is an $\epsilon$-packing net of $\mathcal{U}$ if for any $A, B\in \mathcal{V}$,  $\mathfrak{d}(A, B) \geq \epsilon$. 
The packing number $\mathcal{M}(\mathcal{U}, \epsilon, \mathfrak{d})$ denotes the largest cardinality of an $\epsilon$-packing net of $\mathcal{U}$.  
\end{definition}
Intuitively, an $\epsilon$ packing number refers to the maximum number of elements that can be $\epsilon$-separated.

\smallskip 
We next leverage this definition to prove Theorem~\ref{thm:upper-bound-sample-comp}. Following notations introduced in the main text, given an observable $O$, the function space of quantum circuits with $G$  gates  is 
\begin{equation}\label{append:sample-complexity-function-class}
\mathcal{F} = \left\{f(\bx, O)= \Tr(\rho(\bx)O) \Big|\bx \in [-\pi, \pi]^d, \mathsf{Arc}(\RZ, \CI)\right\},
\end{equation} 
where $\bx \in [-\pi, \pi]^d$ is the classical input control, $\mathsf{Arc}(\RZ, \CI)$ refers to the set of circuit layouts formed by $d$ $\RZ$ gates and $G-d$ Clifford gates,  and $\rho(\bx)$ is the quantum state generated by the bounded gates with $\bx$, i.e., $\rho(\bx)=U(\bx)\rho_0U(\bx)^{\dagger}$ and the layout of $U(\bx)$ follows an element of $\mathsf{Arc}(\RZ, \CI)$. The diversity of circuit layouts stems from the constraint that the learner can solely utilize the classical input and the corresponding measurement results to infer the target concept, where prior information regarding the quantum circuit indicates that it comprises $G$ gates, with $d$ $\RZ$ gates and $d-G$ CI gates. From the perspective of the learner, the quantum computer can yield many circuit layouts, leading that the function space $\mathcal{F}$ contains in total $|\mathsf{Arc}(\RZ, \CI)|$ circuit layouts.      

When the quantum computer is specified with an \textit{unknown but fixed} circuit, the corresponding target concept is denoted by $f^*(\bx, O)\in \mathcal{F}$. To infer $f^*(\bx, O)$, in the training data collection procedure,  the learner samples $n$ classical inputs $\{\bxi\}_{i=1}^n$ from the distribution $\mathbb{D}_{\mathcal{X}}$, sends them into the quantum computer, and obtains the corresponding outcomes $\{\boi\}_{i=1}^n$. The outcome $\boi$ for $\forall i \in [n]$ is obtained by performing a \textit{single-shot} measurement of $O$ (i.e., $T=1$) on $\rho(\bxi)$ with $\mathbb{E}[\boi]=f^*(\bxi, O)$. Let the collected dataset be
\begin{equation}
	\mathcal{T}=\left\{\left(\bxi, \bm{o}^{(i)}\right)\right\}_{i=1}^n.
\end{equation}
What we are interested in here is the required number of training examples $n$ in $\mathcal{T}$ that allows the learner to produce a prediction model $h_{\mathcal{T}}(\cdot)$ whose prediction $h_{\mathcal{T}}(\bx)$ is close to $f^*(\bx)$ on average. More formally, the sample complexity explored here refers to the minimal number of training examples $n$  to ensure that with probability $1-\delta$, the expected risk satisfies  
\begin{equation}\label{append:eqn:sample-complex-def}
	\mathbb{E}_{\bx\sim \mathbb{D}_{\mathcal{X}}}\left|h_{\mathcal{T}}(\bx) - f^*(\bx)\right|^2\leq \mathcal{O}(\epsilon).
\end{equation}

Huang et al. \cite{huang2021information} presented a method that is efficient in sample complexity (but maybe computationally demanding) for solving this task. In particular, according to Definition~\ref{def:pac-num}, denote the $4\epsilon$-packing net of $\mathcal{F}$ as $\mathcal{F}_{4\epsilon}^p$. The learner conducts the empirical risk minimization on the training dataset $\mathcal{T}$ to estimate the optimal hypothesis, i.e.,
\begin{equation}\label{append:eqn:upper-sample-erm}
	\hath_{\mathcal{T}} = \arg\min_{f\in  \mathcal{F}_{4\epsilon}^p} \frac{1}{n}\sum_{i=1}^n \left|f(\bxi) - \bm{o}^{(i)}\right|^2. 
\end{equation}
Under this formalism, Ref.~\cite{huang2021information} proved that the required number of training examples $n$ to achieve an $\epsilon$ prediction error, as summarized in the following proposition.
\begin{proposition}[Adapted from Proposition 1, \cite{huang2021information}]\label{prop:cover-upper-func-learn}
Suppose that the observable $O$ satisfies $\sum_i\|O_i\|_{\infty}\leq B$. 	Let $\hath_{\mathcal{T}}$ be an element of the $4\epsilon$-packing net of $\mathcal{F}$  with the packing number $\mathcal{M}(\mathcal{F}, 4\epsilon, |\cdot|^2)$   that minimizes the empirical training error in Eq.~(\ref{append:eqn:upper-sample-erm}). Then for $\delta\in (0,1)$, the size of training data   
\begin{equation}
	n\geq \frac{38B^2\log(4\mathcal{M}(\mathcal{F}, 4\epsilon, |\cdot|^2)/\delta)}{\epsilon}
\end{equation}
implies 
	\begin{equation}
		\mathbb{E}_{\bx\sim \mathbb{D}_{\mathcal{X}}} \left|\hath_{\mathcal{T}}(\bx) - f^*(\bx) \right|^2\leq 12\epsilon \quad \text{with probability at least $1-\delta$}.
	\end{equation} 
\end{proposition}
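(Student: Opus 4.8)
The plan is to treat the $4\epsilon$-packing net $\mathcal{F}_{4\epsilon}^p$ as a finite hypothesis class and run a standard empirical-risk-minimization (ERM) analysis, but with a variance-localized concentration bound so that the sample size scales as $1/\epsilon$ rather than $1/\epsilon^2$. First I would use the fact that a maximal $4\epsilon$-packing net is simultaneously a $4\epsilon$-covering net: since $f^*\in\mathcal{F}$, there exists an anchor $f_0\in\mathcal{F}_{4\epsilon}^p$ with $\mathbb{E}_{\bx\sim\mathbb{D}_{\mathcal{X}}}|f_0(\bx)-f^*(\bx)|^2\le 4\epsilon$; write $\beta:=\mathbb{E}_{\bx}|f_0(\bx)-f^*(\bx)|^2\le 4\epsilon$. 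Next I would record the label model: each $\boi$ is a single-shot outcome with $\mathbb{E}[\boi\mid\bxi]=f^*(\bxi)$, and $|\boi|\le B$ because $\sum_i\|O_i\|_\infty\le B$ bounds the spectrum of $O$. The key identity is the bias--variance split $L(f):=\mathbb{E}_{\bx,\boi}|f(\bx)-\boi|^2=\mathbb{E}_{\bx}|f(\bx)-f^*(\bx)|^2+\sigma^2$, where the cross term vanishes by $\mathbb{E}[\boi\mid\bx]=f^*(\bx)$ and $\sigma^2$ is an $f$-independent noise floor. Consequently the excess risk $\Delta(f):=L(f)-L(f_0)$ equals $\mathbb{E}_{\bx}|f-f^*|^2-\beta$, so bounding $\Delta(\hath_{\mathcal{T}})$ is equivalent to bounding the target quantity.

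The crux is controlling, uniformly over the finite net, the gap between empirical and population excess risk. For each $f\in\mathcal{F}_{4\epsilon}^p$ set $g_i(f):=|f(\bxi)-\boi|^2-|f_0(\bxi)-\boi|^2$, so that $\mathbb{E}[g_i(f)]=\Delta(f)$ and $\hat R(f)-\hat R(f_0)=\tfrac1n\sum_i g_i(f)$. Since $|f|,|f_0|,|\boi|\le B$ we have the range bound $|g_i(f)|\le 4B^2$ and, writing $g_i(f)=(f-f_0)(f+f_0-2\boi)$ evaluated at $\bxi$, the self-bounding variance estimate $\mathrm{Var}(g_i(f))\le 16B^2\,\mathbb{E}_{\bx}|f-f_0|^2\le 16B^2\bigl(2\Delta(f)+4\beta\bigr)$, where the last step uses $\mathbb{E}_{\bx}|f-f_0|^2\le 2\mathbb{E}_{\bx}|f-f^*|^2+2\beta$. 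This relation --- variance bounded by the mean (the Bernstein condition) --- is precisely what converts the rate from $1/\epsilon^2$ to $1/\epsilon$.

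Finally I would apply Bernstein's inequality to the lower tail of $\tfrac1n\sum_i g_i(f)$ for each fixed $f$ and take a union bound over the $\mathcal{M}(\mathcal{F},4\epsilon,|\cdot|^2)$ net elements (the extra factor in $\log(4\mathcal{M}/\delta)$ absorbing the Bernstein slack), obtaining that with probability at least $1-\delta$, simultaneously for all $f$, $\Delta(f)\le\bigl(\hat R(f)-\hat R(f_0)\bigr)+\tfrac12\Delta(f)+O\!\bigl(B^2\log(\mathcal{M}/\delta)/n\bigr)+O(\beta)$. Specializing to $f=\hath_{\mathcal{T}}$ and using the defining ERM inequality $\hat R(\hath_{\mathcal{T}})-\hat R(f_0)\le 0$, the $\tfrac12\Delta$ term is absorbed to the left, giving $\Delta(\hath_{\mathcal{T}})\le O\!\bigl(B^2\log(\mathcal{M}/\delta)/n\bigr)+O(\beta)$; substituting $n\ge 38B^2\log(4\mathcal{M}/\delta)/\epsilon$, $\beta\le4\epsilon$, and converting back via $\mathbb{E}_{\bx}|\hath_{\mathcal{T}}-f^*|^2=\Delta(\hath_{\mathcal{T}})+\beta$ yields the claimed $\le 12\epsilon$. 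The main obstacle is the localization step: one must carry the excess risk $\Delta(f)$ through both the variance bound and the rearrangement, and track the numerical constants so they close at exactly $38$ and $12$; a secondary technical point is justifying the boundedness and unbiasedness of the single-shot outcomes $\boi$ under only the hypothesis $\sum_i\|O_i\|_\infty\le B$.
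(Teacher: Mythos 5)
Your proposal is correct and takes essentially the same route as the paper's proof: the paper does not reprove this proposition but imports it from Huang et al.~\cite{huang2021information}, noting only the trivial extension from $\|O\|_{\infty}\leq 1$ to $\sum_i\|O_i\|_{\infty}\leq B$. Your reconstruction --- maximal packing net doubling as a $4\epsilon$-covering net, anchor element $f_0$, excess-risk variables $g_i(f)=(f-f_0)(f+f_0-2\bm{o})$ with the Bernstein condition $\mathrm{Var}(g_i)\leq 16B^2\,\mathbb{E}_{\bx}|f-f_0|^2$, a union bound over the net, and absorption of $\tfrac{1}{2}\Delta$ via the ERM inequality --- is precisely the localized concentration argument underlying that cited result and its $1/\epsilon$ (rather than $1/\epsilon^2$) sample-complexity scaling.
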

\noindent\underline{Remark}. Note that although the original proof of Proposition \ref{prop:cover-upper-func-learn} only concerns $\|O\|\leq 1$, the pertinent proof can be readily extended to the setting of $\sum_i\|O_i\|_{\infty}\leq B$ concerned in Theorem 1, and therefore we omit it here. Moreover, it only focuses on the case of $T=1$, but the results still hold for the case of $T>1$. This is because in the extreme case with $T\rightarrow \infty$, the required training data size is reduced to  $n\geq {38B^2\log(2\mathcal{M}(\mathcal{F}, 4\epsilon, \mathfrak{d})/\delta)}/{\epsilon}$ \cite[Equation (C66)]{huang2021information}. In this regard, the setting of $T\geq 1$ only trivially influences the sample complexity bound (at most logarithmically). For this reason, we omit the relevant analysis in our proof. 

\smallskip
Supported by Proposition~\ref{prop:cover-upper-func-learn}, the proof of Theorem \ref{thm:upper-bound-sample-comp} amounts to quantifying the upper bound of the packing number $\mathcal{M}(\mathcal{F}, \epsilon, |\cdot|^2)$. Note that different from quantum neural networks \cite{caro2021generalization,du2022efficient,du2022power} whose hypothesis space is continuous and the packing number depends on the value of $\epsilon$, the function class $\mathcal{F}$ explored here is discrete and the number of elements in this class finite. As a result, we have $\mathcal{M}(\mathcal{F}, \epsilon, |\cdot|^2)\leq |\mathsf{Arc}(\RZ, \CI)|$ no matter how $\epsilon$ is. In light of this fact, we are now ready to show the proof of Theorem \ref{thm:upper-bound-sample-comp}.

\begin{proof}[Proof of Theorem \ref{thm:upper-bound-sample-comp}]
According to the above explanation, this proof is composed of two parts. The first part is to quantify the upper bound of the packing number $\mathcal{M}(\mathcal{F}, \epsilon, |\cdot|^2)$, or equivalently the cardinality of $\mathsf{Arc}(\RZ, \CI)$. And the second part combines the obtained packing number and Proposition \ref{prop:cover-upper-func-learn} to acquire the upper bound of the sample complexity to learn $\mathcal{F}$.  

\medskip

We now derive the upper bound of $|\mathsf{Arc}(\RZ, \CI)|$. The total number of possible layouts for an $N$-qubit circuit consisting with $d$ $\RZ$ gates and $G-d$ CI gates (i.e., $\text{CI}=\{H,S,\CNOT\}$) is
\begin{equation}\label{append:eqn:cadi-layouts}
	|\mathsf{Arc}(\RZ, \CI)| \leq   \binom{G}{d}\cdot N^d \cdot 3^{G-d} \cdot \binom{N}{2}^{G-d},
\end{equation} 
where the first term $\binom{G}{d}$ computes the number of different arrangements for placing $\RZ$ or $\CI$ gates at each circuit depth, the second term $N^d$  calculates the total combinations of placing $d$ $\RZ$ gates on different qubit wires, the third term $3^{G-d}$ counts the total combinations of choosing different gates from the $\CI$ gate set, and the last term $\binom{N}{2}^{G-d}$ calculates the upper bound for the total combinations of placing the selected $\CI$ gates on different qubit wires.

\medskip
The sample complexity can be efficiently obtained by combining Proposition \ref{prop:cover-upper-func-learn}	 and Eq.~(\ref{append:eqn:cadi-layouts}).  That is, the metric entropy of $4\epsilon$-packing net of $\mathcal{F}$ yields
\begin{subequations}
	\begin{eqnarray}
		&& \log(4\mathcal{M}(\mathcal{F}, 4\epsilon, |\cdot|)/\delta) \\
		\leq &&  \log\left(\binom{G}{d}\right)  + d\log\left(N\right) + (G-d)\log\left(3\right)  + (G-d)\log\left(\binom{N}{2}\right) + \log\left(\frac{4}{\delta}\right)\\
		\leq &&   d\log \left(\frac{eG}{d}\right)   + d\log\left(N\right) + (G-d)\log\left(3\right) + 2(G-d)N\log\left(\frac{eN}{2}\right) + \log\left(\frac{4}{\delta}\right),
	\end{eqnarray}
\end{subequations}
where the second inequality uses $\binom{a}{b}\leq (ea/b)^b$.  

The above result, accompanied by Proposition \ref{prop:cover-upper-func-learn}, suggests the upper bound of the sample complexity to achieve the $\epsilon$-prediction error, i.e., 
\begin{subequations}
\begin{eqnarray}
	n && =  \frac{38B^2\left(  d\log \left(\frac{eG}{d}\right)   + d\log\left(N\right) + (G-d)\log\left(3\right) + 2(G-d)N\log\left(\frac{eN}{2}\right) + \log\left(\frac{4}{\delta}\right)\right)}{\epsilon}  \\
	&& \leq \widetilde{\mathcal{O}}\left(\frac{B^2d+B^2NG}{\epsilon}\right).
\end{eqnarray}
\end{subequations}
	
\end{proof}

\section{Exponential separation between computational complexity and sample complexity (Proof of Theorem~\ref{thm:lower-bound-runtime-comp})}\label{append:sec:exp-sepa-bound}

We analyze the sample complexity of learning bounded-gate quantum circuits with incoherent measurements and showcase the sample efficiency of this learning task in SM~\ref{Appendix:sample-uppper-bound}. However, it is noteworthy that the sample efficiency does not necessarily imply the computational efficiency, as identifying the desired training examples may incur exponential running time. For instance, in the task of pretty-good tomography, the approach outlined in Ref.~\cite{aaronson2007learnability} requires only a linear number of training examples with the qubit count $N$ for a low prediction error, yet the runtime cost scales exponentially with $N$. 

In this section, we prove that learning bounded-gate quantum circuits with incoherent measurements also manifests an exponential separation between sample and computational complexity.

A very recent study \cite{molteni2024exponential} explores the computational hardness of learning quantum observables from the measured out data. Here we first introduce some necessary definitions, and then briefly review their key results, followed by elucidating how to generalize such results to our case to complete the proof of Theorem~\ref{thm:lower-bound-runtime-comp}.

\smallskip
\noindent\textit{$\mathsf{BQP}$ complexity class}. $\mathsf{BQP}$  (short for bounded-error quantum polynomial time) is the class of promise problems that can be solved by polynomial-time quantum computations that may have some small probability of making an error \cite{watrous2008quantum}. Let $A = (A_{\text{yes}}, A_{\text{no}})$ be a promise problem and let $a, b : \mathbb{N} \rightarrow [0, 1]$ be functions. Then $A \in  \mathsf{BQP}(a, b)$ if and only if there exists a polynomial-time generated family of quantum circuits $\mathcal{Q} = \{\mathcal{Q}_N : N \in \mathbb{N}\}$, where each circuit $\mathcal{Q}_N$ takes $N$ input qubits and produces one output qubit, that satisfies the following properties: (1) if $\bx \in A_{\text{yes}}$ then $\Pr(\mathcal{Q} ~ \text{accepts}~ \bx) \geq a(|\bx|)$, and (2) if $\bx \in A_{\text{no}}$ then $\Pr(\mathcal{Q} ~ \text{accepts} ~\bx) \leq b(|\bx|)$. The class is defined as $\mathsf{BQP}=\mathsf{BQP}(2/3, 1/3)$.

\smallskip
\noindent\textit{$\mathsf{\PPoly}$ complexity class}.  $\mathsf{\PPoly}$, different from $\BQP$, represents the class of decision problems that can be solved by a polynomial-time deterministic algorithm with access to an auxiliary `advice' bitstring \cite{arora2009computational}. This advice bitstring depends only on the input size and must be identical for all inputs of a given size, regardless of the specific input $\bx$. More specifically, a language $ \mathsf{L} : \{0,1\}^* \to \{0,1\} $ is in $\PPoly$ if there exists: (1) a polynomial-time deterministic algorithm $\mathcal{A}$, and (2) a family of `advice' bitstrings $ \{\mathsf{a}_N\}_{N \in \mathbb{N}} $, where $ \mathsf{a}_N \in \{0,1\}^{\text{poly}(N)}$,  such that for every input $\bx \in \{0,1\}^N$: $\mathcal{A}(\bx, \mathsf{a}_N) = \mathsf{L}(\bx)$. Here  $\{0,1\}^*$ denotes the set of all finite-length binary strings.
 
Note that it is widely believed that $\BQP$ is not entirely contained within $\PPoly$. Otherwise, problems like factoring would become solvable by polynomial-size classical circuits.

\medskip
\noindent\textit{Learning procedure}. Ref.~\cite{molteni2024exponential} leveraged the relation between $\BQP$ and $\PPoly$ classes to understand the fundamental limitation of classical learners when learning observables from the measurement data. Denote the concept class as 
\begin{equation}\label{append:eqn:concept-complexity}
		\mathcal{F}=\left\{f_{\bm{\alpha}}(\bx^{\perp})= \Tr\left(\rho_{\mathsf{H}}(\bx^{\perp}) O(\bm{\alpha})\right)\Big|\bm{\alpha} \in \{-1,1\}^q   \right\},
	\end{equation}
 where $\rho_{\mathsf{H}}(\bx^{\perp}) = U_{\mathsf{H}}\ket{\bx^{\perp}}\bra{\bx^{\perp}}U_{\mathsf{H}}^{\dagger}$ with $\bx^{\perp}$ being a bitstring,  $U_{\mathsf{H}}=e^{-\imath \mathsf{H}   \tau }$, and $O(\bm{\alpha})=\sum_{i=1}^q \bm{\alpha}_i P_i$ with $P_i$ being the $i$-th $k$-local Pauli string.	 

The goal of the classical learner is to learn a model $h(\bx^{\perp})$ which approximates the unknown concept $f_{\bm{\alpha}}(\bx^{\perp})$. The model  $h(\bx^{\perp})$ is inferred by the training dataset $\mathcal{T}=\{\bx^{\perp, (i)}, \hat{y}^{(i)}\}_{i=1}^n$ where  $\hat{y}^{(i)}$ refers to the estimated label of $\yi=\Tr(\rho_{\mathsf{H}}(\bx^{\perp, (i)}) O(\bm{\alpha}))$ with an additive error $\epsilon_2$ for $\forall i \in [n]$. The estimation error arises from the finite number of measurements, as the classical learner inputs data $\bx$ into the quantum system and relies on a limited number of measurements to obtain the estimated value. An immediate observation is that the procedure of learning by measure-out data is identical to the learning paradigm investigated in this work. Once the data collection process is completed, the learner utilizes $\mathcal{T}$ to infer the prediction model $h(\bx^{\perp})$ on the classical side. 
	
The efficacy of the classical learner towards the concept class $\mathcal{F}$ is quantified by the following definition. 
\begin{definition}[Efficient learning condition, Definition~4 in Ref.~\cite{molteni2024exponential}]\label{append:def:eff-cond}
A concept class $\mathcal{F}$ in Eq.~(\ref{append:eqn:concept-complexity}) is  efficiently learnable if there exists a polynomial-time algorithm $\mathcal{A}$ with complexity $\text{poly}(1/\epsilon, 1/\delta, 1/\epsilon_2, N)$, such that for any $\epsilon, \epsilon_2 > 0$, any $0 < \delta < 1$, any target function $f_{\alpha} \in \mathcal{F}$, and any input distribution $\mathbb{D}$, the following holds:

If $\mathcal{A}$ receives a training set $\mathcal{T}=\{\bx^{\perp, (i)}, \hat{y}^{(i)}\}_{i=1}^n$ with $n\sim \text{poly}(1/\epsilon, 1/\delta, 1/\epsilon_2, N)$, it produces a hypothesis $h(.) = \mathcal{A}(\mathcal{T}, \epsilon, \delta, .)$ such that, with probability at least $1 - \delta$, the expected error satisfies 
$\mathbb{E}_{\mathbf{\bx^{\perp}} \sim \mathbb{D}} \left[ |f_{\bm{\alpha}}(\bx^{\perp}) - h(\bx^{\perp})|^2 \right] \leq \epsilon.$
\end{definition}

\bigskip
By leveraging the concept of $\BQP$ and $\PPoly$ classes,  Ref.~\cite{molteni2024exponential} shows the hardness of learning $\mathcal{F}$ from the measurement data. 
\begin{lemma}[Adapted from Theorem 2, \cite{molteni2024exponential}]\label{append:lem:complexity-low-bound}
For any $\BQP$-complete language, there exists a Hamiltonian $\mathsf{H}_{\mathsf{hard}}$ to build the concept class 
\begin{equation}
		\mathcal{F}_{\mathsf{Hard}}=\left\{f_{\bm{\alpha}}(\bx^{\perp})= \Tr\left(\rho_{\mathsf{H}_{\mathsf{hard}}}(\bx^{\perp}) O(\bm{\alpha})\right)\Big| \bm{\alpha}\in \{-1,1\}^q   \right\},
	\end{equation}
where $\rho_{\mathsf{H}}(\bx^{\perp}) = U_{\mathsf{H}_{\mathsf{hard}}}\ket{\bx^{\perp}}\bra{\bx^{\perp}}U_{\mathsf{H}_{\mathsf{hard}}}^{\dagger}$ with $U_{\mathsf{H}_{\mathsf{hard}}}=e^{-\imath \mathsf{H}_{\mathsf{hard}} \tau }$ and $O(\bm{\alpha})=\sum_{i=1}^q \bm{\alpha}_i P_i$ with $P_i$ being the $i$-th $k$-local Pauli string. Moreover, no randomized polynomial-time classical algorithm satisfies the learning condition of Def.~\ref{append:def:eff-cond} for this concept class, unless $\BQP \subseteq \PPoly$.

\end{lemma}

The main idea of the proof for Lemma~\ref{append:lem:complexity-low-bound} is as follows. First, the authors correlate $\mathcal{F}_{\mathsf{Hard}}$ with $\mathsf{BQP}$ circuits. In particular, the unitary $U_{\mathsf{H}_{\mathsf{hard}}}$ in $\mathcal{F}_{\mathsf{Hard}}$ is specified to be a family of quantum circuit $\{U_{\mathsf{BQP}^N}\}_N$, which decides the $\mathsf{BQP}$-complete language $\mathsf{L}$, one circuit per size; the observable is set as $O(\bm{\alpha}^*)=Z\otimes \mathbb{I}_2\otimes \cdots \otimes \mathbb{I}_2$. In this way, the output of quantum circuits can correctly decide every $\bx^{\perp}\in \mathsf{L}$, i.e., $f(\bx^{\perp})> 0$ if $\bx \in \mathsf{L}$ and $f(\bx^{\perp})< 0$ if $\bx \notin \mathsf{L}$. Then, the authors show that if $\mathcal{F}_{\mathsf{Hard}}$ is learnable in the sense of Def.~\ref{append:def:eff-cond}, then there must exists a polynomial size classical circuit which evaluates $f_{\bm{\alpha}^*}=\Tr(\rho_{\mathsf{H}_{\mathsf{hard}}}(\bx^{\perp}) O(\bm{\alpha}^*))$ correctly on every $\bx^{\perp}\in \{0, 1\}^N$, implying $\BQP \subseteq \PPoly$.

\smallskip
We next generalize the results of Lemma~\ref{append:lem:complexity-low-bound} to show the computational hardness of learning the bounded-gate circuit with incoherent dynamics. To achieve this goal, it is sufficient to show the concept class $\mathcal{F}$ in Eq.~(3) of the main text can represent the concept class of $\mathsf{BQP}$ circuits discussed above. 

For $\BQP$ circuits, the observable $Z\otimes \mathbb{I}_2\otimes \cdots \otimes \mathbb{I}_2$ meets the requirement of the observable defined in $\mathcal{F}$, which is formed by Pauli operators with a bounded norm. As for the state $U_{\mathsf{BQP}}\ket{\bx^{\perp}}$ (or equivalently $U_{\mathsf{H}_{\mathsf{hard}}}\ket{\bx^{\perp}}$), it can also be expressed by the bounded-gate circuit $\{\RZ+\CI\}$ with the initial state $\ket{0}^{\otimes N}$. Specifically, the bounded-gate circuit is decomposed into two parts, where the first part is used to prepare the state $\ket{\bx^{\perp}}$ and the second part is to prepare the $\mathsf{BQP}$ circuit. For the first part, $N$ $\RZ$ gates and $\mathcal{O}(N)$ $\CI$ gates associated with a proper distribution are sufficient to prepare any input state $\ket{\bx^{\perp}}$ with $\bx^{\perp}\in \{0,1\}^N$. For the second part, since $U_{\mathsf{BQP}}$ contains at most $\mathcal{O}(poly (N))$ quantum gates and $\{\RZ+\CI\}$ is a universal basis gate set, $\mathcal{O}(poly (N))$ $\RZ$ gates with a proper distribution over the classical inputs and $\mathcal{O}(poly (N))$  $\CI$ gates are sufficient to synthesis $U_{\mathsf{BQP}}$. Taken together, the $\mathsf{BQP}$ circuit belongs to $\mathcal{F}$ when the number of $\{\RZ+\CI\}$ gates polynomially scale with $N$. 

The reformulation enables us to generalize the hardness of learning quantum observables from measured data to the broader challenge of learning bounded-gate circuits, which is the focus of our work. That is, according to the results of Lemma~\ref{append:lem:complexity-low-bound}, we can conclude that no algorithm can only use the measure-out data to learn the bounded-gate quantum circuit within a polynomial time unless $\BQP \subseteq \PPoly$. This proves Theorem~\ref{thm:lower-bound-runtime-comp}.

\section{Learnability of the proposed kernel-based ML model (Proof of Theorem 2)}\label{append:sec:proof-thm2}

This section provides the proof of Theorem 2, which analyzes how the prediction error of the proposed kernel-based ML model depends on the number of training examples $n$, the size of the quantum system $N$, and the dimension of classical inputs $d$. Recall that in the main text, the proposed  state prediction model is 
\begin{equation}\label{append:eqn:TriGeo-non-trunc-form}\hatsigma(\bx)=\frac{1}{n}\sum_{i=1}^n\hatkappa\left(\bx, \bxi\right)\tilderho_T(\bxi) ~\text{with}~\hatkappa\left(\bx, \bxi\right) = \sum_{\bomega, \|\bomega\|_0 \leq \Lambda} 2^{\|\bomega\|_0}\Phi_{\bomega}(\bx)\Phi_{\bomega}(\bxi) \in \mathbb{R}.
\end{equation} 
What we intend to prove is the average discrepancy between $\Tr(\hatsigma(\bx) O)$ and the ground truth $\Tr(\rho(\bx)O)$ when $\bx$ is uniformly and randomly sampled from $[-\pi, \pi]^d$ and the local observable $O$ is sampled from a prior distribution $\mathbb{D}_O$, i.e., $\mathbb{E}_{\bx\sim [-\pi, \pi]^d} | \Tr(O\hatsigma(\bx)) - \Tr(O\rho(\bx)) |^2$.

Before moving to proceed with the further analysis, let us exhibit the formal statement of Theorem 2.  
\begin{theorem-non}[Restatement of Theorem 2]\label{lem:Lowesa-no-trunc-prediction-error}
Following notations in the main text, consider a parametrized family of $N$-qubit states $\mathcal{Q}$ and a sum $O=\sum_{i=1} O_i$ of multiple local observables with $\sum_i\|O_i\|_{\infty}\leq B$ and the maximum locality of $\{O_i\}$ being $K$. Suppose $\mathbb{E}_{\bx\sim [-\pi, \pi]^d}\|\nabla_{\bx} \Tr(\rho(\bx)O)\|_2^2\leq C$. Then, let the dataset be $\mathcal{T}_{\mathsf{s}}=\{\bxi \rightarrow \tilderho(\bxi)\}_{i=1}^n$ with $\bxi \sim  \textnormal{Unif}[-\pi, \pi]^d$ and $n = |\mathfrak{C}(\Lambda)|  \frac{2  B^2 9^K}{\epsilon}  \log \left({2 \cdot |\mathfrak{C}(\Lambda)|}/{\delta}\right)$ with $\mathfrak{C}(\Lambda) =\{\bomega|\bomega \in \{0, \pm 1\}^d, ~s.t.~\|\bomega\|_0\leq \Lambda\}$. When the frequency is truncated to $  \Lambda=4C/\epsilon$, the state prediction model in Eq.~(\ref{append:eqn:TriGeo-non-trunc-form}) achieves 
\begin{equation}
	\mathbb{E}_{\mathcal{T}_{\mathsf{s}}}[\hatsigma(\bx)] = \rho_{\Lambda}(\bx)
\end{equation}
and with  probability at least $1-\delta$, 
\begin{equation}
	\mathbb{E}_{\bx\sim [-\pi, \pi]^d} \left| \Tr(O\hatsigma(\bx)) - \Tr(O\rho(\bx))  \right|^2 \leq \epsilon.
\end{equation}	
\end{theorem-non}

To reach Theorem 2, we first use the triangle inequality to decouple the difference between the prediction and ground truth into the truncation error and the estimation error, i.e.,
\begin{subequations}
\begin{eqnarray}\label{append:eqn:decouple-dis-model-truth}
	&& \mathbb{E}_{\bx\sim [-\pi ,\pi]^d} \left[ \left| \Tr \left(O \hatsigma(\bx)\right) - \Tr \left(O\rho(\bx)\right)  \right|^2 \right] \\
	\leq && \Bigg(\sqrt{\mathbb{E}_{\bx\sim [-\pi ,\pi]^d} \left[ \left| \Tr \left(O \rho_{\Lambda}(\bx) \right) - \Tr(O\rho(\bx))  \right|^2 \right]} +
	\sqrt{\mathbb{E}_{\bx\sim [-\pi ,\pi]^d} \left[ \left| \Tr \left(O \hatsigma(\bx)\right) - \Tr(O\rho_{\Lambda}(\bx))  \right|^2 \right]}
	  \Bigg)^2.
\end{eqnarray}
\end{subequations} 
After decoupling, we then separately derive the upper bound of these two terms, where the relevant results are encapsulated in the following two lemmas whose proofs are given in the subsequent two subsections.

\begin{lemma}[Truncation error of $\rho_{\Lambda}$]\label{lem:truncation-error-geo-kernel} 
 Following notations in Theorem 2, assuming $\mathbb{E}_{\bx\sim [-\pi, \pi]^d}\|\nabla_{\bx} \Tr(\rho(\bx)O)\|_2^2\leq C$, the truncation error induced by removing high-frequency terms of $\rho$ under the trigonometric  expansion with $\|\bomega\|_0\leq \Lambda$ is upper bounded by
 \begin{equation}\label{append:eqn:trigo-expan-traget-state}
 	\mathbb{E}_{\bx\sim [-\pi, \pi]^d} \left|\Tr(O\rho_{\Lambda}(\bx)- \Tr(O\rho(\bx)) \right|^2 \leq \frac{C}{\Lambda}.
 \end{equation}
\end{lemma}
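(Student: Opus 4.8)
The plan is to pass to the trigonometric expansion of $\Tr(O\rho(\bx))$ recalled in SM~\ref{append:subsec:trigo-monomial-exp-QC}, exploit the orthogonality of the monomial basis $\{\Phi_{\bomega}\}$ under the uniform measure, and observe that the gradient hypothesis $C$ is precisely a \emph{frequency-weighted} $\ell_2$ norm of the expansion coefficients, which dominates the truncation tail via a Markov-type estimate.

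First I would write $\Tr(O\rho(\bx)) = \sum_{\bomega} c_{\bomega}\Phi_{\bomega}(\bx)$ with coefficients $c_{\bomega} := \Tr(O\rho_{\bomega}) = \llangle 0|\bUnitary_{\bomega}^{\dagger}|O\rrangle$ and $\bomega\in\{0,\pm1\}^d$, so that the truncated estimator keeps $\Tr(O\rho_{\Lambda}(\bx)) = \sum_{\|\bomega\|_0\le\Lambda} c_{\bomega}\Phi_{\bomega}(\bx)$ and the truncation error is controlled by the tail $\sum_{\|\bomega\|_0>\Lambda} c_{\bomega}\Phi_{\bomega}(\bx)$. Next I would record the orthogonality of the basis: over each coordinate, $1,\cos,\sin$ are pairwise orthogonal under $\mathrm{Unif}[-\pi,\pi]$ with $\mathbb{E}[\cos^2]=\mathbb{E}[\sin^2]=1/2$, so the product structure gives $\mathbb{E}_{\bx}[\Phi_{\bomega}\Phi_{\bomega'}]=\delta_{\bomega,\bomega'}\,2^{-\|\bomega\|_0}$. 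Parseval then yields the exact identity
\[
\mathbb{E}_{\bx\sim[-\pi,\pi]^d}\left|\Tr(O\rho_{\Lambda}(\bx))-\Tr(O\rho(\bx))\right|^2 = \sum_{\bomega:\,\|\bomega\|_0>\Lambda} c_{\bomega}^2\,2^{-\|\bomega\|_0}.
\]

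The crux is to express the gradient bound in the same coefficients. Differentiating $\Phi_{\bomega}$ in $\bx_j$ gives $0$ when $\bomega_j=0$ and $\pm\Phi_{\bomega^{(j)}}$ when $\bomega_j\neq 0$, where $\bomega^{(j)}$ flips the $j$-th entry between $+1$ and $-1$; this flip is an involution that preserves $\|\bomega\|_0$ and is injective on $\{\bomega:\bomega_j\neq0\}$, so its images are again orthogonal basis functions. Orthogonality then gives $\mathbb{E}_{\bx}|\partial_{\bx_j}\Tr(O\rho(\bx))|^2 = \sum_{\bomega:\,\bomega_j\neq0} c_{\bomega}^2\, 2^{-\|\bomega\|_0}$, and summing over $j\in[d]$ counts each $\bomega$ exactly $\|\bomega\|_0$ times, so
\[
\mathbb{E}_{\bx\sim[-\pi,\pi]^d}\|\nabla_{\bx}\Tr(O\rho(\bx))\|_2^2 = \sum_{\bomega}\|\bomega\|_0\, c_{\bomega}^2\, 2^{-\|\bomega\|_0}\le C.
\]

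I would finish with a Markov-type tail estimate: on $\|\bomega\|_0>\Lambda$ one has $\|\bomega\|_0/\Lambda>1$, hence
\[
\sum_{\|\bomega\|_0>\Lambda} c_{\bomega}^2\, 2^{-\|\bomega\|_0} \le \frac{1}{\Lambda}\sum_{\|\bomega\|_0>\Lambda}\|\bomega\|_0\, c_{\bomega}^2\, 2^{-\|\bomega\|_0} \le \frac{1}{\Lambda}\sum_{\bomega}\|\bomega\|_0\, c_{\bomega}^2\, 2^{-\|\bomega\|_0}\le \frac{C}{\Lambda},
\]
which is the claimed bound in Eq.~(\ref{append:eqn:trigo-expan-traget-state}). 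The main obstacle is the gradient computation: one must verify carefully that $\partial_{\bx_j}$ maps the basis to itself up to sign while preserving the support size $\|\bomega\|_0$, and that summing over coordinates produces the multiplicative weight $\|\bomega\|_0$, since this is exactly the factor that converts the gradient bound into a weighted coefficient sum dominating the high-frequency tail.
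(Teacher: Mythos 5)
Your proposal is correct and follows essentially the same route as the paper's proof: orthogonality of the trigonometric monomials yields the Parseval identity for the truncation tail, the gradient hypothesis is converted into the frequency-weighted coefficient sum $\sum_{\bomega}\|\bomega\|_0\,c_{\bomega}^2\,2^{-\|\bomega\|_0}\le C$, and a Markov-type estimate on $\|\bomega\|_0>\Lambda$ closes the argument. Your derivation of the gradient identity via the sign-flip involution $\bomega\mapsto\bomega^{(j)}$ (which preserves $\|\bomega\|_0$ and orthogonality, so that summing over $j$ counts each frequency exactly $\|\bomega\|_0$ times) is a cleaner execution of the step the paper carries out by explicitly enumerating how often each frequency appears across the $d$ coordinate derivatives, but the content is the same.
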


\begin{lemma}[Estimation error of $\hatsigma$]\label{lem:estimation-error-geo-kernel} 
 Following notations in Theorem 2, with probability at least $1-\delta$, the estimation error induced by finite training examples $\mathcal{T}=\{\tilderho(\bxi)\}_{i=1}^n$ is upper bounded by
 \begin{equation}\label{append:eqn:trigo-estimation-error}
 	\mathbb{E}_{\bx\sim [-\pi, \pi]^d} \left[\left |\Tr(O  \hatsigma(\bx)) - \Tr(O \rho_{\Lambda}(\bx))\right|^2 \right] \leq    |\mathfrak{C}(\Lambda)|  \frac{1}{2n} B^2 9^K  \log \left(\frac{2 \cdot |\mathfrak{C}(\Lambda)| }{\delta}\right),
 \end{equation}
 where $\mathfrak{C}(\Lambda) =\{\bomega|\bomega \in \{0, \pm 1\}^d, ~s.t.~\|\bomega\|_0\leq \Lambda\}$ refers to the set of truncated frequencies.
\end{lemma}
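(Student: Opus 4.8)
The plan is to pass to the trigonometric monomial basis and reduce the claimed $L^2(\bx)$ bound to a per-frequency concentration statement. First I would write $g(\bxi,O)=\Tr(\tilderho_T(\bxi)O)$ and introduce the empirical coefficients $\hat a_{\bomega}=\frac1n\sum_{i=1}^n\Phi_{\bomega}(\bxi)\,g(\bxi,O)$, so that the definition of $\hatkappa$ gives the compact form
\[
\Tr(O\hatsigma(\bx))=\sum_{\bomega\in\mathfrak{C}(\Lambda)}2^{\|\bomega\|_0}\Phi_{\bomega}(\bx)\,\hat a_{\bomega},
\]
while the trigonometric (LOWESA) expansion recalled earlier writes the truncated target as $\Tr(O\rho_{\Lambda}(\bx))=\sum_{\bomega\in\mathfrak{C}(\Lambda)}\Phi_{\bomega}(\bx)\,c_{\bomega}$ for fixed coefficients $c_{\bomega}$.

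The key structural fact I would exploit is the orthogonality of $\{\Phi_{\bomega}\}$ under the uniform measure. The single-coordinate moments ($\mathbb{E}[\cos^2]=\mathbb{E}[\sin^2]=\tfrac12$, all other cross and first moments vanishing) give $\mathbb{E}_{\bx\sim[-\pi,\pi]^d}[\Phi_{\bomega}(\bx)\Phi_{\bomega'}(\bx)]=2^{-\|\bomega\|_0}\delta_{\bomega,\bomega'}$. Applied to $\hat a_{\bomega}$ this already yields the unbiasedness claim of Theorem~2, since $\mathbb{E}[2^{\|\bomega\|_0}\hat a_{\bomega}]=c_{\bomega}$. Applied to the squared difference, every cross term vanishes and each diagonal term contributes a factor $2^{-\|\bomega\|_0}$, so that
\[
\mathbb{E}_{\bx\sim[-\pi,\pi]^d}\Big[\big|\Tr(O\hatsigma(\bx))-\Tr(O\rho_{\Lambda}(\bx))\big|^2\Big]=\sum_{\bomega\in\mathfrak{C}(\Lambda)}2^{\|\bomega\|_0}\big(\hat a_{\bomega}-\mathbb{E}[\hat a_{\bomega}]\big)^2 .
\]
This collapses the whole problem to bounding the $|\mathfrak{C}(\Lambda)|$ scalar deviations $(\hat a_{\bomega}-\mathbb{E}[\hat a_{\bomega}])^2$.

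For the concentration step, each $\hat a_{\bomega}$ is an empirical mean of $n$ i.i.d.\ copies of $\Phi_{\bomega}(\bx)\,g(\bx,O)$. Using $|\Phi_{\bomega}|\le 1$ together with the single-shot Pauli-based shadow estimate of a $K$-local observable whose pieces satisfy $\sum_i\|O_i\|_\infty\le B$, I would bound the summand in magnitude by $O(3^K B)$ and, crucially, its variance by $\mathbb{E}_{\bx}[\Phi_{\bomega}^2]\cdot 9^K B^2 = 2^{-\|\bomega\|_0}\,9^K B^2$. Feeding both into a variance-aware tail bound and taking a union bound over the $|\mathfrak{C}(\Lambda)|$ frequencies gives, with probability at least $1-\delta$ and simultaneously for all $\bomega$,
\[
\big(\hat a_{\bomega}-\mathbb{E}[\hat a_{\bomega}]\big)^2 \;\le\; 2^{-\|\bomega\|_0}\,\frac{9^K B^2}{2n}\,\log\!\Big(\frac{2|\mathfrak{C}(\Lambda)|}{\delta}\Big).
\]
Multiplying by the weight $2^{\|\bomega\|_0}$ cancels the two powers of two, and summing the $|\mathfrak{C}(\Lambda)|$ now-identical terms reproduces the stated bound.

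The hard part is precisely this cancellation: it is essential to keep the factor $2^{-\|\bomega\|_0}$ inside the concentration estimate. A crude range-based (Hoeffding) bound would discard it and leave $\sum_{\bomega}2^{\|\bomega\|_0}$ in place of $|\mathfrak{C}(\Lambda)|$, which is larger by a factor growing like $2^{\Lambda}$ and would destroy the clean prefactor. The $2^{-\|\bomega\|_0}$ must instead be harvested from the variance, where it enters through $\mathbb{E}_{\bx}[\Phi_{\bomega}^2]=2^{-\|\bomega\|_0}$; this forces the use of a second-moment (Bernstein/Bennett) tail inequality together with a verification that, for the prescribed $n$, the sub-exponential linear-in-$t$ correction is dominated by the variance term. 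The only other technical point is the shadow bound itself, i.e.\ confirming that the $T=1$ estimator $g(\bxi,O)$ of a $K$-local $O$ has range $O(3^K B)$ and conditional second moment $O(9^K B^2)$, which is where the $9^K$ originates.
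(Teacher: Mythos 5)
Your proposal is correct and follows the same skeleton as the paper's proof (orthogonality of the $\Phi_{\bomega}$ basis to reduce the $L^2$ error to a sum of per-frequency deviations, a bound $|\Phi_{\bomega}(\bx)\Tr(\tilderho_1(\bx)O)|\leq 3^K B$ from $\|\tilderho_1\|_1\leq 3^K$, concentration of each empirical coefficient, and a union bound over $\mathfrak{C}(\Lambda)$), but it diverges at the concentration step in a way worth noting. The paper applies a plain range-based Hoeffding bound to the quantity $\tilde{D}_{\bomega}(\mathcal{T})=\frac1n\sum_i\Phi_{\bomega}(\bxi)\Tr(\tilderho_1(\bxi)O)-\mathbb{E}[\Phi_{\bomega}(\bx)\Tr(\tilderho_1(\bx)O)]$ and sums the resulting $|\mathfrak{C}(\Lambda)|$ identical terms; however, if one tracks the algebra from $\sum_{\bomega}2^{-\|\bomega\|_0}\tilde{A}_{\bomega}^2$ carefully (using $\Tr(\rho_{\bomega}O)=2^{\|\bomega\|_0}\mathbb{E}_{\bx}[\Phi_{\bomega}(\bx)\Tr(\rho(\bx)O)]$), each summand actually carries an extra weight $2^{\|\bomega\|_0}$ that the paper's passage to $|\tilde{D}_{\bomega}(\mathcal{T})|^2$ silently drops — exactly the issue you flag. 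Your route keeps that weight and cancels it against the variance $\mathbb{E}_{\bx}[\Phi_{\bomega}^2]=2^{-\|\bomega\|_0}$ via a Bernstein/Bennett bound, checking that the prescribed $n\gtrsim|\mathfrak{C}(\Lambda)|\log(\cdot)\geq 2^{\Lambda}\log(\cdot)$ makes the variance term dominate the linear correction. This buys a derivation of the stated $|\mathfrak{C}(\Lambda)|$ prefactor that does not rely on discarding the $2^{\|\bomega\|_0}$ weight, at the price of (i) constants that come out slightly worse than the clean $\frac{1}{2n}$ in Eq.~(\ref{append:eqn:trigo-estimation-error}) (Bernstein with dominating variance gives roughly $\frac{2\sigma^2}{n}\log(\cdot)$ rather than $\frac{\sigma^2}{2n}\log(\cdot)$), and (ii) a mild implicit dependence of the lemma's validity on $n$ being at least of the prescribed order, which the paper's version does not formally require. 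Both are acceptable; your version is the more defensible one.
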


We are now ready to present the proof of Theorem 2.
\begin{proof}[Proof of Theorem 2]
The difference between the prediction and ground truth can be obtained by integrating Lemmas~\ref{lem:truncation-error-geo-kernel} and \ref{lem:estimation-error-geo-kernel} into Eq.~(\ref{append:eqn:decouple-dis-model-truth}). Mathematically, with probability at least $1-\delta$, we have
\begin{eqnarray}
	&& \mathbb{E}_{\bx\sim [-\pi ,\pi]^d} \left[ \left| \Tr(O \hatsigma(\bx)) - \Tr(O\rho(\bx))  \right|^2 \right] \leq  \left(\sqrt{\frac{C}{\Lambda}} +  \sqrt{|\mathfrak{C}(\Lambda)|  \frac{1}{2n} B^2 9^K  \log \left(\frac{2 \cdot |\mathfrak{C}(\Lambda)|}{\delta}\right)} \right)^2.
\end{eqnarray}
To ensure the average prediction error is upper bounded by $\epsilon$, it is sufficient to showcase when the inner two terms are upper bounded by $\sqrt{\epsilon}/2$. For the first term, the condition is satisfied when
\begin{equation}
	\sqrt{\frac{C}{\Lambda}} \leq \frac{\sqrt{\epsilon}}{2} \Leftrightarrow \Lambda \geq \frac{4C}{\epsilon}. 
\end{equation}
For the second term, we have 
\begin{eqnarray}
&&	\sqrt{|\mathfrak{C}(\Lambda)|  \frac{1}{2n} B^2 9^K  \log \left(\frac{2 \cdot |\mathfrak{C}(\Lambda)|}{\delta}\right)} \leq \frac{\sqrt{\epsilon}}{2}   
\Leftrightarrow  n \geq  |\mathfrak{C}(\Lambda)|  \frac{2  B^2 9^K}{\epsilon}  \log \left(\frac{2 \cdot |\mathfrak{C}(\Lambda)|}{\delta}\right). 
\end{eqnarray}  

Taken together, with probability $1-\delta$, the prediction error is upper bounded by $\epsilon$ when the number of training examples satisfies
\begin{equation}
	n \geq |\mathfrak{C}(4C/\epsilon)|  \frac{2  B^2 9^K}{\epsilon}  \log \left(\frac{2 \cdot |\mathfrak{C}(4C/\epsilon)|}{\delta}\right).
\end{equation}

\end{proof}

\subsection{Truncation error of the classical learning model (Proof of Lemma \ref{lem:truncation-error-geo-kernel})}
Recall that under the trigonometric monomial expansion, the target state $\rho$ with truncation and without truncation takes the form as
\begin{equation}\label{append:eqn:trigo-expan-traget-truncated-state}
	\rho_{\Lambda}=  \sum_{\bomega \in \mathfrak{C}(\Lambda)}\Phi_{\bomega}(\bx)   \rho_{\bomega}~\textnormal{and}~
	\rho=  \sum_{\bomega \in \mathfrak{C}(d)}\Phi_{\bomega}(\bx)   \rho_{\bomega}, 
\end{equation}
respectively. 
The purpose of Lemma \ref{lem:truncation-error-geo-kernel} is to analyze the upper bound of the discrepancy between $\Tr(\rho_{\Lambda}O)$ and $\Tr(\rho O)$ induced by the truncation of high-frequency terms.

\begin{proof}[Proof of Lemma \ref{lem:truncation-error-geo-kernel}]
By adopting the explicit trigonometric monomial  expansion of $\rho$ and $\rho_{\Lambda}$  in Eq.~(\ref{append:eqn:trigo-expan-traget-truncated-state}), we have
\allowdisplaybreaks
\begin{subequations}
	\begin{eqnarray}
	&& \mathbb{E}_{\bx\sim [-\pi, \pi]^d} \left|\Tr(O\rho_{\Lambda}(\bx)- \Tr(O\rho(\bx)) \right|^2\\
	=	&&	\mathbb{E}_{\bx\sim [-\pi, \pi]^d} \left|   \sum_{\bomega, \|\bomega\|_0 >  \Lambda} \Phi_{\bomega}(\bx) \llangle \rho_{\bomega} |O \rrangle \right|^2 \\
		= && \mathbb{E}_{\bx\sim [-\pi, \pi]^d}   \sum_{\bomega, \|\bomega\|_0 >  \Lambda}  \sum_{\bomega', \|\bomega'\|_0 >  \Lambda}    \Phi_{\bomega}(\bx)  \Phi_{\bomega'}(\bx)\llangle \rho_{\bomega} |O \rrangle  \llangle \rho_{\bomega'} |O \rrangle   \\ 
		= &&  \sum_{\bomega, \|\bomega\|_0 >  \Lambda}  \sum_{\bomega', \|\bomega'\|_0 >  \Lambda}  \frac{1}{(2\pi)^d} \int_{[-\pi, \pi]^{d}}  \Phi_{\bomega}(\bx)  \Phi_{\bomega'}(\bx)\llangle \rho_{\bomega} |O \rrangle  \llangle \rho_{\bomega'} |O \rrangle  \mathsf{d}^d x \\
		= &&   \sum_{\bomega, \|\bomega\|_0 >  \Lambda}  \sum_{\bomega', \|\bomega'\|_0 >  \Lambda} 2^{-\|\bomega\|_0}\delta_{\bomega,\bomega'} \llangle \rho_{\bomega} |O \rrangle  \llangle \rho_{\bomega'} |O \rrangle  \label{append:eqn:trunc-geo-est-error-1}  \\
		= &&  \sum_{\bomega, \|\bomega\|_0 >  \Lambda}  2^{-\|\bomega\|_0} |\llangle \rho_{\bomega} |O \rrangle|^2 :=   \sum_{\bomega, \|\bomega\|_0 >  \Lambda}  2^{-\|\bomega\|_0} \alpha_{\bomega}^2,   \label{append:eqn:trunc-geo-est-error-2} 
	\end{eqnarray}
\end{subequations}
where the first three equalities follow a direct reformulation, Eq.~(\ref{append:eqn:trunc-geo-est-error-1}) employs the orthogonality of basis functions in the trigonometric  expansion, i.e., 
\begin{equation}\label{append:eqn:orthogonal-tri-expansion}
	\frac{1}{(2\pi)^d}\int_{[-\pi, \pi]^{d}}   \Phi_{\bomega}(\bx) \Phi_{\bomega'}(\bx) \mathsf{d}^d x =2^{-\|\bomega\|_0}\delta_{\bomega,\bomega'},
\end{equation}
and in the last equality we define $\alpha_{\bomega} \equiv\Tr(\rho_{\bomega}O)$ for clarification.

The remainder of the proof uses the assumption of the norm of the gradients of the expectation, i.e., $\|\nabla_{\bx} \Tr(\rho(\bx)O)\|_2^2\leq C$, to derive the upper bound of Eq.~(\ref{append:eqn:trunc-geo-est-error-2}).  To do so, we first derive the explicit form of the gradient under the trigonometric monomial expansion. That is, the gradient of $\Tr(\rho(\bx)O)$ with respect to $\bx$ is a $d$-dimensional vector, i.e., \begin{subequations}
\begin{eqnarray}
	&& \nabla_{\bx} \Tr(\rho(\bx)O) \\
	= && \nabla_{\bx} \sum_{\bomega} \Phi_{\bomega}(\bx)\Tr(\rho_{\bomega}O) \nonumber\\
	= && \sum_{\bomega} \nabla_{\bx} \Phi_{\bomega}(\bx) \alpha_{\bomega}   \\
	= && \Big[\sum_{\bomega} \Psi_{\bomega_1}(\bx_1)\Phi_{\bomega_{2:d}}(\bx_{2:d}) \alpha_{\bomega}, \cdots,  \sum_{\bomega} \Phi_{\bomega_{1:i-1}}(\bx_{1:i-1})\Psi_{\bomega_i}(\bx_i)\Phi_{\bomega_{i+1: d}}(\bx_{i+1:d}) \alpha_{\bomega},\\
	 && \cdots, \sum_{\bomega} \Psi_{\bomega_{1:d-1}}(\bx_{1:d-1}) \Psi_{\bomega_d}(\bx_d) \alpha_{\bomega} \Big]^{\top} \in \mathbb{R}^d, 
\end{eqnarray}
\end{subequations} 
where $\Phi_{\bomega_{a:b}}(\bx_{a:b}) \equiv \prod_{i=a}^b \Phi_{\bomega_i}(\bx_i)$, and the derivative with respect to the $i$-th entry is 
\begin{equation}
	\Psi_{\bomega_i}(\bx_i) \equiv \nabla_{\bx_i}\Phi_{\bomega_{i}}(\bx_{i}) = \begin{cases}
		 0 ~ & \textnormal{if}~ \bomega_i = 0 \\
		- \sin(\bx_i) & \textnormal{if}~\bomega_i = 1 \\
		 \cos(\bx_i) &  \textnormal{if}~ \bomega_i = -1
	\end{cases}.
\end{equation}
By making use of the above formula, the expectation value of $\|\nabla_{\bx} \Tr(\rho(\bx)O) \|_2^2 $ over $\bx\sim \textnormal{Unif}[-\pi, \pi]^d$ satisfies 
\begin{subequations}
\begin{eqnarray}
	&& \mathbb{E}_{\bx \sim [-\pi, \pi ]^d}\|\nabla_{\bx} \Tr(\rho(\bx)O) \|_2^2 \\
	= && \frac{1}{(2\pi)^d} \int_{[-\pi, \pi]^d} \Biggl[\left(\sum_{\bomega} \Psi_{\bomega_1}(\bx_1)\Phi_{\bomega_{2:d}}(\bx_{2:d}) \alpha_{\bomega} \sum_{\bomega'} \Psi_{\bomega'_1}(\bx_1)\Phi_{\bomega'_{2:d}}(\bx_{2:d}) \alpha_{\bomega'}  \right) + \cdots \\
	&& + \left(\sum_{\bomega} \Phi_{\bomega_{1:i-1}}(\bx_{1:i-1})\Psi_{\bomega_i}(\bx_{i})\Phi_{\bomega_{i+1:d}}(\bx_{i+1:d}) \alpha_{\bomega} \sum_{\bomega'} \Phi_{\bomega'_{1:i-1}}(\bx_{1:i-1})\Psi_{\bomega'_{i}}(\bx_i)\Phi_{\bomega'_{i+1:d}}(\bx_{i+1:d}) \alpha_{\bomega'}  \right) +  \cdots \\
	&&  + \left(\sum_{\bomega} \Phi_{\bomega_{1:d-1}}(\bx_{1:d-1}) \Psi_{\bomega_d}(\bx_d) \alpha_{\bomega} \sum_{\bomega'} \Phi_{\bomega'_{1:d-1}}(\bx_{1:d-1}) \Psi_{\bomega'_d}(\bx_d) \alpha_{\bomega'} \right)   \Biggl] \mathsf{d}^d x \\
	= && \frac{1}{(2\pi)^d} \Biggl[ \pi \int_{[-\pi, \pi]^{d-1}}  \left(\sum_{\bomega_{2:d}}  \Phi_{\bomega_{2:d}}(\bx_{2:d}) \alpha_{1,\bomega_{2:d}} \sum_{\bomega'_{2:d}}  (\bx_1)\Phi_{\bomega'_{2:d}}(\bx_{2:d}) \alpha_{1,\bomega'_{2:d}}  \right)\mathsf{d}^{d-1} x +  \cdots \label{append:eqn:grad-expect-0-0} \\
	&&  + \pi \int_{[-\pi, \pi]^{d-1}}  \left(\sum_{\bomega_{2:d}}  \Phi_{\bomega_{2:d}}(\bx_{2:d}) \alpha_{-1,\bomega_{2:d}} \sum_{\bomega'_{2:d}}  (\bx_1)\Phi_{\bomega'_{2:d}}(\bx_{2:d}) \alpha_{-1,\bomega'_{2:d}}  \right)\mathsf{d}^{d-1} x  + \cdots\\
	  &&   + \pi \int_{[-\pi, \pi]^{d-1}} \left(\sum_{\bomega_{1:d-1}} \Phi_{\bomega_{1:d-1}}(\bx_{1:d-1})   \alpha_{\bomega_{1:d-1},1} \sum_{\bomega'_{1:d-1}} \Phi_{\bomega'_{1:d-1}}(\bx_{1:d-1})   \alpha_{\bomega'_{1:d-1},1} \right) \mathsf{d}^{d-1} x \\
	  && + \pi \int_{[-\pi, \pi]^{d-1}} \left(\sum_{\bomega_{1:d-1}} \Phi_{\bomega_{1:d-1}}(\bx_{1:d-1})   \alpha_{\bomega_{1:d-1},-1} \sum_{\bomega'_{1:d-1}} \Phi_{\bomega'_{1:d-1}}(\bx_{1:d-1})   \alpha_{\bomega'_{1:d-1},-1} \right) \mathsf{d}^{d-1} x\Biggr] \label{append:eqn:grad-expect-0}\\
	= && \Biggl[\sum_{\bomega_{2:d}} 2^{-\|[1,\bomega_{2:d}]\|_0}  |\alpha_{1,\bomega_{2:d}}|^2   + \sum_{\bomega_{2:d}} 2^{-\|[-1,\bomega_{2:d}]\|_0}  |\alpha_{-1,\bomega_{2:d}}|^2  + \cdots  \\
 	&&  + \sum_{\bomega_{1:d-1}} 2^{-\|[\bomega_{1:d-1},1]\|_0}  |\alpha_{\bomega_{1:d-1,1}}|^2  + \sum_{\bomega_{1:d-1}} 2^{-\|[\bomega_{1:d-1},-1]\|_0}  |\alpha_{\bomega_{1:d-1,-1}}|^2  \Biggl] \label{append:eqn:grad-expect-1-1} := \bigstar, 
\end{eqnarray}
\end{subequations}
where Eqs.~(\ref{append:eqn:grad-expect-0-0})-(\ref{append:eqn:grad-expect-0}) are obtained by taking the expectation value on the gradient term $\Psi_{\bomega_i}$ for $\forall i\in [d]$, the quantity $\alpha_{\pm 1,\bomega_{2:d}}$ refers to $\Tr(\rho_{\pm 1,\bomega_{2:d}}O)$ with the frequency value at the $1$-st position being $\pm 1$ (the same rule applies to other terms such as $\alpha_{\bomega_{1:i}, i+1, \bomega_{i+2:d}}$), and Eq.~(\ref{append:eqn:grad-expect-1-1}) is obtained by taking integral over each term in the bracket and the orthogonality of basis functions in $\Phi_{\bomega}(\bx)$ and  $\Phi_{\bomega'}(\bx)$ as shown in Eq.~(\ref{append:eqn:orthogonal-tri-expansion}).

We now use the reformulated  $\mathbb{E}_{\bx \sim [-\pi, \pi ]}\|\nabla_{\bx} \Tr(\rho(\bx)O) \|_2^2$ to derive the upper bound of  the truncation error. Namely,   Eq.~(\ref{append:eqn:trunc-geo-est-error-2}) can be reformulated as 
\allowdisplaybreaks
\begin{subequations}
\begin{eqnarray}
	&& \mathbb{E}_{\bx\sim [-\pi, \pi]^d} \left|\Tr(O\rho_{\Lambda}(\bx)- \Tr(O\rho(\bx)) \right|^2  \\ 
= &&	\sum_{\bomega, \|\bomega\|_0 >  \Lambda}  2^{-\|\bomega\|_0} |\alpha_{\bomega}|^2 \\
= &&   \frac{1}{\Lambda} \left(\Lambda \sum_{\bomega, \|\bomega\|_0 >   \Lambda}      2^{-\|\bomega\|_0} | \alpha_{\bomega}|^2\right) \\
\leq && \frac{\mathbb{E}_{\bx \sim [-\pi, \pi ]}\|\nabla_{\bx} \Tr(\rho(\bx)O) \|_2^2 }{\Lambda}, \label{append:eqn:grad-expect-upper-1} \\ 
\leq && \frac{C}{\Lambda},
\end{eqnarray}
\end{subequations}
where the last second inequality is supported by the result indicated below and the last inequality is supported by the assumption $\mathbb{E}_{\bx \sim [-\pi, \pi ]}\|\nabla_{\bx} \Tr(\rho(\bx)O) \|_2^2\leq C$. 

\smallskip
\noindent\textit{\underline{The proof of Eq.~(\ref{append:eqn:grad-expect-upper-1})}}. According to Eq.~(\ref{append:eqn:grad-expect-1-1}), to reach Eq.~(\ref{append:eqn:grad-expect-upper-1}), it is equivalent to proving that for $\forall \Lambda\in [d]$, 
\begin{equation}
	\Lambda \sum_{\bomega, \|\bomega\|_0 >   \Lambda}      2^{-\|\bomega\|_0} | \alpha_{\bomega}|^2 \leq  \bigstar. \label{append:eqn:grad-expect-upper-2}  
\end{equation}

\noindent Note that depending on the number of non-zero entries of the frequency (i.e., $\|\bomega\|_0$),  the left and right-hand sides in Eq.~(\ref{append:eqn:grad-expect-upper-2}) can be decomposed into $d - \Lambda$ parts and $d-1$ parts, respectively. Mathematically, the left hand side of Eq.~(\ref{append:eqn:grad-expect-upper-2}) yields 
\begin{equation}
	\Lambda \left( \sum_{\bomega, \|\bomega\|_0 =   \Lambda+1}      2^{-\|\bomega\|_0} | \alpha_{\bomega}|^2 + \sum_{\bomega, \|\bomega\|_0 =   \Lambda+2}      2^{-\|\bomega\|_0} | \alpha_{\bomega}|^2 + \cdots + \sum_{\bomega, \|\bomega\|_0 =   d}      2^{-\|\bomega\|_0} | \alpha_{\bomega}|^2 \right). \label{append:eqn:grad-expect-upper-3}
\end{equation}
Besides, the right-hand side of Eq.~(\ref{append:eqn:grad-expect-upper-2}), i.e., $\bigstar$, can be rewritten as
\begin{eqnarray}
	\bigstar  = && \Biggl[\bigg(\sum_{\bomega_{2:d}, \|\bomega_{2:d}\|_0=0} 2^{-\|[1,\bomega_{2:d}]\|_0}  |\alpha_{1,\bomega_{2:d}}|^2    + \cdots + \sum_{\bomega_{2:d}, \|\bomega_{2:d}\|=d-1} 2^{-\|[1,\bomega_{2:d}]\|_0}  |\alpha_{1,\bomega_{2:d}}|^2  \bigg) \\
	&& + \cdots \\
	&&  + \bigg( \sum_{\bomega_{1:d-1},\|\bomega_{1:d-1}\|_0=1} 2^{-\|[\bomega_{1:d-1},-1]\|_0}  |\alpha_{\bomega_{1:d-1,-1}}|^2  + \cdots + \sum_{\bomega_{1:d-1},\|\bomega_{1:d-1}\|_0=d-1} 2^{-\|[\bomega_{1:d-1},-1]\|_0}  |\alpha_{\bomega_{1:d-1,-1}}|^2 \bigg) \Biggl] \\
	= && \bigg(\sum_{\bomega_{2:d}, \|\bomega_{2:d}\|_0=0} 2^{-\|[1,\bomega_{2:d}]\|_0}  |\alpha_{1,\bomega_{2:d}}|^2 + \cdots \sum_{\bomega_{1:d-1},\|\bomega_{1:d-1}\|_0=0} 2^{-\|[\bomega_{1:d-1},-1]\|_0}  |\alpha_{\bomega_{1:d-1,-1}}|^2 \bigg) \\
	&& + \cdots \\
	&& + \left(\sum_{\bomega_{2:d}, \|\bomega_{2:d}\|_0=d-1} 2^{-\|[1,\bomega_{2:d}]\|_0}  |\alpha_{1,\bomega_{2:d}}|^2 + \cdots + \sum_{\bomega_{1:d-1},\|\bomega_{1:d-1}\|_0=d-1} 2^{-\|[\bomega_{1:d-1},-1]\|_0}  |\alpha_{\bomega_{1:d-1,-1}}|^2 \right),  \label{append:eqn:grad-expect-1}
\end{eqnarray}
where the second equality is acquired by rearranging, as in every bracket, all frequencies have the same number of non-zero entries.

In conjunction with Eqs.~(\ref{append:eqn:grad-expect-upper-3}) and (\ref{append:eqn:grad-expect-1}), to achieve Eq.~(\ref{append:eqn:grad-expect-upper-2}), we need to demonstrate that for any $\Lambda <\Lambda'<d$,  the following relation is satisfied, i.e.,
\begin{equation}
	\Lambda \sum_{\bomega, \|\bomega\|_0 =   \Lambda'}      2^{-\|\bomega\|_0} | \alpha_{\bomega}|^2 \leq   \sum_{\bomega_{2:d}, \|\bomega_{2:d}\|_0=\Lambda'-1} 2^{-\|[1,\bomega_{2:d}]\|_0}  |\alpha_{1,\bomega_{2:d}}|^2 + \cdots + \sum_{\bomega_{1:d-1},\|\bomega_{1:d-1}\|_0=\Lambda'-1} 2^{-\|[\bomega_{1:d-1},-1]\|_0}  |\alpha_{\bomega_{1:d-1,-1}}|^2. \label{append:eqn:grad-expect-3}
\end{equation}

To achieve this goal, we next prove that the right-hand side in Eq.~(\ref{append:eqn:grad-expect-3}) equals to
\begin{equation}
	 \Lambda' \sum_{\bomega, \|\bomega\|_0 =   \Lambda'}      2^{-\|\bomega\|_0} | \alpha_{\bomega}|^2, \quad \text{with}~\Lambda < \Lambda' < d. \label{append:eqn:grad-expect-2}
\end{equation}
Recall that in Eq.~(\ref{append:eqn:grad-expect-2}), each frequency $\bomega\in  \{0, \pm 1\}^{d}$ with $\|\bomega\|_0 =   \Lambda' $ appears $\Lambda'$ times and the total number of frequencies is 
\begin{equation}
	\Lambda\cdot \left|\left\{\bomega\in \{0, \pm 1\}^{d}\Big|\|\bomega\|_0=\Lambda' \right \}\right| = \Lambda \cdot \left|\binom{d}{d-\Lambda'}\right| \cdot 2^{\Lambda'}=\Lambda'\frac{d!}{(d-\Lambda')!\Lambda'!} 2^{\Lambda'}=\frac{d!}{(d-\Lambda')!(\Lambda'-1)!} 2^{\Lambda'}.
\end{equation}
In addition, the total number of frequencies in the right-hand side of Eq.~(\ref{append:eqn:grad-expect-3}) is
\begin{equation}
	 \binom{d-1}{d-\Lambda'-1}2^{\Lambda'} + \cdots +  \binom{d-1}{d-\Lambda'-1}2^{\Lambda'} = d\cdot  \binom{d-1}{d-\Lambda'-1}2^{\Lambda'}=d\frac{(d-1)!}{(d-\Lambda')(\Lambda'-1)}2^{\Lambda'}= \frac{d!}{(d-\Lambda')(\Lambda'-1)}2^{\Lambda'}.
\end{equation}

\begin{figure}[h!]
\includegraphics[width=0.95\textwidth]{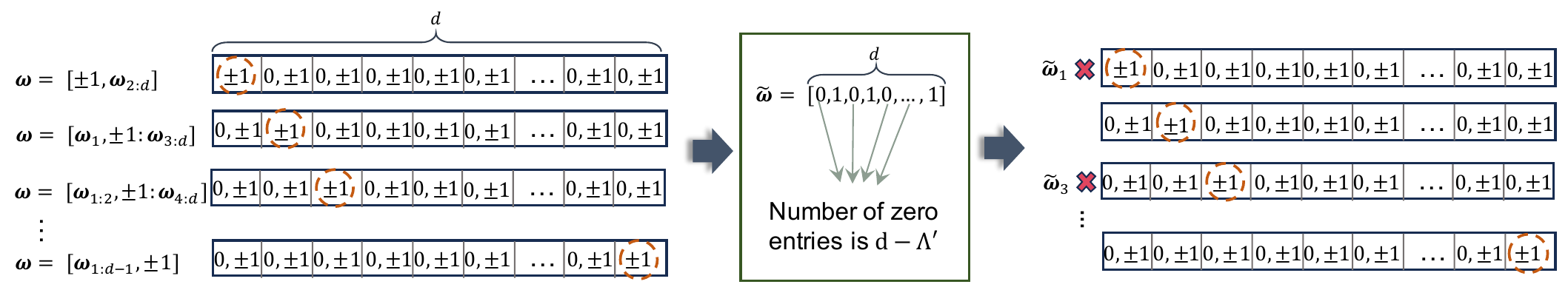}
\caption{\justifying\small{\textbf{A visual interpretation about the  appearance times of $\tilde{\bomega}$ in the right-hand side of Eq.~(\ref{append:eqn:grad-expect-3})}.}}
\label{append:fig:grad-freq}
\end{figure}
Combining with the above two equations, we know that the total number of frequencies in Eq.~(\ref{append:eqn:grad-expect-2}) and the right-hand side of Eq.~(\ref{append:eqn:grad-expect-3}) is the same. Accordingly, their equivalence can be reached if we can show that each frequency $\bomega\in  \{0, \pm 1\}^{d}$ with $\|\bomega\|_0 =   \Lambda' $ appears $\Lambda'$ times in the right-hand side of Eq.~(\ref{append:eqn:grad-expect-3}). This is indeed the case, as the visual interpretation is shown in Fig.~\ref{append:fig:grad-freq}. Denote that the specified frequency as $\tilde{\bomega}$ with $\|\tilde{\bomega}\|_0=\Lambda'$. In other words, there are $d-\Lambda'$  entries whose value is $0$. This property allows us to determine how many times $\tilde{\bomega}$ appears on the right side of Eq.~(\ref{append:eqn:grad-expect-3}). In particular, as shown in the left panel of Fig.~\ref{append:fig:grad-freq}, the right-hand side of Eq.~(\ref{append:eqn:grad-expect-3}) can be divided into $d$ groups, depending on the location of $\pm 1$. Note that among these $d$ groups, every entry with the value $0$ in $\tilde{\bomega}$ precludes one group, and only $d-(d-\Lambda')=\Lambda'$ feasible groups are preserved, as shown in the right panel of Fig.~\ref{append:fig:grad-freq}. Moreover, for each feasible group, $\tilde{\bomega}$ can only appear once, because of the orthogonality of different frequencies. Consequently, we obtain that each frequency $\tilde{\bomega}\in  \{0, \pm 1\}^{d}$ with $\|\tilde{\bomega}\|_0 =   \Lambda' $ appears $\Lambda'$ times, indicating that the right-hand side in Eq.~(\ref{append:eqn:grad-expect-3}) equals to Eq.~ (\ref{append:eqn:grad-expect-2}). 

In other words, we achieve
\begin{eqnarray}
&&	\bigstar = \sum_{\Lambda'=1}^d \Lambda' \sum_{\bomega, \|\bomega\|_0 =   \Lambda'}      2^{-\|\bomega\|_0} | \alpha_{\bomega}|^2 \\
 \Rightarrow && \bigstar =   \sum_{\bomega, \|\bomega\|_0 >   \Lambda}  \|\bomega\|_0     2^{-\|\bomega\|_0} | \alpha_{\bomega}|^2 + \sum_{\bomega, \Lambda \geq  \|\bomega\|_0 \geq   1}  \|\bomega\|_0     2^{-\|\bomega\|_0} | \alpha_{\bomega}|^2 \\
  \Rightarrow && \bigstar \geq \sum_{\bomega, \|\bomega\|_0 >   \Lambda}  \|\bomega\|_0     2^{-\|\bomega\|_0} | \alpha_{\bomega}|^2  \\
  \Rightarrow &&  \bigstar \geq \Lambda \sum_{\bomega, \|\bomega\|_0 >   \Lambda}     2^{-\|\bomega\|_0} | \alpha_{\bomega}|^2.
\end{eqnarray}

Taken together, the truncation error can be upper bounded by the averaged gradient norm of the specified circuit. 
\end{proof}

\subsection{Estimation error of the classical learning model (Proof of Lemma~\ref{lem:estimation-error-geo-kernel})}\label{append:subsec:proof-lemma2} 

The core of the proof is to show that the state prediction model $\hatsigma(\bx)$ in Eq.~(\ref{append:eqn:TriGeo-non-trunc-form}) is equal to the trigonometric expansion of the truncated target quantum state $\rho_{\Lambda}(\bx)$ if we take the expectation over the training data, which includes the randomness from the sampled inputs $\bxi$ and classical shadow. With this relation in mind, we can quantify the estimation error of the proposed state prediction model $\hatsigma(\bx)$ when calculating the expectation value of an unseen state under the specified observable $O$,  i.e.,  $\mathbb{E}_{\bx\sim [-\pi ,\pi]^d} [  | \Tr(O \hatsigma(\bx)) - \Tr(O\rho_{\Lambda}(\bx)) |^2 ].$ 
 
\begin{proof}[Proof of Lemma~\ref{lem:estimation-error-geo-kernel}]
	
We first prove the equivalence between the expectation of the classical representations and the target quantum state, i.e., $\mathbb{E}_{\mathcal{T}}[\hatsigma(\bx)] = \rho_{\Lambda}(\bx)$. Following the explicit form of $\hatsigma(\bx)$ in Eq.~(\ref{append:eqn:TriGeo-non-trunc-form}), we obtain  
\allowdisplaybreaks
\begin{subequations}
	\begin{eqnarray}\label{eqn:exp-model-recover-true-state-trigeo}
	\mathbb{E}_{\mathcal{T}}[\hatsigma(\bx)] = &&  \frac{1}{n}\sum_{i=1}^n \mathbb{E}_{\bxi \sim [-\pi, \pi]^d} \left[\kappa_{\Lambda}\left(\bx, \bxi\right) \right]\mathbb{E}_{s_1^{(\bxi)},...,s_N^{(\bxi)}}\tilderho_1(\bxi) \label{eqn:exp-model-recover-true-state-trigeo-1} \\
	= &&  \mathbb{E}_{\bx^{(1)}\sim [-\pi, \pi]^{d}} \left[\kappa_{\Lambda}\left(\bx, \bx^{(1)}\right) \right]\rho(\bx^{(1)}) \label{eqn:exp-model-recover-true-state-trigeo-2} \\
	= &&  \mathbb{E}_{\bx^{(1)}\sim [-\pi, \pi]^{d}}  \sum_{\bomega, \|\bomega\|_0\leq \Lambda} 2^{\|\bomega\|_0} \Phi_{\bomega}(\bx)\Phi_{\bomega}(\bx^{(1)}) \rho(\bx^{(1)})  \label{eqn:exp-model-recover-true-state-trigeo-3} \\
	= && \sum_{\bomega, \|\bomega\|_0\leq \Lambda} \Phi_{\omega}(\bx) 2^{\|\bomega\|_0} \mathbb{E}_{\bx^{(1)} \sim [-\pi, \pi]^{d}}   \Phi_{\bomega}(\bx^{(1)}) \rho(\bx^{(1)})  \\
    = && \sum_{\bomega, \|\bomega\|_0\leq \Lambda} \Phi_{\bomega}(\bx) 2^{\|\bomega\|_0} \frac{1}{(2\pi)^d} \int_{[-\pi, \pi]^{d}}   \Phi_{\bomega}(\bx^{(1)}) \rho(\bx^{(1)}) \mathsf{d}^d x^{(1)} \\
    = && \sum_{\bomega, \|\bomega\|_0\leq \Lambda} \Phi_{\bomega}(\bx) 2^{\|\bomega\|_0} \frac{1}{(2\pi)^d} \int_{[-\pi, \pi]^{d}}   \Phi_{\bomega}(\bx^{(1)}) \sum_{\bomega'}\Phi_{\bomega'}(\bx) \rho_{\bomega} \mathsf{d}^d x^{(1)}  \label{eqn:exp-model-recover-true-state-trigeo-6} \\
    = && \sum_{\bomega, \|\bomega\|_0\leq \Lambda} \Phi_{\bomega}(\bx)  2^{\|\bomega\|_0} \sum_{\bomega'} \frac{1}{(2\pi)^d} \int_{[-\pi, \pi]^{d}}   \Phi_{\bomega}(\bx^{(1)}) \Phi_{\bomega'}(\bx^{(1)}) \rho_{\bomega} \mathsf{d}^m x^{(1)}  \\
    =   && \sum_{\bomega, \|\bomega\|_0\leq \Lambda} \Phi_{\bomega}(\bx) \rho_{\bomega} \label{append:eqn:exp-model-recover-true-state-trigeo-8}  \\
   = && \rho_{\Lambda}(\bx),
\end{eqnarray}
\end{subequations}
where $s_j^{(\bxi)}$ denotes the randomized measurement outcome for the $j$-th qubit for the state $\rho(\bxi)$ with $s_j^{(\bxi)}\in \{\ket{0}, \ket{1}, \ket{\pm}, \ket{\pm \imath}\}$,  Eq.~(\ref{eqn:exp-model-recover-true-state-trigeo-2}) uses the fact that each $\bxi$ is sampled independently and uniformly from $[-\pi, \pi]^d$, Eq.~(\ref{eqn:exp-model-recover-true-state-trigeo-3}) employs the explicit formula of the truncated trigonometric monomial kernel $\kappa_{\Lambda}$, Eq.~(\ref{eqn:exp-model-recover-true-state-trigeo-6}) adopts trigonometric  expansion of the quantum state, i.e., $\rho(\bxi)=\sum_{\bomega}\Phi_{\bomega}(\bxi)\rho_{\bomega}$,   and Eq.~(\ref{append:eqn:exp-model-recover-true-state-trigeo-8}) comes from the orthogonality of basis functions in Eq.~(\ref{append:eqn:orthogonal-tri-expansion}).

To complete the proof, we next move to analyze the estimation error  $\mathbb{E}_{\bx\sim [-\pi ,\pi]^d} [ | \Tr(O \hatsigma(\bx)) - \Tr(O\rho_{\Lambda}(\bx)) |^2 ]$.   Define 
\begin{equation}\label{append:eqn:def-A-omega}
\tilde{A}_{\bomega}= \frac{1}{n}\sum_{i=1}^n 2^{\|\bomega\|_0}  \Phi_{\bomega}(\bxi)\Tr(\tilderho_1(\bxi)O) -   \Tr(\rho_{\bomega} O).	
\end{equation}
By making use of the trigonometric  expansion of $\rho_{\Lambda}$ and the explicit formalism of $\hatsigma(\bx)$, the estimation error can be reformulated as 
\allowdisplaybreaks
\begin{subequations}
\begin{eqnarray}
	&& \mathbb{E}_{\bx\sim [-\pi, \pi]^d} \left[\left |\Tr(O  \hatsigma(\bx)) - \Tr(O \rho_{\Lambda}(\bx))\right|^2 \right] \\
	= && \mathbb{E}_{\bx\sim [-\pi, \pi]^d} \left[ \left|\frac{1}{n}\sum_{i=1}^n \kappa_{\Lambda}(\bx, \bxi)\Tr(\tilderho_1(\bxi)O) -  \sum_{\bomega, \|\bomega\|\leq \Lambda}\Phi_{\bomega}(\bx) \Tr(\rho_{\bomega} O)\right|^2 \right] \\
	= && \mathbb{E}_{\bx\sim [-\pi, \pi]^d} \left[ \left| \sum_{\bomega, \|\bomega\|\leq \Lambda} \Phi_{\bomega}(\bx) \left(\frac{1}{n}\sum_{i=1}^n 2^{\|\bomega\|_0}  \Phi_{\bomega}(\bxi)\Tr(\tilderho_1(\bxi)O) -   \Tr(\rho_{\bomega} O)  \right) \right|^2 \right] \\ 
	= && \mathbb{E}_{\bx\sim [-\pi, \pi]^d}   \left[ \sum_{\bomega, \|\bomega\|\leq \Lambda}\sum_{\bomega', \|\bomega'\|\leq \Lambda} \Phi_{\bomega}(\bx)\Phi_{\bomega'}(\bx) \tilde{A}_{\bomega}\tilde{A}_{\bomega'} \right] \label{append:eqn:exp-model-dist-true-state-shadow-3} \\
    = && \sum_{\bomega, \|\bomega\|\leq \Lambda}\sum_{\bomega', \|\bomega'\|\leq \Lambda} \frac{1}{(2\pi)^d}\int_{[-\pi, \pi]^{d}}   \Phi_{\bomega}(\bx) \Phi_{\bomega'}(\bx) \mathsf{d}^d x \tilde{A}_{\bomega}\tilde{A}_{\bomega'} \label{append:eqn:exp-model-dist-true-state-shadow-4}  \\ 
    = && \sum_{\bomega, \|\bomega\|\leq \Lambda}  2^{-\|\bomega\|_0}        \tilde{A}_{\bomega}^2 \label{append:eqn:exp-model-dist-true-state-lowesa-6} \\ 
	= &&   \sum_{\bomega, \|\bomega\|\leq \Lambda}  2^{-\|\bomega\|_0}    \left| \frac{1}{n}\sum_{i=1}^n 2^{\|\bomega\|_0}  \Phi_{\bomega}(\bxi)\Tr(\tilderho_1(\bxi)O) -   \Tr(\rho_{\bomega} O)   \right|^2 \label{subeqn:exp-kernel-Lowesa-1} \\
	\equiv && \sum_{\bomega, \|\bomega\|\leq \Lambda}    |\tilde{D}_{\bomega}(\mathcal{T})|^2,
	\end{eqnarray}
\end{subequations}
where Eq.~(\ref{append:eqn:exp-model-dist-true-state-shadow-3}) adopts the explicit form of the trigonometric monomial kernel $\kappa_{\Lambda}$, Eq.~(\ref{append:eqn:exp-model-dist-true-state-shadow-3}) employs the definition $\tilde{A}_{\bomega}$ in Eq.~(\ref{append:eqn:def-A-omega}),  and Eq.~(\ref{append:eqn:exp-model-dist-true-state-lowesa-6}) comes from the evaluation of the orthogonality of basis functions in Eq.~(\ref{append:eqn:orthogonal-tri-expansion}).

Such reformulation suggests that the derivation of the upper bound of $\mathbb{E}_{\bx\sim [-\pi, \pi]^d} [ |\Tr(O  \hatsigma(\bx)) - \Tr(O \rho(\bx))|^2]$ is reduced to deriving the upper bound of  $\tilde{D}_{\bomega}(\mathcal{T})$ for $\forall \bomega \in \mathfrak{C}(\Lambda)$. To do so,  we rewrite the term $\Tr(\rho_{\bomega} O) $ in $\tilde{D}_{\bomega}(\mathcal{T})$ as the trigonometric monomial   expansion of $\rho(\bx)$, i.e.,
\begin{subequations}
	\begin{eqnarray}\label{append:eqn:foureir-trans-rho}
		\Tr(\rho_{\bomega}O) 
		 =  && 2^{\|\bomega\|_0}   \frac{1}{(2\pi)^d}\int_{[-\pi, \pi]^{d}}   \Phi_{\bomega}(\bx) \Tr(\rho(\bx)O) \mathsf{d}^d x \\
		 = && 2^{\|\bomega\|_0}   \mathbb{E}_{\bx\sim [-\pi,\pi]^d} \Phi_{\bomega}(\bx) \Tr(\rho(\bx)O) \\
		 = && 2^{\|\bomega\|_0}   \mathbb{E}_{\bx\sim [-\pi,\pi]^d} \Phi_{\bomega}(\bx) \mathbb{E}_{s_1^{\bx},...,s_N^{\bx}}\Tr(\tilderho_1(\bx)O).
	\end{eqnarray}
\end{subequations}
Accordingly, the quantity $|\tilde{D}_{\bomega}(\mathcal{T})|^2$ yields 
\begin{eqnarray}
	|\tilde{D}_{\bomega}(\mathcal{T})|^2 && =  \left|  \frac{1}{n}\sum_{i=1}^n \Phi_{\bomega}(\bxi)\Tr(\tilderho_1(\bxi)O) -  \mathbb{E}_{\bx\sim [-\pi,\pi]^d}  \Phi_{\bomega}(\bx) \mathbb{E}_{s_1^{\bx},...,s_N^{\bx}}\Tr(\tilderho_1(\bx)O)\right|^2. 
\end{eqnarray}
This formalism hints that we can use Hoeffding's inequality to bound $\tilde{D}_{\bomega}(\mathcal{T})$. Recall that the requirement of applying Hoeffding's inequality is ensuring the expectation value is bounded. In our case, we have
\begin{subequations}
	 \begin{eqnarray}
 	  |\Phi_{\bomega}(\bx) \Tr(\tilderho_1(\bx)O)| 
 	\leq &&   |  \Tr(\tilderho_1(\bx)O)| \\
 	\leq &&   \|O\|_{\infty} \|\tilderho_1(\bx)\|_1 \\
 	= && 3^K B, 
 \end{eqnarray}
\end{subequations}
where the first inequality uses the Cauchy–Schwarz inequality and $|\Phi_{\bomega}(\bx)|\leq 1$,  the second inequality adopts the von Neumann's trace inequality with H\"older's inequality, and the last inequality exploits $\|\tilderho_1(\bx)\|_1\leq 3^K$ \cite[Eq.~(F37)]{huang2022provably} and the condition $\|O\|_{\infty}\leq B$.

The bounded expectation term enables us to use Hoeffding's inequality to attain the following result, i.e., 
\begin{equation}
		\Pr\left[\tilde{D}_{\bomega}(\mathcal{T})^2 \geq \tau^2 \right] = \Pr\left[\tilde{D}_{\bomega}(\mathcal{T}) \geq \tau \right] \leq 2 \exp\left(- \frac{2 n \tau^2}{B^2 9^K} \right).
\end{equation}
Denote the set of truncated frequencies as $\mathfrak{C}(\Lambda) =\{\bomega|\bomega \in \{0, \pm 1\}^d, ~s.t.~\|\bomega\|_0\leq \Lambda\}$. This leads to 
\begin{equation}
	\Pr\left[\sum_{\bomega \in \mathfrak{C}(\Lambda)} |D_{\bomega}(\mathcal{T})|^2     \geq |\mathfrak{C}(\Lambda)|\tau^2 \right] 
	\leq \sum_{\bomega \in \mathfrak{C}(\Lambda)} \Pr\left[|D_{\bomega}(\mathcal{T})|^2 \geq \tau^2 \right] \leq |\mathfrak{C}(\Lambda)| \cdot 2 \exp\left(- \frac{2n\tau^2}{B^2 9^K} \right).
\end{equation}
Let the right-hand side be $\delta$. We have
\begin{equation}
\tau =	\sqrt{\frac{1}{2n}B^2 9^K \log \left(\frac{2 \cdot |\mathfrak{C}(\Lambda)|}{\delta}\right)}.
\end{equation}
This concludes the proof as with probability at least $1-\delta$, the mean-square error between the prediction and the ground truth taken over the randomness of the sampled inputs and classical shadow is upper bounded by    
\begin{eqnarray}
\mathbb{E}_{\bx\sim [-\pi ,\pi]^d} \left[ \left| \Tr(O \sigma_n^{(1)}(\bx)) - \Tr(O\rho(\bx))  \right|^2 \right] \leq  |\mathfrak{C}(\Lambda)|  \frac{1}{2n} B^2 9^K  \log \left(\frac{2 \cdot |\mathfrak{C}(\Lambda)|}{\delta}\right).   
\end{eqnarray}
\end{proof}

\section{Computational time for training and prediction}\label{append:sec:complexity-analysis}

Here we analyze the computational cost of the proposed ML model. For clarity, here we separately analyze the computational time of our proposal required in the training and inference.   
 
 \noindent\underline{Training time}. Recall that the training  procedure of our model amounts to loading the collected training dataset $\mathcal{T}_{\mathsf{s}}=\{\bxi, \tilde{\rho}_T(\bxi)\}_{i=1}^n$ to the classical memory. According to the explanation of classical shadow in SM~\ref{append:subsec:classical-shadows}, the required computation cost to store and load the $N$-qubit state $\tilde{\rho}_T(\bxi)$ with $T$ snapshots is $\mathcal{O}(NT)$. Simultaneously, the computation cost to store and load the classical input $\bxi$ is $\mathcal{O}(d)$. Combining these facts with the result of Theorem~2 such that the required total number of training examples $n$ of our model, the computation cost to load the dataset $\mathcal{T}_{\mathsf{s}}$ is 
 \begin{eqnarray}
 	\mathcal{O}(nNT)= \mathcal{O}\left(NT |\mathfrak{C}(\Lambda)|  \frac{B^2 9^K}{\epsilon}  \log \left(\frac{2 \cdot |\mathfrak{C}(\Lambda)|}{\delta}\right) \right) =\widetilde{\mathcal{O}}\left(  \frac{B^2 9^K NT |\mathfrak{C}(4C/\epsilon)|}{\epsilon} \right) .
 \end{eqnarray}
 As discussed in the main text, when $C$ is bounded (i.e., $C\sim \mathcal{O}(1/\text{poly}(N))$ or $C\sim \mathcal{O}(1/\text{exp}(N))$) or $d$ is a small constant in many practical scenarios, the cardinality $|\mathfrak{C}(4C/\epsilon)|$ polynomially scales with $N$ and $d$, and thus training our model is computationally efficient. 
 
 \medskip
  \noindent\underline{Inference (prediction) time}. Suppose that the observable $O$ is constituted by multiple local observables with a bounded norm, i.e., $O=\sum_{i=1}^q O_i$ and $\sum_l \|O_i\|\leq B$, and the maximum locality of $\{O_i\}$ is $K$. Following the definition of our model in Eq.~(\ref{append:eqn:TriGeo-non-trunc-form}), given a new input $\bx$ and the observable $O$, the prediction yields 
  \begin{equation}
  h_{\mathsf{s}} \equiv\Tr(O \hatsigma(\bx))=\frac{1}{n}\sum_{i=1}^n\hatkappa\left(\bx, \bxi\right)	\Tr\left(O \tilderho_T(\bxi)\right).
  	\end{equation}  
  In this regard, the evaluation involves summing over the assessment of each training example  $(\bxi, \tilderho_T(\bxi))$ for $\forall i \in [n]$, and the evaluation of each training example can be further decomposed into two components. That is, the first component is classically computing the shadow estimation $\Tr(O\tilde{\rho}_T(\bxi))$; the second component is calculating the kernel function $\kappa_{\Lambda}(\bx, \bxi)$ for $\forall i \in [n]$. According to the runtime complexity of shadow estimation elucidated in SM~\ref{append:subsec:classical-shadows}, when $K\sim \mathcal{O}(1)$, the computation of each $\Tr(O_i\tilde{\rho}_T(\bxi))$ can be completed in $\mathcal{O}(T)$ time after storing the classical shadow in the classical memory. In this regard, the computation cost to complete the first part is
  \begin{equation}
  	\mathcal{O}(Tq).
  \end{equation}
 Moreover, based on the explicit definition of the kernel $\kappa_{\Lambda}(\cdot, \cdot)$, the computation cost of evaluating $\kappa_{\Lambda}(\bx, \bxi)$ is 
 \begin{equation}
 	\mathcal{O}(|\mathfrak{C}(4C/\epsilon)|).
 \end{equation}
 In conjunction with the computation cost of each example and the total number of training examples $n$, the required predicting time of our proposal is 
    \begin{equation}
  	\mathcal{O}(n(Tq+|\mathfrak{C}(4C/\epsilon)|)) \leq \widetilde{\mathcal{O}}\left( \frac{TqB^2 9^K |\mathfrak{C}(4C/\epsilon)|^2}{\epsilon} \right).
  \end{equation}
    
\medskip    
In conclusion, when $C$ is bounded, the whole procedure of our proposal (encompassing the training and predicting) is both computational and memory efficient, which is upper bounded by 
\begin{equation}
\widetilde{\mathcal{O}}\left( \frac{TNqB^2 9^K |\mathfrak{C}(4C/\epsilon)|^2}{\epsilon} \right).
\end{equation}

\section{Classical prediction model with the full expansion (Proof of Corollary 1)}
This section comprises two parts. In SM~\ref{append:subsec:clc-pred-model-impl}, we elaborate on the implementation details of the protocol introduced in Corollary 1. Then, in SM~\ref{append:subsec:proof-coro-1}, we present the proof details of Corollary 1.

\subsection{Implementation of the classical prediction model} \label{append:subsec:clc-pred-model-impl}

\begin{figure}[h!]
	\centering	\includegraphics[width=0.98\textwidth]{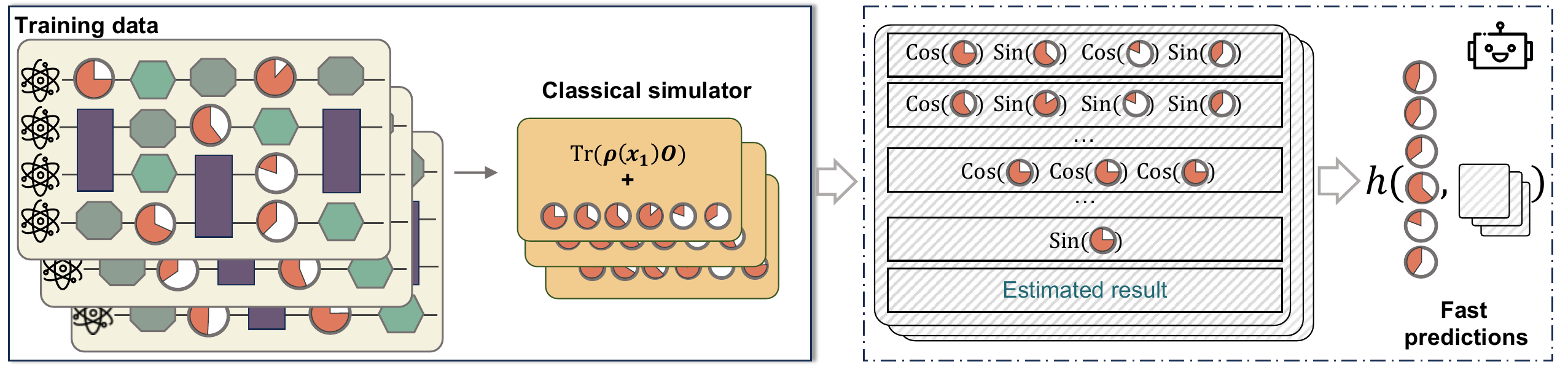}
	\caption{\justifying\small{\textbf{Purely classical learning model with the full expansion.} The pipeline is similar to the one presented in Fig.~1 of the main text. The only difference is the way of collecting training data, where the circuit layout is known by the learner and the expectation value $\Tr(\rho(bx)O)$ should be efficiently calculated by the employed classical simulators. }}
	\label{supp:fig:pure-clc-learning-model}
\end{figure}

The implementation of the protocol introduced in Corollary 1 is visualized in Fig.~\ref{supp:fig:pure-clc-learning-model}. The only difference with the one introduced in the main text is the way of collecting training data. Specifically, when the tunable quantum circuit architecture formed by $\RZ + \CI$ is known to the learner, the classical simulators (e.g.,  tensor network simulators, LOWESA, and near Clifford circuits simulators) are employed to collect the training data 
\begin{equation}
	\mathcal{T}_{\mathsf{c}}=\left\{\bxi \rightarrow \Tr\left(\rho(\bxi)O\right) \right\}_{i=1}^n.
\end{equation}
 
Once the training dataset is prepared, the learner uses it to form the kernel-based prediction model, i.e., given a new input $\bx$ and an observable $O$, it takes the form as 
\begin{equation}\label{append:eqn:TriGeo-classical-backend}
	h_{\mathsf{c}}(\bx,O)=\frac{1}{n}\sum_{i=1}^n\kappa\left(\bx, \bxi \right)\Tr(\rho(\bxi)O)~\text{with}~\kappa(\bx, \bxi) = \sum_{\bomega, \|\bomega\|_0\leq d} 2^{\|\bomega\|_0}\Phi_{\bomega}(\bx)\Phi_{\bomega}(\bxi) \in \mathbb{R}.
\end{equation}
Note that this protocol is relatively more restrictive than the one introduced in the main text, as the dimension $d$ is required to be sufficiently low with the full expansion and the expectation value under the observable $O$ should be efficiently calculated by the exploited classical simulators. 

\subsection{Proof of Corollary 1}\label{append:subsec:proof-coro-1}

The formal statement of Corollary 1 is as follows.
\begin{corollary-non}[Restatement of Corollary 1] Following notations in the main text, consider a parametrized family of $N$-qubit states $\mathcal{Q}$ and   $\|O\|_{\infty}\leq B$ Then, let the dataset be $\mathcal{T}_{\mathsf{c}}=\{\bxi \rightarrow \Tr(\rho(\bxi)O) \}_{i=1}^n$ with $\bxi \sim  \textnormal{Unif}[-\pi, \pi]^d$ and $n = {3^d B^2 \log\left(\frac{2\cdot 3^d}{\delta}\right)}{(2\epsilon)^{-1}}$. Then  with  probability at least $1-\delta$, the average prediction error of the prediction model in Eq.~(\ref{append:eqn:TriGeo-classical-backend}) satisfies 
\begin{equation} 
	\mathbb{E}_{\bx\sim [-\pi, \pi]^d} \left| h_{\mathsf{c}}(\bx,O) - \Tr(O\rho(\bx))  \right|^2 \leq \epsilon.
\end{equation}	
\end{corollary-non}

\begin{proof}[Proof of Corollary 1]
  	 
Since the full expansion is adopted, the truncation error is eliminated. We only need to quantify the estimation error induced by the finite training examples. The discrepancy can be reformulated as 
\allowdisplaybreaks
\begin{subequations}
\begin{eqnarray}
	&& \mathbb{E}_{\bx\sim [-\pi, \pi]^d} \left[\left |h_{\mathsf{c}}(\bx,O) - \Tr(O \rho(\bx))\right|^2 \right] \\
	= && \mathbb{E}_{\bx\sim [-\pi, \pi]^d} \left[ \left| \sum_{\bomega} \Phi_{\bomega}(\bx) \left(\frac{1}{n}\sum_{i=1}^n 2^{\|\bomega\|_0}  \Phi_{\bomega}(\bxi)\Tr(\rho(\bxi)O) -   \Tr(\rho_{\bomega}O)  \right) \right|^2 \right] \\ 
	= && \mathbb{E}_{\bx\sim [-\pi, \pi]^d}   \left[ \sum_{\bomega}\sum_{\bomega'} \Phi_{\bomega}(\bx)\Phi_{\bomega'}(\bx) A_{\bomega}A_{\bomega'} \right] \\
    = && \sum_{\bomega}\sum_{\bomega'} \frac{1}{(2\pi)^d}\int_{[-\pi, \pi]^{d}}   \Phi_{\bomega}(\bx) \Phi_{\bomega'}(\bx) \mathsf{d}^d x A_{\bomega}A_{\bomega'}   \\ 
    = && \sum_{\bomega}  2^{-\|\bomega\|_0}     \Phi_{\bomega}(\bx)  A_{\bomega}^2   \\ 
	= &&   \sum_{\bomega}  2^{-\|\bomega\|_0}    \left| \frac{1}{n}\sum_{i=1}^n 2^{\|\bomega\|_0}  \Phi_{\bomega}(\bxi)\Tr(\rho(\bxi)O) -   \Tr(\rho_{\bomega} O)   \right|^2 \label{subeqn:exp-kernel-Lowesa-clc-1} \\
	\equiv && \sum_{\bomega}    |D_{\bomega}(\mathcal{T})|^2,
	\end{eqnarray}
\end{subequations}
where Eq.~(\ref{subeqn:exp-kernel-Lowesa-clc-1}) comes from the evaluation of the orthogonality of basis functions in Eq.~(\ref{append:eqn:orthogonal-tri-expansion}). Recall the analysis in the proof of Theorem 1, the quantity $|D_{\bomega}(\mathcal{T})|^2$ yields 
\begin{eqnarray}
	|D_{\bomega}(\mathcal{T})|^2 && =  \left|  \frac{1}{n}\sum_{i=1}^n \Phi_{\bomega}(\bxi)\Tr\left(\rho(\bxi)O \right) -  \mathbb{E}_{\bx\sim [-\pi,\pi]^d}  \Phi_{\bomega}(\bx) \Tr(\rho(\bx)O) \right|^2. 
\end{eqnarray}
This formalism hints that we can use Hoeffding's inequality to bound $D_{\bomega}(\mathcal{T})$.   In this case, we have
 \begin{eqnarray}
 	  |\Phi_{\bomega}(\bx) \Tr(\rho(\bx)O)| 
 	\leq     |  \Tr(\rho(\bx)O)|  
 	\leq   \|O\|_{\infty} \|\rho(\bx)\|_1  
 	=  B, 
 \end{eqnarray}
where the first inequality uses the Cauchy–Schwarz inequality and $|\Phi_{\bomega}(\bx)|\leq 1$,  the second inequality adopts von Neumann's trace inequality with Hölder's inequality, and the last inequality exploits $\|\rho(\bx)\|_1=1$ and $\|O\|_{\infty}\leq B$.

The bounded expectation term enables us to use Hoeffding's inequality to attain the following result, i.e., 
\begin{equation}
		\Pr[D_{\bomega}(\mathcal{T})^2 \geq \tau^2] = \Pr[D_{\bomega}(\mathcal{T}) \geq \tau] \leq 2 \exp\left(- \frac{2 n \tau^2}{B^2} \right).
\end{equation}
 This leads to 
\begin{equation}
	\Pr\left[\sum_{\bomega \in \mathfrak{C}(d)} |D_{\bomega}(\mathcal{T})|^2     \geq |\mathfrak{C}(d)|\tau^2 \right] 
	\leq \sum_{\bomega \in \mathfrak{C}(d)} \Pr\left[|D_{\bomega}(\mathcal{T})|^2 \geq \tau^2 \right] \leq 3^d \cdot 2 \exp\left(- \frac{2n\tau^2}{B^2} \right).
\end{equation}
Let the right-hand side be $\delta$. We have
\begin{equation}
\tau =	\sqrt{\frac{1}{2n}B^2  \log \left(\frac{2 \cdot 3^d}{\delta}\right)}.
\end{equation}
This concludes the proof as with probability at least $1-\delta$, the mean-square error between the prediction and the ground truth taken over the randomness of sampled inputs and classical shadow is upper bounded by    
\begin{eqnarray}
\mathbb{E}_{\bx\sim [-\pi ,\pi]^d} \left[ \left| \Tr(O \hatsigmaTriGeoClc(\bx)) - \Tr(O\rho(\bx))  \right|^2 \right] \leq 3^d  \frac{1}{2n} B^2 \log \left(\frac{2 \cdot 3^d}{\delta}\right).   
\end{eqnarray}
Let the right-hand side be equal to the tolerant error $\epsilon$. We have
\begin{equation}
	n = \frac{3^d B^2 \log\left(\frac{2\cdot 3^d}{\delta}\right)}{2\epsilon}.
\end{equation} 	
\end{proof}

\section{Learning bounded-gate quantum circuit with $\CI$ gates and parameterized multi-qubit gates}

In the main text, we primarily elaborate on the application of the proposed ML model in predicting bounded-gate quantum circuits consisting of $\RZ$ and $\CI$ gates. However, our proposal is adaptable and can be readily extended to predict bounded-gate quantum circuits composed of alternative basis gate sets. To exemplify this flexibility, in this section, we illustrate how our proposal and the associated theoretical findings (Theorem 2) can be effectively expanded to encompass a wider context, specifically bounded-gate circuits incorporating CI gates alongside parameterized multi-qubit gates generated by arbitrary Pauli strings.

For ease of notations, given an $N$-qubit quantum circuit, we define the $j$-th parameterized  gate generated by the Pauli string $P_{\bm{a}_j}\in \{\mathbb{I},X,Y,Z\}^N$ with $\bm{a}_j\in \{0, 1, 2, 3\}^N$ as 
\begin{equation}\label{append:eqn:bound-gate-circ-arb-P}
	\RP(\bx_j) = \exp\left(-\imath \frac{\bx_j}{2}P \right) \equiv \cos\left(\frac{\bx_j}{2}\right)\mathbb{I}_{2^N} + \imath \sin\left(\frac{\bx_j}{2}\right)P. 
\end{equation}
Then the bounded-gate quantum circuit takes the form as  
\begin{equation}
	U(\bx) = \prod_{j=1}^{d}(\RP_{\bm{a}_j}(\bx_j)u_e),
\end{equation}
where $u_e\in \CI$ refers to an arbitrary Clifford gate or their combinations. In what follows, we demonstrate how the proposed ML model effectively predicts the incoherent dynamics of bounded-gate circuits in Eq.~(\ref{append:eqn:bound-gate-circ-arb-P}), with the resulting prediction error aligning with the findings outlined in Theorem~2.

Recall the implementation of the proposed ML model and the proof of Theorem~2 presented in SM~\ref{append:sec:proof-thm2}. In brief, if the pre-measured state under the Pauli-basis expansion can be expressed as the form 
\begin{equation} 
	\rho(\bx) = U(\bx)(\ket{0}\bra{0})^{\otimes N} U(\bx)^{\dagger} =  \sum_{\bomega}\Phi_{\bomega}(\bx) \llangle 0| \bUnitary_{\bomega}^{\dagger} \equiv  \sum_{\bomega}\Phi_{\bomega}(\bx) \rho_{\bomega},
\end{equation}
where the trigonometric monomial basis is $
	\Phi_{\bomega}(\bx) = \prod_{i=1}^d \begin{cases}
		 1 ~ & \textnormal{if}~ \bomega_i = 0 \\
		 \cos(\bx_i) & \textnormal{if}~\bomega_i = 1 \\
		 \sin(\bx_i) &  \textnormal{if}~ \bomega_i = -1
	\end{cases}$, then we can form the state prediction model in Eq.~(\ref{append:eqn:TriGeo-non-trunc-form}), i.e., 
\begin{equation}
	\hat{\sigma}_n(\bx) = \frac{1}{n}\sum_{i=1}^n\kappa\left(\bx, \bxi\right)\tilderho_T(\bxi), \quad \text{with}~ \kappa(\bx, \bxi) = \sum_{\bm{k}\in \mathbb{Z}^d, \|\bm{k}\|_2\leq \Lambda} \cos\left(\pi \bm{k}\cdot\left(\bx-\bxi\right)\right)\in \mathbb{R},  
\end{equation}
and the analysis in  Theorem~2 applies. In this respect, the key aspect of extending our proposal from $\{\RZ + \CI\}$ gate set to  $\{\RP_{\bm{a}_j} + \CI\}$ gate set involves demonstrating that each parameterized gate $\RP_{\bm{a}_j}$ with $j\in [d]$ can be represented in the form specified by Eq.~(\ref{append:eqn:PTM_RZ}), i.e.,
\begin{equation}\label{append:eqn:PTM-RPauli-gate}
	\bRP_{\bm{a}_j}(\bx_j) = D_0' + \cos(\bx_j)D_1' + \sin(\bx_j)D_2', 
\end{equation} 
where $D_0'$, $D_1'$, and $D_2'$ are three constant matrices with the size $2^N\times 2^N$. 

We next prove that the form in Eq.~(\ref{append:eqn:PTM-RPauli-gate}) is satisfied for any Pauli string $P_{\ba_j}$. For clarity, in the following analysis, we denote $P_{\ba_j}$ and $\RP_{\ba_j}(\bx_j)$ as $P_{j}$ and $\RP_j(x)$, respectively.  According to the definition of PTM, we have 
\begin{eqnarray}\label{append:eqn:PTM-RPauli-gate-2}
	[\bRP_{\bm{a}_j}(\bx_j)]_{ik} =   \Tr\left(\RP_{j}(x) P_i \RP_{j}(-x) P_k \right).
	\end{eqnarray}   
Since $P_i$, $P_j$, and $P_k$ are Pauli strings, any two of them must commute or anti-commute to each other. By exploiting this property, we next separately analyze the value of $[\bRP_{\bm{a}_j}(\bx_j)]_{ik}$ for $\forall i, k \in [4^d]$.

\smallskip
\noindent\textit{Case I: $[P_i, P_j]=0$ or $[P_k, P_j]=0$.} In this case, Eq.~(\ref{append:eqn:PTM-RPauli-gate-2}) can be simplified to
\begin{equation}
	[\bRP_{\bm{a}_j}(\bx_j)]_{ik} = \Tr(P_iP_k) = \delta_{ij}.
\end{equation}  
In other words, for the indices $i,k \in [4^d]$ whose corresponding Pauli strings commute with $P_j$, the corresponding PTM entry is a constant, which is zero if $i\neq k$ and $1$ if $i=k$. According to the formula in Eq.~(\ref{append:eqn:PTM-RPauli-gate}), the relevant  entries, whose indices satisfy $[P_i, P_j]=0$ or $[P_k, P_j]=0$, belong to $D_0'$.

\smallskip
\noindent\textit{Case II: $\{P_i, P_j\}=0$ or $\{P_k, P_j\}=0$.}   In this case, Eq.~(\ref{append:eqn:PTM-RPauli-gate-2}) can be simplified to
\begin{equation}
	[\bRP_{\bm{a}_j}(\bx_j)]_{ik} = \Tr(\RP_{j}(2x) P_i  P_k ) = \cos(x)\Tr(P_iP_k) + \imath \sin(x)\Tr(P_jP_iP_k) = \cos(x)\delta_{ij} + \imath \sin(x)\Tr(P_jP_iP_k),
\end{equation}  
where the second equality employs $\RP_{j}(2x)=\cos(x)\mathbb{I}_{2^N} + \imath \sin(x)P_j$.  In other words, for the indices $i,k \in [4^d]$ whose corresponding Pauli strings anticommute with $P_j$, (i) when $i= k$, the corresponding PTM entry only contributes to $D_1'$ in Eq.~(\ref{append:eqn:PTM-RPauli-gate}); (ii) when $i\neq k$ and $P_jP_iP_k = \mathbb{I}_{2^N}$, the corresponding PTM entry only contributes to $D_2'$ in Eq.~(\ref{append:eqn:PTM-RPauli-gate}); (iii) else, the corresponding PTM entry is zero, which only contributes to $D_0'$.

Taken together, for any pair indices $(i,k)$, the corresponding entry only contributes to one of $D_0'$, $D_1'$, and $D_2'$, indicating the relation in Eq.~(\ref{append:eqn:PTM-RPauli-gate}). As a result, our proposal and Theorem~2 applies to the basis gate set $\{\RP_{\bm{a}_j}, \CI\}$.

\section{Connection and differnces with Ref.~\cite{zhao2023learning} and Pauli path simulators}

In this section, we detail how our work relates to and differs from Ref.~\cite{zhao2023learning} and Pauli path simulators. Specifically, in Subsection~\ref{append:sec:complement-zhaolearning}, we discuss the complementary relationship between our work and Ref.~\cite{zhao2023learning}, highlighting how the combination of these studies enriches quantum learning theory and deepens our understanding of the learnability of bounded-gate quantum circuits. In SM~\ref{append:sec:diff-Pauli-sim}, we clarify the distinctions between the proposed learning model and Pauli path simulators, addressing differences at both the technical and application levels.
 
\subsection{Complementation with Ref.~\cite{zhao2023learning}}\label{append:sec:complement-zhaolearning}

For comprehensive, we summarize the theoretical results achieved by Ref.~\cite{zhao2023learning} and our work in Table~\ref{tab:comparison}. Particularly, Ref.~\cite{zhao2023learning} investigates the computational complexities of quantum learners across three learning tasks: 
\begin{itemize}
	\item  Task~(i)  learning bounded-gate unitaries;
	\item  Task~(ii)  learning bounded-gate quantum states;
	\item  Task~(iii)  learning $\mathbb{R}$-valued classical functions with 1-Lipschitz continuity. 
\end{itemize}
Different from Ref.~\cite{zhao2023learning}, we explore the computational complexities of classical learners for a new task:
\begin{itemize}
	\item  Task~(iv) predicting the expectation values of bounded-gate quantum circuit for a given observable.
\end{itemize}
The achieved theoretical results under {different learning tasks} (i.e., Tasks (i)-(iv)) and {different learning paradigms} (i.e., quantum learners versus classical learners) are complementary, providing the following two crucial insights in the context of the learnability of bounded-gate quantum circuits.

\begin{table}[h!]
\caption{\justifying\small{\textbf{Summarization of theoretical results achieved by Ref.~\cite{zhao2023learning} and our work.} The terms `Q-learner' and `C-learner' refer to the  quantum and classical learning paradigms, respectively. Besides, the terms `Samp' and `Comp' refer to the abbreviations of `sample complexity' and `computational complexity', respectively.  C-function refers to the classical function with $\nu$ being the number of variables in the classical function (Task (iii)). The notations $N$ and $\epsilon$ denote the number of qubits and the prediction error. In Ref.~\cite{zhao2023learning}, the explored bounded-gate circuit is realized by $\hat{G}$ two-qubit gates, while in our work the explored bounded-gate circuit is composed of $d$ $\RZ$ gates and $G-d$ Clifford gates. The notations $B$ and $T$ refer to the norm of the observable and the snapshots when collecting each training example, respectively. The markers \cmark, \xmark,  $\mathbbold{\qm}$ separately represent `yes', `no', and `unknown'.    }}
\label{tab:comparison}
\centering
\footnotesize
\begin{adjustbox}{width=.85\columnwidth,center}
\begin{tabular}{c|c|c|c|c}
\toprule 

& \begin{tabular}[c]{@{}c@{}} Learn unitary \\  
(Task (i)) \end{tabular}  & \begin{tabular}[c]{@{}c@{}} Learn state \\  
(Task (ii)) \end{tabular}  &  \begin{tabular}[c]{@{}c@{}} Learn C-function \\  
(Task (iii)) \end{tabular}  &   \begin{tabular}[c]{@{}c@{}} Learn observable \\  
(Task (iv)) \end{tabular} \\ 
                  \midrule  
Q-learner   &   \begin{tabular}[c]{@{}c@{}} Theorems 4\&6, Ref.~\cite{zhao2023learning} \\ Samp: $\begin{cases}
	\tilde{\mathcal{O}}(\hat{G}\min\{\frac{1}{\epsilon^2},\frac{\sqrt{2^N}}{\epsilon}\})\\
	\Omega(\frac{\hat{G}}{\epsilon})
\end{cases}$ \\  Comp: $e^{\Omega(\min\{\hat{G},N\})}$    \end{tabular}                &    \begin{tabular}[c]{@{}c@{}} Theorems 1\&2, Ref.~\cite{zhao2023learning} \\ \smallskip  Samp: $\tilde{\Theta}(\frac{\hat{G}}{\epsilon^2})$ \\ \smallskip Comp: $e^{\Omega(\min\{\hat{G},N\})}$    \end{tabular}                    &  \begin{tabular}[c]{@{}c@{}} Theorem 7, Ref~\cite{zhao2023learning} \\  \smallskip Samp: $\Omega(\frac{1}{\epsilon^{\nu}})$ \\ \smallskip Comp: $\tilde{\Omega}(\frac{1}{\epsilon^{\nu}})$  \end{tabular}                        &         \large{$\mathbbold{\qm}$}               \\  &  Algorithm:  \cmark & Algorithm:  \cmark & Algorithm:  \xmark  &       \\ \midrule 
C-learner &     \large{$\mathbbold{\qm}$}             &     \large{$\mathbbold{\qm}$}                    &   \xmark                      &   \begin{tabular}[c]{@{}c@{}} Theorem 1, \textbf{this work}  \\ Samp: $\begin{cases}
	\tilde{O}(\frac{B^2d+B^2NG}{\epsilon}) \\ \tilde{\Omega}(\frac{(1-\epsilon)d}{\epsilon T}) 
\end{cases}$  \\
Comp: $O(\exp(N,d))$
 \end{tabular}  
\\ & & &  &   Algorithm:  \cmark   (Theorem 2)
 \\ \midrule       
\end{tabular}
\end{adjustbox}
\end{table}

Before moving on to discuss the two insights, let us emphasize the necessity of exploring both classical and quantum learners. Recall that Ref.~\cite{zhao2023learning} focuses on the quantum learner, which may implement the learning model on quantum computers (e.g., quantum neural networks) and require quantum resources during both the training and inference stages. In contrast, our work centers on the classical learner, which implements the learning model on classical computers, requiring quantum resources only during the data collection stage. This means that both the training and inference stages for classical learners are conducted entirely on classical hardware. A comprehensive analysis about the computational complexities of both quantum and classical learners can  {guide various practical applications} and {indicate the utility of quantum computers}. On the one side, given the foreseeable scarcity of quantum resources, if classical learners can achieve comparable performance to quantum learners for a specific task, they offer undeniable benefits in terms of resource efficiency. On the other side, if quantum learners demonstrate provable computational advantages over classical learners for a given task, this would establish the practical utility of quantum computing for that particular application. For example, Ref.~\cite{molteni2024exponential} provides a compelling example of how computational capabilities differ between classical and quantum learners, demonstrating that quantum learners can achieve exponential runtime speedups in predicting the properties of ground states.

\subsubsection{Insight I: Computational complexity analysis} 
Based on the achieved computational complexities presented in Table~\ref{tab:comparison}, we draw the following  key observations in the context of learning bounded-gate quantum circuits.

\smallskip
\noindent\underline{Observation I}. Let us revisit the achieved sample complexity in Tasks (i), (ii), and (iv) summarized in Table~\ref{tab:comparison}. That is, despite the decreasing levels of information gain—from learning an unknown unitary, to an unknown state, and finally to unknown expectation values, we observe a consistent scaling behavior for both classical and quantum learners. 
		
	In particular, Theorems 1 and 4 in Ref.~\cite{zhao2023learning} demonstrate that when learning a circuit or a pure state composed of $\hat{G}$ two-qubit gates, $\Theta(\hat{G})$ training examples are sufficient and necessary for a quantum learner to ensure a bounded prediction error. Similarly, Theorem~1 in the main text establishes that when predicting the expectation values of a bounded-gate circuit composed of $d$ $\RZ$ gates and $G - d$ Clifford gates, $\Theta(d)$ training examples are sufficient and necessary for a classical learner to achieve a bounded prediction error.
	
	We would like to note that although Ref.~\cite{zhao2023learning} and our work use different descriptions of bounded-gate circuits, the results achieved in our study can be directly extended to their framework. This is because our description of bounded-gate circuits can be efficiently transformed into their setting with only a constant overhead. More precisely, a bounded-gate quantum circuit composed of $\hat{G}$ two-qubit gates can be implemented using at most $d=24\hat{G}$ $\RZ$ gates and $G-d=35\hat{G}$ Clifford gates to implement this circuit. An intuition is provided in Fig.~\ref{fig:two-qubit-gate-decomposition}. Ref.~\cite{Vidal2004Universal} has proved that three CNOT gates, associated with eight single-qubit gates, are necessary and sufficient in order to implement an arbitrary unitary transformation of two qubits. Meanwhile, any single qubit gate can be realized by three $\RZ$ gates and four single-qubit Clifford gates \cite{nielsen2010quantum}. Taken together, $3\times 8=24$ $\RZ$ gates and  $4\times 8 + 3=35$ Clifford gates can implement an arbitrary unitary transformation of two qubits. Since there are $\hat{G}$ two-qubit gates, $d=24\hat{G}$ $\RZ$ gates and $G-d=35\hat{G}$ Clifford gates are sufficient to implement this circuit.    
		
	\begin{figure}[h!]
	\centering
	\includegraphics[width=0.95\textwidth]{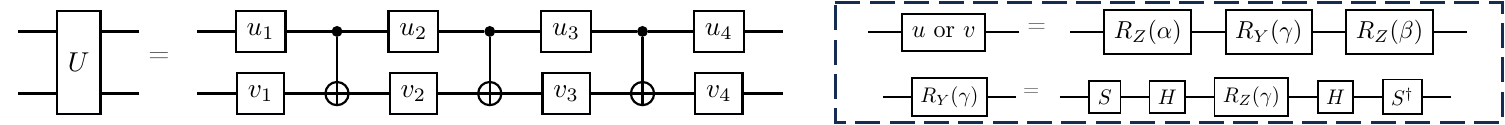}
	\caption{\justifying\small{\textbf{Decomposition of an arbitrary two-qubit gate into $\RZ$ and Clifford gates}. The dashed box highlights the implementation of the single-qubit gates $u_*$ and $v_*$ with $*\in \{1, 2, 3, 4\}$ by $\RZ$ gates and single-qubit Clifford gates. }}
	\label{fig:two-qubit-gate-decomposition}
	\end{figure}
	
	According to the result of Theorem 1 in our work, when a classical learner is applied to predict the expectation value of a bounded-gate quantum circuit composed of $\hat{G}$ two-qubit gates, $\tilde{O}((B^2\hat{G} + B^2N\hat{G})/\epsilon)$ and $\tilde{\Omega}((1-\epsilon)\hat{G}/(\epsilon T))$ training examples are sufficient and necessary to achieve a bounded prediction error. This result is consistent with the findings for Tasks (i) and (ii) achieved in Ref.~\cite{zhao2023learning}, both of which display a linear-in-$\hat{G}$ sample complexity.
		
	The linear scaling behavior with respect to the number of tunable gates across different learners (i.e., quantum versus classical) and tasks (i.e., Tasks (i), (ii), and (iv)) suggests a consistent linear-in-$d$ (or linear-in-$\hat{G}$)  sample complexity for classical learners in Tasks (i)\&(ii) and for quantum learners in Task (iv). This conjecture is supported by two key facts. First, covering and packing nets, as well as (quantum) hypothesis selection, are robust tools for analyzing the sample complexity of learning bounded-gate quantum circuits, often providing tight bounds \cite{huang2022provably,du2022efficient,caro2021generalization,du2023problem,wang2023transition}. Second, the cardinality of these nets is intrinsically dependent on the number of gates ($d$, $G$, or $\hat{G}$), which leads to similar results across tasks, despite variations in the intended information gain. 
	
\medskip	
\noindent\underline{Observation II}.  We next turn to rethink the achieved runtime complexities in Tasks (i), (ii), and (iv) summarized in Table~\ref{tab:comparison}. An immediate observation is that, for all tasks, there are no efficient \textit{quantum learners} for learning bounded-gate unitaries or quantum states \cite[Theorems 2\&6]{zhao2023learning}. In addition, there are no efficient \textit{classical learners} capable of accurately predicting the mean-value space of arbitrary bounded-gate quantum circuits (Theorem~1 in the main text). These results are complementary and offer valuable insights into the unresolved issues outlined in Table~\ref{tab:comparison}.

		First, the inherent limitations of quantum learners in learning bounded-gate unitaries or states strongly suggest that there are no efficient classical learners capable of accomplishing Tasks (i) and (ii). This is because from an information-theoretic perspective, classical learners cannot extract much more information than quantum learners. Consequently, classical learners are unlikely to outperform quantum learners in these tasks.
		
		Second, while neither quantum nor classical learners can efficiently learn arbitrary bounded-gate quantum circuits, it is natural to ask whether specific conditions on the explored bounded-gate circuits could lead to the computational separations, i.e., quantum learners can provide efficient solutions, whereas classical learners would still face computational hardness. Answering this question can enhance our understanding of the learnability of bounded-gate quantum circuits and help identify the practical utility of quantum learners.

\subsubsection{Insight II: Learning algorithm design} 
The computational hardness of learning bounded-gate quantum circuits in Tasks (i), (ii) and (iv) highlights the need to develop advanced learning algorithms with  provable computational efficiency under practical and realistic conditions. Both Ref.~\cite{zhao2023learning} and our work make initial strides in this direction. Below, we summarize the advancements of each method and outline potential directions for future research.

	 For Tasks (i) and (ii), Ref.~\cite{zhao2023learning} proposes efficient learning algorithms for relatively restrictive scenarios where the number of two-qubit gates $\hat{G}$ scales logarithmically with the number of qubits $N$, i.e., $\hat{G}\sim \mathcal{O}(\log(N))$. Conceptually, at the inference stage, given any new quantum circuit composed of $\mathcal{O}(\log(N))$ two-qubit gates, the proposed algorithm first identifies the non-trivial $\mathcal{O}(\log(N))$ qubits on which two-qubit gates are applied and then employs Pauli-based tomography on these qubits. The achieved result naturally raises two future research directions. The first direction is to explore whether there exist efficient learning algorithms beyond the logarithmic scaling scenario. The second research direction is to explore how to design effective classical learners that achieve performance comparable to quantum learners in Tasks (i) and (ii).

	\medskip	
	  For Task (iv), we devise an efficient classical-shadow predictor based on the kernel methods. While a direct comparison may not be entirely fair, it is worth highlighting that the conditions required for provable efficiency in our approach are much less restrictive than those in Ref.~\cite{zhao2023learning}. Specifically, our protocol can efficiently learn a broad class of quantum circuits composed of a polynomial number of gates, as opposed to being limited to logarithmic scenarios. The achieved results, combined with the unsolved issues in Table~\ref{tab:comparison}, lead to two interesting questions to be further explored.  The first one is to explore alternative efficient classical learning algorithms. The second one is to analyze the computational separation between quantum and classical learning algorithms for Task (iv).

In summary, the complementary roles of Ref.~\cite{zhao2023learning} and our work  expand the landscape of learning algorithms for bounded-gate quantum circuits, paving the way for deeper exploration of their utility and limitations in quantum computing.

\subsection{Connections and differences with Pauli path simulators}\label{append:sec:diff-Pauli-sim} 

The primary distinction between our work and existing classical simulators, especially for Pauli path simulators, lies in their respective purposes. Specifically, our work focuses on developing a shadow state predictor to predict the mean values of a bounded-gate quantum circuit for new rotation angles and observables. In contrast, existing classical algorithms aim to efficiently simulate the mean values of a bounded-gate quantum circuit for varying rotation angles while keeping the observables fixed. This fundamental difference—prediction versus estimation—necessitates distinct theoretical analyses (i.e., the way of using Pauli transfer matrix (PTM) and the truncation method) and underscores unique benefits and advancements our protocol offered in many practically relevant scenarios.  
 
\smallskip
\noindent \underline{Different roles of PTM}. PTM is a widely used tool in quantum computing, with applications ranging from gate set tomography \cite{greenbaum2015introduction} to quantum learning theory \cite{caro2024learning} and the analysis of quantum circuit dynamics \cite{rudolph2023classical,fontana2023classical,gonzalez2024pauli,schuster2024polynomial}. In the context of simulating the mean values of bounded-gate quantum circuits, PTM is typically combined with Pauli path simulations to compute the contribution of each path \cite{rudolph2023classical,fontana2023classical}. As explained in SM~\ref{append:subsec:trigo-monomial-exp-QC},  given an $N$-qubit quantum circuit $U(\bx)$ is composed of $d$ $\RZ$ gates and $G-d$ Clifford gates, the expectation value of the given observable $O$ under the PTM representation yields 
\[\Tr(\rho(\bx) O) = \Tr( U(\bx)(\ket{0}\bra{0})^{\otimes N} U(\bx)^{\dagger} O) =  \sum_{\bomega}\Phi_{\bomega}(\bx)  \Tr(\rho_{\bomega} O),\]
 where the notation $\Phi_{\bomega}(\bx)$ with $\bomega\in \{0, \pm 1\}^d$ refers to the trigonometric monomial basis defined in Eq.~(\ref{append:eqn:basis-tri-comp}) and $\Tr(\rho_{\bomega} O)$ refers to the expectation value of the \textit{purely-Clifford circuit} for the path indexed by $\bomega$. For Pauli path simulators, the expectation value for a portion of $\{\bomega\}$ should be {explicitly computed on the classical hardware}.
	
Unlike Pauli path simulators, our proposal adopts a fundamentally different approach to leveraging PTM. Instead of computing nontrivial $\Tr(\rho_{\bomega} O)$  like Pauli path simulators, we utilize PTM to derive the explicit formula of the bounded-gate quantum state with respect to the input $\bx$, i.e.,
	\[
		 \rho(\bx)  =  U(\bx)(\ket{0}\bra{0})^{\otimes N} U(\bx)^{\dagger}   =  \sum_{\bomega}\Phi_{\bomega}(\bx)   \rho_{\bomega},
	\] 
	where $\Phi_{\bomega}(\bx)$ and $\rho_{\bomega}$ follow the same definitions in Eq.~(\ref{append:eqn:PTM-mean-value}). This explicit form inspires us to design the shadow state predictor $\hatsigma(\bx)$ in Eq.~(\ref{append:eqn:TriGeo-non-trunc-form}).  Moreover, in Appendix~\ref{append:subsec:proof-lemma2} (proof of Lemma~\ref{lem:estimation-error-geo-kernel}), we prove that given any new input $\bx'\in [-\pi, \pi]^d$, the expectation value of the shadow state predictor $\hatsigma(\bx')$ matches the truncated target quantum state $\rho_{\Lambda}(\bx')$, i.e.,  $\mathbb{E}[\hatsigma(\bx')] = \rho_{\Lambda}(\bx') \equiv \sum_{\bomega, \|\bomega\|_0 \leq \Lambda}  \Phi_{\bomega}(\bx') \rho_{\bomega}$. 
	
	The above result sets our approach apart from the use of PTM in Pauli path simulators. Specifically, our work leverages PTM to derive the explicit formula of the concept class with respect to the input $\bx$ and subsequently design the shadow state prediction model $\hatsigma(\bx)$, whereas Pauli path simulators such as LOWESA utilize PTM to compute values for each nontrivial path.
	 
 \smallskip
\noindent \underline{Truncation strategy}. The low-weight truncation scheme, as with PTM, is also a widely utilized strategy in quantum computing. However, the approach to truncation and the elements being truncated differ substantially across applications, leading to distinct theoretical results and heuristic performances. For Pauli path simulators, low-weight truncation is applied to the frequencies $\bomega$ or the weights of Pauli operators to achieve computational efficiency while maintaining a low estimation error. In contrast, in our work, low-weight truncation is applied to the frequencies to ensure computational efficiency while minimizing the prediction error. This fundamental distinction in purpose underscores that prior results on Pauli path simulators offer no direct implications for learning protocols, necessitating an entirely new theoretical framework. 
  
Recall that Theorem~1 in the main text establishes that no efficient learning model can predict all bounded-gate quantum circuits. Consequently, the most critical challenge lies in identifying specific classes of bounded-gate quantum circuits that can be efficiently learned—a question that classical simulators do not need to address. To tackle this, we establish a connection between the gradient norm of bounded-gate circuits and the truncation error as indicated in Lemma~\ref{lem:truncation-error-geo-kernel}. This relation warrants that the proposed learning model can achieve efficient predictions with low prediction error. This novel result, which is absent in Pauli path simulators, highlights the uniqueness of our approach and opens new avenues for developing efficient learning models tailored to predicting the mean values of bounded-gate quantum circuits.

\smallskip
\noindent\underline{Unique benefits of our learning model in two scenarios}. As explained previously, our method focuses on the shadow state prediction, while LOWESA centers on Heisenberg simulations. These differences define the unique strengths of our model in the following two scenarios. 
\begin{itemize}
	\item [(i)] Circuits with many Clifford gates. Classical Pauli path simulators, such as LOWESA, must explicitly simulate Clifford gates, which, although theoretically efficient  ($o(N^2)$ complexity with $N$ being the qubit count), can be computationally expensive in terms of wall-clock time. In contrast, our learning model is insensitive to the number of Clifford gates, as these gates do not affect the learning process. We validate this advantage through numerical simulations in SM~\ref{append:sec:comp-lowesa}, demonstrating superior performance in this regime.  
 
	\item [(ii)] Circuits with unknown system noise. Classical simulators, including both tensor network simulators and Pauli path simulators, rely on assumptions about noise strength when simulating the dynamics of noisy quantum circuits. However, these assumptions generally do not align with the actual noise present in the system. In contrast, a learning-based approach (such as the extension of our model) can automatically capture and adapt to inherent system noise by leveraging the dataset collected from the target quantum system, offering a more flexible and accurate alternative. 
\end{itemize}

\section{More numerical simulations}\label{append:sec:num-res}
In this section, we demonstrate more simulation details omitted in the main text. Specifically, in SM~\ref{append:sec:num-scaling-Lambda}, we provide heuristic evidence about the required truncation value $\Lambda$ in practice.  In SM~\ref{append:subsec:sim-GHZ}, we provide more simulation details about predicting properties of rotational $N$-qubit GHZ states. Then, in SM~\ref{append:subsec:global-ham-res}, we illustrate more simulation results about the global Hamiltonian simulation task. Next, in SM~\ref{appendix:subsec:enhanced-VQA}, we demonstrate how to use the proposed ML model to enhance variational quantum algorithms, including variational quantum Eigen-solver and quantum neural networks. Last, in SM~\ref{append:sec:comp-lowesa}, we numerically explore the performance of the proposed ML model over and LOWESA when processing quantum circuits with many Clifford gates.

\subsection{Heuristic evidence about the truncation value $\Lambda$}\label{append:sec:num-scaling-Lambda}
In this subsection, we conduct a systematic analysis to heuristically understand the behavior of $\Lambda$, or equivalently $|\mathfrak{C}(4C/\epsilon)|$. The results provide insights into the conditions under which the proposed ML model can operate efficiently.

\smallskip
\noindent{\underline{The scaling behavior of $\mathfrak{C}(\Lambda)$}}. We begin by examining how the cardinality of the truncated frequency set $\mathfrak{C}(\Lambda)$ defined in Theorem~2 of the main text scales with the input dimensions $d$ (i.e., the number of $\RZ$ gates) and the truncation thresholds $\Lambda$. Mathematically, we have
\begin{equation}
	p \equiv |\mathfrak{C}(\Lambda)|=\sum_{k = 0}^{\Lambda} \binom{d}{k} 2^k.
\end{equation}

\begin{figure}[h!]
	\centering
\includegraphics[width=0.85\textwidth]{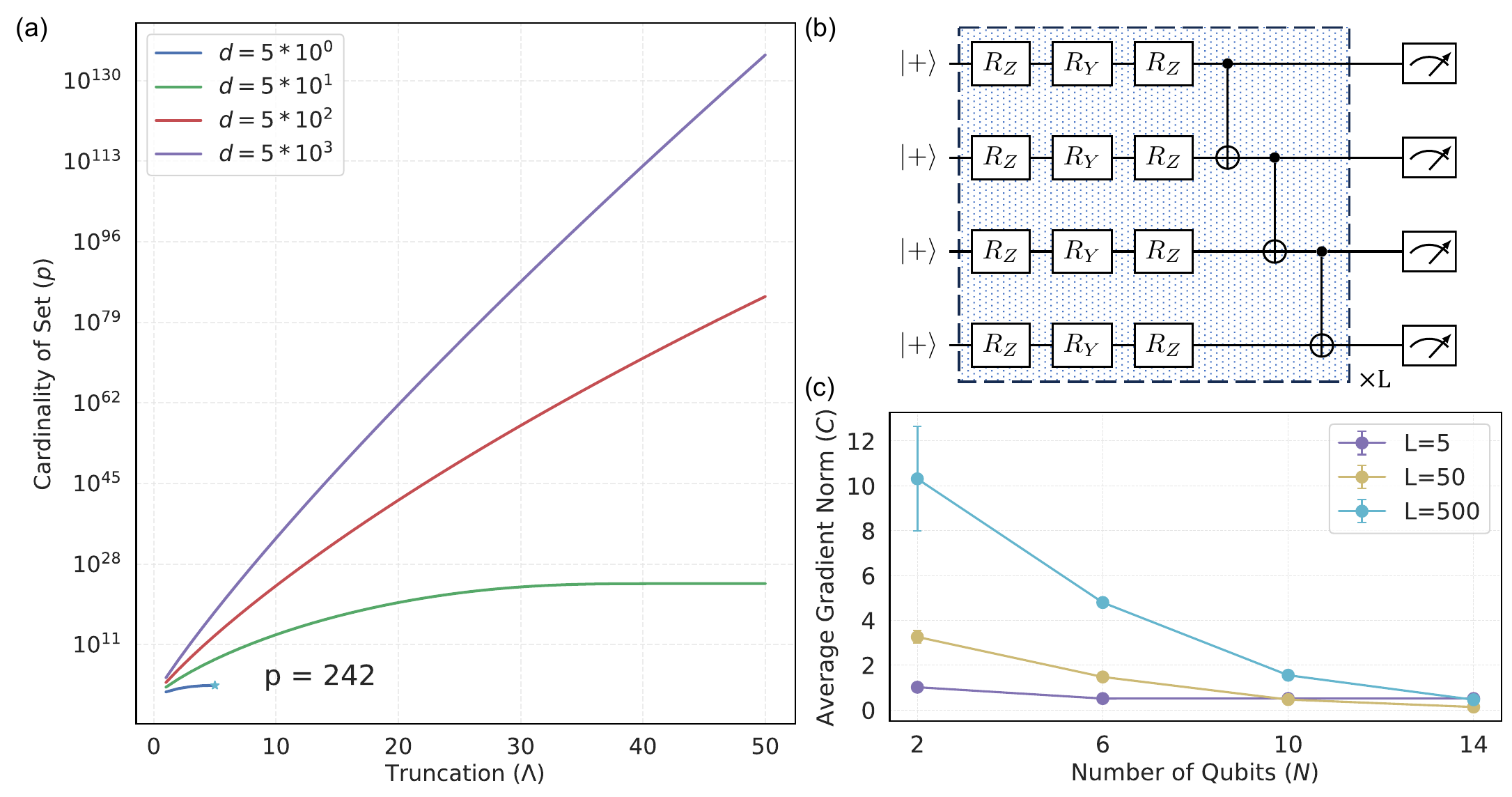}
\caption{\justifying\small{\textbf{The problem-dependent behavior of the truncation threshold $\Lambda$.} (a) \textsc{Scaling behavior of $p$}. The x-axis and y-axis represent the truncation thresholds and the cardinality of the truncation set $p\equiv|\mathfrak{C}(\Lambda)|$, respectively. The symbol `$\star$' marks that the cardinality of the truncation set is $p=242$ when $d=5$ and $\Lambda=5$. (b)  \textsc{Quantum circuit for average gradient norm computation}. The panel illustrates the quantum circuits used for computing the average gradient norm $C$ when the number of qubits is $N=4$.  The dashed box highlights the implementation of hardware-efficient ansatz and the notation `$L$' denotes the number of layers. (c) \textsc{Scaling behavior of $C$}. The simulation results of the average gradient norm $C$ with varied qubit count $N$ and layer number $L$. The vertical bar refers to the variance of the gradient norm.      }} 
\label{fig:scaling-trunc-set}
\end{figure}

The scaling behavior is visualized in Fig.~\ref{fig:scaling-trunc-set}(a), where $p$ exhibits two distinct phases: in Phase I, for small truncation values $\Lambda$, $p$ scales exponentially with both $\Lambda$ and the input dimension $d$; in Phase II, as $\Lambda$ increases further, the growth of $p$ begins to saturate and eventually converges to a fixed value. Notably, such a scaling imposes a substantial computational burden for large $d$ and $\Lambda$. As shown in the plot, when both $d$ and $\Lambda$ are large, the computational overhead becomes unaffordable, e.g., when $d=50$ and $\Lambda=10$, $p$ is above $10^{12}$.

\smallskip
\noindent{\underline{The relation between $\Lambda$ and the gradient norm $C$}}. 
The above analysis indicates that the practical efficiency of the proposed ML model lies in the small-$\Lambda$ regime, especially for the large $d$. According to the relation between $\Lambda$ and the gradient norm $C$ in Theorem~2, this is equivalent to requesting a small value about the ratio $4C/\epsilon$. In this regard, the problem reduces to determining how the average gradient norm $C$ scales in practice.

We perform the following numerical simulations to investigate the scaling behavior of the average gradient norm with $C=\mathbb{E}_{\bx\sim [-\pi, \pi]^d}\|\nabla_{\bx}\Tr(\rho(\bx)O)\|_2^2$. Specifically, the class of states $\{\rho(\bx)\}$ is prepared by applying an $N$-qubit hardware-efficient ansatz $U(\bx)$ to an $N$-qubit  product state $\ket{+}^{\otimes N}$ with $\ket{+}=(\ket{0}+\ket{1})/\sqrt{2}$, i.e., 
\begin{equation}
\left \{\rho(\bx) = U(\bx) (\ket{+}\bra{+})^{\otimes N} U(\bx)^{\dagger}\big| \bx\in [-\pi, \pi]^d\right\} \nonumber
\end{equation}
and 
\begin{equation}
		U(\bx) = \prod_{l=1}^L\Big(\bigotimes_{j=1}^N \RZ(\bx_{j*(l-1)})\RY(\bx_{j*(l-1)+1})\RZ(\bx_{j*(l-1)+2})\Big) u_e,
\end{equation}
where $L$ denotes the number of layers, $u_e$ refers to the entanglement layer consisting of $N$ CNOT gates applied sequentially to adjacent qubits, and the dimension of the input yields $$d=3 * N * L.$$ The observable interacted with $\rho(\bx)$ is fixed to be $O=(X_1 + Y_1 + Z_1)/3$, where $X_1$ (or $Y_1$, $Z_1$) refers to apply the Pauli operator $X$ (or $Y$, $Z$) to the first qubit. An illustration of the employed quantum circuit is depicted in Fig.~\ref{fig:scaling-trunc-set}(b).

The hyper-parameter settings are as follows. The number of qubits $N$ varies from 2 to 14, and for each specified qubit count, the number of layers $L$ is varied from 5 to 500. For each configuration, we uniformly and randomly sample the input 500 times to estimate the average gradient norm $C$ and its variance.

The scaling behavior of the average gradient norm is illustrated in Fig.~\ref{fig:scaling-trunc-set}(c). An important observation is that, for all layer numbers $L$, the average gradient norm $C$ converges to a small value with the negligible variance as the qubit count $N$ increases. When $N=14$, the value of C with L=50 is around $0.13 \pm 10^{-6}$. Although the concrete value of $C$ depends on the employed ansatz $U(\bx)$ and the qubit count $N$, the obtained simulation results support a reasonable conjecture that $C\ll 1$ as the number of qubits increases.  

The above findings validate the potential of the proposed ML model to predict interested properties of large-qubit circuits. According to the relation $ \Lambda=4C/\epsilon$  and the scaling behavior of $\Lambda$ exhibited in Fig.~\ref{fig:scaling-trunc-set}, we conclude that the primary factor influencing the applicability of the proposed ML model is the error threshold $\epsilon$. That is, when the given task requires high-precision predictions such that $\Lambda=4C/\epsilon$ yields a very large constant, the proposed ML model will encounter computational issues; otherwise, the proposed ML model can be efficient.

\subsection{Numerical simulations of $N$-qubit rotational GHZ states}\label{append:subsec:sim-GHZ}
\noindent \textbf{Dataset construction of  $N$-qubit rotational GHZ states}. The mathematical form of the $N$-qubit rotational GHZ states is
\begin{eqnarray}
&& \ket{\text{GHZ}(\bx)} = \left(\RY_1(\bx_1) \otimes \RY_{N/2}(\bx_2) \otimes  \RY_{N}(\bx_3) \right)\frac{\ket{0 \cdots 0} + \ket{1 \cdots 1}}{\sqrt{2}}.
\end{eqnarray}
The circuit implementation of this class of quantum states is visualized in Fig.~\ref{supple:fig:GHZ_sim}(a).

\begin{figure}[h!]
	\centering
	\includegraphics[width=0.98\textwidth]{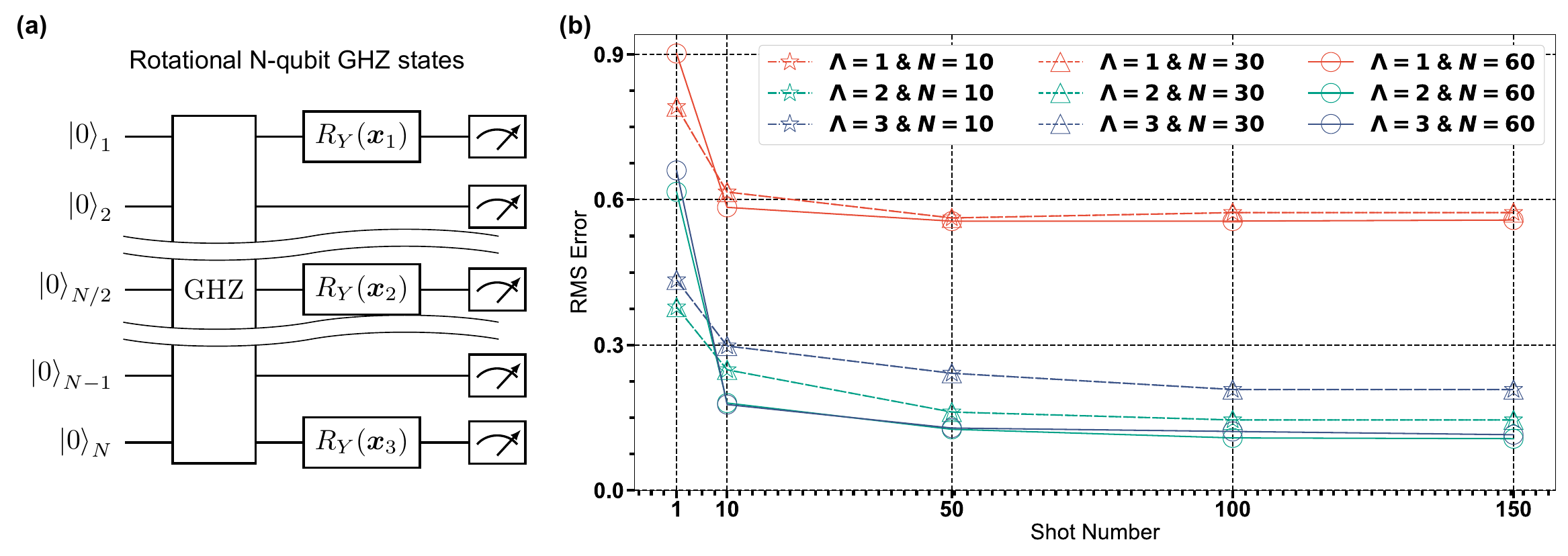}
	\caption{\justifying\small{\textbf{More simulation results of rotational N-qubit GHZ states}. (a) \textsc{Circuit implementation of the rotational $N$-qubit GHZ states}. The three $\RY$ gates apply to the first qubit, the $(N/2)$-th qubit, and the last qubit. (b) \textsc{Prediction error}. The root mean squared (RMS) error of the trained ML model with varied truncation $\Lambda$, the shot number $T$, and the qubit count $N$.}}
	\label{supple:fig:GHZ_sim}
\end{figure}
In both subtasks presented in the main text, the classical shadow of each example $(\bxi, \tilderho_T(\bxi))$ in the training dataset $\mathcal{T}$ are acquired by PastaQ library \cite{pastaq}.  In the first subtask of two-point correlation estimation, the exact value is obtained by matrix product operators (MPO) provided by PastaQ \cite{pastaq}. In the second subtask of predicting expectation values on $Z_1\otimes Z_{N}$, the accurate results yield
\begin{eqnarray}
 && \langle \text{GHZ}(\bx)|Z_1 \otimes  Z_{N}| \text{GHZ}(\bx)\rangle  \nonumber\\
 = &&   -\sin(\bx_1)\sin(\bx_2)\cos(\bx_3) -\cos(\bx_1)\sin(\bx_2)\sin(\bx_3) + \sin(\bx_1)\cos(\bx_2)\sin(\bx_3) + \cos(\bx_1)\cos(\bx_2)\cos(\bx_3).
\end{eqnarray}

\smallskip
\noindent\textbf{Hyper-parameter settings.} The random seed to collect training examples and test examples is set as $1234$ and $123$, respectively. The hyper-parameter of MPO used to calculate the exact values of the two-point correlation of rotational GHZ states is as follows. The cutoff value is set as $10^{-8}$ and the max dimension is set as $50$. 

\smallskip
\noindent\textbf{Cardinality of the frequency set $\mathfrak{C}(\Lambda)$}. In the main text, we adopt three settings of the maximum frequency length, i.e., $\Lambda=1, 2, 3$, to evaluate the performance of the proposed ML model. The corresponding cardinality of the frequency set is $|\mathfrak{C}(\Lambda=1)|=7$, $|\mathfrak{C}(\Lambda=2)|=19$, and $|\mathfrak{C}(\Lambda=3)|=27$, respectively. The similar performance between $\Lambda=2$ and $\Lambda=3$ (full expansion) in Fig.~2 indicates that truncating the high-frequency terms does not apparently affect the capability of the proposed ML model.

\smallskip
\noindent\textbf{Prediction error versus varied number of qubits $N$}. We append more simulation results about how the prediction error of the proposed ML model depends on the shot number $T$, the truncation  $\Lambda$, and the qubit counts $N$. To be specific, we fix the number of training examples to be $n=500$ and collect these training examples under different qubit count $N$, where the maximum shot number is set as $T=150$. Fig.~\ref{supple:fig:GHZ_sim}(b) visualizes the root mean squared (RMS) prediction error under different settings on $10$ test examples. The achieved results indicate that the performance of the proposed ML model is dominated by the truncation number $\Lambda$ (i.e., the dimension of classical inputs $d$) and not sensitive to the number of qubits, which echoes our theoretical analysis. More precisely, when $\Lambda\geq 2$ and $T\geq 100$, the prediction error attains a very low value for both $N=10, 30, 60$.    

\subsection{More details of synthetic global Hamiltonian simulation}\label{append:subsec:global-ham-res}

Here we conduct a much more difficult task compared to the one shown in the main text. In particular, we apply the proposed ML model to predict the magnetization with $\braket{\bar{Z}}$ when $d=30$. The number of training examples and the snapshot for each example are set to $n=50000$ and $T=50$, respectively. In this scenario, full expansion becomes computationally infeasible, where the cardinality of the frequency set is $3^{30}$.  The corresponding cardinality of the frequency set is $|\mathfrak{C}(\Lambda=1)|=61$, $\mathfrak{C}|(\Lambda=2)|=1801$,  $|\mathfrak{C}(\Lambda=3)|=472761$, and respectively. When evaluating the statistical performance of the proposed ML model with $n<50000$, we sample a subset from the whole dataset using different random seeds. The random seeds for all relevant simulations are set as $12345$, $22222$, $33333$, $44444$, and $55555$, respectively.

\smallskip
\noindent \textbf{RMS prediction error}. Fig.~\ref{fig:Ham-sim-res} demonstrates the RMS prediction error of our proposal with the varied number of training examples, i.e., $n\in \{10, 10^2, 10^3, 10^4\}$.  An immediate observation is that with the increased $n$, the averaged prediction error continuously decreases for all settings of $\Lambda$. In addition, the achieved results hint that the proposed model with a larger truncation value $\Lambda$ may require more training examples to surpass the one with a smaller truncation value. For instance, the proposed ML model with $\Lambda=2$ attains a better performance compared to $\Lambda=1$ when $n=10^4$. It is expected that the model with $\Lambda=3, 4$ can attain better performance by further increasing $n$.

\begin{figure}
	\centering
	\includegraphics[width=0.5\textwidth]{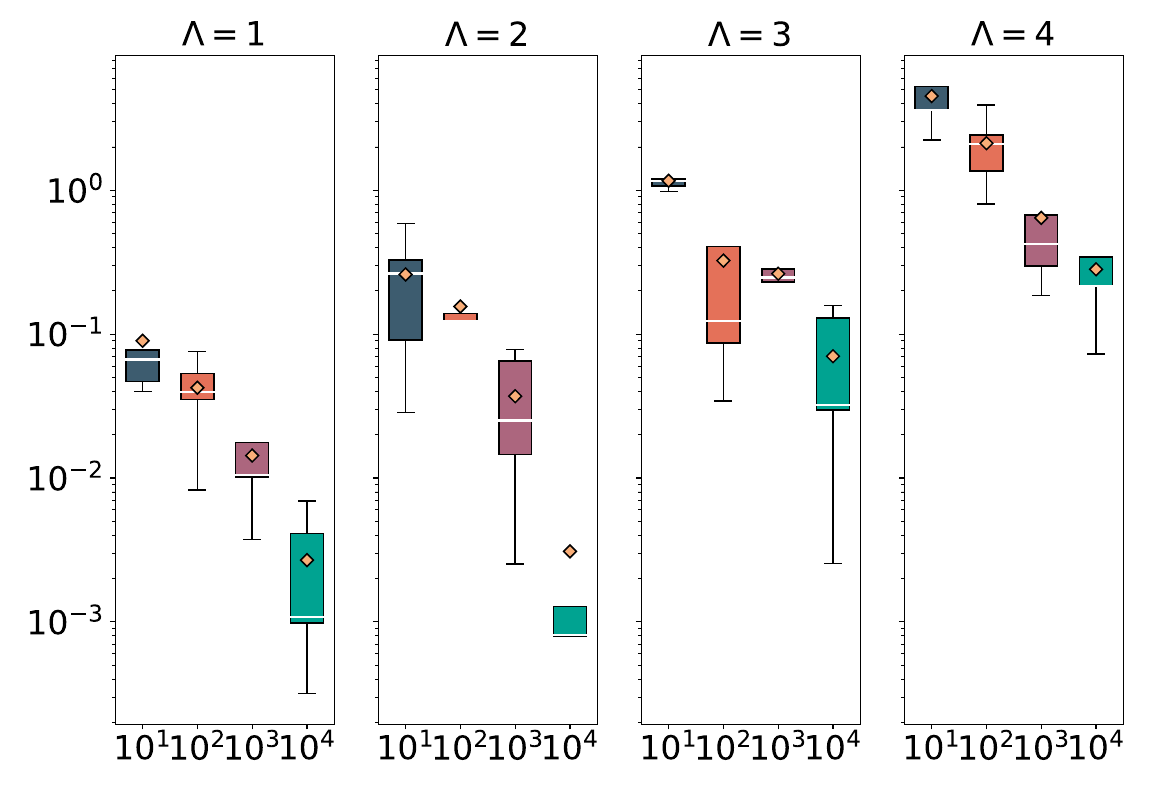}
	\caption{\justifying\small{\textbf{Prediction error on the state evolved  by $60$-qubit global Hamiltonians with $d=30$ with different sizes of the training dataset}.  The meaning of notations in the box plot is as follows. The label `$\Lambda=a$' stands for that the truncation value is set as $a$. The $x$-axis refers to the varied number of training examples $n$, ranging from $n=10$ to $n=10^4$. The $y$-axis refers to the achieved RMS prediction error. All results are obtained using the same random seeds.  }}
	\label{fig:Ham-sim-res}
\end{figure} 

 For completeness, we also append the corresponding standard error in Table~\ref{tab:avg-std-global-ham-sim}. Particularly, for all settings with   $\Lambda\in \{1,2,3,4\}$, the averaged RMS prediction error and the corresponding standard error continuously decrease with the increased number of training examples.

\begin{table}[h]
\centering
\caption{\justifying\textbf{Simulation results of predicting the magnetization of states evolved by the 60-qubit global Hamiltonian.}}
\label{tab:avg-std-global-ham-sim}
\resizebox{0.8\textwidth}{!}{%
\begin{tabular}{|c|c|c|c|c|c|}
\hline
            & $n=100$             & $n=250$             & $n=500$             & $n=750$             & $n=1000$            \\ \hline
$\Lambda=1$ & $0.0424 \pm 0.0222$ & $0.0210 \pm 0.0126$ & $0.0130 \pm 0.0109$ & $0.0152 \pm 0.0117$ & $0.0143  \pm 0.0088$ \\ \hline
$\Lambda=2$ & $0.1556 \pm 0.0959$ & $0.1862 \pm 0.0933$ & $0.0729 \pm 0.0472$ & $0.0729 \pm 0.0472$ & $0.037 \pm 0.0293$  \\ \hline
$\Lambda=3$ & $0.3244 \pm 0.3473$ & $0.3736 \pm 0.3249$ & $0.4012 \pm 0.2055$ & $0.3502 \pm 0.1357$ & $0.2631 \pm 0.1003$  \\ \hline
$\Lambda=4$ & $2.1227 \pm 1.061$  & $0.9519 \pm 0.7148$ & $0.9678 \pm 0.6590$ & $0.7926 \pm 0.4887$ & $0.6432 \pm 0.5222$ \\ \hline
\end{tabular}%
}
\end{table}

\smallskip
\noindent\textbf{The role of shot number $T$}. Here we conduct numerical simulations to explore how the shot number $T$ influences the prediction error. Specifically, we fix the number of training examples as $n=500$ and the truncation value as $\Lambda=1$, but vary the shot number from $T=50$ to $T=500$.  Each setting is repeated five times to collect the statistical results, where the randomness stems from distilling different training datasets from $600$ training examples.  The simulation results are summarized in Table.~\ref{tab:sim-global-vary-shots}, where the averaged prediction error does not decrease with the increased number of measurements. This phenomenon echoes our theoretical analysis, suggesting that once the shot number exceeds a certain threshold, it does not heavily affect the performance of the proposed model.

\begin{table}[h!]
\centering
\caption{\justifying\textbf{Simulation results of predicting the magnetization of states evolved by the 60-qubit global Hamiltonian with varied shot numbers}.}
\label{tab:sim-global-vary-shots}
\resizebox{0.45\textwidth}{!}{%
\begin{tabular}{|c|c|c|}
\hline
                     & $T=50$               & $T=500$             \\ \hline
$n=500$ \& $\Lambda=1$ & $0.0130  \pm 0.0109$ & $0.0251 \pm 0.0038$ \\ \hline
\end{tabular}%
}
\end{table}

\subsection{More simulation results of enhanced variational quantum algorithms by pretraining}\label{appendix:subsec:enhanced-VQA}

As shown in SM~\ref{append:subsec:A:generality}, a major application of the proposed ML model is enhancing variational quantum algorithms by substantially reducing the quantum resource demands. In this subsection, we provide a comprehensive explanation of the corresponding algorithmic implementation and then proceed to numerical simulations that demonstrate the efficacy of our approach. 

\medskip
\noindent\textbf{Algorithmic implementation}. Recall that contemporary quantum devices encounter limitations such as connectivity, gate fidelities, and coherence time. To overcome these constraints, experimentalists typically adopt tailored ansatzes  constructed from the native gate set from the specified quantum device to implement various variational quantum algorithms \cite{kandala2017hardware,havlicek2018supervised}. Essentially, these ansatzes maintain the consistent gate layouts while adjusting parameters for various computational tasks, as delineated by the framework formulated in Eq.~(3) of the main text. This intrinsic relation warrants the use of the proposed ML model to enhance plenty of variational quantum algorithms associated with device-specific or task-specific ansatzes. 

The learning framework is summarized in Alg.~\ref{alg:offline-VQA}. For a given quantum device and a specified ansatz, the learner first constructs the training dataset following the outlined procedure in the main text. Once the dataset is prepared, the proposed ML model serves as a surrogate to optimize variational quantum algorithms in which the optimization correspond to minimize the output value of the surrogate, obviating the need for direct implementation on the specified quantum device. Since the optimization process (i.e., the third step) is conducted entirely on classical processors, it significantly reduces the overhead associated with accessing sparse quantum devices in the contemporary era. 
\begin{algorithm}
\caption{Optimizing variational quantum algorithms in an offline manner}\label{alg:offline-VQA}
1. (Dataset construction) Randomly generate classical inputs $\bx\sim [-\pi, \pi]^{d}$, feed it to the specified $N$-qubit quantum device to obtain the pre-measured state $U(\bx)\ket{0}^{\otimes N}$, and apply Pauli-based classical shadow with $T$ shots to constitute a single training example $(\bx, \tilderho_T(\bx))$\;
\smallskip
2. (Dataset construction) Repeat the above procedure $n$ times to construct the training dataset $\mathcal{T}_{\mathsf{s}}$ \;
\smallskip
3. (Downstream task optimization) Formalize the prediction model $h_{\mathsf{s}}$ following Eq.~(6) and use it to optimize various downstream tasks without access to the quantum processor.
\end{algorithm}

In what follows, we first briefly explain how our approach supports two crucial classes of variational quantum algorithms: the variational quantum Eigen-solvers (VQEs) and quantum neural networks (QNNs). Then, in SM~\ref{append:subsubsec:3-qubit-VQA}, we present numerical simulations demonstrating the application of the proposed pretraining strategy to enhance a 3-qubit ansatz for both ground-state energy estimation and binary classification tasks. To demonstrate the potential of the proposed pretraining strategy at the use-case level, we employ it to enhance VQEs when processing a class of transversefield Ising models (TFIMs) up to 50 qubits in SM~\ref{append:sec:pretrain-TFIM}. In addition,  we conduct numerical simulations to illustrate how the proposed pretraining strategy can advance VQEs when handling a class of $\mathsf{H}_5$ molecules in SM~\ref{append:sec:pretrain-H5}. Note that the learning scheme presented below can be readily extended to wide applications covered by variational quantum algorithms.

\noindent\underline{Remark}. The focus on TFIMs and $\mathsf{H}_5$ molecules stems from the fact that they are widely studied in the literature, garnering  attention in quantum many-body physics and quantum chemistry, spanning algorithmic developments to experimental investigations. In particular, TFIM serves as a prototypical model for understanding quantum phase transitions, non-equilibrium dynamics, and critical phenomena in condensed matter systems~\cite{Dutta2015Quantum}. Its simplicity, coupled with its rich physics, makes it a cornerstone for testing quantum algorithms~\cite{dborin2022simulating,pelofske2024short} and benchmarking quantum simulators~\cite{miessen2024benchmarking,kim2023evidence}. In addition, the hydrogen molecule (e.g., $\mathsf{H}_2$ and $\mathsf{H}_5$) represents a small yet nontrivial system that captures essential features of electron correlation and quantum entanglement in molecular systems~\cite{mcardle2020quantum}. It is widely used to benchmark quantum computational methods in solving quantum chemistry problems with near-term quantum devices \cite{colless2018computation,liu2023performing}.

\medskip
\noindent\underline{Ground state energy estimation by VQE}. Here we first briefly recap the mechanism of VQE when applied to estimate the ground state energy of a specified Hamiltonian. Given an $N$-qubit Hamiltonian $
	\mathfrak{H} \in \mathbb{C}^{2^N \times 2^N}$, the ground state energy estimation aims to find its minimum eigenvalue, i.e., 
 \begin{equation}
 	\mathfrak{E}^* = \min_{\ket{\psi}\in \mathbb{C}^{2^N}}  \braket{\psi|\mathfrak{H}|\psi}.
 \end{equation}
 
To estimate the ground state energy $\mathfrak{E}^*$, VQE adopts an ansatz $W(\btheta)$ to prepare a variational quantum state $\ket{\psi(\btheta)}=W(\btheta)\ket{\psi_0}^{\otimes N}$ with a fixed input state $\ket{\psi_0}$. The trainable parameters $\btheta$ are optimized by minimizing the loss function		$\mathcal{L}(\btheta, \mathfrak{H}) = \Tr\left( (\ket{\psi_0}\bra{\psi_0})  W(\btheta)^{\dagger} \mathfrak{H} W(\btheta)\right)$.      
	The optimization of VQE follows an iterative manner, i.e., the classical optimizer continuously leverages the output of the quantum circuits to update $\btheta$ and the update rule is 
	\begin{equation}\label{eqn:VQE-loss}
		\btheta^{(t+1)}=\btheta^{(t)}-\eta \frac{\partial \mathcal{L}(\btheta^{(t)}, \mathfrak{H})}{\partial \btheta},
	\end{equation}
	 where $\eta$ refers to the learning rate. The first-order gradient in the equation can be obtained by parameter shift rule \cite{schuld2019evaluating}.  Mathematically, the derivative with respect to the $k$-th parameter $\bx_k$ for $\forall k\in[d]$ is 
	\begin{equation}
	\label{append:eqn:para_shift}
			\frac{\partial \mathcal{L}(\btheta^{(t)}, \mathfrak{H})}{\partial \btheta_k} =  \frac{1}{2\sin \alpha} \left[\Tr\left((\ket{\psi_0}\bra{\psi_0})W(\btheta^{(t,+)}) \mathfrak{H} W(\btheta^{(t,+)})^{\dagger}\right)  
			- \Tr\left( (\ket{\psi_0}\bra{\psi_0}) W(\btheta^{(t,-)}) \mathfrak{H} W(\btheta^{(t,-)})^{\dagger}\right)\right],
	\end{equation}
	where $ \btheta^{(t,\pm )} = \btheta^{(t)} \pm  \alpha \bm{e}_{k}$, $\bm{e}_{k}$ is the unit vector along the $\btheta_{k}$ axis, and $\alpha$ can be any real number but the multiple of $\pi$ because of the diverging denominator.

Supported by the proposed ML model, the optimization of $\btheta$ can be entirely carried out on classical processors. More precisely, define $W(\btheta)=U(\bx)$ (see Fig.~\ref{fig:append:visual-vqa-offline} for the visual interpretation), the two terms of the derivatives in  Eq.~(\ref{append:eqn:para_shift}) can be predicted by $h_{\mathsf{s}}$ in Alg.~\ref{alg:offline-VQA} and the optimized parameters can be obtained in an offline manner. For convenience, in the subsequent context, we dub the VQEs optimized by the proposed ML model $h(\bx)$ as \textit{Offline-VQE}. An alternative scenario involves utilizing the optimized parameters of Offline-VQE as effective initial parameters for running VQEs on real quantum devices. In either case, the offline optimization enabled by the proposed ML model dramatically reduces the demand for quantum resources, a critical advantage given the scarcity of available quantum processors in the current landscape.

\begin{figure*}
	\centering
\includegraphics[width=0.98\textwidth]{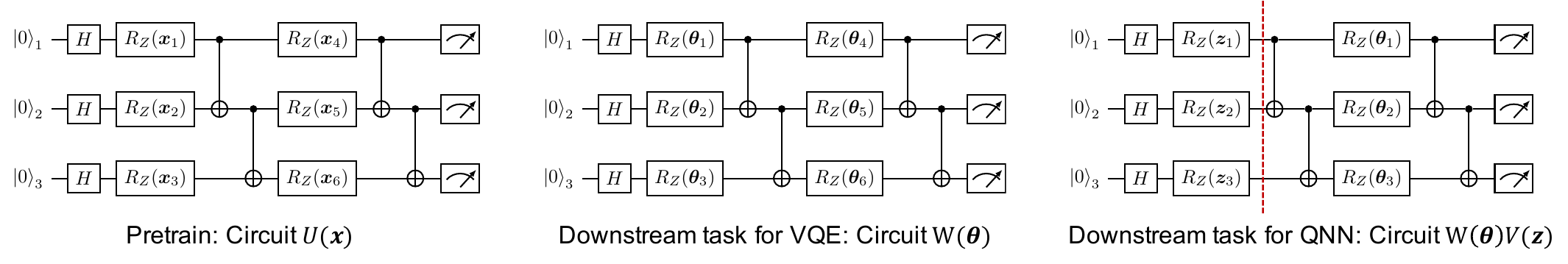}
\caption{\justifying\small{\textbf{A toy model of optimizing variational quantum algorithms in an offline manner}. Left panel: In the pre-training state, the learner collects training data from a $3$-qubit device with a hardware-efficient ansatz  (HEA) $U(\bx)$, which contains $d=6$ tunable parameters. Then the learner can use the collected data to train an ML model introduced in the main text to enhance various downstream tasks. Middle panel: The optimization of variational quantum Eigen-solvers (VQEs) associated with the ansatz $W(\bm{\theta})$ can be accomplished by the proposed ML model without access to the quantum resources. Right panel: The optimization of quantum neural networks (QNNs) associated with the encoding circuit $V(\bz)$ and the trainable circuit $W(\bm{\theta})$ can be accomplished by the proposed ML model without the access to the quantum resources.   }}
\label{fig:append:visual-vqa-offline}
\end{figure*}

\medskip
\noindent\underline{Binary classification  by QNN}. Denote the input space as $\mathcal{Z}$, the binary label (class) space as $\mathcal{Y} = \{0, 1\}$, and the training dataset as $\mathcal{D}= \{(\bzi, y^{(i)})\}_{i=1}^{n}$ with  samples  drawn i.i.d. from an unknown probability distribution $\mathbb{D}$ on $\mathcal{Z}\times \mathcal{Y}$.  The purpose of a binary classification algorithm is using $\mathcal{D}$ to infer a hypothesis (a.k.a., a binary classifier) $g_{\mathcal{D}}:\mathcal{Z} \rightarrow \mathbb{R}$ from the hypothesis space to separate training examples from binary classes. This is equivalent to identifying an optimal hypothesis minimizing the expected risk $\ROPT (g)=\mathbb{E}_{(\bz, \by)\sim \mathbb{D}}[\ell(g(\bz), y)]$, where $\ell(\cdot, \cdot)$ is the per-sample loss and for clarity we specify it as the square error \cite{bishop2006pattern}. Unfortunately, the inaccessible distribution $\mathbb{D}$ forbids us to assess the expected risk directly. In practice, the binary classification algorithm  alternatively learns an empirical classifier $\hath$ as the global minimizer of the (regularized) loss function 
 \begin{equation}\label{eqn:gene_obj_func}
  \mathcal{L}(g, \mathcal{D}) =  \frac{1}{n}\sum_{i=1}^{n}  \frac{1}{2} \left( g(\bzi) -  y^{(i)} \right)^2  + \mathfrak{E}(g),
\end{equation}
where $\mathfrak{E}(g)$ is an optional regularizer. Given an unseen example $\bz'$, its predicted label is $0$ if $g(\bz')<0.5$; otherwise, its predicted label is $1$.  
 
When QNN is employed to implement the binary classifier, the hypothesis $g$ is realized by variational quantum circuits followed by a predefined measurement operator $\Pi_0$ \cite{du2023problem,gil2024understanding}. The mathematical expression of the  hypothesis space for the binary quantum  classifier (BiQC) is 
\begin{equation}
	\mathcal{G}= \left\{g(\bz, \btheta)=\Tr\left(V(\bz) (\ket{0}\bra{0})^{\otimes N} V(\bz)^{\dagger} W(\btheta)^{\dagger} \Pi_0 W(\btheta) \right) \Big| \btheta \in \Theta \right\},
\end{equation} 
where $W(\btheta)$ is the trainable circuit and $V(\bz)$ is another variational circuit that encodes the training example $\bzi$ into the quantum state $\rho(\bzi)$. As with VQE, the optimization of trainable parameters $\btheta$ can be completed by the gradient descent methods (e.g., stochastic gradient descent or batch gradient descent) and the derivatives can be acquired by the parameter shift rule given in Eq.~(\ref{append:eqn:para_shift}). 

\smallskip	
Supported by the proposed ML model,  the optimization of $\btheta$ for BiQC can be entirely carried out on classical processors. According to Alg.~\ref{alg:offline-VQA}, the classical input $\bx$ should be divided into parts, where the first part refers to the training example $\bz$ and the second part refers to the trainable parameters $\btheta$ in BiQCs. In addition, the circuit architecture of $U(\bx)$ amounts to the combination of $W(\btheta)$ and $V(\bz)$ (see Fig.~\ref{fig:append:visual-vqa-offline} for the visual interpretation). In this way, for an arbitrary input $\bz$ and trainable parameters $\btheta$, the derivatives $\partial \mathcal{L}/\partial \btheta_k$, which is formed by $g(\bz, \btheta^{(t)})$ and $g(\bz, \btheta^{(t, \pm)})$, can be estimated by the proposed ML model without accessing quantum resources. Due to this offline property, we call BiQCs optimized by the proposed ML model as \textit{Offline-BiQCs}.

\smallskip
\noindent\underline{Remark}. The approach of leveraging the proposed ML model to propel BiQCs can be readily extended to more complex scenarios. Specifically, the proposed ML model can enhance the optimization of multi-class quantum classifiers \cite{du2023problem} and quantum regression models \cite{mitarai2018quantum}, significantly reducing the quantum resource overhead. Moreover, it can also improve QNNs with diverse architectures, such as data re-uploading strategy \cite{perez2020data} and quantum convolutional neural networks \cite{cong2019quantum,herrmann2022realizing}.

\subsubsection{Pretraining a 3-qubit HEA for both VQE and QNN}\label{append:subsubsec:3-qubit-VQA}

Recall that contemporary quantum devices encounter limitations such as connectivity, gate fidelities, and coherence time. To overcome these constraints, experimentalists typically adopt hardware-efficient ansatzes (HEAs) constructed from the native gate set from the specified quantum device to implement various variational quantum algorithms \cite{kandala2017hardware,havlicek2018supervised}. Essentially, these HEAs  are tailored to individual quantum devices, maintaining consistent gate layouts while adjusting parameters for various computational tasks, as delineated by the framework formulated in Eq.~(3) of the main text. This intrinsic relation warrants the use of the proposed ML model to enhance plenty of variational quantum algorithms associated with device-specific HEAs.

 To exhibit the effectiveness of our proposal for enhancing VQEs and QNNs, we employ it to pretrain a 3-qubit hardware-efficient ansatzes (HEAs)~\cite{kandala2017hardware,havlicek2018supervised}, followed by estimating the ground state of the Transverse-field Ising model (TFIM) and classifying a synthetic binary dataset, respectively. The explicit form of the adopted HEA is 
\begin{equation}\label{append:eqn:def-hea}
	U(\bx) = \prod_{l=1}^{3}\left(\CNOT_{2,3}\CNOT_{1,2} \bigotimes_{i=1}^3\RY(\bx_{i+3(l-1)})\right),
\end{equation}
where $\CNOT_{a,b}$ denotes applying $\CNOT$ gate to $a$-th and $b$-th qubits. This ansatz contains in total $d=9$ classical parameters.

 \smallskip
\noindent\underline{Pre-training}. At the stage of dataset construction, we collect in total  $90000$ training examples and the shot number of each training example is set as $T=1000$. The random seed to generate such a dataset is set as $123$. Once the data collection is completed, we form the kernel-based learning model to accomplish the following two downstream tasks on the classical side. 
 
 \smallskip
 \noindent\underline{Downstream task I: ground state energy estimation of TFIM}. The mathematical form of the exploited $3$-qubit one-dimensional (1D) TFIM is $
	\mathsf{H}_{\text{TFIM}} = -0.1 (Z_1Z_2 + Z_2Z_3) + 0.5(X_1 + X_2 + X_3)$. The ground state energy of $\mathsf{H}_{\text{TFIM}}$ is $\mathfrak{E}^*_{\text{TFIM}}=-1.51 \text{Ha}$.
	 
\begin{figure*}
	\centering
\includegraphics[width=0.95\textwidth]{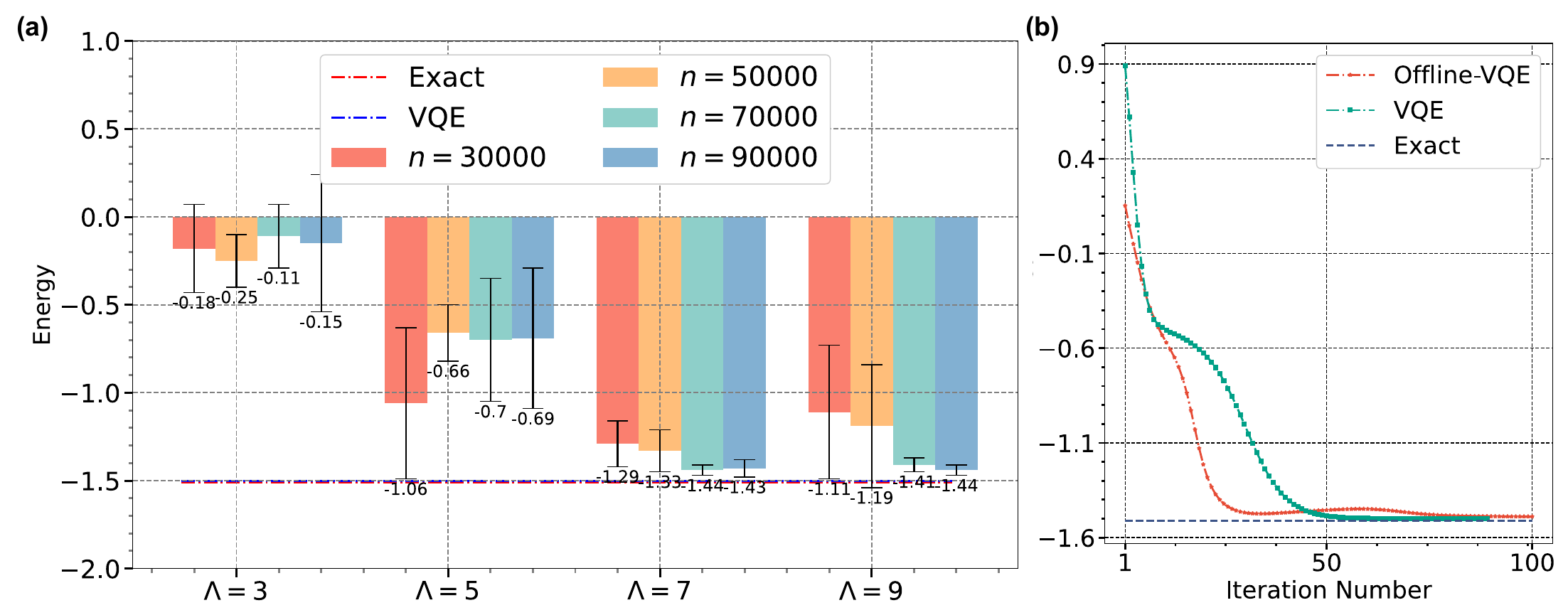}
\caption{\justifying\small{\textbf{Simulation results for ground state energy estimation of TFIM}. (a) \textsc{The estimated energy of Offline-VQE under different hyper-parameter settings}. The label $n=a$ refers that the number of training data is $a$. The label `Exact' denotes the accurate ground state energy of the employed TFIM. The blue dotted line highlights the estimated result of traditional VQE with infinite shots. The vertical bar reflects the standard error of Offline-VQE in each setting. (b) \textsc{The optimization process of Top-1 Offline-VQE and conventional VQE}.}}
\label{append:fig:sim-VQE}
\end{figure*}	
	
	In numerical simulations, we adopt different hyper-parameter settings to evaluate the performance of Offline-VQE. In particular, we vary the number of training examples as $n \in \{ 30000, 50000, 70000, 90000\}$ and set the truncation frequency as $\Lambda\in \{3, 5, 7, 9\}$. The maximum iteration is set as $200$. The initial parameters are uniformly sampled from the range $[-\pi, \pi]^9$. Each setting is repeated $5$ times to obtain the statistical results. For comprehensive, we employ the conventional VQE with the same ansatz, the same initial parameters, the same optimizer, and infinite shots as the benchmark.

The achieved results are depicted in Fig.~\ref{append:fig:sim-VQE}. Fig.~\ref{append:fig:sim-VQE}(a) demonstrates the performance of Offline-VQE under various settings. A key observation is that with the increased frequency truncation and number of training examples, the estimated ground state energy of Offline-VQE converges to the exact result. Namely, when $\Lambda\geq 7$ and $n\geq 70000$, the estimation error is less than $0.1\text{Ha}$. Besides, Fig.~\ref{append:fig:sim-VQE}(b) compares the top-1 Offline-VQE (i.e., $\Lambda=7$ and $n=90000$) with conventional VQE. The achieved results exhibit a similar convergence rate during the optimization, where both of them converge to the near-optimal value after $50$ iterations. These results validate the potential of Offline-VQE in advancing conventional VQEs.

  \smallskip
\noindent\underline{Downstream task II: binary synthetic data classification}. The construction rule of the binary synthetic dataset follows the study \cite{havlicek2018supervised}. That is, we randomly and uniformly sample classical data $\bz$ in the range $[-\pi, \pi]^3$ and then embed them into a $3$-qubit quantum circuit $V(\bz)=\CNOT_{2,3}\CNOT_{1,2} \bigotimes_{j=1}^3\RY(\bz_j)$. Then, we evolve the quantum state $V(\bz)\ket{0}^{\otimes 3}$ by the unitary $W(\btheta^*)=\prod_{l=1}^{2}\left(\CNOT_{2,3}\CNOT_{1,2} \bigotimes_{j=1}^3\RY(\btheta^*_{j+3(l-1)})\right)$ with the fixed parameters $\btheta^*$  and measuring the evolved state under the observable $X\otimes \mathbb{I}_4$. For the input $\bzi$, its label is defined as
\begin{equation}
	y^{(i)} = \text{sign}\left(\Tr\left( W(\btheta^*)V(\bzi)(\ket{0}\bra{0})^{\otimes 3}V(\bzi)^{\dagger} W(\btheta^*)^{\dagger}\left( X\otimes \mathbb{I}_4 \right)\right) \right),
\end{equation}
where $\text{sign}(\cdot)$ denotes the sign function. We collect $500$ positive samples and $500$ negative samples and split them into the training set and the test set with a ratio of $0.2$.

	In numerical simulations, we adopt different hyper-parameter settings to evaluate the performance of Offline-BiQC. The varied settings are analogous to those employed in the offline-VQE, where the number of training examples is set as $n \in \{ 30000, 50000, 70000\}$ and the truncation frequency is set as $\Lambda\in \{3, 5, 7, 9\}$.  The initial parameters are uniformly sampled from the range $[-\pi, \pi]^6$. The learning rate and the maximum iteration are set as $0.5$ and $140$, respectively. For comprehensive, we employ the conventional BiQC with the same encoding and training circuits, the same initial parameters, the same optimizer, and infinite shots as the benchmark.

\begin{figure}
	\centering
\includegraphics[width=0.77\textwidth]{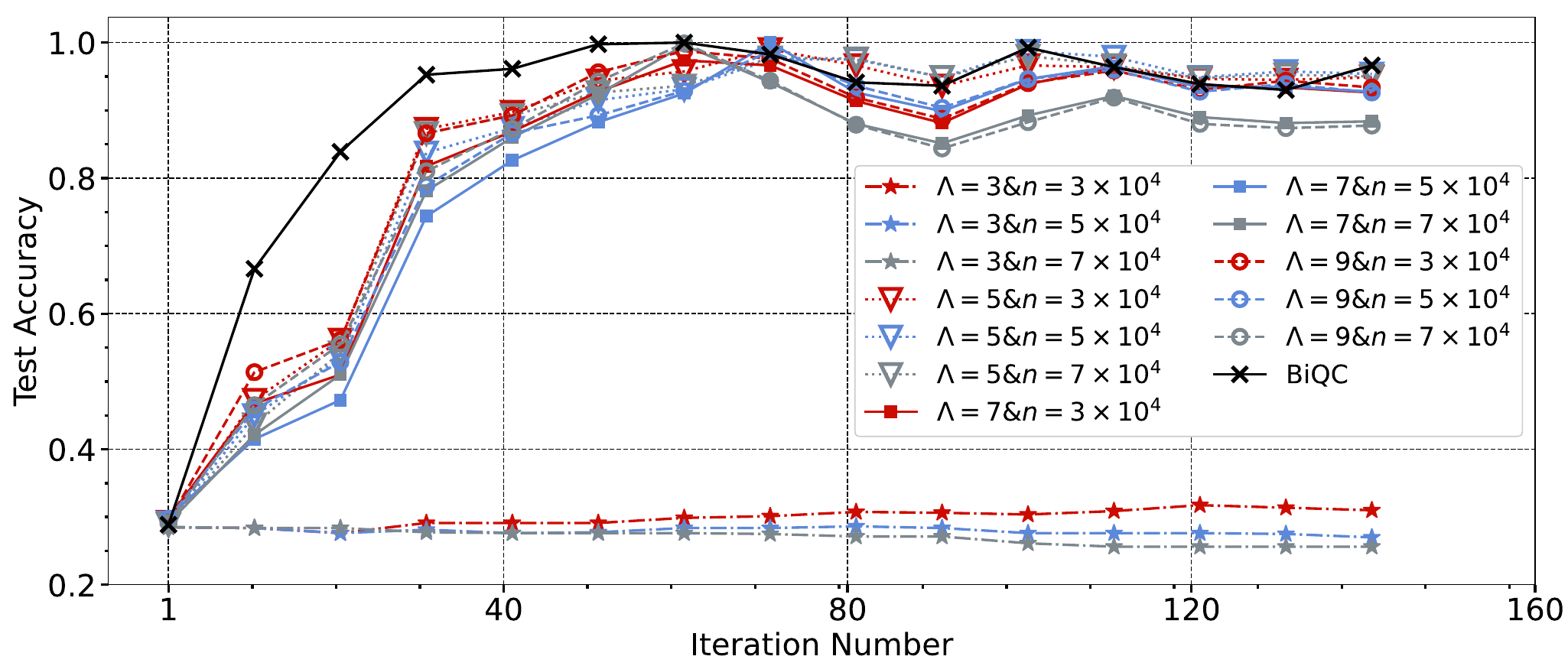}
\caption{\justifying\small{\textbf{Simulation results of Offline-BiQC}. The figure depicts the test accuracy of Offline-BiQC under varied settings. The label `$\Lambda=a$\&$n=b$' denotes that the truncation value is $a$ and the number of training examples used to form the classical representations is $b$. The label `BiQC' refers to the conventional BiQC optimized in an online manner. }}
\label{append:fig:sim-Offline-BiQC}
\end{figure}

The simulation results are shown in Fig.~\ref{append:fig:sim-Offline-BiQC}. When $\Lambda=3$, the test accuracy of Offline-BiQC is lower than $40\%$ no matter how the number of training examples $n$ is. These results reflect the crucial role of the truncation value in warranting the performance of Offline-BiQC. In addition, Offline-BiQC attains a superior performance with an increased truncation value. Namely, when $\Lambda\geq 5$ and $n\geq 30000$, its performance is comparable with conventional BiQC, where both of them attain a test accuracy above $85\%$. These results validate the capability of Offline-BiQC to advance the study of quantum neural networks by greatly reducing the demand for quantum resources. Another phenomenon is that Offline-BiQC and BiQC encounter the oscillated test accuracy after $80$ iterations. This is mainly caused by the employed large learning rate. Adopting a smaller learning rate, associated with the learning rate decay strategy, can mitigate this issue.

\subsubsection{Pretraining Hamiltonian-variational ansatz towards a class of Transverse-field Ising model}\label{append:sec:pretrain-TFIM}

Following the definition of 3-qubit TFIM in SM~\ref{append:subsubsec:3-qubit-VQA}, the mathematical expression of a family of $N$-qubit  (one-dimensional) TFIMs is 
\begin{equation}\label{eqn:TFIM-family}
	\mathcal{H}_{\text{TFIM}}=\Big\{\mathsf{H}_{\text{TFIM}}(J;h)= \sum_{i=1}^{N-1} J Z_iZ_{i+1} + h\sum_{i=1}^N X_i \Big| J , h\in \mathbb{R} \Big\},
\end{equation} 
where $J$ and $h$ refer to the coupling strength and the external field applied in a transverse direction, respectively. In the following, we evaluate the performance of the proposed ML model in pretraining a Hamiltonian-variant ansatz to minimize the ground state energies in $\mathcal{H}_{\text{TFIM}}$.

\medskip
\noindent\underline{Benchmark methods}. To attain a comprehensive understanding about the capabilities of the proposed ML model, we compare its performance with three other types of classical learning surrogates in pretraining VQEs. Additionally, we use conventional VQEs as a benchmark to evaluate the discrepancy between the optimized results and the exact ground-state energy. Below, we outline the fundamental mechanisms of each approach.

\smallskip 
\noindent\textit{Method I: Conventional VQE}. The implementation of VQE mainly follows the rules introduced in Eq.~(\ref{eqn:VQE-loss}). The only difference is that we set the number of measurements to infinity (i.e., $T\rightarrow\infty$) and compute the gradients using the \textit{automatic differentiation method}, i.e., the update of trainable parameters is completed by the Adam optimizer~\cite{kingma2014adam}. The infinite measurements can eliminate the influence of estimation errors, allowing us to isolate and examine how the employed ansatz $U(\bx)$ affects the performance, i.e., the discrepancy between the  optimized results of VQE and the exact ground-state energy. 

\smallskip
\noindent\textit{Method II: The proposed ML model}. The first surrogate applied to the pre-training VQE is the proposed ML model, i.e.,
\begin{equation}\label{eqn:generic-learner}
	h_{\mathsf{s}}(\bx, \mathsf{H}) = \frac{1}{n}\sum_{i=1}^n\kappa_{\Lambda}\left(\bx, \bxi \right)g\left(\bxi, \mathsf{H} \right), \nonumber
\end{equation}
where $\kappa_{\Lambda}(\bx, \bxi)$ is the truncated trigonometric monomial kernel   and $g(\bxi,\mathsf{H})=\Tr(\tilderho_T(\bxi)\mathsf{H})$ refers to the shadow estimation of $\Tr(\rho(\bxi) \mathsf{H})$ with $T$ snapshots for each training example in $\mathcal{T}_{\mathsf{s}}$ (Step 2 in Alg.~\ref{alg:offline-VQA}). 

Given access to the proposed ML model $h_{\mathsf{s}}(\bx, \mathsf{H})$, the initialized parameters of $U(\bx)$ take the form as
\begin{equation}\label{eqn:surrogate-loss-proposed-ML}
	\hat{\bx} = \arg\min_{\bx} h_{\mathsf{s}}(\bx, \mathsf{H}).
\end{equation}
To find the parameters minimizing the surrogate loss $h_{\mathsf{s}}(\bx, \mathsf{H})$, we employ Adam optimizer, an advanced gradient-descent-based method.

\smallskip
\noindent\textit{Method III: kernel ridge regression}. We identify that when the employed ansatz $U(\bx)$ consists of $d$ $\RZ$ gates and an arbitrary number of Clifford gates, the estimated energy in VQE can be reformulated into an inner-product form under the PTM representations, i.e.,
	$\Tr(\mathsf{H} U(\bx) \rho_0 U(\bx)^{\dagger}) = \langle \bmw^*, \vec{\Phi}(\bx) \rangle$, where $\vec{\Phi}(\bx)=\{\Phi_{\bomega}(\bx)\}_{\bomega}$ is the $3^d$-dimensional vector with $\bomega\in\{0,\pm 1\}^d$ and $\Phi_{\bomega}(\bx)$ being the trigonometric monomial basis, and the optimal parameters $\bmw^*=\{2^{\|\bomega\|_0}\Tr(\rho_{\bomega}\mathsf{H})\}_{\bomega}$ is also a $3^d$-dimensional vector with $\rho_{\bomega}$ being parameter-independent matrix.  

This inner-product reformulation  enables us to utilize kernel ridge regression (KRR) \cite{mohri2018foundations} to construct a learning surrogate. Specifically, given the training dataset $\{(\bxi, \yi\}_{i=1}^n$ with $\yi = \Tr(\mathsf{H} U(\bxi) \rho_0 U(\bxi)^{\dagger})$ in the ideal case or $\yi = \Tr(\tilderho_T(\bxi)\mathsf{H})$ in the shadow case for $\forall i\in[n]$, we can use a KRR model $h_{\text{KRR}}(\bx; \bmw)=\langle \bmw, \vec{\Phi}(\bx)  \rangle$ to minimize the empirical risk, i.e.,
	$\min_{\bm{\mathsf{w}}} \frac{1}{2} \sum_{i=1}^n \left(h_{\text{KRR}}(\bxi; \bmw) - \yi \right)^2 + \frac{\alpha}{2} \|\bmw\|_2,$ where $\alpha$ is a hyper-parameter. Under the \textit{dual representation}~\cite{mohri2018foundations}, the closed-form solution of the above minimization problem is
\begin{equation}\label{eqn:KRR-explit-form}
	h_{\text{KRR}}(\bx; \hat{\bmw}) = \bm{k}(\bx)\cdot (K+\alpha\mathbb{I}_n)^{-1}\bm{y},
\end{equation}
where $\hat{\bmw}$ refers to the optimized parameters, $\bm{y}=[y^{(1)},...,y^{(n)}]$ is the label vector, $K\in \mathbb{R}^{n\times n}$ refers to the kernel matrix with 
\begin{equation}\label{eqn:KRR-kernel}
	K_{ij} \equiv k(\bxi, \bx^{(j)}) = \langle \vec{\Phi}(\bxi),  \vec{\Phi}(\bx^{(j)}) \rangle = \prod_{m=1}^d\left(1 + 2\cos(\bxi_m - \bx^{(j)}_m)\right),~\forall i,j\in [n], 
\end{equation}
and $\bm{k}(\bx)$ is an $n$-dimensional vector whose $i$-th entry is $\bm{k}(\bx)_i=\langle \vec{\Phi}(\bx),  \vec{\Phi}(\bx^{(i)}) \rangle = \prod_{m=1}^d(1 + 2\cos(\bxi_m - \bx_m))$.  Given access to the optimized learning model $h_{\text{KRR}}(\bx; \hat{\bmw})$ in Eq.~(\ref{eqn:KRR-explit-form}), the pre-training VQE amounts to solving the following minimization problem, 
\begin{equation}\label{eqn:surrogate-loss-KRR}
	\hat{\bx} = \arg\min_{\bx} h_{\text{KRR}}(\bx, \hat{\bmw}).
\end{equation}
The optimization is completed by the Adam optimizer. After that, the optimized result serves as the initialized parameters of $U(\bx)$ for the corresponding VQE task. 

\smallskip
\noindent\textit{Method IV: Random Fourier features}. The computational efficiency of KRR highly depends on the number of training examples $n$. When the number of training examples $n$ becomes large, the space and time cost of KRR scales with $\mathcal{O}(n^2)$ and $\mathcal{O}(n^3)$,  respectively. To address the computational bottleneck, an alternative is employing random Fourier features (RFF) techniques to learn the given dataset~\cite{rahimi2007random}. More specifically, when the target kernel is \textit{shift-invariant}, i.e., $k(\bx,\bx')=k(\bx-\bx')$, the KRR model $h_{\text{KRR}}(\bx;\bmw)$ can be estimated by RFF-based regression model, i.e.,
\begin{equation}\label{eqn:surrogate-RFF}
	h_{\text{RFF}}(\bx, \bmw) = \langle \bmw, \vec{\Psi}(\bx) \rangle,
\end{equation} 
where $\vec{\Psi}(\bx)= [\cos(\frac{\gamma}{d'} \langle \btheta_1, \bx + b_1\rangle), \sin(\frac{\gamma}{d'} \langle \btheta_1, \bx + b_1\rangle), \cdots, \cos(\frac{\gamma}{d'} \langle \btheta_{d'}, \bx + b_{d'}\rangle), \sin(\frac{\gamma}{d'} \langle \btheta_{d'}, \bx + b_{d'}\rangle)]^{\top}$ is the RFF with $d'$ dimensions, $\gamma$ and $d'$ are tunable hyper-parameters,  $\btheta_i$ for $\forall i\in  [d']$ are $d'$-dimensional vectors sampled from a multivariate standard normal distribution, and $b_i$ for $\forall i \in  [d']$ is scalar sampled from the uniform distribution ranging from $-\pi$ to $\pi$. 

Following Eq.~(\ref{eqn:KRR-kernel}), an immediate observation is that the kernel under consideration is \textit{shift-invariant}, justifying the use of an RFF-based regression model as a surrogate. Specifically, given the training dataset $\{(\bxi, \yi\}_{i=1}^n$, we train the RFF-based regression model by minimizing the empirical risk 
\begin{equation}\label{eqn:RFF}
	\min_{\bm{\mathsf{w}}} \frac{1}{2} \sum_{i=1}^n \left(h_{\text{RFF}}(\bxi; \bmw) - \yi \right)^2 + \frac{\alpha}{2} \|\bmw\|_2,
\end{equation}
where $\alpha$ is a hyper-parameter. Denote the optimized parameters as $\hat{\bmw}$. As before, we use Adam optimizer to locate the solution that minimizes the surrogate loss, i.e., 
\begin{equation}\label{eqn:surrogate-loss-RFF}
	\hat{\bx} = \arg\min_{\bx} h_{\text{RFF}}(\bx, \hat{\bmw}).
\end{equation}
The obtained estimate serves as the initialization parameters of $U(\bx)$ for the corresponding VQEs. 

\smallskip
\noindent\textit{Method V: Multi-layer Perceptron}. We last exploit the performance of deep learning protocols for VQE's parameter initialization. For this purpose, we adopt multi-layer perceptron (MLP) as the deep learning protocol \cite{goodfellow2016deep}, whose mathematical expression is

\begin{equation}\label{eqn:MLP-explit-form}
	h_{\text{MLP}}(\bx; \bm{W}) = W_L\text{ReLU}(...\text{ReLU}(W_2\text{ReLU}(W_1 \bx))),
\end{equation}
where $\bm{W}=\{W_i\}_{i=1}^L$ refers to $L$ trainable weight matrices and the size of $W_i$ for $\forall i \in [L]$ is adjustable. Given the training dataset $\mathcal{T}_{\mathsf{s}}=\{(\bxi, \yi\}_{i=1}^n$, we train MLP by minimizing the empirical risk 
\begin{equation}\label{eqn:train-loss-MLP}
	\min_{\bm{W}} \frac{1}{2} \sum_{i=1}^n \left(h_{\text{MLP}}(\bxi; \bmw) - \yi \right)^2 + \frac{\alpha}{2} \sum_i\|W_i\|_F,
\end{equation}
where $\alpha$ is a hyper-parameter. Given access to the optimized MLP $h_{\text{MLP}}(\bx, \hat{\bm{W}})$, the initialization parameters for $U(\bx)$ are acquired via solving 
\begin{equation}\label{eqn:surrogate-loss-MLP}
	\hat{\bx} = \arg\min_{\bx} h_{\text{MLP}}(\bx, \hat{\bm{W}}).
\end{equation}
For a fair comparison, the optimization of MLP is also completed by the Adam optimizer. 

Note that while the employed MLP lacks provable guarantees, benchmarking its performance is crucial given the growing success of deep learning in quantum system learning tasks.  

\medskip
\noindent\textit{\underline{Metrics}}. For comprehensive, we employ two metrics to evaluate the performance of different benchmarking methods.
 
 \textcircled{\raisebox{-0.9pt}{1}} \textit{Prediction error}. Our first metric concerns the ability of the trained learning models to predict unseen examples. Denote the test dataset as $\{(\bxi, \yi)\}_{i=1}^{n_{\text{test}}}$ such that the test examples are sampled from the same distribution with the training examples. The prediction error refers to the mean square error on the test dataset, i.e., 
\begin{equation}\label{eqn:metric-pred-error}
	\mathsf{R}_{\text{predict}} = \frac{1}{n_{\text{test}}}\sum_{i=1}^{n_{\text{test}}} \left( h_*(\bx) - \yi \right)^2,
\end{equation}   
where $h_*\in \{h_{\mathsf{s}}, h_{\text{KRR}}, h_{\text{RFF}}, h_{\text{MLP}}\}$.

\textcircled{\raisebox{-0.9pt}{2}} \textit{Absolute energy difference}. The second metric compares the difference between the estimated energy $\hat{E}_{\text{VQE}}$ returned by VQE and the exact minimum ground state energy $E_{\text{Exact}}$ of the given Hamiltonian, i.e.,
\begin{equation}
	\mathsf{R}_{\text{abs}} = \left|\hat{E}_{\text{VQE}} - E_{\text{Exact}}\right|.
\end{equation} 
The estimated energy is defined by $\hat{E}_{\text{VQE}}=\Tr(\mathsf{H} U(\hat{\bx}) \rho_0 U(\hat{\bx})^{\dagger})$. Given different parameters $\hat{\bx}$ returned by varied learning surrogates with $h_*\in \{h_{\mathsf{s}}, h_{\text{KRR}}, h_{\text{RFF}}, h_{\text{MLP}}\}$, we can evaluate how these surrogates advance the parameter initialization by comparing $\mathsf{R}_{\text{abs}}$. 
 
\medskip
\noindent\underline{\textit{Problem setup}}. The detailed settings of TFIMs are as follows. We investigate six TFIMs in $\mathcal{H}_{\text{TFIM}}$ in Eq.~(\ref{eqn:TFIM-family}). In particular, these Hamiltonians are constructed by fixing $N=10$, $h=-0.5$, and varying the coupling strength with $J\in \{-0.5, -0.3, -0.1, 0.1, 0.3, 0.5\}$.

\begin{figure}[h!]
	\centering
	\includegraphics[width=0.99\textwidth]{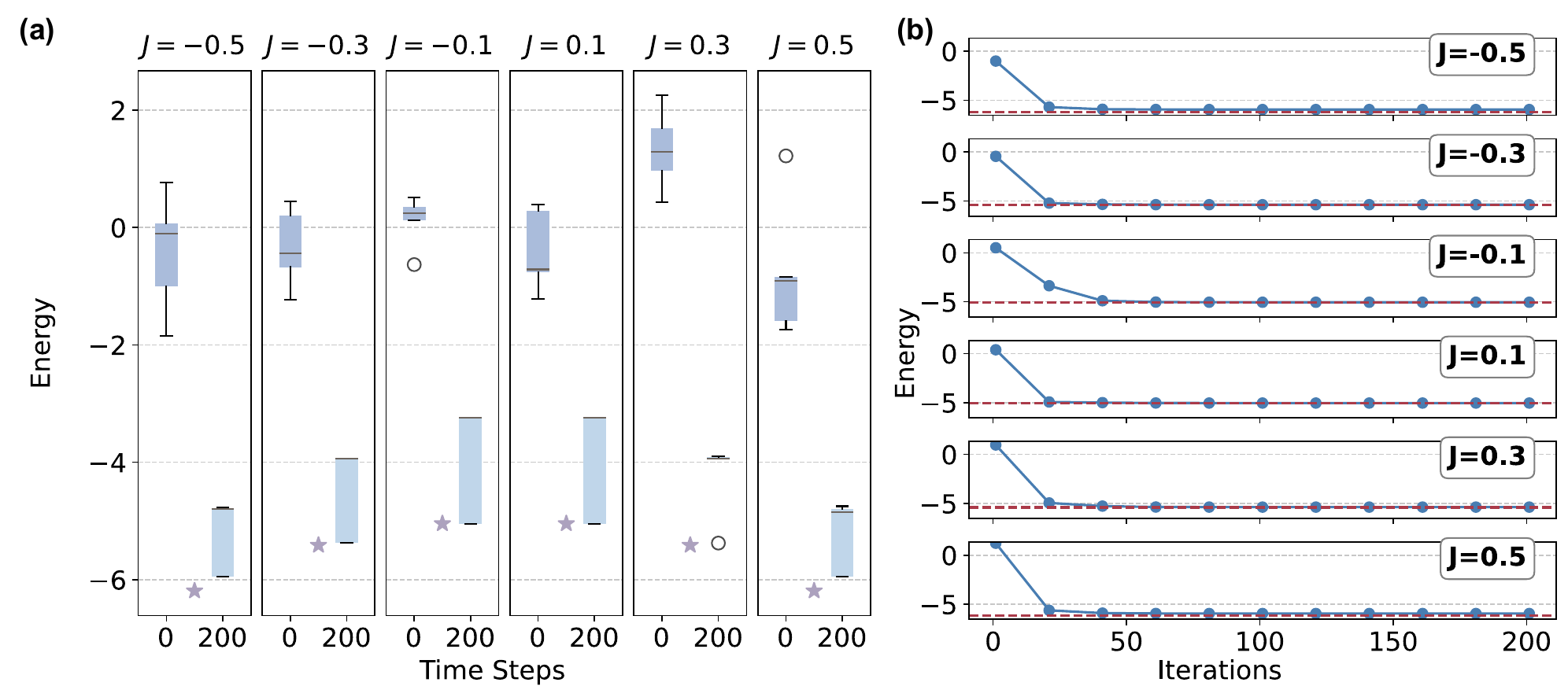}	
	\caption{\justifying\small{\textbf{Performance of conventional VQEs on the family of TFIMs with $N=10$}. (a) \textsc{Statistical performance of conventional VQEs}. The box plots show the distribution of the estimated energies at the first iteration and after 200 iterations for each setting, with $J$ values ranging from $-0.5$ to $0.5$. Each box represents the interquartile range (IQR), with the horizontal line inside indicating the median energy. The whiskers extend to the most extreme data points within 1.5 times the IQR, and outliers beyond this range are shown as individual circles. The purple stars represent the exact ground-state energies. (b) \textsc{Training dynamics of the best VQEs}. The plots show the convergence of the estimated energies toward the exact ground-state energies (dashed red lines) over 200 iterations for the top-performing instance in each setting.}}
	\label{fig:VQE-TFIM}
\end{figure}

Moreover, Hamiltonian-informed ansatz is adopted to estimate their ground state energies, i.e.,
\begin{equation}\label{eqn:TFIM-ansatz}
	U(\bx)= \prod_{i=1}^{N-1} \exp\left(-\imath \frac{\bx_i}{2} (Z_i\otimes Z_{i+1})\right) \prod_{i=N}^{2N-1} \left(-\imath \frac{\bx_i}{2} (X_i)\right) \text{H}^{\otimes N},
\end{equation}
where $\imath$ refers to the imaginary unit and $ \text{H}$ refers to the Hadamard gate. As illustrated in Fig.~\ref{fig:VQE-TFIM}, optimal parameters for this ansatz exist to minimize the six Hamiltonians under investigation. In particular, for each setting,   we perform five repetitions with different initial parameters to collect statistical results, using a learning rate of $0.8$ for the Adam optimizer. Fig.~\ref{fig:VQE-TFIM}(a) illustrates the statistical performance of VQEs. That is, for each configuration, the exact ground-state energy can be accurately estimated after $200$ iterations with \textit{well-initialized parameters}. For clarity, we visualize the training dynamics of the VQE for the top-performing setting in Fig.~\ref{fig:VQE-TFIM}(b). An immediate observation is that across all settings, the absolute error between the estimated energy and the exact value remains small, which are $0.24$, $0.04$, $5.06\times 10^{-4}$, $4.92\times 10^{-4}$, $0.04$, and $0.24$ for $J=-0.5, -0.3, -0.1, 0.1, 0.3, 0.5$, respectively.

The process of pretraining VQEs by classical learning surrogates follows the steps outlined in Alg.~\ref{alg:offline-VQA}. Specifically, in Steps 1\&2, we construct the training dataset $\mathcal{T}_{\mathsf{s}}$ by randomly sampling $\bx\in [-\pi, \pi]^{19}$ from a uniform distribution and inputting these values into the ansatz $U(\bx)$ in Eq.~(\ref{eqn:TFIM-ansatz}) to generate classical shadows (when $N=10$, the dimension of the input $\bx$ is $d=19$). The snapshot for each quantum state $\rho(\bx)$ is fixed to be $T=100$ and the number of trainable examples is set $n=7500$.  

Once the dataset $\mathcal{T}_{\mathsf{s}}$ is collected, we move to Step 3 of Alg.~\ref{alg:offline-VQA}. That is, the dataset $\mathcal{T}_{\mathsf{s}}$ is used to pretrain VQEs by the four classical surrogates to estimate the ground-state energies of the six explored TFIMs. The hyper-parameter settings of each learning surrogate are as follows.
\begin{itemize}
	\item Proposed ML model. For all configurations, we fix the truncation value as $\Lambda=2$ and the number of training examples as $n=7500$ to obtain the corresponding learning surrogate $h_{\mathsf{s}}(\bx;\mathsf{H})$ following Eq.~(\ref{eqn:generic-learner}). In the procedure of minimizing the surrogate loss as defined in Eq.~(\ref{eqn:surrogate-loss-proposed-ML}), we set the learning rate of Adam optimizer as $0.01$ and the total number of iterations as $500$.
		\item Kernel regression model. For all configurations, we set the number of training examples as $n=100$ to avoid the memory issue. The learning models $h_{\text{KRR}}$ are trained following Eq.~(\ref{eqn:KRR-explit-form}). Once the trained models are available under different $J$ values, we follow Eq.~(\ref{eqn:surrogate-loss-KRR}) to minimize their corresponding surrogate losses. As with the proposed ML model, we set the learning rate of Adam optimizer as $0.01$ and the total number of iterations as $500$.
	\item Random Fourier Feature. For all configurations, we set the dimension of random Fourier features as $d'=800$ in Eq.~(\ref{eqn:surrogate-RFF}), to match the feature dimension of the proposed learning models $h_{\mathsf{s}}(\bx;\mathsf{H})$. We follow Eq.~(\ref{eqn:RFF}) to $h_{\text{RFF}}$ under different $J$ values. The number of training examples is set as $n=7500$ and the hyper-parameter is fixed to be $\alpha=1$. In the procedure of minimizing the surrogate loss as defined in Eq.~(\ref{eqn:surrogate-loss-RFF}), we follow the same routine with $h_{\mathsf{s}}(\bx;\mathsf{H})$, where the learning rate of Adam optimizer is $0.01$ and the total number of iterations is $500$.
	\item MLP. For all configurations, the exploited $h_{\text{MLP}}(\bx, \bm{W})$ in Eq.~(\ref{eqn:MLP-explit-form}) refers to a 2-hidden-layer MLP, where the dimension of the input layer, the first hidden layer, the second hidden layer, and the output layer is $19$, $32$, $8$, and $1$, respectively. The total number of trainable parameters of MLP is $872$, comparable to those used in the proposed learning model and KRR. Moreover, each hidden layer is followed by a ReLU activation function. We use $n=7500$ training examples to optimize MLPs under varied $J$ values are as follows. The learning rate of the Adam optimizer is set as $0.001$. The batch size is set as $200$. The hyper-parameter $\alpha$ is set as $0.0001$. Once $h_{\text{MLP}}(\bx, \hat{\bm{W}})$ is accessible, we follow the surrogate loss in Eq.~(\ref{eqn:surrogate-loss-MLP}) to pretrain VQEs. As with the proposed ML model, we set the learning rate of Adam optimizer as $0.01$ and the total number of iterations as $500$. 
\end{itemize} 
\noindent Remark. To assess the impact of noisy labels caused by the finite snapshots of classical shadows in each training example, we also benchmark the performance of the employed learning surrogate under the assumption of an infinite number of shots for each configuration, i.e., $T\rightarrow \infty$, $\tilde{\rho}_T(\bx)=\rho(\bx)$, and $\yi=\Tr(\rho(\bxi)\mathsf{H})$. We repeat each configuration described above five times to collect statistical results. 

\medskip
\noindent\underline{\textit{Numerical results}}. The predictor error of the employed four learning surrogates (i.e., the first metric $\mathsf{R}_{\text{predict}}$ in Eq.~(\ref{eqn:metric-pred-error})) is demonstrated in Fig.~\ref{fig:pred-error-TFIM}, which is evaluated by $500$ test examples. For all subplots, the proposed ML models outperform the rest three learning surrogates under both ideal (highlighted by the solid bar) and noisy (highlighted by the shadowed bar) scenarios. Particularly, the prediction error of the proposed ML model under the ideal (noisy) setting achieves $0.36$ $(0.68)$, $0.13$ $(0.30)$, $0.01$ $(0.10)$, $0.01$ $(0.10)$, $0.13$ $(0.28)$, and $0.36$ $(0.63)$ when $J=-0.5, -0.3, -0.1, 0.1, 0.3, 0.5$, respectively. Another observation is that  while the noisy labels caused by finite snapshots (i.e., $T=100$) can slightly degrade prediction performance of the proposed ML model, the impact is not substantial, with a maximum degradation of no more than $0.32$ across all settings.  

\begin{figure}[h!]
	\centering
	\includegraphics[width=0.99\textwidth]{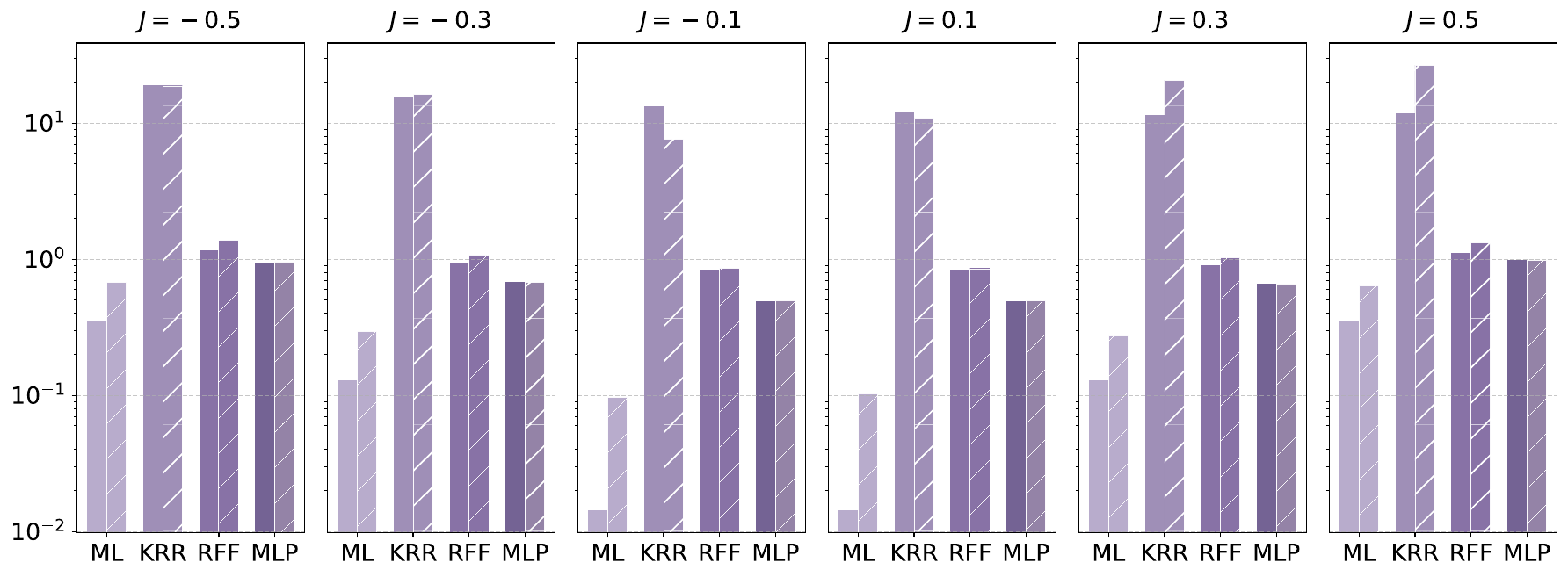}	
	\caption{\justifying\small{\textbf{Prediction error of four learning surrogates for the family of TFIMs with $N=10$}. Each subplot corresponds to the simulation results for a specific value of $J$ in the TFIM. Solid bars represent the prediction error under ideal conditions ($T\rightarrow \infty$ in the training dataset $\mathcal{T}_{\mathsf{s}}$), while shadowed bars with a “\(/\)” pattern denote the noisy case ($T = 100$). The four surrogates are labeled as ML (the proposed ML model), KRR (Kernel ridge regression), RFF (random Fourier features), and MLP (Multi-Layer Perceptron), respectively. }}
	\label{fig:pred-error-TFIM}
\end{figure}

 \begin{figure}[h!]
	\centering
	\includegraphics[width=0.98\textwidth]{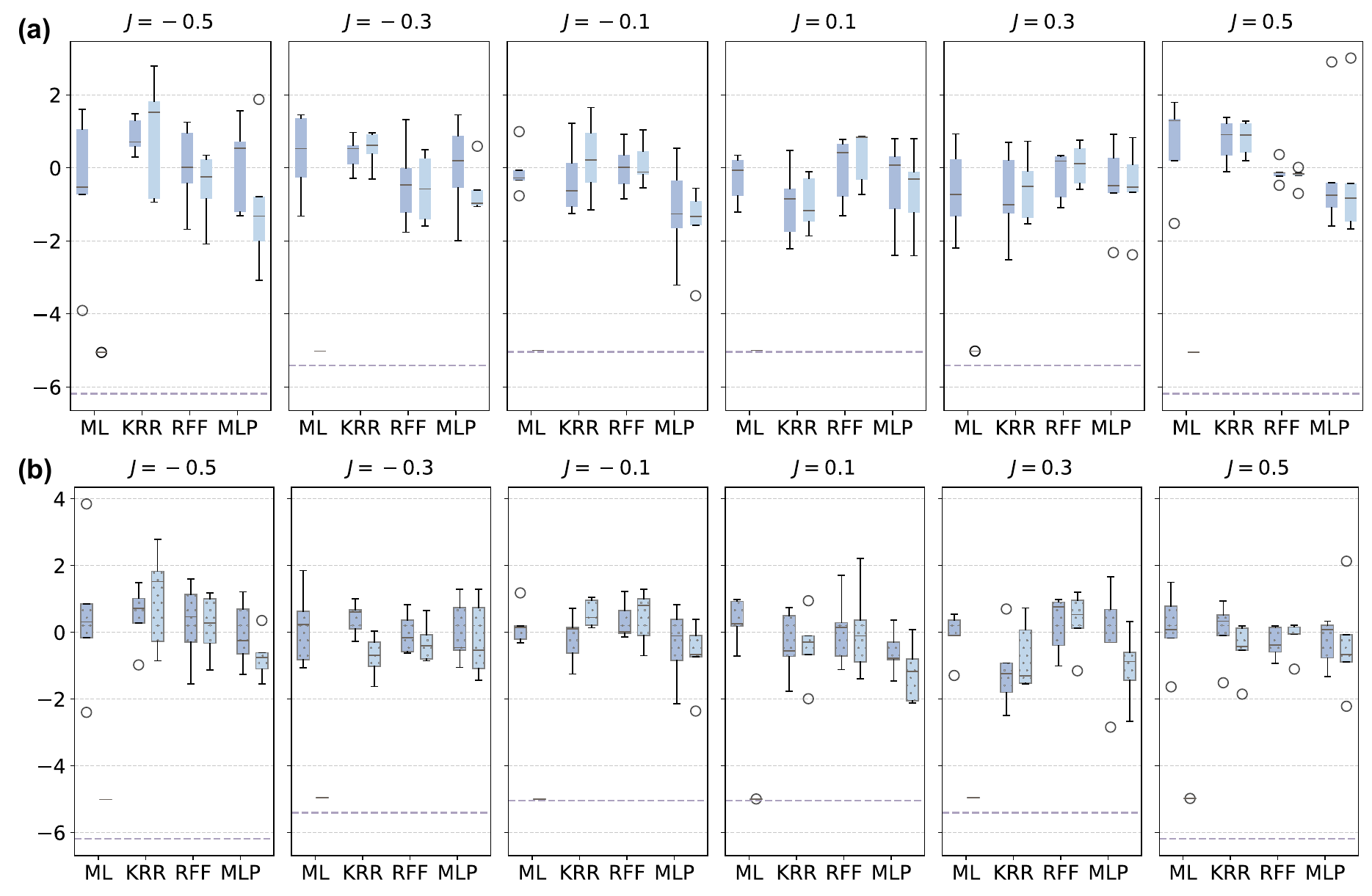}
	\caption{\justifying\small{\textbf{Pretraining performance of different learning surrogates on the family of TFIMs with $N=10$}. The upper panel (a) and the lower panel (b) separately exhibit the simulation results under the ideal and noisy settings ($T\rightarrow \infty$ and $T=100$). Each subplot corresponds to a specific $J$ value in the TFIM, and the box plots depict the estimated ground-state energy distributions for four learning surrogates. For each surrogate, the estimated energies without pre-training and after pretraining are visualized by the deep and light blue colors, respectively. Dashed horizontal lines represent the exact ground-state energies.}}
	\label{fig:pre-train-VQE-TFIM}
\end{figure}

Next, we use the optimized learning surrogates to pretrain VQEs for the six explored TFIMs. The estimated energies, both with and without pretraining by the four learning surrogates, are illustrated in Fig.~\ref{fig:pre-train-VQE-TFIM}. An immediate observation is that  compared to other learning surrogates, the estimated energy obtained from the proposed ML model is greatly closer to the exact result in both ideal and noisy settings. Notably, unlike the proposed ML model, the performance of VQEs does not improve after pretraining with other learning surrogates and, in some cases, may even degrade. These contrastive outcomes further validate the effectiveness of our proposed learning protocol.

\begin{table}[h!]
\centering
\caption{\justifying\small{\textbf{Comparison between conventional VQEs and the proposed ML model on a family of TFIMs with $N=10$}. The table presents the estimated ground state energies for various coupling strengths $J$, as obtained from the conventional VQE and the proposed ML model under both ideal and noisy conditions (labeled by `Pretrain (ideal)' and `Pretrain (shadow)', respectively). The notation `$a \pm b$' indicates that the mean value is $a$ and the variance is $b$. }}
\label{tab:TFIM-vqe-ML}
\resizebox{\textwidth}{!}{%
\begin{tabular}{@{}c|c|c|c|c|c|c@{}}
\toprule
\multicolumn{1}{l|}{} & $J=-0.5$ & $J=-0.3$ & $J=-0.1$ & $J=0.1$  & $J=0.3$  & $J=0.5$  \\ \midrule
VQE & $-5.252\pm 0.327$ & $-4.512 \pm 0.498$ & $-3.960 \pm 0.784$ & $-3.961 \pm 0.784$ & $-4.213 \pm 0.339$ & $-5.26 \pm0.32$ \\ \midrule
Pretrain (ideal)      & $-5.056 \pm 0.000$ & $-5.020 \pm 0.000$  & $-5.002 \pm 0.000$ & $-5.002 \pm 0.000$ & $-5.021 \pm 0.000$  & $-5.057 \pm 0.000$ \\ \midrule
Pretrain (shadow)     & $-5.012 \pm 0.000$ & $-4.959 \pm 0.000$ & $-5.002 \pm 0.000$ & $-4.996 \pm 0.000$ & $-4.951 \pm 0.000$ & $-4.981 \pm 0.000$ \\ \bottomrule
\end{tabular}%
}
\end{table}

Another key observation from the simulation results is that the performance of the proposed ML model remains comparable between the ideal and noisy settings, as illustrated in Figs.~\ref{fig:pre-train-VQE-TFIM}(a) and (b), respectively. This indicates that the noisy labels introduced by the finite snapshots of classical shadows only subtly effect the performance of the proposed ML model. Furthermore, the results suggest that a constant number of snapshots (e.g., $T=100$) is sufficient to achieve strong performance, aligning with the findings of Theorem~2 in the main text.

Finally, we compare the performance of the conventional VQE (using infinite measurements) with the proposed ML model. The statistical results are summarized in Table~\ref{tab:TFIM-vqe-ML}. The first key observation is that the proposed ML model, in both ideal and noisy settings, generally achieves energy estimates that are comparable to or better than those of the conventional VQE, particularly at coupling strengths $J \in \{ -0.3,  -0.1, 0.1,  0.3\}$. The second key observation is the consistency and stability of the proposed ML model's performance across both settings, especially at intermediate coupling strengths, where the VQE exhibits significant variance and large estimation errors. These findings confirm the potential of the proposed ML-based approach to enhance conventional VQE methods, particularly in practical scenarios with constrained quantum resources.

\medskip
\noindent\textbf{Additional information about pretraining VQE on a 50-qubit TFIM.} In the main text, we apply the proposed ML model to pretrain VQE on a 50-qubit TFIM. The explored Hamiltonian, ansatz, optimizers, and metrics are identical to those introduced earlier, except that the qubit size $N$ is increased from 10 to 50. Recall the main plot in Fig.~3(b) of the main text, which illustrates the optimization dynamics of the proposed ML model during the minimization of the surrogate loss in Eq.~(\ref{eqn:surrogate-loss-proposed-ML}). The estimated energies after pretraining are comparable and closely approximate the optimal results, which are $-24.95$, $-24.90$, and $-25.99$ for the ideal dataset, shadow dataset, and the exact value, respectively. Furthermore, for both the noisy and ideal cases,  the variance diminishes as the number of iterations increases. These results highlight  the effectiveness of the proposed ML model in large-scale scenarios, showcasing its potential to facilitate a wide range of practical use-case tasks.

\subsubsection{Pretraining Hamiltonian-variational ansatz towards a class of $\mathsf{H}_5$ molecules}\label{append:sec:pretrain-H5}

Here we apply the proposed method to pretain Hamiltonian-variational ansatz towards a class of $\mathsf{H}_5$ molecules. The detailed settings of $\mathsf{H}_5$ molecules are as follows. We center on six $\mathsf{H}_5$ molecules in $\mathcal{H}_{\mathsf{H}_5}$ with varying bond lengths, i.e., $R\in \{0.5 \mathring{A}, 0.6 \mathring{A}, 0.7 \mathring{A}, 0.8 \mathring{A}, 0.9 \mathring{A}, 1.0 \mathring{A}\}$. Moreover, a simplified ansatz inspired by the unitary coupled-cluster ansatz is exploited to estimate their ground state energies, i.e.,
\begin{equation}\label{eqn:H5-ansatz}
	 \includegraphics[width=0.65\textwidth]{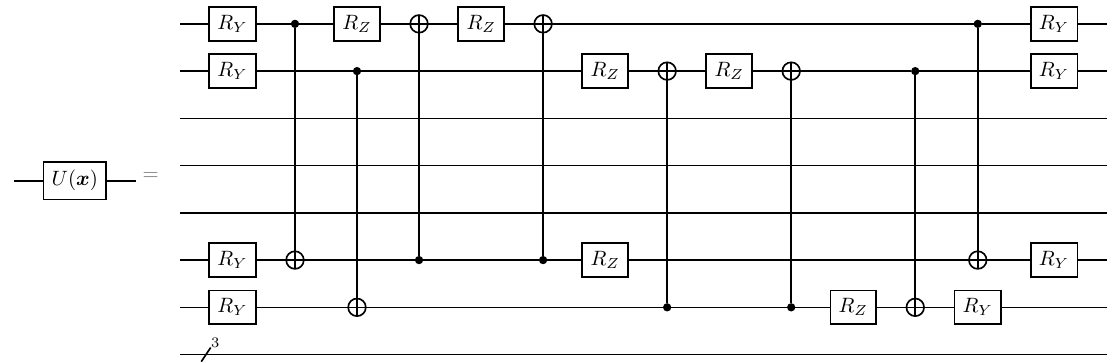}.
\end{equation}
The simplifications are twofold. First, we remove all quantum gates interacting with the last three qubit wires. Second, we decompose the double-excitation gates into $\RZ$ and Clifford gates, leveraging their universal properties. Consequently, by initializing the state as $\rho_0 = \ket{1111100000}\bra{1111100000}$, the corresponding VQE in Eq.~(\ref{eqn:VQE-loss}) achieves good performance in estimating the ground-state energies of the six explored $\H5(R)$ molecules.
 
\begin{figure}[h!]
	\centering
	\includegraphics[width=0.99\textwidth]{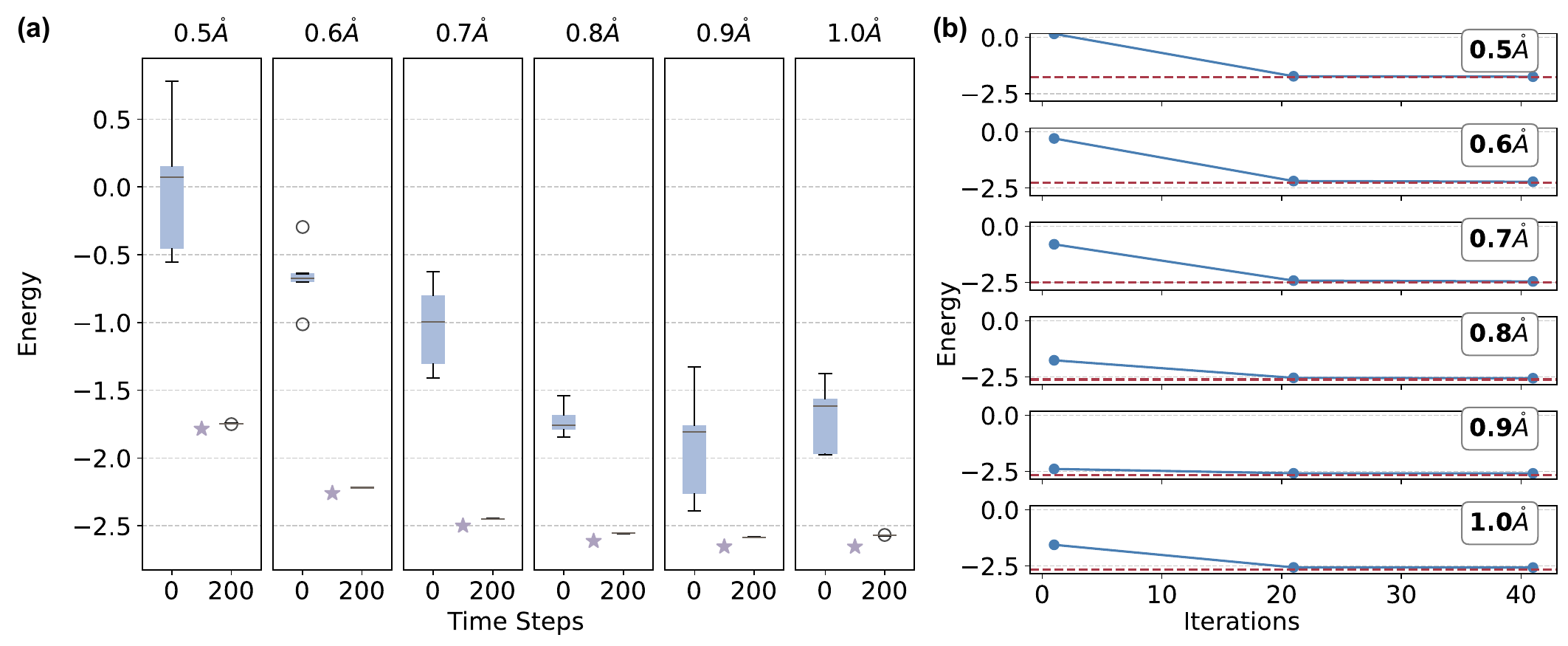}
	\caption{\justifying\small{\textbf{Performance of conventional VQEs on the family of $\mathsf{H}_5$ molecules with $N=10$}. (a) \textsc{Statistical performance of conventional VQEs}. The box plots show the distribution of the estimated energies at the first iteration and after 200 iterations for each setting, with the bond length $R$ ranging from $0.5\mathring{A}$ to $1.0\mathring{A}$. All components of the box follow the same explanations in Fig.~\ref{fig:VQE-TFIM}. (b) \textsc{Training dynamics of the best VQEs}. The plots show the convergence of the estimated energies toward the exact ground-state energies (dashed red lines) over 200 iterations for the top-performing instance in each setting.}}
	\label{fig:VQE-H5}
\end{figure}

\smallskip
\noindent\underline{Benchmark methods and metrics}. The benchmark models and metrics are exactly identical to those in SM~\ref{append:sec:pretrain-TFIM}.

\smallskip
\noindent\underline{Problem setup}. The simulation results in Fig.~\ref{fig:VQE-H5} validate that optimal parameters for the ansatz in Eq.~(\ref{eqn:H5-ansatz}) exist to minimize the six Hamiltonians under investigation. In particular, for each setting,  we perform five repetitions with different initial parameters to collect statistical results, using a learning rate of $0.8$ for the Adam optimizer. Fig.~\ref{fig:VQE-H5}(a) illustrates the statistical performance of VQEs. That is, for each configuration, the exact ground-state energy can be accurately estimated after $200$ iterations, especially for the small bond lengths. For clarity, we visualize the training dynamics of the VQE for the top-performing setting in Fig.~\ref{fig:VQE-H5}(b). Specially, across all settings, the absolute error between the estimated energy and the exact value remains small, which are $0.04$, $0.04$, $0.05$, $0.06$, $0.07$, and $0.08$ for $R=0.5 \mathring{A}, 0.6 \mathring{A}, 0.7 \mathring{A}, 0.8 \mathring{A}, 0.9 \mathring{A}, 1.0 \mathring{A}$, respectively.

Similar to the task of TFIMs, the process of pretraining VQEs by classical learning surrogates follows the steps outlined in Alg.~\ref{alg:offline-VQA}. In Steps 1\&2, we construct the training dataset $\mathcal{T}_{\mathsf{s}}$ by randomly sampling $\bx\in [-\pi, \pi]^{14}$ from a uniform distribution and inputting these values into the ansatz $U(\bx)$ in Eq.~(\ref{eqn:H5-ansatz}) to generate classical shadows. The snapshot for each quantum state $\rho(\bx)$ is fixed to be $T=100$ and the number of trainable examples is set $n=7500$. Once the dataset $\mathcal{T}_{\mathsf{s}}$ is collected, we move to Step 3 of Alg.~\ref{alg:offline-VQA}, where the four classical surrogates employ   $\mathcal{T}_{\mathsf{s}}$ to pretrain VQEs. The hyper-parameter settings for each learning surrogate are identical to those used in the TFIM tasks.  

\begin{figure}[h!]
	\centering
	\includegraphics[width=0.99\textwidth]{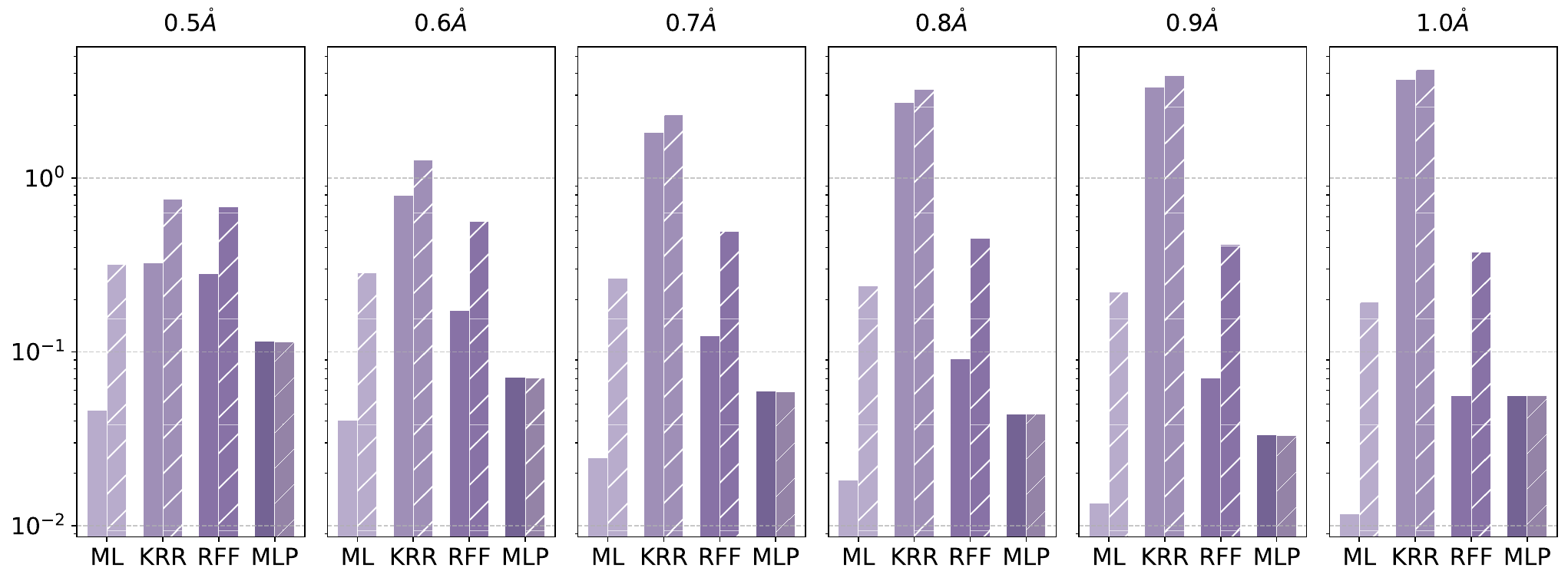}
	\caption{\justifying  \small{\textbf{Prediction error of four learning surrogates for the family of $\mathsf{H}_5$ molecules with $N=10$}. Each subplot corresponds to the simulation results for a specific bond length $R$. The solid bars, shaded bars, and x-axis labels follow the same representations as those in Fig.~\ref{fig:pred-error-TFIM}. }}
	\label{fig:pred-error-H5}
\end{figure}

\smallskip
 
\noindent\underline{Numerical results}. The predictor error $\mathsf{R}_{\text{predict}}$ of the employed four learning surrogates is shown in Fig.~\ref{fig:pred-error-H5}, which is evaluated by $500$ test examples. For all explored bond lengths, the proposed ML model outperforms the other three learning surrogates under the ideal setting. Additionally, under the noisy setting, the proposed ML model performs slightly worse than MLP but remains superior to the other two learning surrogates. More specifically, the prediction error of the proposed ML model under the ideal (noisy) setting achieves $0.046$ $(0.319)$, $0.040$ $(0.286)$, $0.025$ $(0.266)$, $0.018$ $(0.239)$, $0.013$ $(0.222)$, and $0.012$ $(0.193)$ when $R=0.5 \mathring{A}, 0.6 \mathring{A}, 0.7 \mathring{A}, 0.8 \mathring{A}, 0.9 \mathring{A}, 1.0 \mathring{A}$, respectively.  The increased discrepancy between the ideal and shadow results, compared to the TFIM task, arises from the presence of numerous global Pauli terms in the $\mathsf{H}_5$ molecules, which substantially amplify the Pauli-based shadow estimation error.

We proceed to use the optimized four learning surrogates to pretrain VQEs for the six explored $\mathsf{H}_5$ molecules. The estimated energies, both with and without pretraining by the four learning surrogates, are illustrated in Fig.~\ref{fig:pre-train-VQE-H5}. The first key observation is that, under both ideal and noisy conditions, the parameters returned by the proposed ML model substantially improve estimation accuracy after pretraining. In contrast, other learning surrogates fail to reduce the estimation error through pretraining and, in some cases, even lead to degraded performance. Furthermore, while the shadowed dataset increases the test prediction error of the proposed ML model discussed in Fig.~\ref{fig:pred-error-H5}, it notably improves the estimation of the ground-state energy compared to both the ideal case and MLP, particularly when the bond length is $R=0.8\mathring{A}$. These observations align with the results of Theorem~2 in the main text.

\begin{figure}[h!]
	\centering
	\includegraphics[width=0.98\textwidth]{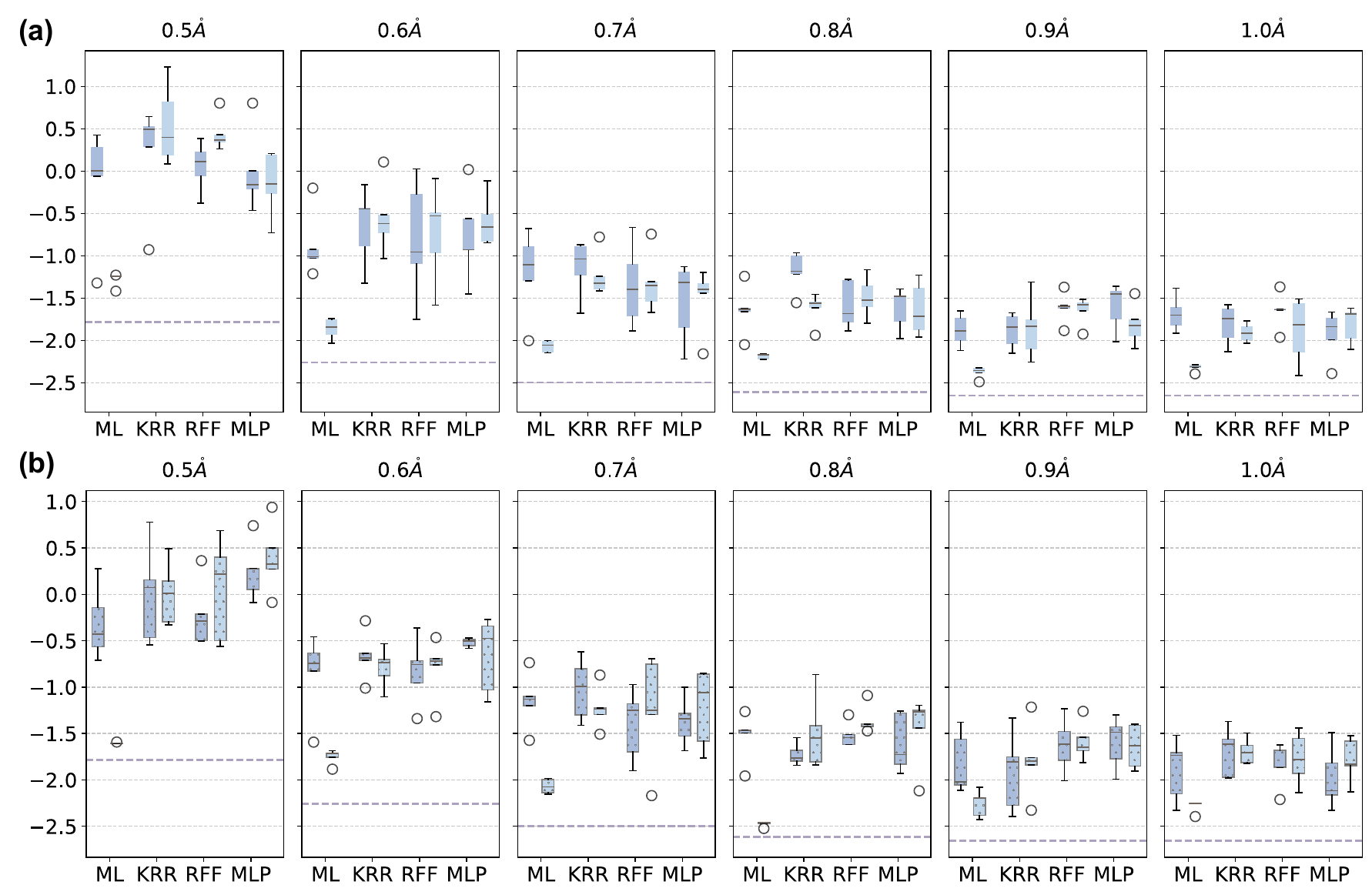}
	\caption{\justifying  \small{\textbf{Pretraining performance of four learning surrogates on the family of $\mathsf{H}_5$ molecules with $N=10$}. The upper panel (a) and the lower panel (b) separately exhibit the simulation results under the ideal and noisy settings ($T\rightarrow \infty$ and $T=100$). Each subplot corresponds to a specific bond length $R$ in $\H5(R)$, and the box plots depict the estimated ground-state energy distributions for four learning surrogates. For each surrogate, the estimated energies without pre-training and after pretraining are visualized by the deep and light blue colors, respectively. Dashed horizontal lines represent the exact ground-state energies.}}
	\label{fig:pre-train-VQE-H5}
\end{figure}

\begin{table}[h!]
\centering
\caption{\justifying  \small{\textbf{Comparison between conventional VQEs and the proposed ML model on a family of $\mathsf{H}_5$ with $N=10$}. The table presents the estimated ground state energies for various coupling strengths $J$, as obtained from the conventional VQE and the proposed ML model under both ideal and noisy conditions (labeled by `Pretrain (ideal)' and `Pretrain (shadow)', respectively). The notation `$a \pm b$' indicates that the mean value is $a$ and the variance is $b$. }}
\label{tab:H5-vqe-ML}
\resizebox{0.88\textwidth}{!}{%
\begin{tabular}{@{}c|c|c|c|c|c|c@{}}
\toprule
\multicolumn{1}{l|}{} & $R=0.5 \mathring{A}$ & $R=0.6 \mathring{A}$ & $R=0.7 \mathring{A}$ & $R=0.8 \mathring{A}$  & $R=0.9 \mathring{A}$  & $R=1.0 \mathring{A}$  \\ \midrule
VQE & $-1.747 \pm 0.000$ & $-2.219 \pm 0.000$ & $-2.45 \pm 0.000$ & $-2.555 \pm 0.000$ & $-2.584  \pm 0.000$ & $-2.571 \pm 0.000$ \\ \midrule
Pretrain (ideal)      & $-1.276 \pm 0.005$ & $-1.862 \pm 0.013$  & $-2.072 \pm 0.004$ & $-2.189  \pm 0.001$ & $-2.376 \pm 0.004$  & $-2.322 \pm 0.002$ \\ \midrule
Pretrain (shadow)     & $-1.601 \pm 0.000$ & $-1.752 \pm 0.005$ & $-2.069 \pm 0.001$ & $-2.475 \pm 0.001$ & $-2.257 \pm 0.017$ & $-2.283 \pm 0.003$ \\ \bottomrule
\end{tabular}%
}
\end{table}
Finally, we compare the performance of the conventional VQE (using infinite measurements) with the proposed ML model. The statistical results are summarized in Table~\ref{tab:TFIM-vqe-ML}. A key observation is that the proposed ML model, under both ideal and noisy settings, generally produces energy estimates that are comparable to but slightly less accurate than those of the conventional VQE. This slight inferiority may be attributed to the increased gradient norm when the Hamiltonian involves more than $400$ Pauli terms, which necessitates more training examples and larger truncation values to achieve improved performance.

\subsection{Comparison with LOWESA}\label{append:sec:comp-lowesa}
We conduct the numerical simulations to compare the wall-clock runtime of our method and LOWESA using open-source implementations. 

The problem setup follows the first scenario explained in SM~\ref{append:sec:diff-Pauli-sim}. The number of qubits is fixed to be $N=32$. The employed ansatz $U(x)$ is an adaptation of the Trotterized transverse-field Ising Hamiltonian circuit, designed with a corresponding bricklayer topology. A visualization of this circuit is provided in Fig.~\ref{fig:lowesa-ansatz}. Specifically, the ansatz is designed with 50,000 layers. All angles of the $\rm{R}_{\text{ZZ}}$ gates and $\RX$ gates are set to zero, except for the 16th $\rm{R}_{\text{ZZ}}$ gate in the first layer, which is assigned arbitrary angles. As such, the explored ansatz is tailored to include a single tunable parameter. The initialized state is $\ket{0}^{\otimes N}$ and the observable employed takes the form as $O=\mathbb{I}^{\otimes 15} \otimes Z \otimes \mathbb{I}^{\otimes 16}$.

\begin{figure}[h!] 
	\centering
	\includegraphics[width = 0.5\textwidth]{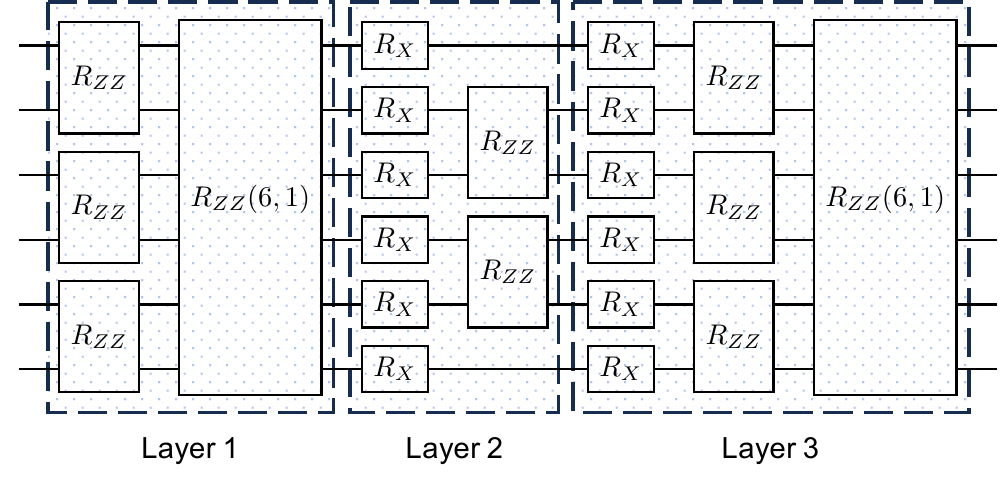}
	\caption{\justifying  \small{\textbf{The employed ansatz with the bricklayer topology}. The figure visualizes the implementation of the employed ansatz $U(\bx)$ with three layers and $N=6$ qubits. Each layer consists of alternating $\rm{R}_{\text{ZZ}}$  and $\RX$ gates. The $\rm{R}_{\text{ZZ}}$ gate labeled $\rm{R}_{\text{ZZ}}(6,1)$ in the odd layers connects the $6$-th and $1$st qubits. For circuits with more than three layers, Layers 2 and 3 are alternated to construct deeper circuits while preserving the bricklayer topology. Specifically, the gate arrangement in Layer $2$ is repeated for all even-numbered layers, while the gate arrangement in Layer $3$ is repeated for all odd-numbered layers.}}
	\label{fig:lowesa-ansatz}
\end{figure}

The hyper-parameter settings for LOWESA are as follows. The implementation is based on the open-source PauliPropagation.jl package available on GitHub. The truncation threshold for the weights is set to $5$, and the minimum coefficient magnitude is set to $4$. We provide $10$ input angles, i.e, $$x\in \{-\pi, -0.8\pi, -0.6\pi, -0.4\pi, -0.2\pi, 0,  0.2\pi, 0.4\pi, 0.6\pi, 0.8\pi \},$$ to the ansatz $U(x)$ and use LOWESA to estimate the corresponding expectation values. 

The hyper-parameter settings of the proposed learning model $h_{\mathsf{s}}$ are as follows. The number of training examples is set as $1000$. For simplicity, we consider the ideal scenario such that the infinite measurements are adopted to construct the dataset. Given access to the training dataset, we train the learning model by setting the truncation threshold $\Lambda$ as $2$. Once training is completed, we use the same $10$ input angles with LOWESA as test examples to evaluate the prediction performance and inference time.

\begin{table}[h!]
\centering
\caption{\justifying \textbf{Performance and wall-clock time of LOWESA and the proposed learning model}. The notation `$a \pm b$' indicates that the mean value is $a$ and the variance is $b$. The wall-clock time for the proposed learning approach focuses on the inference stage. }
\label{tab:Lowesa-learning-time}
\resizebox{0.55\textwidth}{!}{%
\begin{tabular}{@{}c|c|c@{}}
\toprule
\multicolumn{1}{l|}{}         & LOWESA            & Our work                                         \\ \midrule
Estimation (Prediction) error & $0 \pm 0.00$      & $0 \pm 0.0002$                                   \\ \midrule
Wall-clock time (seconds)              & $2.003\pm 0.0107$ & $8.6808\times 10^{-5} \pm  2.202 \times 10^{-8}$ \\ \bottomrule
\end{tabular}%
}
\end{table}

The performance and computational efficiency between LOWESA and the proposed learning model are summarized in Table~\ref{tab:Lowesa-learning-time}. Regarding the prediction (estimation) error, both approaches achieve comparable accuracy, with the proposed method showing a slight variance of $0.0002$. However, in the wall-clock time, our proposed learning model demonstrates a dramatic reduction in inference time compared to LOWESA, achieving a reduction by approximately $2.3\times 10^4$ times. This distinction underscores a scenario where the learning model surpasses classical simulators, emphasizing their complementary roles.

 \end{document}